\newcommand{\T}{{\scriptscriptstyle\mathsf{T}}}
\renewcommand{\H}{{\scriptscriptstyle\mathsf{H}}}
\newcommand{\cond}{\,\vert\,}
\renewcommand{\Re}[1][]{\ifthenelse{\isempty{#1}}{\operatorname{Re}}{\operatorname{Re}\left(#1\right)}}
\renewcommand{\Im}[1][]{\ifthenelse{\isempty{#1}}{\operatorname{Im}}{\operatorname{Im}\left(#1\right)}}
\newcommand{\tv}{\vect{t}}
\newcommand{\Sigmam}{\pmb{\Sigma}}
\newcommand{\Phim}{\pmb{\Phi}}
\newcommand{\Thetam}{\pmb{\Theta}}
\newcommand{\Am}{\mat{a}}
\newcommand{\Bm}{\mat{b}}
\newcommand{\Mm}{\mat{M}}
\newcommand{\Pm}{\mat{p}}
\newcommand{\Rm}{\mat{r}}
\newcommand{\Um}{\mat{u}}
\newcommand{\Vm}{\mat{V}}
\newcommand{\Xm}{\mat{x}}
\newcommand{\Cc}{{\mathcal C}}
\newcommand{\Dc}{{\mathcal D}}
\newcommand{\Jc}{{\mathcal J}}
\newcommand{\Kc}{{\mathcal K}}
\newcommand{\Nc}{{\mathcal N}}
\newcommand{\Vc}{{\mathcal V}}
\newcommand{\CC}{\mathbb{C}}
\newcommand{\EE}{\mathbb{E}}
\newcommand{\Id}{\mat{\mathrm{I}}}
\newcommand{\CN}[1][]{\ifthenelse{\isempty{#1}}{\mathcal{N}_{\mathbb{C}}}{\mathcal{N}_{\mathbb{C}}\left(#1\right)}}
\renewcommand{\P}[1][]{\ifthenelse{\isempty{#1}}{\mathbb{P}}{\mathbb{P}\left(#1\right)}}
\newcommand{\E}[1][]{\ifthenelse{\isempty{#1}}{\mathbb{E}}{\mathbb{E}\left[#1\right]}}
\newcommand{\I}[1][]{\ifthenelse{\isempty{#1}}{\mathbb{I}}{\mathbb{I}\left\{#1\right\}}}
\renewcommand{\det}[1][]{\ifthenelse{\isempty{#1}}{\mathrm{det}}{{\rm det}\left(#1\right)}}
\newcommand{\trace}[1][]{\ifthenelse{\isempty{#1}}{\mathrm{tr}}{{\rm tr}\left(#1\right)}}
\newcommand{\rank}[1][]{\ifthenelse{\isempty{#1}}{\mathrm{rank}}{{\rm rank}\left(#1\right)}}
\newcommand{\diag}[1][]{\ifthenelse{\isempty{#1}}{\mathrm{diag}}{{\rm diag}\left(#1\right)}}
\newcommand{\Cov}[1][]{\ifthenelse{\isempty{#1}}{\mathsf{Cov}}{\mathsf{Cov}\left(#1\right)}}
\newcommand{\defeq}{\triangleq}
\newcommand{\eqdef}{\triangleq}
\newtheorem{proposition}{Proposition}
\newtheorem{remark}{Remark}
\newtheorem{definition}{Definition}
\newtheorem{theorem}{Theorem}
\newtheorem{example}{Example}
\newtheorem{corollary}{Corollary}
\newtheorem{lemma}{Lemma}
\newtheorem{property}{Property}
\newcounter{enumi_saved}
\definecolor{LightCyan}{rgb}{0.8,1,1}
\definecolor{LightGray}{gray}{0.9}
\definecolor{LightYellow}{rgb}{1,1,0.5}
\DeclarePairedDelimiter\floor{\lfloor}{\rfloor}
\def\rrone{s_{1}}
\def\rrtwo{s_{2}}
\def\rronetwo{s_0}
\def\vvone{{\mathbf V}_{1}}
\def\vvtwo{{\mathbf V}_{2}}
\def\vvonetwo{{\mathbf V}_0}
\def\spanone{{{\rm Span}}({\mathbf U}_1)}
\def\spantwo{{{\rm Span}}({\mathbf U}_2)}
\def\spanvzero{{{\rm Span}}({\mathbf V}_0)}
\def\spanvone{{{\rm Span}}({\mathbf V}_1)}
\def\spanvtwo{{{\rm Span}}({\mathbf V}_2)}
\def\spanvonetwo{{{\rm Span}}({\mathbf V}_0)}
\def\Rank{{\rm rank}}
\def\eqtilde{ \tilde{{\bf \Omega}}}
\def\eqtildeone{ \tilde{{\bf \Omega}}_1}
\def\eqtildetwo{ \tilde{{\bf \Omega}}_2}
\def\eqtildeonezero{ \tilde{{\bf \Omega}}_{10}}
\def\eqhat{ \hat{{\bf \Omega}}}
\def\eqhatone{ \hat{{\bf \Omega}}_1}
\def\eqhattwo{ \hat{{\bf \Omega}}_2}
\def\eqhatonezero{ \hat{{\bf \Omega}}_{10}}
\def\eqhatoneone{ \hat{{\bf \Omega}}_{11}}
\def\eqhattwozero{ \hat{{\bf \Omega}}_{20}}
\def\eqhattwotwo{ \hat{{\bf \Omega}}_{22}}
\newcommand{\Null}[1]{{{\rm null}\left(#1\right)}}
\newcommand{\Span}[1][]{\ifthenelse{\isempty{#1}}{{\rm Span}}{{\rm Span}\left(#1\right)}}
\def\r{r}
\def\ZeroMat{\mathbf 0}
\def\tX{{\mathbf X}}
\def\tx{{\mathbf x}}
\def\tH{{\textbf H}}
\def\tY{{\mathbf Y}}
\def\tW{{\mathbf W}}
\def\tw{{\mathbf w}}
\def\tI{{\mathbf I}}
\def\tA{{\mathbf A}}
\def\tB{{\mathbf B}}
\def\tR{{\mathbf R}}
\def\tC{{\mathbf C}}
\def\tD{{\mathbf D}}
\def\tE{{\mathbf E}}
\def\tU{{\mathbf U}}
\def\tG{{\mathbf G}}
\def\tv{{\mathbf v}}
\def\tV{{\mathbf V}}
\def\tT{{\mathbf T}}
\def\tS{{\mathbf S}}
\def\ts{{\mathbf s}}
\def\ty{{\mathbf y}}
\def\tP{{\mathbf P}}
\def\tM{{\mathbf M}}
\def\tg{{\mathbf g}}
\def\th{{\mathbf h}}
\newcommand{\rvVec}[1]{{\mathbf{#1}}}
\newcommand{\rvMat}[1]{{\mathbf{#1}}}
\renewcommand{\Vm}{\tV}
\renewcommand{\Um}{\tU}
\renewcommand{\Rm}{\tR}
\renewcommand{\Xm}{\tX}
\renewcommand{\Am}{\tA}
\renewcommand{\Bm}{\tB}
\renewcommand{\Pm}{\tP}
\renewcommand{\Mm}{\tM}
\renewcommand{\Id}{\tI}
\newif\ifShort
\newcommand{\ShortLongVersion}[2]{\ifShort #1 \else #2 \fi}
\newcommand{\Figwidth}{0.95\columnwidth}%
\def\twocolbreak{\nonumber\\ &}%
\def\twocolAlignMarker{&}%
\newcommand{\Figwidth}{4.5in}%
\def\twocolbreak{}%
\def\twocolAlignMarker{}%
\pgfplotsset{compat=1.15}
\begin{document}
\title{Transmit Correlation Diversity: Generalization, New Techniques, and Improved Bounds}
\author{Fan~Zhang,~\IEEEmembership{Student Member,~IEEE}, Khac-Hoang~Ngo,~\IEEEmembership{ Member,~IEEE},  Sheng~Yang,~\IEEEmembership{Member,~IEEE},  and Aria~Nosratinia,~\IEEEmembership{Fellow,~IEEE}\thanks{Fan Zhang and Aria Nosratinia are with the University of Texas at Dallas, Texas, USA. Khac-Hoang Ngo is with Chalmers University of Technology, Gothenburg, Sweden and Sheng Yang is with CentraleSup\'{e}lec, Paris-Saclay University, Gif-sur-Yvette, France.}\thanks{This work was supported in part by the grants 1527598 and 1718551 from the National Science Foundation.}
\thanks{The material in this paper was presented in part at the IEEE International Symposium on Information Theory (ISIT), Aachen, Germany, 2017,
the IEEE Information Theory Workshop (ITW), Kaohsiung, Taiwan, 2017,
and the IEEE International Symposium on Information Theory (ISIT), Colorado, USA, 2018.%
}
}

\maketitle

\begin{abstract}
When the users in a MIMO broadcast channel experience different spatial transmit correlation matrices, a class of gains is produced that is denoted {\em transmit correlation diversity.} This idea was conceived for channels in which transmit correlation matrices have mutually exclusive eigenspaces, allowing non-interfering training and transmission.
This paper broadens the scope of transmit correlation diversity to the case of {\em partially and fully overlapping} eigenspaces and introduces techniques to harvest these generalized gains. For the two-user MIMO broadcast channel, we derive achievable degrees of freedom (DoF) and achievable rate regions with/without channel state information at the receiver~(CSIR). When CSIR is available, the proposed achievable DoF region is tight in some configurations of the number of receive antennas and the channel correlation ranks. We then extend the DoF results to the $K$-user case by analyzing the interference graph that characterizes the overlapping structure of the eigenspaces. 
Our achievability results employ a combination of product superposition in the common part of the eigenspaces, and pre-beamforming (rate splitting) to create multiple data streams in non-overlapping parts of the eigenspaces.
Massive MIMO is a natural example in which spatially correlated link gains are likely to occur. We study the achievable downlink sum rate for a frequency-division duplex massive MIMO system under transmit correlation diversity.

\end{abstract}
\begin{IEEEkeywords}
MIMO broadcast channels, spatial correlation, channel state information, rate splitting, product superposition
\end{IEEEkeywords}
\section{Introduction} \label{sec:introduction}

The effect of spatial correlation on the capacity of MIMO links has been a subject of long-standing interest. Spatial correlation arises in part from propagation environments producing stronger signal gains in some spatial directions than others, and in part from spatially dependent patterns of the antennas. The interest in spatial correlation was sharpened by its experimental validation~\cite{Kermoal2002stochasticMIMO,KaiYu2004modeling}, and more recently by the increasing attention to higher microwave frequencies and larger number of antennas. 

Shiu {\em et al.}~\cite{Shiu_2000} proposed an abstract ``one-ring'' model for the spatial fading correlation and its effect on the MIMO capacity. In single-user channels with channel state information at the receiver~(CSIR) but no channel state information at the transmitter~(CSIT), channel correlation can boost power but may reduce the degrees of freedom~(DoF)~\cite{Jafar_2004,Jorswieck_2004}, thus it can be detrimental at high signal-to-noise ratio~(SNR) but a boon at low SNR. Tulino {\em et al.}~\cite{Tulino_2005} derived analytical characterizations of the capacity of correlated MIMO channels for the large antenna array regime.
Chang {\em et al.}~\cite{chang_2006} showed that channel rank deficiency due to spatial correlation lowers the diversity-multiplexing tradeoff curves from that of uncorrelated channel. Capacity bounds subject to channel estimation errors in correlated fading have been characterized~\cite{Anese_2009,Soysal_2010}. 
{Of the rich broader literature on MIMO spatial correlation, we are able to mention only a few representative examples~\cite{1237141,995514} in the interest of brevity.}

The sum-rate capacity under user-specific transmit correlations with CSIR was studied in~\cite{Lee_2009,Lee_2010}. Under the assumption that all users experience {\em identical} correlation, Al-Naffouri {\em et al.}~\cite{Naffouri_2009} showed that correlation is detrimental to the sum-rate scaling of the MIMO broadcast channels under certain transmission schemes. In practice, however, users may have non-identical correlation matrices because they are not co-located~\cite{Abdi_2002}, making it difficult to draw conclusions based on~\cite{Naffouri_2009}.  Furthermore, at higher frequencies or with large number of antennas, when spatial correlation is unavoidable, comparing capacity against a hypothetically uncorrelated channel may have limited operational impact. Instead, a more immediate question could be: how to maximize performance in the presence of spatial correlation? A useful tool for that purpose is {\em transmit correlation diversity}, i.e., leveraging the difference between the spatial correlation observed by different users in the system 
{in the interest of exploring and exploiting economies in training and pilots}.

Transmit correlation diversity was originally conceived for transmit spatial correlation matrices that have mutually exclusive eigenspaces.\footnote{
{The phrase {\em Transmit correlation diversity} is employed in a narrow sense, describing a class of gains that are related to economy of training and pilots, and have been a subject of relatively recent interest. This is in contrast with the broader set of beamforming techniques in the presence of antenna correlation, which have a longer pedigree in wireless communication.}}
Under this condition, a joint spatial division multiplexing~(JSDM) transmission scheme was proposed~\cite{Nam_2012,Nam_2014,Nam_2014_2,Nam_2017} that reduces the overhead needed for channel estimation. For multi-user networks with orthogonal eigenspace correlation matrices, Adhikary and Caire~\cite{Adhikary_2014} showed that transmit correlation helps in multi-cell network by partitioning the user spaces into clusters according to correlation.  It is also known that transmit correlation benefits the sum rate in the downlink performance of a heterogeneous cellular network (HetNet) where both macro and small cells share the same spectrum~\cite{Adhikary_2015}.  Non-overlapping transmit correlation eigenspaces have also been exploited in a two-tier system where a large number of small cells are deployed under a macro cell~\cite{Adhikary_2014_2}.

Except for severely rank-deficient MIMO links and relatively small number of users, in most other scenarios transmit correlation matrices corresponding to different receivers have eigenspaces whose intersection is non-trivial, i.e., they experience some overlap. This creates a natural motivation to explore and understand transmit correlation diversity in the more general setting.
%
%
This paper broadens the scope of correlation diversity and introduces methods to harvest correlation diversity gains under these broader channel conditions.


The main contributions of this work are summarized as follows.
\begin{enumerate}
\item We derive achievable DoF regions for the two-user broadcast channel in spatially correlated fading under the CSIR~(Theorem~\ref{thm:2user_CSIR_partiallyOverlapping}) and no free CSIR~(Theorem~\ref{thm:2user_CDIR_partiallyOverlapping}) assumptions. These regions are significantly larger than the time division multiple access~(TDMA) region, especially when the rank $r_0$ of the overlap between two correlation eigenspaces is large (see Fig.~\ref{fig:DoF_2user_CSIR} and Fig.~\ref{fig:dof_2pwo}). In the CSIR case, we also found an outer bound~(Theorem~\ref{thm:2user_CSIR_outerBound}) which shows that our achievable DoF region is tight under certain conditions. 

\item For the two-user broadcast channel, we propose an achievable rate region for arbitrary input distribution satisfying the power constraint~(Lemma~\ref{lemma:BC_rate_region}). We characterize this rate region with an explicit input distribution based on orthogonal pilots and Gaussian data symbols. We also derive the rate achieved with product superposition (Section~\ref{sec:2user_rate_prod}) and a hybrid of pre-beamforming and product superposition (Section~\ref{sec:2user_rate_hybrid}). As a by-product, we find the rate achieved with pilot-based schemes for the point-to-point channel (Theorem~\ref{thm:single_user}), which generalizes the result of Hassibi and Hochwald~\cite{Hassibi_2003} to correlated fading.

\item We derive achievable DoF regions for the $K$-user broadcast channel in spatially correlated fading in the presence of CSIR (Theorem~\ref{thm:K_wCSIR}), as well as without free CSIR under fully overlapping eigenspaces (Theorem~\ref{thm:kuser_fully}), symmetrically partially overlapping eigenspaces (Theorems~\ref{thm:K_sym_nCSIR}, \ref{thm:K_nCSIR}) or general correlation structure (Theorem~\ref{thm:Kuser_CDIT_DoF_hybrid}).

\item We analyze the sum rate of a massive MIMO system operating in FDD mode by investigating the pilot reduction and opportunistic additional data transmission that is made possible by spatial correlation.
\end{enumerate}
For the achievability results above, we employ pre-beamforming, product superposition, or a combination thereof, in the process demonstrating that these transmission techniques can harvest transmit correlation diversity gains under partially-overlapping eigenspaces. 
For the most part, our results do not require the fading to be Rayleigh; they hold for a wider class of fading such that the channel matrix has finite entropy and finite power.
Early versions of the results of this paper appeared in~\cite{Fan_2017,Ngo_2017,Fan_2018}.

\textit{Notation}: Bold lower-case letters, e.g. $\tx$, denote column vectors. Bold upper-case letters, e.g. $\tM$,  denote  matrices.
The Euclidean norm is denoted by $\|\tx\|$ and the Frobenius norm $\|\Mm\|_F$. 
The trace, conjugate,
transpose and conjugated transpose of $\Mm$ are denoted $\trace[\Mm]$, 
$\Mm^*$, $\Mm^\T$ and
$\Mm^\H$, respectively; $\Mm^{-\T} \defeq (\Mm^{-1})^\T$ and $\Mm^{-\H} \defeq (\Mm^{-1})^\H$; $\Id_m$ and $\mathbf{0}_{m \times n}$ denote the $m\times m$ identity matrix and $m\times n$ zero matrix, respectively, and the dimensions are omitted if cleared from the context; 
$\Mm_{[i:j]}$ denotes the sub-matrix containing columns from $i$ to $j$ of $\Mm$, and $\Mm_{[i]}$ denotes the $i$-th column; $(\tx)_{[i:j]}$ and $(\tx^\T)_{[i:j]}$ denotes respectively the column vector and row vector containing entries from $i$  to $j$ of a column vector $\tx$;  $\Span[\tU]$ denotes the subspace spanned by the columns of a truncated unitary matrix $\tU$ and $\Span[\tU]^{\perp}$ denotes the subspace that is orthogonal to $\Span[\tU]$; $\diag(x_1,\dots,x_n)$ is a diagonal matrix with diagonal entries $x_1,\dots,x_n$;
$[n] \defeq \{1,2,\dots,n\}$; $(x)^+ \defeq \max\{x,0\}$; 
$\mathbbm{1}\{A\}$ is the indicator function of event $A$.
Logarithms are in base $2$. All rates are measured in bits per channel use. 
 


\section{System Model}
\label{sec:sys}
Consider a MIMO broadcast channel in which a transmitter (also called as base station) equipped with $M$ antennas transmitting to $K$ receivers (also called as users), where User~$k$ is equipped with $N_k$ antennas, $k\in [K]$. The received signal at User~$k$ at channel use $j$ is
\begin{equation}
\label{eq:syssig}
\ty_k(j) = \tH_k(j)\tx(j) + \tw_k(j), \qquad \text{for~~}k\in [K],\ j  = 1,2,\dots
\end{equation}
where $\tx(j) \in \mathbb{C}^{M}$ is the transmitted signal at channel use $j$ and $\tw_k \in \mathbb{C}^{N_k}$ is the white noise with independent and identically distributed~(i.i.d.) $\Cc\Nc(0,1)$ entries. $\tH_k(j) \in \CC^{N_k\times M}$ is the channel matrix containing the random fading coefficients between $M$ transmit antennas of the base station and $N_k$ receive antennas of User~$k$. We assume that $\frac{1}{MN_k}\E[{\|\tH_k\|^2}] = 1, k\in [K]$. The transmitted signal is subject to the power constraint
\begin{align}
\frac{1}{J} \sum_{j = 1}^{J} \|\tx(j)\|^2 \le \rho,
\end{align}
where $J$ is the number of channel uses spanned by a codeword (of a channel code). Therefore, $\rho$ is the ratio between the average transmit power per antenna and the noise power, and is referred to as the SNR of the channel. Hereafter, we omit the channel use index $j$. 

\subsubsection{Channel Correlation} 
\label{sec:correlation_model}

We assume that the channel is spatially correlated according to the Kronecker model (a.k.a. separable model), and focus on the transmit-side correlation. Thus the channel matrices are expressed as
\begin{align}
\tH_k = \breve{\tH}_k \tR_k^{\frac{1}{2}}, \quad k\in [K], \label{eq:Kronecker_1}
\end{align}
where $\tR_k = \frac{1}{N_k} \E[\tH_k^\H \tH_k] {  \in \CC^{M\times M}}$, $\trace[\tR_k] = M$, is the transmit correlation matrix of User~$k$ with rank $r_k$, and $\breve{\tH}_k \in \mathbb{C}^{N_k \times M}$ is drawn from a generic distribution satisfying the conditions 
\begin{align}
h(\breve{\tH}_k) > -\infty, \quad \E[\breve{\tH}_k^\H \breve{\tH}_k] = N_k \tI_M, \quad k \in [K].
\end{align}

Since the correlation matrices might be rank-deficient, $\breve{\tH}_k$ is not necessarily a minimal representation of the randomness in $\tH_k$.  
The correlation eigenspace of User~$k$ is revealed via eigendecomposition of the correlation matrix:
\begin{align}
\tR_k = \tU_k \Sigmam_k \tU_k^\H,
\end{align}
where $\Sigmam_k$ is a $r_k \times r_k$ diagonal matrix containing $r_k$ non-zero eigenvalues of $\tR_k$, and $\tU_k$ is a $M \times r_k$ matrix whose orthonormal unit column vectors are the eigenvectors of $\tR_k$ corresponding to the non-zero eigenvalues. The rows of $\tH_k$ belong to the $r_k$-dimensional eigenspace $\Span(\tU_k)$ of $\tR_k$, also called as the eigenspace of User~$k$.

The channel expression \eqref{eq:Kronecker_1} can be expanded as
\begin{align}
\tH_k = \breve{\tH}_k \tU_k \Sigmam_k^{\frac12} \tU_k^\H = \tG_k \Sigmam_k^{\frac12} \tU_k^\H, \label{eq:Kronecker_2}
\end{align}
where $\tG_k \defeq \breve{\tH}_k \tU_k$ is equivalently drawn from a generic distribution satisfying $h(\tG_k) > -\infty$, $\E[\tG_k^\H \tG_k] = N_k \tI_{r_k}$, $k \in [K]$.

The eigenspaces $\Span(\tU_k)$ have a prominent role in transmit correlation diversity. For example, methods such as~\cite{Nam_2012,Nam_2014,Nam_2014_2,Nam_2017} are critically dependent on finding groups of users whose eigenspaces have no intersection. In contrast, in this paper, we propose transmission schemes that take advantage of both common and non-common parts of the eigenspaces. To this end, in several instances, we build an equivalent channel $\bar{\tH}_k$ that resides in a {\em subspace} of the eigenspace $\Span(\tU_k)$ via the linear transformation
\begin{align}
\bar{\tH}_k = \tH_k \tV_k,
\end{align}
for some truncated unitary matrix $\tV_k \in \CC^{M\times s_k}$, $s_k \le r_k$, such that~ $\Span(\tV_k) \subset \Span(\tU_k)$.
Unlike $\tU_k$, $k\in [K]$, that characterize the correlation eigenspaces of the links, the subspaces $\Span(\tV_k)$ also depend on the proposed transmission schemes and may be customized throughout the paper.

\subsubsection{Channel Information Availability}
We assume throughout the paper that the distribution of $\tH_k$, in particular the second-order statistic $\tR_k$ (and thus $\Sigmam_k$ and $\tU_k$), is known to both the base station and User~$k$. This is reasonable because $\tR_k$ represents long-term behavior of the channel that is stable and can be easily tracked. On the other hand, the realization of $\tH_k$ changes much more rapidly. We consider two scenarios: 
\begin{itemize}
\item CSIR~(channel state information at the receiver): User~$k$ knows perfectly the realizations of $\tH_k$.
\item No free CSIR: User~$k$ only knows the distribution of $\tH_k$.
In this case, for a tractable model of the channel variation, we assume a block fading model with equal-length and synchronous coherence interval (across the users) of $T$ channel uses. That is, $\tH_k$ remains constant during each block of length $T$ and changes independently across blocks \cite{Zheng_2002}. We assume that $T \ge 2\max (r_k,N_k), \forall k$. Let $\tX = [\tx(1)\ \dots\ \tx(T)]$ be the transmitted signal during a block, the received signal at User~$k$ during this block is
\begin{align} \label{eq:rx_signal_blockfading}
\tY_k = \tH_k \tX + \tW_k,
\end{align}
where $\tY_k = [\ty_k(1)\ \dots\ \ty_k(T)]$, $\tW_k = [\tw_k(1)\ \dots\ \tw_k(T)]$, and the block index is omitted for simplicity. User~$k$ might attempt to estimate $\tH_k$ with the help of known pilot symbols inserted in $\tX$.
\end{itemize}

\subsubsection{Achievable Rate and DoF}
Assuming $K$ independent messages are communicated (no common message), {and the corresponding rate tuple $(R_1(\rho),\dots,R_K(\rho))$ is achievable at SNR $\rho$, $\forall \rho \ge 0$, i.e., lie within the capacity region of the channel, then an achievable DoF tuple $(d_1,\dots,d_K)$ is defined as}
\begin{equation}
d_k \triangleq \lim_{\rho \rightarrow \infty} \frac{R_k(\rho)}{\log \rho},\quad k \in [K].
\end{equation}
The set of achievable rate (resp., DoF) tuples defines an achievable rate (resp., DoF) region of the channel.

For convenience, we denote $N_k^{\ast} \defeq \min (N_k,r_k)$.

\section{Preliminaries and Useful Results} \label{sec:preliminaries}

\begin{lemma} [The optimal single-user DoF] \label{lem:single-user}
For the correlated MIMO broadcast channel in Section~\ref{sec:correlation_model}, the optimal single-user DoF of User~$k$ is $d_k = N_k^*$ with CSIR and $d_k = N_k^* \Big(1-\frac{N_k^*}{T}\Big)$ without free CSIR.
\end{lemma}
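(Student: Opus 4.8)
The plan is to treat the single-user DoF as a special case of the known point-to-point MIMO results, adapted to the rank-deficient transmit correlation. First I would reduce the channel to its effective dimensions. From \eqref{eq:Kronecker_2}, $\tH_k = \tG_k \Sigmam_k^{1/2}\tU_k^\H$, so only the component of $\tx$ lying in $\Span(\tU_k)$ reaches User~$k$. Projecting onto this $r_k$-dimensional subspace, the effective channel is $N_k \times r_k$ with a full-rank correlation, so the number of usable spatial dimensions is $N_k^\ast = \min(N_k, r_k)$. This gives the matching converse in both cases: with CSIR the channel is an $N_k \times r_k$ coherent MIMO channel, whose capacity pre-log is exactly $\min(N_k,r_k) = N_k^\ast$; without CSIR the noncoherent block-fading MIMO bound of Zheng--Tse~\cite{Zheng_2002} applied to an $N_k\times r_k$ channel over coherence $T$ gives pre-log $N_k^\ast(1 - N_k^\ast/T)$, using $T \ge 2\max(r_k,N_k)$ so that $N_k^\ast \le T/2$ and no regime subtleties arise.

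For achievability with CSIR, I would have the transmitter send i.i.d.\ Gaussian symbols confined to $\Span(\tU_k)$ (e.g.\ $\tx = \tU_k \ts$ with $\ts \sim \CN(\zerov, \frac{\rho}{r_k}\Id_{r_k})$); since $\tU_k^\H\tH_k^\H$ has rank $N_k^\ast$ almost surely (from $h(\tG_k) > -\infty$, which precludes degenerate supports), the mutual information $\E[\log\det(\Id + \rho\,\tH_k\tH_k^\H/r_k)]$ scales as $N_k^\ast\log\rho$. For achievability without free CSIR, I would invoke Theorem~\ref{thm:single_user} (the pilot-based point-to-point result generalizing Hassibi--Hochwald to correlated fading), instantiated at $K=1$: spend $N_k^\ast$ channel uses per block on orthogonal pilots in $\Span(\tU_k)$ to estimate the effective $N_k^\ast$-dimensional channel, then transmit Gaussian data on the remaining $T - N_k^\ast$ uses, yielding rate $\sim N_k^\ast(1 - N_k^\ast/T)\log\rho$.

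The main obstacle is making the converse rigorous under the weak assumptions on $\breve\tH_k$ (merely $h(\breve\tH_k) > -\infty$ and the second-moment normalization), rather than Gaussian fading: I must argue that the effective channel $\tG_k$ has rank $N_k^\ast$ with probability one and that the relevant differential-entropy terms in the noncoherent converse are finite, so that the Zheng--Tse-type bounding steps go through. Finiteness of $h(\tG_k)$ and the power normalization should suffice for this, but I would state it carefully. A secondary point is verifying that $\Span(\tV_k)$ can indeed be taken inside $\Span(\tU_k)$ with the right dimension $s_k = N_k^\ast$ when $N_k < r_k$, so that the pilot dimension matches the number of receive antennas; this is where the $\min$ in $N_k^\ast$ genuinely matters, and I would treat the cases $N_k \le r_k$ and $N_k > r_k$ separately in the achievability argument.
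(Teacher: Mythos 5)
Your plan is correct in outline, but it is worth noting that the paper does not actually prove this lemma: it disposes of the CSIR case by citing known results on spatially correlated coherent MIMO (e.g., \cite{Chiani2003capacity_spatially_correlated_MIMO}) and of the no-free-CSIR case by citing \cite[Thm.~1]{Fan_2017}. What you propose is essentially a self-contained reconstruction of the arguments behind those citations: project onto $\Span(\tU_k)$ to obtain an effective $N_k\times r_k$ channel, read off the coherent pre-log $N_k^*$, and in the noncoherent case combine a Zheng--Tse-type converse \cite{Zheng_2002} with a pilot-based achievability scheme. Two caveats deserve emphasis. First, the converse without CSIR is exactly the place where your sketch is thinnest: Zheng--Tse's upper bound is derived for i.i.d.\ Rayleigh fading, and under the paper's weaker assumptions ($h(\breve{\tH}_k)>-\infty$ and a second-moment normalization only) the bounding steps do not carry over verbatim; this is precisely the content of the cited \cite[Thm.~1]{Fan_2017}, so a genuinely citation-free proof would need that argument redone, not just "stated carefully." Second, on achievability, Theorem~\ref{thm:single_user} as stated trains all $r_k$ eigendirections and therefore yields a $\big(1-\tfrac{r_k}{T}\big)$ pre-log factor, which is strictly suboptimal when $N_k<r_k$; you correctly flag that one must instead precode into an $N_k^*$-dimensional subspace of $\Span(\tU_k)$ and train only those dimensions, so what you need is the $s_k=N_k^*$ variant of that scheme rather than the theorem as written. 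With those two points made explicit, your route gives a more transparent, self-contained proof than the paper's citation, at the cost of having to redo the general-fading converse that the paper inherits from its earlier work.
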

The result in the CSIR case is well-known (see, e.g., \cite{Chiani2003capacity_spatially_correlated_MIMO}). The no free CSIR case was reported in \cite[Thm.~1]{Fan_2017}. The next lemma is used for the finite-SNR rate analysis.

\begin{lemma}[Worst case uncorrelated additive noise~{\cite{Hassibi_2003}}]	\label{lem:worst-case-noise}
Consider the point-to-point channel
\begin{align}
\rvVec{y} = \sqrt{\frac{\rho}{M}} \rvMat{H} \rvVec{x} + \rvVec{w}, \label{eq:channel_general_noise}
\end{align}
where the channel $\rvMat{H} \in \CC^{N\times M}$ is known to the receiver, and the signal $\rvVec{x} \in \CC^{M\times 1}$ and the noise $\rvVec{w}\in \CC^{N \times 1}$ satisfy the power constraints $\frac{1}{M}\E[\|\rvVec{x}\|^2] = 1$ and $\frac{1}{N}\E[\|\rvVec{w}\|^2] = 1$, are both complex Gaussian distributed, and are uncorrelated, i.e, $\E[\rvVec{x}\rvVec{w}^\H] = \mathbf{0}$. Let $\Rm_{\rvVec{x}} \defeq \E[\rvVec{x}\rvVec{x}^\H]$ and $\Rm_{\rvVec{w}} \defeq \E[\rvVec{w}\rvVec{w}^\H]$ and assume $\trace[\Rm_{\rvVec{x}}]=M$ and $\trace[\Rm_{\rvVec{w}}]=N$. Then the mutual information $I(\rvVec{y}; \rvVec{x} \cond \rvMat{H})$ is lower bounded as
\begin{align}
I(\rvVec{y}; \rvVec{x} \cond \rvMat{H}) &\ge \E[{\log\det[\Id_N + \frac{\rho}{M} \Rm_{\rvVec{w}}^{-1} \rvMat{H} \Rm_{\rvVec{x}} \rvMat{H}^\H]}] 
\label{eq:worst-case-noise-1}\\
&\ge \min_{\Rm_{\rvVec{w}},\trace[\Rm_{\rvVec{w}}] = N} \E[{\log\det[\Id_N + \frac{\rho}{M} \Rm_{\rvVec{w}}^{-1} \rvMat{H} \Rm_{\rvVec{x}} \rvMat{H}^\H]}]. 
\label{eq:worst-case-noise-2}
\end{align}
If the distribution of $\rvMat{H}$ is left rotationally invariant, i.e., $p(\Thetam\rvMat{H}) = p(\rvMat{H})$ for any deterministic $N\times N$ unitary matrix $\Thetam$, then the minimizing noise covariance matrix in \eqref{eq:worst-case-noise-2} is $\Rm_{\rvVec{w},{\rm opt}} = \Id_N$.

\end{lemma}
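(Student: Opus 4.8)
The plan is to establish the two inequalities in \eqref{eq:worst-case-noise-1}–\eqref{eq:worst-case-noise-2} first, and then to identify the minimizer in the rotationally-invariant case. For the first inequality, I would invoke the standard ``Gaussian noise is the worst additive noise for a fixed covariance'' argument: conditioned on $\rvMat{H}$, the input $\rvVec{x}$ is Gaussian with covariance $\Rm_{\rvVec{x}}$, so $I(\rvVec{y};\rvVec{x}\cond\rvMat{H}) = h(\rvVec{y}\cond\rvMat{H}) - h(\rvVec{y}\cond\rvVec{x},\rvMat{H}) = h(\rvVec{y}\cond\rvMat{H}) - h(\rvVec{w})$. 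Since $\rvVec{w}$ is Gaussian, $h(\rvVec{w}) = \log\det(\pi e \Rm_{\rvVec{w}})$; and since $\rvVec{y}$, conditioned on $\rvMat{H}$, has covariance $\frac{\rho}{M}\rvMat{H}\Rm_{\rvVec{x}}\rvMat{H}^\H + \Rm_{\rvVec{w}}$ (using $\E[\rvVec{x}\rvVec{w}^\H]=\mathbf 0$), the maximum-entropy bound gives $h(\rvVec{y}\cond\rvMat{H}) \le \E[\log\det(\pi e(\frac{\rho}{M}\rvMat{H}\Rm_{\rvVec{x}}\rvMat{H}^\H + \Rm_{\rvVec{w}}))]$. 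Wait — that inequality points the wrong way; the subtlety (and this is exactly the content of \cite{Hassibi_2003}) is that the mutual information with a \emph{Gaussian} $\rvVec{y}$ minus Gaussian $\rvVec{w}$ is actually a lower bound on the true mutual information with non-Gaussian noise, because replacing the true noise by Gaussian noise of the same covariance can only decrease $I(\rvVec{y};\rvVec{x}\cond\rvMat{H})$. So the correct route is: for the \emph{given} (possibly non-Gaussian) noise, $I(\rvVec{y};\rvVec{x}\cond\rvMat{H}) \ge I_{G}(\rvVec{y};\rvVec{x}\cond\rvMat{H})$ where $I_G$ is computed as if $\rvVec{w}$ were Gaussian with covariance $\Rm_{\rvVec{w}}$, and the latter equals $\E[\log\det(\Id_N + \frac{\rho}{M}\Rm_{\rvVec{w}}^{-1}\rvMat{H}\Rm_{\rvVec{x}}\rvMat{H}^\H)]$. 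I would state this reduction carefully, citing the worst-case-noise lemma of \cite{Hassibi_2003} (or reproving it via the data-processing/relative-entropy argument: $I(\rvVec{x};\rvVec{y}\cond\rvMat{H})_{\text{true}} - I(\rvVec{x};\rvVec{y}\cond\rvMat{H})_{G} = D(p_{\rvVec{y}\cond\rvVec{x},\rvMat{H}}\|p_{\rvVec{y}\cond\rvVec{x},\rvMat{H}}^{G}) - D(p_{\rvVec{y}\cond\rvMat{H}}\|p_{\rvVec{y}\cond\rvMat{H}}^{G}) \ge 0$, the inequality following because conditioning reduces relative entropy between a distribution and its Gaussian surrogate of matched covariance).

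The second inequality, \eqref{eq:worst-case-noise-2}, is immediate: the right-hand side is the minimum over all admissible $\Rm_{\rvVec{w}}$ with $\trace[\Rm_{\rvVec{w}}]=N$ of the expression in \eqref{eq:worst-case-noise-1}, and the actual $\Rm_{\rvVec{w}}$ is one such admissible matrix, so the bound in \eqref{eq:worst-case-noise-1} is at least the minimum. Nothing more is needed there.

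For the last claim — that under left rotational invariance of $\rvMat{H}$ the minimizer is $\Rm_{\rvVec{w},\mathrm{opt}}=\Id_N$ — I would argue by a symmetrization/convexity argument. Let $f(\Rm_{\rvVec{w}}) \defeq \E[\log\det(\Id_N + \frac{\rho}{M}\Rm_{\rvVec{w}}^{-1}\rvMat{H}\Rm_{\rvVec{x}}\rvMat{H}^\H)]$. First, $f$ is convex in $\Rm_{\rvVec{w}}$ on the positive-definite cone: write $\log\det(\Id + \frac{\rho}{M}\Rm_{\rvVec{w}}^{-1}\Am) = \log\det(\Rm_{\rvVec{w}} + \frac{\rho}{M}\Am) - \log\det\Rm_{\rvVec{w}}$ with $\Am = \rvMat{H}\Rm_{\rvVec{x}}\rvMat{H}^\H \succeq 0$; the first term is concave in $\Rm_{\rvVec{w}}$ and the second is concave in $\Rm_{\rvVec{w}}$, so their difference $\log\det(\Rm_{\rvVec{w}}+\frac{\rho}{M}\Am) - \log\det\Rm_{\rvVec{w}}$ is — hmm, concave minus concave is not obviously convex. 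The cleaner fact is that $\Rm_{\rvVec{w}}\mapsto \log\det(\Id + \Rm_{\rvVec{w}}^{-1}\Am)$ is convex because $\Rm_{\rvVec{w}}\mapsto \Rm_{\rvVec{w}}^{-1}$ is matrix-convex and $\Bm\mapsto \log\det(\Id+\Bm^{1/2}\cdot)$-type composition preserves convexity; I would instead cite the standard result (used in \cite{Hassibi_2003}) that this map is convex, or appeal directly to \cite{Hassibi_2003}. Given convexity, and the symmetry: for any $N\times N$ unitary $\Thetam$, the change of variables $\rvMat{H}\mapsto\Thetam^\H\rvMat{H}$ leaves the distribution of $\rvMat{H}$ unchanged, hence $f(\Thetam\Rm_{\rvVec{w}}\Thetam^\H) = f(\Rm_{\rvVec{w}})$. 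So if $\Rm_{\rvVec{w}}^\star$ is a minimizer, so is $\Thetam\Rm_{\rvVec{w}}^\star\Thetam^\H$ for every unitary $\Thetam$; averaging over the Haar measure on $\mathrm{U}(N)$ and using convexity (Jensen), $\int \Thetam\Rm_{\rvVec{w}}^\star\Thetam^\H\, d\Thetam = \frac{\trace[\Rm_{\rvVec{w}}^\star]}{N}\Id_N = \Id_N$ achieves value $\le f(\Rm_{\rvVec{w}}^\star)$, hence is also a minimizer. I expect the main obstacle to be stating the convexity of $\Rm_{\rvVec{w}}\mapsto\log\det(\Id+\frac{\rho}{M}\Rm_{\rvVec{w}}^{-1}\rvMat{H}\Rm_{\rvVec{x}}\rvMat{H}^\H)$ cleanly enough to justify the Jensen step; everything else is bookkeeping, and since this lemma is attributed to \cite{Hassibi_2003}, I would keep the proof short and lean on that reference for the technical convexity fact while spelling out the symmetrization argument that pins down $\Id_N$ as the optimum.
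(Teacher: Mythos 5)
Your proposal is correct and follows essentially the same route as the paper: the first inequality is delegated to the worst-case-noise bound of Hassibi--Hochwald, the second is trivial, and the identity minimizer is obtained from left rotational invariance combined with convexity of $\Rm_{\rvVec{w}}\mapsto\E[\log\det(\Id_N+\tfrac{\rho}{M}\Rm_{\rvVec{w}}^{-1}\rvMat{H}\Rm_{\rvVec{x}}\rvMat{H}^\H)]$. The only (cosmetic) difference is that you symmetrize by Haar-averaging over all unitaries, which requires convexity on the full positive-definite cone, whereas the paper first diagonalizes $\Rm_{\rvVec{w}}$ via rotational invariance and invokes convexity only in the diagonal entries; both variants are valid and rest on the same symmetry-plus-convexity idea.
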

\begin{proof}
The proof follows from the proof of \cite[Thm.~1]{Hassibi_2003}. Specifically, the mutual information lower bound \eqref{eq:worst-case-noise-1} was stated in \cite[Eq.~(27)]{Hassibi_2003}. To show that $\Rm_{\rvVec{w},{\rm opt}} = \Id_N$, we diagonalize $\Rm_{\rvVec{w}}$ using the left rotational invariance of $\rvMat{H}$, and then use the convexity of $\E[{\log\det[\Id_N + \frac{\rho}{M} \Rm_{\rvVec{w}}^{-1} \rvMat{H} \Rm_{\rvVec{x}} \rvMat{H}^\H]}]$ in the diagonalized $\Rm_{\rvVec{w}}$. 
\end{proof}

The next lemma gives the MMSE estimator used for pilot-based channel estimation without free CSIR.
\begin{lemma}[MMSE estimator] \label{lem:MMSE-estimator}
Consider the following linear model
\begin{align}
\rvMat{Y} =  \rvMat{H} \Xm + \rvMat{W},
\end{align}
where $\rvMat{H} \in \CC^{N\times M}$ has correlation matrix $\Rm = \frac{1}{N}\E[\rvMat{H}^\H \rvMat{H}]$, $\Xm \in \CC^{M\times M}$ is known, and $\rvMat{W} \in \CC^{N\times M}$ has i.i.d. $\Cc\Nc(0,1)$ entries. The linear MMSE estimator for $\rvMat{H}$ is given by
\begin{align} \label{eq:MMSE-estimate}
\hat{\rvMat{H}} &= \rvMat{Y} (\Xm^\H \Rm \Xm + \Id_M)^{-1} \Xm^\H \Rm.
\end{align}
The MMSE estimate $\hat{\rvMat{H}}$ is also the conditional mean: $\hat{\rvMat{H}} = \EE [\rvMat{H} \cond \Xm,\rvMat{Y}]$. The estimate $\hat{\rvMat{H}}$ and the estimation error $\tilde{\rvMat{H}} = \rvMat{H} - \hat{\rvMat{H}}$ are uncorrelated, have zero mean and row covariance 
\begin{align}
\frac{1}{N}\EE [\hat{\rvMat{H}}^\H \hat{\rvMat{H}}] &= \Rm \Xm (\Xm^\H \Rm \Xm + \Id_M)^{-1} \Xm^\H \Rm, \label{eq:tmp181} \\
\frac{1}{N}\EE [\tilde{\rvMat{H}}^\H \tilde{\rvMat{H}}] &= \Rm - \Rm \Xm (\Xm^\H \Rm \Xm + \Id_M)^{-1} \Xm^\H \Rm 
. \label{eq:tmp182}
\end{align}
\end{lemma}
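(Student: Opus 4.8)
The plan is to recognize the claim as the matrix form of the classical Bayesian linear (Gauss--Markov) estimator and to prove it via the orthogonality principle rather than by any ad hoc calculation. First I would record the relevant second moments. Since $\rvMat{W}$ has i.i.d. $\CN(0,1)$ entries (hence $\tfrac1N\E[\rvMat{W}^\H\rvMat{W}] = \Id_M$) and is uncorrelated with $\rvMat{H}$ (so $\E[\rvMat{H}^\H\rvMat{W}] = \mathbf{0}$), one gets
\[ \tfrac1N\E[\rvMat{Y}^\H \rvMat{H}] = \Xm^\H \Rm, \qquad \tfrac1N\E[\rvMat{Y}^\H \rvMat{Y}] = \Xm^\H \Rm \Xm + \Id_M . \]
Note that $\Xm^\H \Rm \Xm + \Id_M \succeq \Id_M$ is invertible regardless of whether $\Rm$ is rank-deficient, so the formula in \eqref{eq:MMSE-estimate} is always well defined; this is worth a one-line remark since rank-deficient $\Rm$ is the case of interest in the paper.

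Next I would derive the estimator. Restricting to estimators of the form $\hat{\rvMat{H}} = \rvMat{Y}\Am$ — the natural class, which coincides with the full linear-MMSE class once the i.i.d.-rows structure of the channel/noise model is taken into account — I would minimize $\E[\|\rvMat{H} - \rvMat{Y}\Am\|_{\rm F}^2]$ over $\Am$. Expanding the Frobenius norm gives a convex quadratic in $\Am$ whose stationarity condition is the orthogonality relation $\E[(\rvMat{H} - \rvMat{Y}\Am)^\H \rvMat{Y}] = \mathbf{0}$, i.e. $\Am^\H(\Xm^\H \Rm \Xm + \Id_M) = \Rm \Xm$; using that both $\Rm$ and $\Xm^\H\Rm\Xm + \Id_M$ are Hermitian, this yields $\Am = (\Xm^\H \Rm \Xm + \Id_M)^{-1} \Xm^\H \Rm$ and hence \eqref{eq:MMSE-estimate}. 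When in addition $\rvMat{H}$ (and therefore the pair $(\rvMat{H},\rvMat{Y})$) is jointly Gaussian with zero mean, the conditional expectation $\EE[\rvMat{H}\cond \Xm,\rvMat{Y}]$ is linear in $\rvMat{Y}$ and thus equals the LMMSE estimate.

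Finally I would read off the remaining assertions from the orthogonality relation. Zero mean of $\hat{\rvMat{H}}$ and $\tilde{\rvMat{H}} = \rvMat{H}-\hat{\rvMat{H}}$ is immediate; the uncorrelatedness follows because the error is uncorrelated with the observation $\rvMat{Y}$, hence with $\hat{\rvMat{H}} = \rvMat{Y}\Am$, giving $\E[\hat{\rvMat{H}}^\H\tilde{\rvMat{H}}] = \mathbf{0}$. Then \eqref{eq:tmp181} follows by substituting $\Am$ into $\tfrac1N\E[\hat{\rvMat{H}}^\H\hat{\rvMat{H}}] = \Am^\H(\Xm^\H\Rm\Xm+\Id_M)\Am$, and \eqref{eq:tmp182} follows from the Pythagorean decomposition $\tfrac1N\E[\rvMat{H}^\H\rvMat{H}] = \tfrac1N\E[\hat{\rvMat{H}}^\H\hat{\rvMat{H}}] + \tfrac1N\E[\tilde{\rvMat{H}}^\H\tilde{\rvMat{H}}]$ (the cross terms vanishing by the uncorrelatedness just established) together with $\tfrac1N\E[\rvMat{H}^\H\rvMat{H}] = \Rm$. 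The only delicate point is bookkeeping: keeping the conjugate transposes consistent and being explicit about which hypothesis is used where — mere uncorrelatedness of noise and channel suffices for the estimator formula and the covariance identities, whereas the identification with the conditional mean needs joint Gaussianity with i.i.d. rows. There is no substantive obstacle beyond this.
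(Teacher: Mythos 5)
Your proposal is correct and follows essentially the same route as the paper: restrict to linear estimators $\hat{\rvMat{H}}=\rvMat{Y}\Am$, solve the normal equations (the paper does this by setting the matrix derivative of the expanded MSE to zero, which is exactly your orthogonality condition), and then obtain \eqref{eq:tmp181}--\eqref{eq:tmp182} by substitution and the Pythagorean/orthogonality decomposition. Your additional remarks — that invertibility of $\Xm^\H\Rm\Xm+\Id_M$ holds even for rank-deficient $\Rm$, and that the identification of $\hat{\rvMat{H}}$ with $\EE[\rvMat{H}\cond\Xm,\rvMat{Y}]$ requires joint Gaussianity — correctly fill in points the paper's proof leaves implicit.
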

\begin{proof}
The linear MMSE channel estimator is given by $\hat{\rvMat{H}} = \rvMat{Y} \Am$ where $\Am$ is the minimizer of the MSE
\begin{align}
\frac{1}{N} \EE [\|\rvMat{H} - \hat{\rvMat{H}}\|_F^2] = \trace[\Rm] - \trace[\Rm \Xm \Am] - \trace[\Am^\H \Xm^\H \Rm] + \trace[\Am^\H(\Xm^\H \Rm \Xm + \Id_M)\Am].
\end{align}
Solving $\frac{\partial}{\partial \Am} \frac{1}{N} \EE [\|\rvMat{H} - \hat{\rvMat{H}}\|_F^2] = 0$ yields the optimal $\Am_{\rm opt} = (\Xm^\H \Rm \Xm + \Id_M)^{-1} \Xm^\H \Rm$. Some further simple manipulations give \eqref{eq:tmp181} and \eqref{eq:tmp182}.
\end{proof}

In the following, we introduce two main building blocks of our proposed achievable schemes.
\subsection{Rate Splitting and Precoder Design}
\label{sec:intro_ratesplit}
To illustrate the basic idea of rate splitting, we take a two-user broadcast for example. Define $r_0 = \rank(\Span(\tU_1) \cap \Span(\tU_2))$.  Let $\tV_k$ be the precoding matrix. The transmitted signal is
\begin{equation}
    \tX = \tV_1\tX_1 + \tV_2\tX_2 + \tV_0\tX_0.
\end{equation}
Each of the signals $\tX_i$ contains an information-carrying matrix. The precoder matrices $\tV_i$ are designed satisfying the following properties:
\begin{align}
    \rank(\tU_1\tV_1) & = r_1,~
    \rank(\tU_2\tV_2)  = r_2, \\
    \rank(\tU_1\tV_2) & = \rank(\tU_2\tV_1)= 0, \\
    \rank(\tU_1\tV_0) & = \rank(\tU_2\tV_0) = r_0.
\end{align}
 This property ensures that the receiver only sees the signal that transmit along the directions which are not orthogonal to its eigendirections. In this case, it indicates that receiver~1 can see $\tX_1$ and $\tX_0$, while receiver~2 can see $\tX_2$ and $\tX_0$. 
 The precoder $\tV_0$ can be calculated from $\Um_1$ and $\Um_2$ using, e.g., the Zassenhaus algorithm~\cite{Luks1997AlgoNilpotent}. Specifically, this algorithm uses elementary row operations to transform the $(r_1 + r_2) \times 2M$ matrix
$\Bigg[
\begin{matrix}
\Um_1^\T & \Um_1^\T \\
\Um_2^\T & \mathbf{0}_{r_2 \times M} 
\end{matrix}\Bigg]$
(or $\Bigg[\begin{matrix}
\Um_2^\T & \Um_2^\T \\
\Um_1^\T & \mathbf{0}_{r_1 \times M} 
\end{matrix}\Bigg]$) to the row echelon form
$\left[
\begin{matrix}
{\Vm}_0^\T & \pmb{*} \\
\mathbf{0}  & {\Vm}_0^\T \\
\mathbf{0} & \mathbf{0}
\end{matrix}\right]$,
where $\pmb{*}$ stands for a matrix which is not of interest. 
The precoders ${\Vm}_1$ and ${\Vm}_2$ can be found similarly by applying the Zassenhaus algorithm to $\Um_1$ and $\Null{\Um_2}$, and $\Null{\Um_1}$ and ${\Um_2}$, respectively, where $\Null{\Um_k}$ is the matrix such that $[\Um_k \ \Null{\Um_k}]$ is unitary.

\subsection{Product Superposition}
In \cite{Li_2012,Li_2015}, Li and Nosratinia studied a two-receiver broadcast  one static receiver has non-identical coherence times and proposed a product superposition scheme. In the earlier work of \cite{Fan_2017}, the product superposition scheme was implemented in a two-receiver broadcast channel when two receivers have non-identical transmit correlation. Assume a two-user broadcast channel has one receiver with uncorrelated channel and the other receiver with rank-deficient correlated channel with rank $r_2$, to apply product superposition, the transmitter sends the signal 
\begin{equation}
\tX = \sqrt{\rho}\tX_2 [\tI_{r_1}, \tS_1],
\end{equation}
where 
\begin{align}
\tX_2 =
\begin{bmatrix}
[\tI{r_2}, \tS_2] \\
\tS_0
\end{bmatrix},
\end{align}
$\tS_1 \in \mathbb{C}^{r_1 \times (T - M)}$ contains symbol intended for User~1, $\tS_0 \in \mathbb{C}^{(M - r_2) \times r_1}$ is designed to guarantee that $\tX_2$ is non-singular, and $\tS_2 \in \mathbb{C}^{r_2 \times M}$ includes symbol intended for Receiver~2. The received signal at User~1 is 
\begin{align}
\tY_1 = \sqrt{\rho}\tilde{\tH}_1  [\tI_{r_1}, \tS_1] + \tW_1,
\end{align}
where $\tilde{\tH}_1 = \bar{\tH}_1  {\boldsymbol \Sigma}_1 \tX_2$. Receiver~1 estimates the equivalent channel $\tilde{\tH}_1$ and decodes $\tS_1$, achieving $\min(N_1,M)(T - M)$ degrees of freedom. The received signal at Receiver~2, during the first $M$ time slots, is
\begin{align}
\tY_2^{\prime} = \sqrt{\rho}\tilde{\tH}_2 [\tI{r_2}, \tS_2] + \tW_2^{\prime},
\end{align}
where $\tilde{\tH}_2 = \bar{\tH}_2  {\boldsymbol \Sigma}_2$. Using the first $r_2$ columns, Receiver~2 estimates the channel, $\tilde{\tH}_2$, and furthermore using the remaining $(M - r_2)$ columns, Receiver~2 decodes the symbols, achieving $\min (N_2,r_2) (M - r_2)$ degrees of freedom.

\color{black}
\section{Two-user Broadcast Channel: DoF Analysis}
\label{sec:2-user_DoF}
Both with or without free CSIR assumption, we study first the special case of fully overlapping correlation eigenspaces, then the more general case of partially overlapping correlation eigenspaces.

\subsection{CSIR}
\label{sec:2w}
Consider the case where both users have spatially correlated channels, and User~$2$'s channel eigenspace is a subspace of User~$1$'s, which implies $r_2 \leq r_1 \leq  M$. 

\begin{proposition}
\label{thm:2fw} 
For the two-user broadcast channel with CSIR, when the eigenspace of User~$2$ is a subspace of User~$1$'s~(implying $r_2 \leq r_1 \leq M$), the DoF pairs $(N_1^*,0), (0,N_2^*)$, and $\big((N_1^*-r_2)^+,N_2^*\big)$ are achievable. Furthermore, if $r_1 \ge N_1 \ge r_1 - r_2$, the DoF pair $\big(r_1 - r_2, \min(N_1 - r_1 + r_2,N_2^*) \big)$ is also achievable. The convex hull of these pairs and the origin $(0,0)$ is an achievable DoF region.
\end{proposition}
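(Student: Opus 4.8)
The plan is to exhibit a simple linear-precoding-plus-zero-forcing scheme for each listed DoF pair and then to close the region by time sharing. I use repeatedly that, with CSIR, a user achieves a DoF equal to the rank of its effective channel matrix (Lemma~\ref{lem:single-user}), and that since each $\tG_k$ is drawn from a distribution with finite differential entropy it is absolutely continuous, so all channel matrices and the relevant projected submatrices have maximal rank almost surely. The two extreme points $(N_1^*,0)$ and $(0,N_2^*)$ are the single-user optima, attained by serving a single user (Lemma~\ref{lem:single-user}).

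For the remaining points I use the decomposition $\Span(\tU_1)=\Span(\tU_2)\oplus\Wc$, where $\Wc\defeq\Span(\tU_1)\cap\Span(\tU_2)^\perp$ has dimension $r_1-r_2$ (using $\Span(\tU_2)\subset\Span(\tU_1)$). Fix truncated unitary matrices $\tV_0\in\CC^{M\times r_2}$ and $\tV_1\in\CC^{M\times(r_1-r_2)}$ with $\Span(\tV_0)=\Span(\tU_2)$ and $\Span(\tV_1)=\Wc$; these are computable from the known statistics $\tU_1,\tU_2$ as in Section~\ref{sec:intro_ratesplit}. The transmitter sends $\tX=\tV_0\tX_0+\tV_1\tX_1$ with independent Gaussian codebooks and an equal power split, which does not change DoF. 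Since $\tH_2\tV_1=\tG_2\Sigmam_2^{1/2}\tU_2^\H\tV_1=\mathbf{0}$, User~2 sees only $\tX_0$; and since $\Span(\tV_1)$ is orthogonal to $\Span(\tU_2)$ while $\Span(\tV_0)$ lies inside it, User~1 sees $\tX_0$ purely as interference confined to the column space of the \emph{known} matrix $\tH_1\tV_0$, of rank $\min(N_1,r_2)$.

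To reach $\big((N_1^*-r_2)^+,N_2^*\big)$, I put $N_2^*$ streams into $\tX_0$ and $(N_1^*-r_2)^+$ streams into $\tX_1$. User~2 has effective channel $\tH_2\tV_0$ of rank $N_2^*$ and thus gets $N_2^*$ DoF. User~1 projects its observation onto the orthogonal complement of $\mathrm{colspace}(\tH_1\tV_0)$, a subspace of dimension $(N_1-r_2)^+$; by genericity of $\tG_1$ the residual channel seen by $\tX_1$ there has rank $\min\big((N_1-r_2)^+,\,r_1-r_2\big)$, which a short case check (according to whether $N_1\le r_2$, $r_2\le N_1\le r_1$, or $N_1\ge r_1$) shows equals $(N_1^*-r_2)^+$, giving User~1 that many DoF. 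For the last pair, assume $r_1\ge N_1\ge r_1-r_2$ (so $N_1^*=N_1$) and instead put only $s_2\defeq\min(N_1-r_1+r_2,\,N_2^*)$ streams into $\tX_0$, transmitted along an $s_2$-dimensional sub-subspace of $\Span(\tU_2)$, and $r_1-r_2$ streams into $\tX_1$. Then User~2's effective channel has rank $\min(N_2,s_2)=s_2$, giving $s_2$ DoF, while User~1's interference occupies only $s_2$ dimensions, so after zero-forcing it retains $N_1-s_2\ge r_1-r_2$ interference-free dimensions in which the residual channel for $\tX_1$ has rank $\min(N_1-s_2,\,r_1-r_2)=r_1-r_2$; User~1 thus gets $r_1-r_2$ DoF, realizing $\big(r_1-r_2,\,\min(N_1-r_1+r_2,N_2^*)\big)$.

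Time sharing among these operating points and the origin then yields their convex hull as an achievable DoF region. The only genuinely delicate step is the rank bookkeeping for the post-projection effective channels under the various orderings of $N_1,r_1,r_2$ — in particular checking that the zero-forcing projection at User~1 never annihilates the desired stream, which is precisely where absolute continuity of $\tG_1$ (equivalently, the ``general position'' of $\mathrm{colspace}(\tH_1\tV_0)$ and $\mathrm{colspace}(\tH_1\tV_1)$ inside $\CC^{N_1}$) is used; everything else is routine.
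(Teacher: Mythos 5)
Your proposal is correct and follows essentially the same route as the paper: the same rate-splitting precoders $\tV_0,\tV_1$ aligned with $\Span[\tU_2]$ and $\Span[\tU_1]\cap\Span[\tU_2]^{\perp}$, the same stream allocations for each corner point, and time sharing for the convex hull. The only (immaterial) difference is at User~1's receiver, where you zero-force the common signal and decode only the private streams, whereas the paper has User~1 jointly decode both $\ts_0$ and $\ts_1$; with CSIR and the chosen stream counts the two decoders yield identical DoF.
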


\ShortLongVersion{The proof is available in~\cite{arXivVersion:Fan2021} and is omitted here for brevity.}{
\begin{proof}
According to Lemma~\ref{lem:single-user}, the DoF pairs $(N_1^*,0)$ and $(0,N_2^*)$ are achievable. 

When $N_1^* \ge r_2$, the pair $(N_1^*-r_2, N_2^*)$ can be achieved as follows.

Recall that the eigenspaces of channels $\tH_1$ and $\tH_2$ are $\Span(\tU_1)$ and $\Span(\tU_2)$, respectively, and in the present case, $\Span(\tU_2) \subset \Span(\tU_1)$. There exist transmit eigendirections $\tV_1 \in \CC^{M \times (N_1^* - r_2)}, \tV_0 \in \CC^{M \times N^*_2}$ that are aligned with the common and non-common parts of the two channel eigenspaces 
such that
\begin{align}
\Span(\tV_0) &\subset \spantwo,\\
\spanvone &\subset \big(\spanone \cap \spantwo^\perp\big). \label{eq:tmp315}
\end{align}

Define $\tV \triangleq [\tV_0 \ \tV_1]$. The proposed transmission scheme is $\tx = \tV \big[\ts_0^\T \ \ts_1^\T\big]^\T$
where the signals $\ts_1 \in \mathbb{C}^{N_1^*-r_2},\ts_0 \in \mathbb{C}^{N_2^*}$ are intended for User~$1$ and User~$2$, respectively. 
The received signal at User~$1$ is 
\begin{equation}
\begin{split}
\ty_1 & =\tH_1\tx+ \tw_1  = {\tH}_1  \tV \Bigg[\begin{matrix}
\ts_0 \\
\ts_1
\end{matrix}\Bigg]
+ \tw_1.
\end{split}
\end{equation} 
Since User~$1$ knows ${\tH}_1  \tV$, it can decode both $\ts_1$ and $\ts_0$, achieving respectively $N_1^*-r_2$ and $N_2^*$ DoF.
The received signal at User~$2$ is
\begin{align}
\ty_2 =\tH_2\tx+ \tw_2 
 = {\tH}_2 [\tV_0 \ \tV_1] 
\Bigg[\begin{matrix}
\ts_0\\
\ts_1
\end{matrix}\Bigg]
+ \tw_2
 ={\tH}_2 \tV_0 \ts_0 + \tw_2,
\end{align}
which uses $\tH_2 \tV_1 = \mathbf{0}$ due to \eqref{eq:tmp315}. Since User~$2$ knows $\tH_2 \tV_0$, it can decode $\ts_2$, achieving $N_2^*$ DoF. By dedicating $\ts_2$ to user 2, the DoF pair $(N_1^*-r_2, N_2^*)$ is achieved. 

The pair $\big(r_1 - r_2, \min(N_1 - r_1 + r_2,N_2^*) \big)$ can be achieved similarly when $r_1 \ge N_1 \ge r_1 - r_2$ by setting $\tV_1 \in \CC^{M \times (r_1 - r_2)}, \tV_0 \in \CC^{M \times \min(N_1 - r_1 + r_2,N_2^*)}$, and the dimensions of $\ts_1, \ts_0$ accordingly. 
\end{proof}
}

 
\begin{theorem}
\label{thm:2user_CSIR_partiallyOverlapping} {}
For the two-user broadcast channel with CSIR and ${\rm rank}(\spanone \, \cap\, \spantwo) = r_0 \ge 0$, the DoF pairs $(N_1^*,0), (0,N_2^*)$, $\big((N_1^* - r_0)^+, N_2^*\big)$, and $\big(N_1^*, (N_2^* - r_0)^+\big)$ are achievable. Furthermore, if $N_1 \le r_1$ and $N_2 \le r_2$, the DoF pairs
\begin{align}
&\Big(\min\big(N_1,r_1-r_0\big) + \min\big((N_1-r_1 + r_0)^+,(N_2-r_2 + r_0)^+\big), \ \min\big(N_2,r_2-r_0\big)\Big), \label{eq:2-user_CSIR_point3}\\
&\Big(\min\big(N_1,r_1-r_0\big), \ \min\big(N_2,r_2-r_0\big) + \min\big((N_1-r_1 + r_0)^+, (N_2-r_2 + r_0)^+\big)\Big), \label{eq:2-user_CSIR_point4}
\end{align}
are also achievable.
The convex hull of these pairs and the origin $(0,0)$ is an achievable DoF region.
\end{theorem}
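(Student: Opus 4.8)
The plan is to establish achievability of each listed DoF pair with an explicit linear rate-splitting scheme of the type in Section~\ref{sec:intro_ratesplit}, decoded by zero-forcing at the receivers; the convex-hull assertion then follows by time-sharing, since DoF are linear in the rates. The two single-user pairs $(N_1^*,0)$ and $(0,N_2^*)$ come from Lemma~\ref{lem:single-user}. For the others I will use two standard facts, both consequences of $\tH_k=\tG_k\Sigmam_k^{1/2}\tU_k^\H$ with $\Sigmam_k$ of full rank $r_k$ and $\tG_k$ absolutely continuous (finite differential entropy): (i) for any fixed $\tV$, $\rank[\tH_k\tV]=\min\{N_k,\rank[\tU_k^\H\tV]\}$ almost surely; and (ii) a receiver with $N_k$ antennas observing $p$ desired i.i.d.\ Gaussian streams plus interference confined to a $q$-dimensional subspace, through a channel in generic position, can decode all $p$ streams with per-stream rate $\log\rho+O(1)$ whenever $p+q\le N_k$. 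Thus each stream ``delivered'' to a user contributes one DoF.

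Next I fix the geometry. Set $\Uc_k\defeq\Span[\tU_k]$, $\Cc\defeq\Uc_1\cap\Uc_2$ (so $\dim\Cc=r_0$), and write $\Pi_k$ for the orthogonal projector onto $\Uc_k$. From $\dim(\Uc_1+\Uc_2)=r_1+r_2-r_0$ one gets $\dim(\Pi_1\Uc_2^{\perp})=r_1-r_0$ and, symmetrically, $\dim(\Pi_2\Uc_1^{\perp})=r_2-r_0$. Hence for any integers $p_0\le r_0$, $p_1\le r_1-r_0$, $p_2\le r_2-r_0$ one can choose $\tV_0\in\CC^{M\times p_0}$ with $\Span[\tV_0]\subseteq\Cc$; $\tV_1\in\CC^{M\times p_1}$ with $\tU_2^\H\tV_1=\mathbf{0}$ and $\rank[\tU_1^\H\tV_1]=p_1$ (here $\Span[\tV_1]$ will in general not lie inside $\Uc_1$: one takes a $p_1$-dimensional subspace of $\Uc_2^{\perp}$ on which $\Pi_1$ is injective, which $\dim(\Pi_1\Uc_2^{\perp})=r_1-r_0$ permits); and symmetrically $\tV_2\in\CC^{M\times p_2}$ with $\tU_1^\H\tV_2=\mathbf{0}$ and $\rank[\tU_2^\H\tV_2]=p_2$; and these can be taken so that $[\tV_0\ \tV_1\ \tV_2]$ has full column rank (its width $p_0+p_1+p_2\le r_1+r_2-r_0\le M$). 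Transmit $\tx=\tV_0\ts_0+\tV_1\ts_1+\tV_2\ts_2$ with i.i.d.\ Gaussian entries scaled to the power budget. Then $\tH_1\tV_2=\mathbf{0}$ and $\tH_2\tV_1=\mathbf{0}$, so User~1 observes only $(\ts_0,\ts_1)$ and User~2 only $(\ts_0,\ts_2)$; moreover $\rank[\tU_1^\H[\tV_0\ \tV_1]]=p_0+p_1$ (the column spaces of $\tU_1^\H\tV_0$ and $\tU_1^\H\tV_1$ being orthogonal, since $\Span[\tV_1]\perp\Uc_2\supseteq\Span[\tV_0]$ while $\Span[\tV_0]\subseteq\Uc_1$), so by (i) User~1's effective channel for $(\ts_0,\ts_1)$ has rank $\min\{N_1,p_0+p_1\}$, and symmetrically for User~2. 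By (ii), User~1 recovers all of $(\ts_0,\ts_1)$ iff $p_0+p_1\le N_1$ and User~2 recovers all of $(\ts_0,\ts_2)$ iff $p_0+p_2\le N_2$; and since $\ts_0$ is decoded by both users, each of its $p_0$ streams may be credited to either one.

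It remains to pick $(p_0,p_1,p_2)$ and the split of $p_0$. For $\big((N_1^*-r_0)^+,N_2^*\big)$, which is nontrivial only when $N_1^*>r_0$ (so in particular $N_1\ge r_0$; otherwise the point equals $(0,N_2^*)$): take $p_0=(N_2^*-r_2+r_0)^+\le r_0$, credit $\ts_0$ entirely to User~2, and set $p_2=\min\{N_2,r_2-r_0\}$, $p_1=\min\{r_1-r_0,\,N_1-p_0\}$. A short check gives $p_0+p_2=N_2^*\le N_2$, $p_0+p_1\le N_1$, and $p_1\ge(N_1^*-r_0)^+$; so User~2 attains $N_2^*$ DoF and User~1 at least $(N_1^*-r_0)^+$. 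The symmetric choice yields $\big(N_1^*,(N_2^*-r_0)^+\big)$. For the last two pairs, with $N_1\le r_1$ and $N_2\le r_2$, put $a_k=\min\{N_k,r_k-r_0\}$ and $b=\min\{(N_1-r_1+r_0)^+,(N_2-r_2+r_0)^+\}$, and take $p_1=a_1$, $p_2=a_2$, $p_0=b$; then $b\le r_0$ and (examining whether or not $N_k\le r_k-r_0$) $a_1+b\le N_1$, $a_2+b\le N_2$. Crediting $\ts_0$ to User~1 achieves \eqref{eq:2-user_CSIR_point3}, and crediting it to User~2 achieves \eqref{eq:2-user_CSIR_point4}. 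Time-sharing among all these schemes and the origin then achieves the convex hull, which establishes the claim.

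The main obstacle I anticipate is the precoder existence in the second paragraph for an \emph{arbitrary} overlap configuration: in contrast to the fully-overlapping case (Proposition~\ref{thm:2fw}), where $\Span[\tV_1]$ may be placed inside $\Uc_1$, here the part of $\Uc_1$ orthogonal to $\Uc_2$ may have dimension strictly below $r_1-r_0$, so one must instead use directions in $\Uc_2^{\perp}$ (hence invisible to User~2) that only \emph{project} onto $\Uc_1$, and argue that these still carry $r_1-r_0$ independent streams for User~1 --- which is exactly what $\dim(\Pi_1\Uc_2^{\perp})=r_1-r_0$ licenses. Pinning that down, together with making precise the almost-sure ``generic position'' (guaranteed by the finite-entropy assumption on $\tG_k$) that underpins step~(ii), is where the real work lies; the remaining case analysis behind the feasibility inequalities is routine.
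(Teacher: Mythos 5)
Your proof is correct, and at the level of the overall strategy it is the same as the paper's: split the signal into a common stream precoded inside $\spanone\cap\spantwo$ and two private streams that are invisible to the unintended user, let each receiver (which has CSIR) jointly decode its private and common streams, choose the stream dimensions exactly as the paper does ($s_1=\min(N_1,r_1-r_0)$, $s_2=\min(N_2,r_2-r_0)$, $s_0=\min((N_1-r_1+r_0)^+,(N_2-r_2+r_0)^+)$ for \eqref{eq:2-user_CSIR_point3}--\eqref{eq:2-user_CSIR_point4}, and the analogous choices for the other corner points), and dedicate the common stream to one user before time-sharing. The one genuine difference is your precoder-existence argument, and it is in fact more careful than the paper's: the paper asks for $\Span(\tV_1)\subset\spanone\cap\spantwo^\perp$ and $\Span(\tV_2)\subset\spantwo\cap\spanone^\perp$ with $s_1\le r_1-r_0$, $s_2\le r_2-r_0$, which tacitly assumes $\dim(\spanone\cap\spantwo^\perp)=r_1-r_0$; this can fail (e.g., two non-orthogonal lines in $\CC^2$ with $r_0=0$ give $\dim(\spanone\cap\spantwo^\perp)=0<1$). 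You instead only require the operationally relevant rank conditions $\tU_2^\H\tV_1=\mathbf{0}$ and $\rank(\tU_1^\H\tV_1)=p_1$, justified by $\dim(\Pi_1\,\spantwo^\perp)=(M-r_2)-(M-r_1-r_2+r_0)=r_1-r_0$, together with the orthogonality of the columns of $\tU_1^\H\tV_0$ and $\tU_1^\H\tV_1$ (which follows since $\Span(\tV_0)\subset\spantwo$ while $\Span(\tV_1)\perp\spantwo$); this buys validity for arbitrary partially overlapping eigenspaces, at the price of $\tV_1$ no longer lying inside $\spanone$, which is harmless because only $\rank(\tU_k^\H\tV)$ and the genericity of $\tG_k$ (guaranteed by $h(\tG_k)>-\infty$) enter the DoF count. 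Your feasibility checks ($p_0\le r_0$, $p_0+p_k\le N_k$, $p_1\ge(N_1^*-r_0)^+$, etc.) are all valid, so the stated corner points and their convex hull follow as claimed.
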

\ShortLongVersion{The proof is available in~\cite{arXivVersion:Fan2021} and is omitted here for brevity.}{
\begin{proof}
The DoF pairs $(N_1^*,0)$ and $(0,N_2^*)$ are achievable according to Lemma~\ref{lem:single-user}. The achievable schemes for the other pairs are as follows. For non-negative integers $s_0\le r_0$, $s_1 \le r_1-r_0$, and $s_2 \le r_2-r_0$, there exist transmit eigendirections $\tV_0 \in \CC^{M \times s_0}$ aligned with the common part of the two channel eigenspaces, and eigendirections $\tV_1\in \CC^{M \times s_1}, \tV_2 \in \CC^{M \times s_2}$ aligned with the two non-common parts, such that
\begin{align}
\spanvonetwo &\subset \big(\spanone \cap \spantwo\big), \label{eq:tmp419}\\
\Span (\vvone) &\subset \big(\spanone \cap \spantwo^\perp\big), \label{eq:tmp420}\\
\Span (\vvtwo) &\subset \big(\spantwo \cap \spanone^\perp\big). \label{eq:tmp421}
\end{align} 

Define $\tV \triangleq [\tV_0 \ \tV_1 \ \tV_{2}]$. 
Let the transmitter send the signal $\tx = \tV \big[\ts_{0}^\T \	\ts_1^\T \	\ts_{2}^\T\big]^\T$, 
where $\ts_k\in \mathbb{C}^{s_k}$ contains symbols for User~$k$, $k\in \{1,2\}$, and  $\ts_0 \in \mathbb{C}^{s_0}$ contains symbols that both users can decode.

The received signal at User~$1$ and User~$2$ are respectively
\begin{align}
\ty_1 
& = \tH_1 \tx+ \tw_1
 = {\tH}_1 [\tV_0 ~ \tV_1]
\Bigg[\begin{matrix}
\ts_0\\
\ts_1
\end{matrix}\Bigg]
+ \tw_1, \\
\ty_2
&= \tH_2 \tx+ \tw_2 
= {\tH}_2 [\vvonetwo ~ \tV_{2}]
\Bigg[\begin{matrix}
\ts_0\\
\ts_{2}
\end{matrix}\Bigg]
+ \tw_2,
\end{align}
using $\tH_2 \tV_1 = \mathbf{0}$ and $\tH_1 \tV_2 = \mathbf{0}$ due to \eqref{eq:tmp420} and \eqref{eq:tmp421}, respectively. Then if $s_k + s_0 \le N_k$, User~$k$ can decode both $\ts_k$ and $\ts_0$, $k \in \{1,2\}$. 
\begin{itemize} [leftmargin=*]
\item If $N_1 \ge r_0$ and $N_2 \le r_0$, set $s_1 = N_1^*-r_0$, $s_2 = 0$, and $s_0 = N_2$. By dedicating $\ts_0$ to User~$2$, the DoF pair $(N_1^*-r_0,N_2)$ can be achieved. Similarly, if $N_1 \le r_0$ and $N_2 \ge r_0$, the DoF pair $(N_1,N_2^*-r_0)$ can be achieved.
\item If $N_1 \ge r_0$ and $N_2 \ge r_0$, set $s_1 = N_1^* - r_0$, $s_2 = N_2^* - r_0$, and $s_0 = r_0$. By dedicating $\ts_0$ to one of the users, the DoF pairs $\big(N_1^*-r_0,N_2^*\big)$ and $\big(N_1^*,N_2^*-r_0\big)$ are achievable.
\item When $N_1\le r_1$ and $N_2\le r_2$, by setting $s_1 = \min\big(N_1,r_1-r_0\big)$, $s_2 = \min\big(N_2,r_2-r_0\big)$, $s_0 = \min\big((N_1-r_1 + r_0)^+,(N_2-r_2 + r_0)^+\big)$, and dedicating $\ts_0$ to one of the users, the DoF pairs given in \eqref{eq:2-user_CSIR_point3} and \eqref{eq:2-user_CSIR_point4} are achievable.
\end{itemize}
Therefore, the proof is completed.
\end{proof}
}
An outer bound for the achievable DoF region is given as follows.
\begin{theorem} \label{thm:2user_CSIR_outerBound}
When $\Rank(\Span(\tU_1) \cap \Span(\tU_2)) = r_0 \ge 0$, the achievable DoF region is outer bounded by $d_k \le N_k^*$, $k\in \{1,2\}$, and
\begin{align}
d_1 + d_2 & \leq \min \{r_1 + r_2 - r_0, N_1 + N_2\} \label{eq:OB_sumDoF}.
\end{align}
When $\{r_1 \leq N_1, r_2 \leq N_2\}$ or $\{N_1 \leq r_1 - r_0, N_2 \leq r_2 - r_0\}$, this outer bound is tight.
\end{theorem}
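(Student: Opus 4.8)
The plan is to establish the per-user bounds and the sum bound \eqref{eq:OB_sumDoF} separately, and then to verify tightness by matching the resulting outer polytope against the achievable DoF pairs of Theorem~\ref{thm:2user_CSIR_partiallyOverlapping}.

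\emph{Per-user bounds.} Fix any scheme achieving the rate pair $(R_1(\rho),R_2(\rho))$. Since User~$k$ decodes its own message from $(\ty_k,\tH_k)$, the standard Fano converse shows that $R_k(\rho)$ cannot exceed the single-user capacity of the point-to-point channel $\tx\mapsto\ty_k$ with CSIR, which by Lemma~\ref{lem:single-user} grows as $N_k^*\log\rho + O(1)$. Hence $d_k\le N_k^*$.

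\emph{Sum bound.} Allowing the two receivers to cooperate can only enlarge the achievable region, so $R_1+R_2\le C_{\rm coop}(\rho)$, the capacity of the point-to-point channel $\tx\mapsto[\ty_1^\T\ \ty_2^\T]^\T$ with CSIR, whose $(N_1+N_2)\times M$ channel matrix is $\tH_{12}\defeq[\tH_1^\T\ \tH_2^\T]^\T$. Each row of $\tH_k$ lies in $\Span(\tU_k)$, so the row space of $\tH_{12}$ is contained in $\Span(\tU_1)+\Span(\tU_2)$, a subspace of dimension $r_1+r_2-r_0$; together with the fact that $\tH_{12}$ has $N_1+N_2$ rows, this yields $\rank(\tH_{12})\le\min\{r_1+r_2-r_0,\,N_1+N_2\}$ for every realization. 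A standard high-SNR estimate then closes the argument: for any input covariance $\Rm_{\tx}$ with $\trace[\Rm_{\tx}]\le M$,
\begin{align}
\log\det\!\Big(\Id + \tfrac{\rho}{M}\tH_{12}\Rm_{\tx}\tH_{12}^\H\Big) \le \rank(\tH_{12})\,\log\!\big(1+\rho\,\trace[\tH_{12}^\H\tH_{12}]\big),\nonumber
\end{align}
and averaging (using concavity of $\log$ and the normalization $\E[\|\tH_k\|^2]=MN_k<\infty$) gives $C_{\rm coop}(\rho)\le\min\{r_1+r_2-r_0,\,N_1+N_2\}\log\rho + O(1)$, so $d_1+d_2\le\min\{r_1+r_2-r_0,\,N_1+N_2\}$. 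The one delicate point I anticipate is controlling the $O(1)$ term uniformly over the general finite-entropy, finite-power fading allowed by the model rather than for Rayleigh fading alone; this is the main (though essentially routine) obstacle, the genie and rank-counting steps being straightforward.

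\emph{Tightness.} I would compare the outer bound with the achievable region of Theorem~\ref{thm:2user_CSIR_partiallyOverlapping}. If $r_1\le N_1$ and $r_2\le N_2$, then $N_k^*=r_k$ and $r_1+r_2-r_0\le N_1+N_2$, so the outer bound is the polytope $\{0\le d_1\le r_1,\ 0\le d_2\le r_2,\ d_1+d_2\le r_1+r_2-r_0\}$, whose vertices $(r_1,0)$, $(r_1,r_2-r_0)$, $(r_1-r_0,r_2)$, $(0,r_2)$ are exactly the pairs $(N_1^*,0)$, $(N_1^*,(N_2^*-r_0)^+)$, $((N_1^*-r_0)^+,N_2^*)$, $(0,N_2^*)$ that Theorem~\ref{thm:2user_CSIR_partiallyOverlapping} shows achievable; their convex hull equals the outer-bound region. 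If instead $N_1\le r_1-r_0$ and $N_2\le r_2-r_0$, then $N_k^*=N_k$ and $N_1+N_2\le r_1+r_2-2r_0\le r_1+r_2-r_0$, so the outer bound reduces to the rectangle $\{0\le d_1\le N_1,\ 0\le d_2\le N_2\}$; substituting $N_1\le r_1-r_0$ and $N_2\le r_2-r_0$ into \eqref{eq:2-user_CSIR_point3}--\eqref{eq:2-user_CSIR_point4} collapses both of those pairs to $(N_1,N_2)$, so, together with $(N_1^*,0)$ and $(0,N_2^*)$, the achievable region is precisely that rectangle. In both regimes the achievable region of Theorem~\ref{thm:2user_CSIR_partiallyOverlapping} coincides with the outer bound, which proves the final claim.
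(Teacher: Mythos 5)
Your proposal is correct and follows essentially the same route as the paper: single-user bounds from Lemma~\ref{lem:single-user}, a cooperative cut-set bound for the sum rate, and a rank argument showing the stacked channel offers at most $\min\{r_1+r_2-r_0,\,N_1+N_2\}$ spatial dimensions (the paper makes this rank explicit via the change of basis $\tV=[\tV_0\ \tV_1\ \tV_2\ \tV_\perp]$ and an equivalent $(N_1+N_2)\times(r_1+r_2-r_0)$ matrix $\tilde{\tH}$, whereas you bound $\rank(\tH_{12})$ directly and close with a log-det/Jensen estimate, which is an equivalent step and your finite-power remark correctly disposes of the $O(1)$ concern for general fading). Your explicit vertex matching in the two tightness regimes is also sound; the paper simply asserts that comparison against Theorem~\ref{thm:2user_CSIR_partiallyOverlapping}.
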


\ShortLongVersion{The proof is available in~\cite{arXivVersion:Fan2021} and is omitted here for brevity.}{
\begin{proof}
The single-user bounds $d_k \le N_k^*$, $k\in \{1,2\}$, follow from Lemma~\ref{lem:single-user}. 

Denote by $\tV_1 \in \CC^{M \times (r_1 - r_0)}, \tV_2 \in \CC^{M \times (r_2 - r_0)}$ the non-unique transmit eigendirections that are aligned with the non-common parts, i.e., $\Span (\vvone) = \spanone \cap \spantwo^\perp$ and
$\Span (\vvtwo) = \spantwo \cap \spanone^\perp$, and $\tV_0 \in \CC^{M \times r_0}$ the common part, i.e., $\spanvonetwo = \spanone \cap \spantwo$, of the eigenspaces. Let $\tV_{\perp} \in \CC^{M \times (M - r_1 - r_ 2 + r_0)}$ denote the orthogonal complement of the total channel eigenspaces, i.e., $\tV \triangleq [\tV_0 ~\tV_1 ~ \tV_2 ~ \tV_{\perp}]$ is an unitary matrix. For a transmit vector $\tx \in \CC^{M}$, define $[\tx_0^\T \ \tx_1^\T \ \tx_{2}^\T \ \tx_\perp^\T]^\T \defeq \tV \tx$, 
where $\tx_0 \in \CC^{r_0}$, $\tx_1 \in \CC^{r_1 - r_0}$, $\tx_2 \in \CC^{r_2 - r_0}$ and $\tx_\perp \in \CC^{M - r_1 - r_2 + r_0}$. 

A cooperative cut-set upper bound is as follows, using invertibility of $\tV$:
\begin{equation}
\label{eq:OB_cutset}
R_1 + R_2 \leq I(\ty_1,\ty_2;\tx) = I(\ty_1,\ty_2;\tV \tx).
\end{equation}
The next step is to bound the right-hand side in \eqref{eq:OB_cutset}. To extract a full-rank representation of $\tH_1$ and $\tH_2$, 
\begin{align}
\begin{bmatrix}
\tH_1 \\ \tH_2
\end{bmatrix}
= \begin{bmatrix}
\tG_1 \Sigmam_1^{\frac12} \tU_1^\H \\ \tG_2 \Sigmam_2^{\frac12} \tU_2^\H
\end{bmatrix}
= \begin{bmatrix}
\tG_1 \Sigmam_1^{\frac12} \tT_{1c} & \tG_1 \Sigmam_1^{\frac12} \tT_{1p} & \ZeroMat \\
\tG_2 \Sigmam_2^{\frac12} \tT_{2c} & \ZeroMat & \tG_2 \Sigmam_2^{\frac12} \tT_{2p}
\end{bmatrix}
[\tV_0 \ \tV_1 \ \tV_2]^\H,
\end{align}
where $\tT_{1c}, \tT_{1p}, \tT_{2c}$, and $\tT_{2p}$ are matrices such that $[\tT_{ic} \ \tT_{ip}]$ is non-singular and
$
\tU_k^\H = [\tT_{ic} \ \tT_{ip}] [\tV_0 \ \tV_{i}]^\H, ~i\in \{1,2\}.
$
Replacing $\tx$ by $\tV \tx$, the concatenated received signal is
\begin{align}
\begin{bmatrix}
\ty_1 \\ \ty_2
\end{bmatrix}
= \begin{bmatrix}
\tH_1 \\ \tH_2
\end{bmatrix}
\tV \tx + 
\begin{bmatrix}
\tw_1 \\ \tw_2
\end{bmatrix}
= \tilde{\tH}
\begin{bmatrix}
\tx_0 \\
\tx_1 \\
\tx_2 
\end{bmatrix}
+ \begin{bmatrix}
\tw_1 \\
\tw_2
\end{bmatrix},
\end{align}
where
\begin{equation}
\tilde{\tH} \defeq 
\begin{bmatrix}
\tG_1 \Sigmam_1^{\frac12} \tT_{1c} & \tG_1 \Sigmam_1^{\frac12} \tT_{1p} & \ZeroMat \\
\tG_2 \Sigmam_2^{\frac12} \tT_{2c} & \ZeroMat & \tG_2 \Sigmam_2^{\frac12} \tT_{2p}
\end{bmatrix}
\in \CC^{(N_1 + N_2) \times (r_1 + r_2 - r_0)}.
\end{equation}
Because $\tilde{\tH}$ is known at the receivers, 
\begin{align}
I(\ty_1,\ty_2;\tV\tx)  &=  I(\ty_1,\ty_2;\tx_0,\tx_1,\tx_2) \\
&\leq \min \{r_1 + r_2 - r_0, N_1 + N_2\} \log\rho + o(\log \rho).
\end{align}
This yields the sum DoF bound $d_1 + d_2 \leq \min \{r_1 + r_2 - r_0, N_1 + N_2\}$. This outer bound is tight against the achievable region in Theorem~\ref{thm:2user_CSIR_partiallyOverlapping}.
\end{proof}
}
Fig.~\ref{fig_OB_tight} shows the regions  where the outer bound in Theorem~\ref{thm:2user_CSIR_outerBound} is tight.
\begin{figure}[!ht]
\centering
\includegraphics[width = \Figwidth]{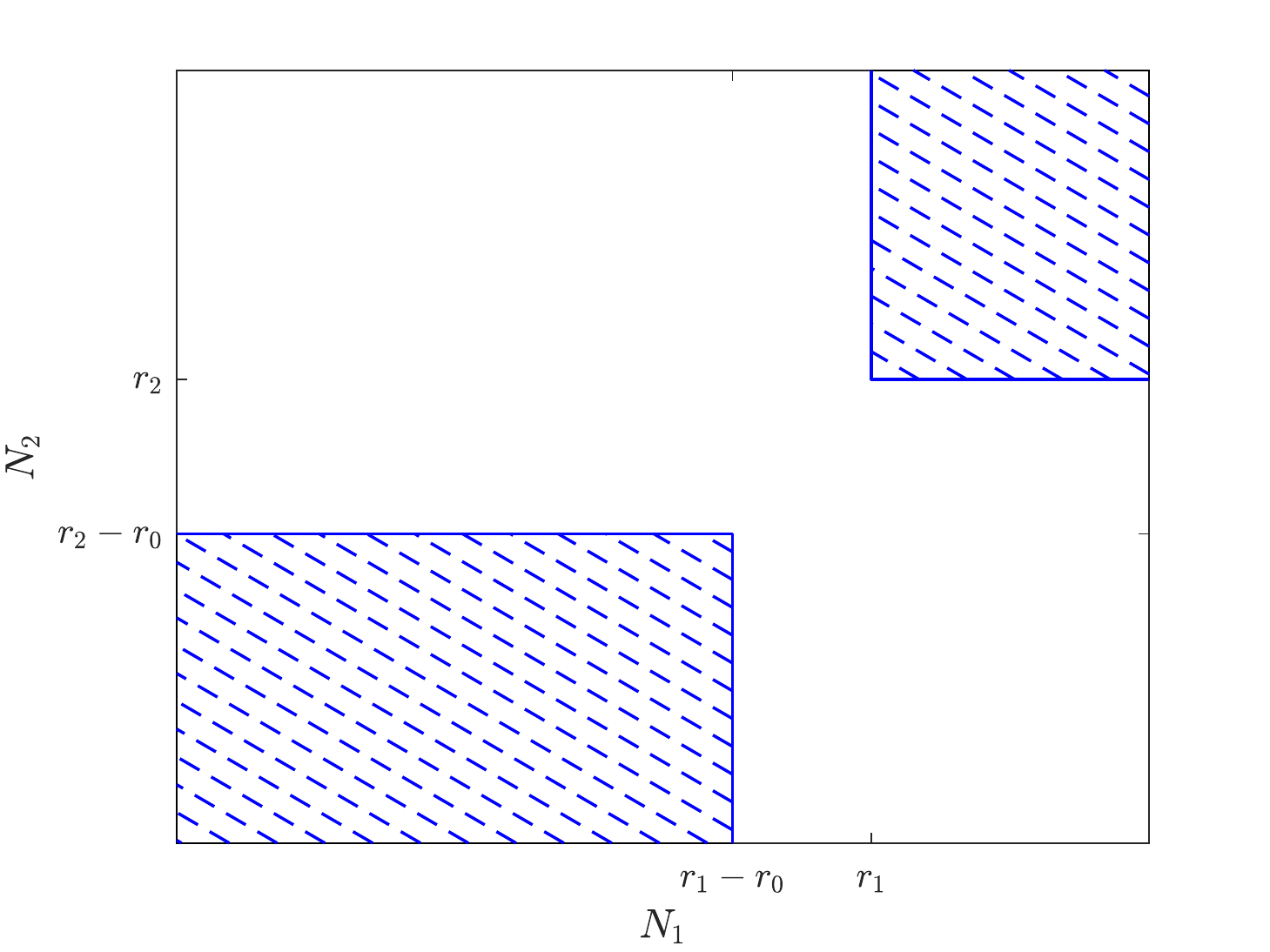}
\caption{Regions (the hashed part) where the outer bound for the DoF region with CSIR in Theorem~\ref{thm:2user_CSIR_outerBound} is tight.}
\label{fig_OB_tight}
\end{figure}

Fig.~\ref{fig:DoF_2user_CSIR} compares the achievable region proposed in Theorem~\ref{thm:2user_CSIR_partiallyOverlapping} and the achievable region achieved with TDMA (time sharing between $(N_1^*,0)$ and $(0,N_2^*)$) for  $r_1 = 12$, $r_2 = 10$, $r_0 \in \{0,3,6,9\}$ and $N_1 \ge r_1$, $N_2\ge r_2$. The proposed achievable region is much larger than the TDMA region, especially when $r_0$ is small. In this setting, according to Theorem~\ref{thm:2user_CSIR_outerBound}, the proposed region is optimal. 
\begin{figure}[!ht]
\centering
\includegraphics[width =\Figwidth]{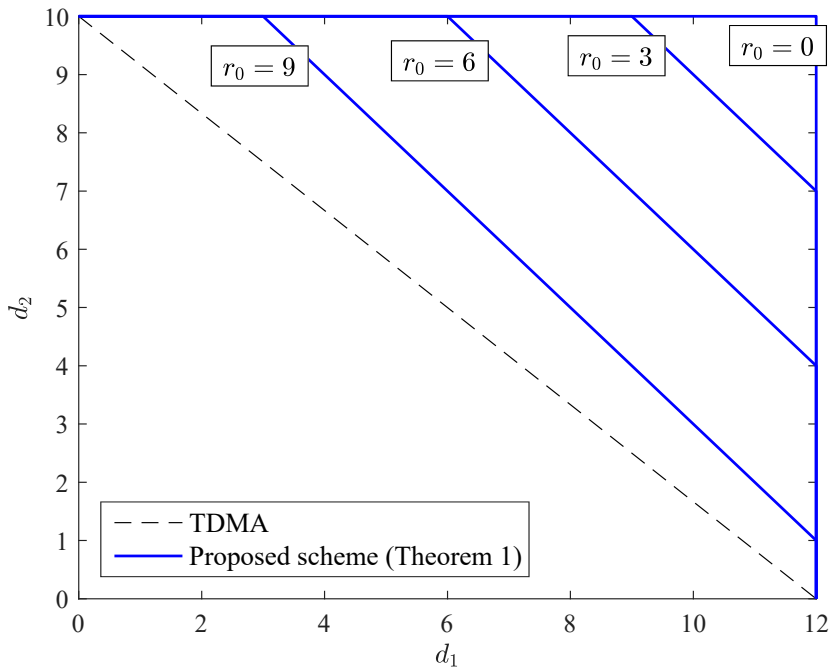} 
\caption{The achievable DoF region for two-users with CSIR, under TDMA and the proposed scheme (Theorem~\ref{thm:2user_CSIR_partiallyOverlapping}) for $r_1 = 12$, $r_2 = 10$, $r_0 \in \{0,3,6,9\}$ and $N_1 \ge r_1$, $N_2\ge r_2$. In this case, the latter region is optimal.}
\label{fig:DoF_2user_CSIR}
\end{figure}

\color{black}
\subsection{No free CSIR}
In this case, CSIR is not available a priori and must be acquired via pilot transmission. On the one hand, one needs to take into account the cost of CSI acquisition in both energy and DoF. On the other hand, pilot transmission enables product superposition~\cite{Li_2012} that can improve upon rate splitting. 

\subsubsection{Fully Overlapping Eigenspaces}
\label{sec:2fwo}
Consider the case where User~$2$'s eigenspace is a subspace of User~$1$'s, which implies $r_{2} \le r_{1} \leq M$. The following proposition presents achievable DoF with product superposition in this case.
\begin{proposition}
\label{thm:2fwo}
In a two-user broadcast channel without free CSIR, when the eigenspace of User~$2$ is a subspace of User~$1$'s (implying $r_{2} \le r_{1} \leq M$), the DoF pair $\Big( N_1^*\big(1 - \frac{r_{1}}{T}\big), N_2^* \frac{r_{1} - r_{2}}{T} \Big)$ is achievable with product superposition.
\end{proposition}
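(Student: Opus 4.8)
The plan is to carry out product superposition (as in \cite{Li_2012} and the two-user construction of \cite{Fan_2017}) \emph{inside} the $r_1$-dimensional transmit subspace $\Span(\tU_1)$, choosing the transmit basis so that User~2's channel sees only an $r_2$-dimensional slice of it; the leftover $r_1-r_2$ dimensions, which play no part in User~2's estimation of its own channel, then carry User~2's data on top of the symbols that User~1 must spend on pilots. Concretely, I would first fix a truncated unitary $\tP\in\CC^{M\times r_1}$ with $\Span(\tP)=\Span(\tU_1)$ whose first $r_2$ columns form an orthonormal basis of $\Span(\tU_2)\subset\Span(\tU_1)$. By \eqref{eq:Kronecker_2}, precoding with $\tP$ turns the channels into $\tH_1\tP=\bar{\tH}_1\defeq\tG_1\Sigmam_1^{\frac12}(\tU_1^\H\tP)\in\CC^{N_1\times r_1}$ and $\tH_2\tP=[\,\bar{\tH}_2\ \ \mathbf 0_{N_2\times(r_1-r_2)}\,]$, with $\bar{\tH}_2\defeq\tG_2\Sigmam_2^{\frac12}(\tU_2^\H\tP_{[1:r_2]})\in\CC^{N_2\times r_2}$, the zero block arising because the last $r_1-r_2$ columns of $\tP$ are orthogonal to $\Span(\tU_2)$. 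Since the right factors $\tU_1^\H\tP$ and $\tU_2^\H\tP_{[1:r_2]}$ are unitary, $\bar{\tH}_1$ and $\bar{\tH}_2$ keep finite differential entropy, finite second moment, and a.s.\ rank $N_1^*$ and $N_2^*$, so they behave as generic $N_1\times r_1$ and $N_2\times r_2$ fading matrices.

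Over a coherence block of $T$ channel uses I would then transmit $\tX=\sqrt{c\rho}\,\tP\,\tX_2\tX_1$, with $\tX_1=[\,\tI_{r_1}\ \ \tilde{\tS}_1\,]\in\CC^{r_1\times T}$ carrying User~1's $r_1$ pilot columns and data $\tilde{\tS}_1\in\CC^{r_1\times(T-r_1)}$, and $\tX_2=\Bmatrix{\tI_{r_2} & \tilde{\tS}_2\\ \mathbf 0 & \tI_{r_1-r_2}}\in\CC^{r_1\times r_1}$ carrying User~2's $r_2$ pilot rows and data $\tilde{\tS}_2\in\CC^{r_2\times(r_1-r_2)}$ (the bottom block being the padding that keeps $\tX_2$ nonsingular, as in the product-superposition scheme of Section~\ref{sec:preliminaries}); here $\tX_2$ has unit determinant, the data symbols carry SNR-independent power, and $c$ is set so that $\tX$ satisfies the power constraint. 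User~1 sees $\tY_1=\sqrt{c\rho}\,\tilde{\tH}_1\tX_1+\tW_1$ with effective unknown channel $\tilde{\tH}_1\defeq\bar{\tH}_1\tX_2\in\CC^{N_1\times r_1}$; estimating $\tilde{\tH}_1$ from its $r_1$ pilot columns (Lemma~\ref{lem:MMSE-estimator}) and decoding $\tilde{\tS}_1$ from the remaining $T-r_1$ columns, with the residual estimation error bounded as worst-case additive noise (Lemma~\ref{lem:worst-case-noise}), is exactly the pilot-based achievability argument behind Lemma~\ref{lem:single-user} and gives $d_1=N_1^*\big(1-\tfrac{r_1}{T}\big)$. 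User~2 sees $\tY_2=\sqrt{c\rho}\,[\bar{\tH}_2\ \mathbf 0]\tX_2\tX_1+\tW_2=\sqrt{c\rho}\,\bar{\tH}_2\tD_2\tX_1+\tW_2$, where $\tD_2=[\,\tI_{r_2}\ \ \tilde{\tS}_2\,]$ denotes the top $r_2$ rows of $\tX_2$; on the first $r_1$ channel uses $\tX_1$ equals $\tI_{r_1}$, so User~1's data is absent and User~2 observes the clean non-coherent channel $\sqrt{c\rho}\,\bar{\tH}_2\tD_2+\tW_2$ — an unknown $N_2\times r_2$ matrix over $r_1$ channel uses, $r_2$ of which carry the pilot $\tI_{r_2}$ and $r_1-r_2$ of which carry $\tilde{\tS}_2$. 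The same argument then yields a rate $\sim N_2^*\,\tfrac{r_1-r_2}{T}\log\rho$, i.e.\ $d_2=N_2^*\,(r_1-r_2)/T$; User~2's decoder simply discards the last $T-r_1$ channel uses, which also contain User~1's data.

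Two points deserve care. The first is that multiplying by the \emph{data}-bearing matrix $\tX_2$ must not hurt User~1: since $\tX_2$ is a.s.\ invertible, $\rank(\tilde{\tH}_1)=\rank(\bar{\tH}_1)=N_1^*$ a.s., and $h(\tilde{\tH}_1\cond\tX_2)>-\infty$ with $\E[\|\tilde{\tH}_1\|_F^2]<\infty$, so the single-user analysis applies verbatim; dually, $\bar{\tH}_2$ has a full-rank $r_2\times r_2$ correlation matrix, so User~2's estimation is nondegenerate as well. The second, which I expect to be the step needing the most attention (though it is standard and parallels \cite{Hassibi_2003}), is the channel-estimation-error accounting in the DoF: because the pilot symbols carry power $\Theta(\rho)$, Lemma~\ref{lem:MMSE-estimator} makes the MMSE error covariance of each estimate $\Theta(1/\rho)$, so the ``thermal noise plus residual'' term in each data phase stays $\Theta(1)$, the effective SNR stays $\Theta(\rho)$, and each data channel use delivers exactly $N_1^*$ (resp.\ $N_2^*$) pre-log. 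The remainder is block-length bookkeeping: $T\ge 2\max(r_k,N_k)$ guarantees $T-r_1>0$ so User~1 has a data phase, and when $r_1=r_2$ the construction degenerates to $d_2=0$, consistent with the statement.
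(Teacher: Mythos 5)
Your construction is exactly the paper's: your $\tP$ is the paper's precoder $[\tV_0\ \tV_1]$ (first $r_2$ columns spanning $\Span(\tU_2)$, the rest spanning its complement in $\Span(\tU_1)$), and your $\tX_1$, $\tX_2$ and per-user pilot/estimation/decoding steps coincide with the paper's product-superposition proof of Proposition~\ref{thm:2fwo}. The extra bookkeeping you add (power scaling, rank/entropy preservation under multiplication by the invertible $\tX_2$, and the $\Theta(1/\rho)$ MMSE-error accounting via Lemmas~\ref{lem:MMSE-estimator} and \ref{lem:worst-case-noise}) is correct and simply makes explicit what the paper leaves implicit.
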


\begin{proof}
There exist transmit eigendirections $\tV_1 \in \CC^{M \times (r_1 -r_2)}$ and $\tV_0\in \CC^{M \times r_2 }$ that are aligned with the non-common and common parts, respectively, of the two channel eigenspaces such that
\begin{align}
\Span(\tV_0) & = \spantwo,\\
\spanvone &= \spanone \cap \spantwo^\perp. \label{eq:tmp803}
\end{align}
Define $\tV \triangleq [\tV_0\ \tV_1]$. 
Let the transmitter send the signal $\tX = \tV \tX_2 \tX_1$ during a coherence block,
with $\tX_1 = [\tI_{r_{1}} ~ \tS_{1}] \in \CC^{r_1 \times T}$ and $\tX_2 =
\Bigg[\begin{matrix}
\tI_{r_{2}} & \tS_{2} \\
\mathbf{0} & \tI_{r_1-r_2}
\end{matrix}\Bigg] \in \CC^{r_1 \times r_1}$, 
where 
$\tS_{1} \in \mathbb{C}^{r_{1} \times (T - r_{1})}$ contains symbols for User~$1$ and $\tS_{2} \in \mathbb{C}^{r_{2} \times (r_{1} - \rronetwo)}$ contains symbols for User~$2$. The received signal at User~$1$ is 
\begin{equation}
\begin{split}
\tY_1 & =\tH_1 \tV \tX_2 \tX_1 + \tW_1 
={\tH}_1 \tV \tX_2 [\tI_{r_{1}} ~ \tS_{1}] + \tW_1.
\end{split}
\end{equation}
User~$1$ estimates the equivalent channel ${\tH}_1 \tV \tX_2$ and then decodes $\tS_{1}$, achieving $N_1^*(T - r_{1})$ DoF.
The received signal at User~$2$ during the first $r_{1}$ channel uses is
\begin{equation}
\begin{split}
\tY_{2[1:r_1]} & ={\tH}_2  \tV 
\Bigg[\begin{matrix}
\tI_{r_{2}} & \tS_{2} \\
\mathbf{0} & \tI_{r_1-r_2}
\end{matrix}\Bigg]
\tI_{r_{1}}+\tW_{2[1:r_1]}
={\tH}_2 \tV_0 
    [\tI_{r_{2}} \quad \tS_{2}]
    + \tW_{2[1:r_1]},
\end{split}
\end{equation}
using $\tH_2 \tV_1 = \mathbf{0}$ due to \eqref{eq:tmp803}. User~$2$ estimates the equivalent channel $\tH_2 \tV_0$, and then decodes $\tS_2$, achieving $N_2^* (r_{1} \!-\! r_{2})$ DoF. 
Therefore, the normalized DoF pair $\Big( N_1^*\big(1 \!-\! \frac{r_{1}}{T}\big), N_2^* \frac{r_{1} - r_{2}}{T} \Big)$ is achievable.
\end{proof}

\subsubsection{Partially Overlapping Eigenspaces}
\begin{theorem}
\label{thm:2user_CDIR_partiallyOverlapping}
For the two-user broadcast channel without free CSIR and ${\rm rank}(\spanone \, \cap\, \spantwo) = r_0 \ge 0$, the DoF pairs $\Big(N_1^*\big(1-\frac{N_1^*}{T}\big),0\Big)$ and $\Big(0,N_2^*\big(1-\frac{N_2^*}{T}\big)\Big)$ are achievable. Furthermore, for any integers $(s_1,s_2,s_0)$ such that $0\le s_1 \le r_1-r_0$, $0\le s_2 \le r_2-r_0$, and $0\le s_0 \le r_0$,
the DoF pairs 
\begin{align}
\Dc_1 &= \bigg(\min(s_0,N_1)\frac{s_2}{T}, \min(s_2+s_0,N_2) \Big(1-\frac{s_2 + s_0}{T}\Big)\bigg), \\
\Dc_2 &= \bigg(\min(s_1+s_0,N_1) \Big(1-\frac{s_1 + s_0}{T}\Big), \min(s_0,N_2)\frac{s_1}{T}\bigg)
\end{align}
are achievable. On top of that, if $s_1 \ge s_2$, the DoF pairs
\begin{align}
\mathcal{D}_3 & = \Big(\min(\rrone + \rronetwo,N_1)\Big(1 - \frac{\rrone + \rronetwo}{T}\Big), \min(s_2,N_2) \frac{s_1-s_2}{T} + \min(\rrtwo,(N_2-\rronetwo)^+)\Big(1 - \frac{\rrone + \rronetwo}{T}\Big)\Big),\\
\mathcal{D}_4 & = \Big(\min(\rrone,(N_1 - \rronetwo)^+) \Big(1 - \frac{\rrone + \rronetwo}{T}\Big),\min(s_2,N_2) \frac{s_1-s_2}{T} + \min(s_2 + s_0,N_2) \Big(1-\frac{s_1+s_0}{T}\Big)\Big), \\
\mathcal{D}_5 & = \Big(\min(\rrone + \rronetwo,N_1)\Big(1 - \frac{\rrone + \rronetwo}{T}\Big),\min(s_2+s_0,N_2)\frac{s_1-s_2}{T} + \min(s_2,N_2)\Big(1-\frac{s_1+s_0}{T}\Big)\Big)
\end{align}
are achievable; if $s_1 \le s_2$, the DoF pairs
\begin{align}
\mathcal{D}_3 & = \Big(\min(s_1,N_1) \frac{s_2-s_1}{T} + \min(\rrone,(N_1-\rronetwo)^+)\Big(1 - \frac{\rrtwo + \rronetwo}{T}\Big), \min(\rrtwo + \rronetwo,N_2)\Big(1 - \frac{\rrtwo + \rronetwo}{T}\Big)\Big),\\
\mathcal{D}_4 & = \Big(\min(s_1,N_1) \frac{s_2-s_1}{T} + \min(s_1 + s_0,N_1) \Big(1-\frac{s_2+s_0}{T}\Big), \min(\rrtwo,(N_2 - \rronetwo)^+) \Big(1 - \frac{\rrtwo + \rronetwo}{T}\Big)\Big), \\
\mathcal{D}_5 & = \Big(\min(s_1+s_0,N_1)\frac{s_2-s_1}{T} + \min(s_1,N_1)\Big(1-\frac{s_2+s_0}{T}\Big), \min(\rrtwo + \rronetwo,N_2)\Big(1 - \frac{\rrtwo + \rronetwo}{T}\Big)\Big)
\end{align}
are achievable. The convex hull of these DoF pairs (over all feasible values of $s_1,s_2$, and $s_0$) and the origin $(0,0)$ is achievable.
\end{theorem}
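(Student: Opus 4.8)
The plan is to obtain the two single-user pairs directly from Lemma~\ref{lem:single-user}, and to obtain every other pair by combining the eigenspace decomposition used in the proof of Theorem~\ref{thm:2user_CSIR_partiallyOverlapping} with the product-superposition construction of Proposition~\ref{thm:2fwo}. Fix a feasible triple $(s_1,s_2,s_0)$ and choose truncated unitary precoders $\tV_0\in\CC^{M\times s_0}$, $\tV_1\in\CC^{M\times s_1}$, $\tV_2\in\CC^{M\times s_2}$ aligned with the common part and the two non-common parts, i.e.\ $\Span(\tV_0)\subset\Span(\tU_1)\cap\Span(\tU_2)$, $\Span(\tV_1)\subset\Span(\tU_1)\cap\Span(\tU_2)^\perp$, $\Span(\tV_2)\subset\Span(\tU_2)\cap\Span(\tU_1)^\perp$, exactly as in that proof. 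The structural facts I will use repeatedly are $\tH_2\tV_1=\mathbf 0$, $\tH_1\tV_2=\mathbf 0$, and that on the transmit subspace $\Span([\tV_0\ \tV_1])$ (resp.\ $\Span([\tV_0\ \tV_2])$) the effective channel of User~1 (resp.\ User~2) has full correlation rank $s_1+s_0$ (resp.\ $s_2+s_0$), while the other user's effective channel there lives in the nested $s_0$-dimensional subspace $\Span(\tV_0)$.

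For $\mathcal D_2$, I would restrict transmission to $\Span([\tV_0\ \tV_1])$: User~1 then sees an $(s_1+s_0)$-rank correlated channel and, because $\tH_2\tV_1=\mathbf 0$, User~2 sees only the $\tV_0$-component, a nested $s_0$-rank channel. This is precisely the hypothesis of Proposition~\ref{thm:2fwo} with $(r_1,r_2)$ replaced by $(s_1+s_0,s_0)$ (the coherence condition holds since $s_1+s_0\le r_1$ and $T\ge 2r_1$), so Proposition~\ref{thm:2fwo} applied verbatim to this subchannel returns $\mathcal D_2$ — the ``outer'' codeword of the product superposition carrying User~2's symbols in the common directions, the ``inner'' codeword carrying User~1's data. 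Swapping the users and working in $\Span([\tV_0\ \tV_2])$ yields $\mathcal D_1$.

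The pairs $\mathcal D_3,\mathcal D_4,\mathcal D_5$ call for a hybrid of rate splitting and product superposition; I would treat the case $s_1\ge s_2$ (the case $s_1\le s_2$ being the same after relabelling the users). Over one coherence block the transmitter superposes, on top of a stream sent on User~2's private directions $\tV_2$ (invisible to User~1 since $\tH_1\tV_2=\mathbf 0$), a product-structured signal of the form
\[
\tX \;=\; [\tV_0\ \tV_1]\,\tA\,[\tI_{s_1+s_0}\ \tS_1] \;+\; \tV_2\,\tB,
\]
where $\tA\in\CC^{(s_1+s_0)\times(s_1+s_0)}$ is block-triangular and plays the role of $\tX_2$ in Proposition~\ref{thm:2fwo}, its invertible off-diagonal block embedding $s_1-s_2$ extra symbols along the common directions; $\tS_1$ carries User~1's data over the last $T-s_1-s_0$ channel uses; and $\tB$ carries a pilot block and data for User~2 on $\tV_2$. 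User~1 estimates $\tH_1[\tV_0\ \tV_1]\tA$ from the first $s_1+s_0$ columns and decodes $\tS_1$, giving the $\min(s_1+s_0,N_1)(1-\tfrac{s_1+s_0}{T})$ coordinate (or, when the common directions are instead loaded with User-2 data in the data phase and User~1 must zero-force them, the $\min(s_1,(N_1-s_0)^+)$ factor of $\mathcal D_4$). User~2 sees only $\tH_2\tV_0$ and $\tH_2\tV_2$, estimates both, decodes its private stream during the data phase, and extracts the outer symbols embedded in $\tA$, producing the cross term $\min(s_2,N_2)\tfrac{s_1-s_2}{T}$ (or $\min(s_2+s_0,N_2)\tfrac{s_1-s_2}{T}$) plus the data-phase term $\min(s_2,N_2)(1-\tfrac{s_1+s_0}{T})$ (or $\min(s_2+s_0,N_2)(1-\tfrac{s_1+s_0}{T})$); the three variants $\mathcal D_3,\mathcal D_4,\mathcal D_5$ are the three consistent ways of allocating the $s_0$ common directions — to help one user train, or to carry the other user's data — across the setup and data phases. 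Achievability of the full region then follows by time-sharing among codes for these corner points and the zero code at the origin.

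The hard part is the hybrid construction behind $\mathcal D_3,\mathcal D_4,\mathcal D_5$: one has to specify the block-triangular pattern of $\tA$ and the pilot/data layout of $\tB$ so that (i)~every equivalent channel ($\tH_1[\tV_0\ \tV_1]\tA$, $\tH_2\tV_0$, $\tH_2\tV_2$) is recoverable from exactly the pilot columns it is allotted, (ii)~the ``outer'' data block of $\tA$ can be chosen non-singular whenever required — the analogue of choosing $\tS_0$ in the product-superposition scheme — and (iii)~the DoF accounting, in particular the cross terms $\min(s_2,N_2)\tfrac{s_1-s_2}{T}$ (or with $s_2$ replaced by $s_2+s_0$) and the $(N_1-s_0)^+$, $(N_2-s_0)^+$ losses, comes out exactly in each of the two regimes $s_1\ge s_2$ and $s_1\le s_2$; this bookkeeping, rather than any single hard estimate, is where the effort concentrates.
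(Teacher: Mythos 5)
Your plan follows the paper's own proof: the single-user points come from Lemma~\ref{lem:single-user}; $\Dc_1,\Dc_2$ are obtained exactly as in the paper by running the Proposition~\ref{thm:2fwo} product superposition on the precoded sub-channel $[\tV_0\ \tV_k]$ (effective nested ranks $s_k+s_0$ and $s_0$); your hybrid template with the block-triangular $\tA$ carrying $s_1-s_2$ extra common-direction symbols plus the separate $\tV_2$ stream is precisely the paper's scheme for $\Dc_5$; and the $s_1\le s_2$ case is handled by relabelling, as in the paper. The only (immaterial) divergence is at $\Dc_3,\Dc_4$, where the paper sends an explicit common stream $\tS_0$ on $\tV_0$ in the data phase, has both users decode it, and dedicates it to one user, whereas you let the unintended user zero-force the $s_0$ common directions; both accountings yield the same $\min(\cdot,(N_k-s_0)^+)$ terms, so your proposal is correct and essentially the paper's argument.
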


\begin{remark}
The parameters $\rronetwo,\rrone,\rrtwo$ represent the allocation of available dimensions to encoding of messages for the two users. By tuning these parameters, we explore the trade-off between the number of data dimensions (indicating the amount of channel uses needed for pilot transmission) and the amount of channel uses for data transmission within each section of the eigenspaces. 
\end{remark}

\begin{proof}[Proof of Theorem~\ref{thm:2user_CDIR_partiallyOverlapping}]
The DoF pairs $\Big(N_1^*\Big(1-\frac{N_1^*}{T}\Big),0\Big)$ and $\Big(0,N_2^*\Big(1-\frac{N_2^*}{T}\Big)\Big)$ are achieved by activating only one user according to Lemma~\ref{lem:single-user}. 

For any non-negative integers $\rronetwo,\rrone,\rrtwo$ satisfying $\rronetwo \leq r_0$, $\rrone \leq r_1 - r_0$ and $\rrtwo \leq r_2 - r_0$, there exist eigendirections $\tV_0 \in \CC^{M \times \rronetwo},\tV_1 \in \CC^{M \times \rrone}, \tV_2 \in \CC^{M \times \rrtwo}$, such that User~1 can only see signals in the direction of $\tX_1$ and $\tX_0$, while User~2 can only see signals in the direction of $\tX_2$ and $\tX_0$. (See Section~\ref{sec:intro_ratesplit}.)

\color{black}
To achieve $\mathcal{D}_1$, the base station employs product superposition and transmits
\begin{equation}
\tX = [\tV_0 \ \tV_1] \tX_2 \tX_1,
\end{equation}
with $\tX_1 = [\tI_{\rrone + \rronetwo}~\tS_1]$ and $\tX_2 = 
\Bigg[\begin{matrix}
\tI_{\rronetwo} & \tS_2\\
\ZeroMat & \tI_{\rrone} 
\end{matrix}\Bigg],$
where $\tS_1 \in \mathbb{C}^{(\rrone + \rronetwo) \times (T - \rrone-\rronetwo)}$ and $\tS_2 \in \mathbb{C}^{\rronetwo \times \rrone}$ contain symbols for User~$1$ and User~$2$, respectively. Following steps similar to the proof of Proposition~\ref{thm:2fwo}, it can be shown that this achieves the DoF pair $\mathcal{D}_1$. The DoF pair $\Dc_2$ can be achieved similarly by switching the users' role.

When $s_1 \ge s_2$, the pairs $\mathcal{D}_3$ and $\mathcal{D}_4$ are achieved with rate splitting as follows. Let the transmitter send
\begin{equation}
\tX = [\vvonetwo ~\vvone ~\vvtwo]
\begin{bmatrix}
\tI_{\rronetwo} & [\ZeroMat_{s_0 \times s_1} \quad \tS_0]\\
\ZeroMat_{s_{1} \times s_0} & [\tI_{\rrone}  \quad \tS_1]\\
\ZeroMat_{s_{2} \times s_0} & [\tI_{\rrtwo} \quad \tS_{2}]
\end{bmatrix},
\end{equation}
where $\tS_0 \in \CC^{s_0\times (T-s_1 - s_0)}$ is a common signal to both users while $\tS_1 \in \CC^{s_1 \times (T-s_1 - s_0)}$ and $\tS_2 \in \CC^{s_2 \times (T-s_2 - s_0)}$  are private signals to User~$1$ and User~$2$, respectively. 

The received signal at User~$1$ is
\begin{equation}
\begin{split}
\tY_1 
= {\tH}_1 [\vvonetwo ~\vvone]
\begin{bmatrix}
\tI_{\rronetwo} & \ZeroMat & \tS_0\\
\ZeroMat & \tI_{\rrone} & \tS_1
\end{bmatrix}
+ \tW_1.
\end{split}
\end{equation}
\color{black}
User~$1$ estimates the equivalent channel ${\tH}_1[\vvonetwo ~\vvone]$ during the first $\rrone + \rronetwo$ channel uses and decodes both $\tS_{1}$ and $\tS_0$ during the remaining $T - s_1 -s_0$ channel uses, achieving $\min(\rrone + \rronetwo, N_1)\frac{T - \rrone - \rronetwo}{T}$ DoF.
The received signal at User~$2$ is
\begin{equation}
\begin{split}
\tY_2 
= {\tH}_2 [\vvonetwo ~ \vvtwo]
\begin{bmatrix}
\tI_{\rronetwo} & \ZeroMat & [ \ZeroMat_{s_0 \times (s_1 - s_2)} \ \tS_0]\\
\ZeroMat & \tI_{\rrtwo} & \tS_2
\end{bmatrix}
+ \tW_2.
\end{split}
\end{equation}
\color{black}
User~$2$ estimates the equivalent channel ${\tH}_2[\vvonetwo\ \vvtwo]$ and then decodes $\tS_0$ and $\tS_2$, achieving $\min(s_2,N_2) \frac{s_1-s_2}{T} + \min(s_2+s_0,N_2) \frac{T-s_1-s_0}{T}$ DoF. By dedicating $\tS_0$ to only User~$1$ or User~$2$, DoF pairs $\mathcal{D}_3$ and $\mathcal{D}_4$ are achieved, respectively. 

The degrees of freedom pair $\mathcal{D}_5$ can be achieved (still assuming $s_1 \ge s_2$), via a combination of rate splitting and product superposition as follows. The transmitted signal is 
\begin{equation}
\tX = [\tV_0 \ \tV_1] \tX'_2\tX_1 + \tV_2 \tX_2,
\end{equation}
with $\tX_2 = [\ZeroMat_{\rronetwo \times \rronetwo} ~ \tI_{\rrtwo} ~ \tS_2]$, $\tX_1 = [\tI_{\rrone + \rronetwo} ~ \tS_1]$, and $\tX'_2 = 
\begin{bmatrix}
\tI_{\rronetwo} & [\ZeroMat_{\rronetwo \times \rrtwo}~\tS'_2]\\
\ZeroMat_{s_1 \times s_0} & \tI_{s_1}
\end{bmatrix}$, 
where $\tS_1 \in \CC^{(s_1+ s_0) \times (T-s_1-s_0)}$ contains symbols intended for User~$1$ while $\tS \in \CC^{s_2 \times (T-s_2-s_0)}$ and $\tS'_2 \in \mathbb{C}^{\rronetwo \times (\rrone - \rrtwo)}$ contain symbols intended for User~$2$. 
The received signal at User~$1$ is
\begin{equation}
\begin{split}
\tY_1 
= {\tH}_1 [\vvonetwo ~ \vvone] \tX'_2[\tI_{\rrone + \rronetwo} ~ \tS_1] + \tW_1.
\end{split}
\end{equation}
\color{black}
User~$1$ estimates the equivalent channel ${\tH}_1 [\vvonetwo ~ \vvone] \tX'_2$, and then decodes $\tS_1$ to achieve $\min(\rronetwo + \rrone,N_1)\frac{T - \rronetwo - \rrone}{T}$ DoF. 
The received signal at User~$2$ is
\begin{equation}
\begin{split}
\tY_2 = {\tH}_2 [\vvonetwo ~ \vvtwo]
\begin{bmatrix}
\tI_{\rronetwo} & \ZeroMat_{\rrtwo \times \rrtwo} & [\tS'_2 ~ \mathbf{A}]\\
\ZeroMat_{\rronetwo \times \rronetwo} & \tI_{\rrtwo} & \tS_2
\end{bmatrix}
+ \tW_2,
\end{split}
\end{equation}
\color{black}
where ${\mathbf A} \triangleq [\tI_{\rronetwo}~\ZeroMat_{\rronetwo \times \rrtwo}~\tS'_2]\tS_1$. User~$2$ estimates its equivalent channel ${\tH}_2 [\vvonetwo ~ \vvtwo]$ in the first $\rrtwo+\rronetwo$ channel uses, and then decodes $\tS'_2$ and $\tS_2$, achieving $\min(s_2+s_0,N_2)\frac{s_1-s_2}{T} + \min(s_2,N_2)\frac{T-s_1-s_0}{T}$ DoF in total. Therefore, $\mathcal{D}_5$ is achieved.

Therefore, the proof for the case where $\rrone \geq \rrtwo$ is completed. A similar analysis applies to the case $\rrtwo \geq \rrone$ and completes the proof of Theorem~\ref{thm:2user_CDIR_partiallyOverlapping}. 
\end{proof}

In Figure~\ref{fig:dof_2pwo}, the achievable DoF region in Theorem~\ref{thm:2user_CDIR_partiallyOverlapping} is shown for the scenario where $T = 24$, $N_1 = 12$, $N_2 = 12$, $(r_1,r_2) \in \{(12,10),(12,12)\}$, and $r_0 \in \{0,3,6,9\}$. Similar to the CSIR case, exploiting the channel correlation improves significantly the DoF region upon TDMA, especially for small $r_0$. Note that TDMA was shown to be degrees of freedom optimal when the channel is uncorrelated \cite{Fadel_disparity}.
\begin{figure}
\centering
\hspace{-.3cm}
\subfigure[$r_1 = 12$, $r_2 = 10$, $r_0 \in \{0,3,6,9\}$]{
\includegraphics[width=.48\textwidth]{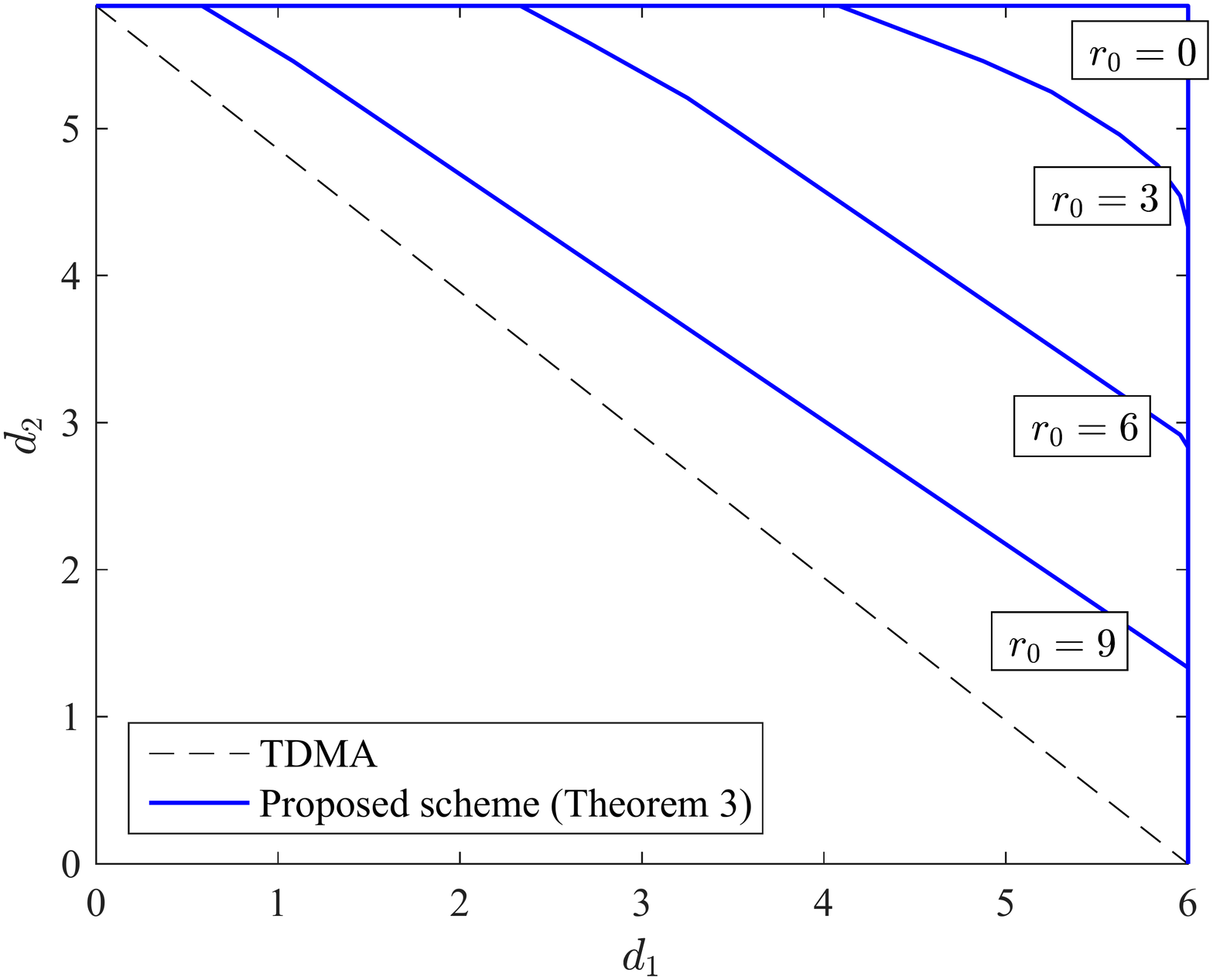} 
}
\subfigure[$r_1 = r_2 = 12$, $r_0 \in \{0,3,6,9\}$]{
\includegraphics[width=.49\textwidth]{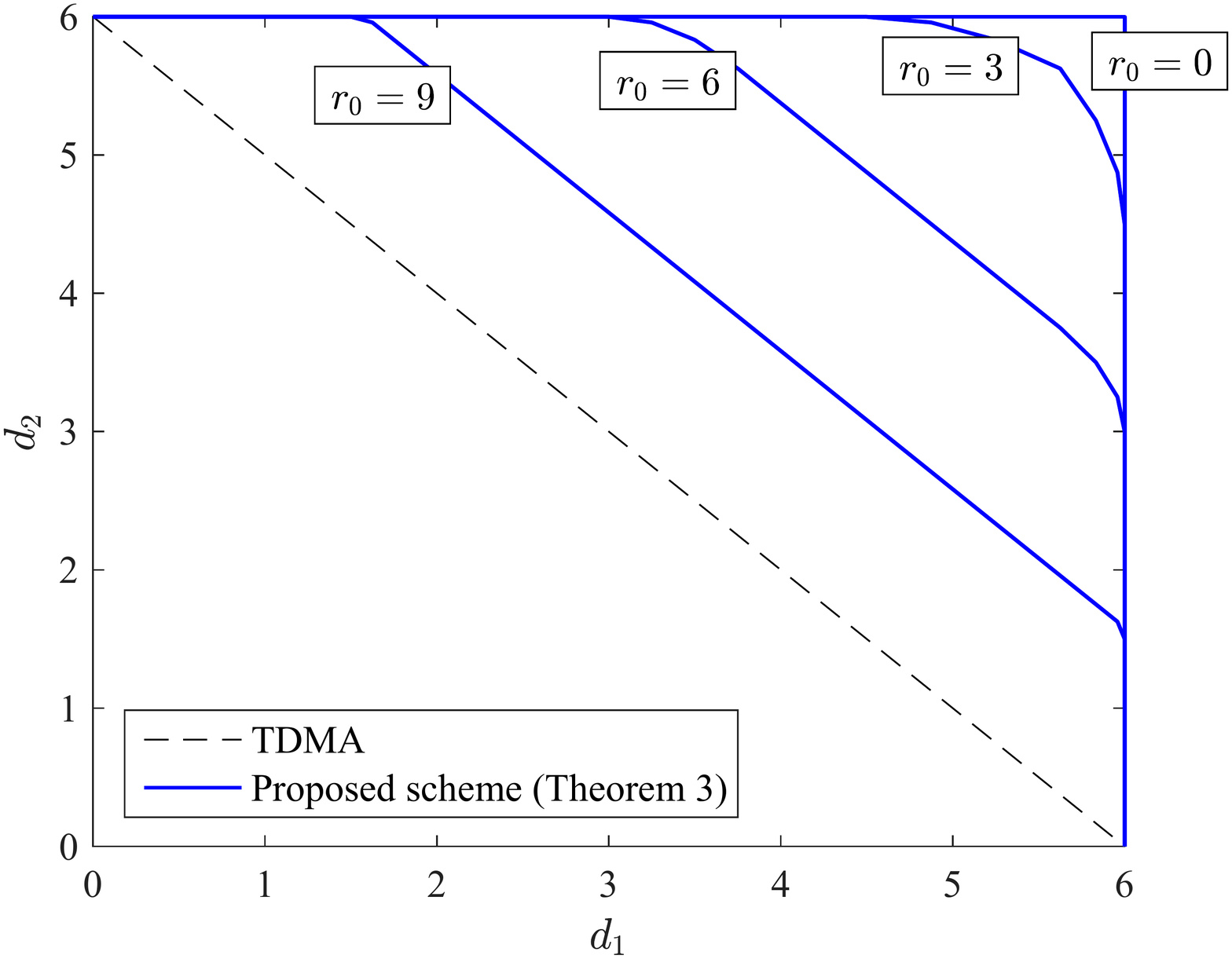}  
}
\caption{The DoF region for the two-user broadcast channel without free CSIR, achieved with TDMA or the proposed scheme (Theorem~\ref{thm:2user_CDIR_partiallyOverlapping}) for $T = 24$, $N_1 = 12$, $N_2 = 12$, $(r_1,r_2) \in \{(12,10),(12,12)\}$, and $r_0 \in \{0,3,6,9\}$.}
\label{fig:dof_2pwo}
\end{figure}

This completes the DoF analysis for the two-user case. By using both product superposition and rate splitting,  achievable DoF regions were calculated for a variety of correlation structures and antenna configurations. Also, an outer bound was calculated under perfect CSIR.

\section{Two-User Broadcast Channel: Rate Analysis}
\label{sec:2user_rate}
We assume no free CSIR under partially overlapping eigenspaces, and assume that $r_k \le N_k$, $k \in \{1,2\}$. In addition, without loss of generality $r_1 \ge r_2$.

\subsection{The Single-User Case} \label{sec:single-user}
Let us first consider the single-user case where, for simplicity, we omit the user's index. The received signal is
\begin{align}
\rvMat{Y} = \rvMat{H} \rvMat{X} + \rvMat{W},
\end{align}
where the assumptions for the transmitted signal $\rvMat{X}$, the Gaussian noise $\rvMat{W}$, and the channel  $\rvMat{H}$ are as before. In particular, $\rvMat{H}$ is block fading with coherence time $T$, and has correlation matrix $\Rm = \Um \Sigmam \Um^\H$, thus can be written as $\rvMat{H} = \rvMat{G} \Sigmam^{\frac12} \Um^\H$ with $\rvMat{G} \in \CC^{N \times r}$ drawn from a generic distribution. 
The following theorem states the achievable rate (in bits/channel use) for this channel.
\begin{theorem} \label{thm:single_user}
Achievable rates for a single-user spatially-correlated MIMO channel without free CSIR are as follows.
\begin{enumerate}
\item if the transmitter does not know the channel correlation matrix $\Rm$,
\begin{align}
R = \Big(1-\frac{M}{T}\Big) \EE \bigg[ \log \det \bigg(\Id_N + \frac{\rho_\delta \rho_\tau}{\rho_\delta \trace \big((\Sigmam^{-1} + \rho_\tau\Id_r )^{-1}\big) + M}  \hat{\rvMat{H}} \hat{\rvMat{H}}^\H \bigg) \bigg],
\label{eq:Rate_su_noRatTX}
\end{align}
where rows of $\hat{\rvMat{H}}$ are i.i.d. according to ${\Cc \Nc} \big({\bf0}^{T}, \Rm(\Id_M + \rho_\tau\Rm)^{-1}\Rm \big)$, and $\rho_\tau M + \rho_\delta(T-M) \le \rho T$;

\item if the transmitter knows the channel correlation matrix $\Rm$, under orthogonal pilots:

\begin{align}
\label{eq:Rate_su_RatTX_orthPilot}
R = \Big(1-\frac{r}{T}\Big) \EE \bigg[ \log \det \bigg(\Id_N + \frac{\rho_\delta \rho_\tau}{\rho_\delta\trace \big((\bar{\Rm}^{-1} + \rho_\tau \Id_{r}\big)^{-1}\big) + r}  \eqhat\eqhat^\H \bigg) \bigg],
\end{align}
where rows of $\eqhat$ are i.i.d.\ according to ${\Cc \Nc} \big({\bf 0}^T, \bar{\Rm}(\Id_r + \rho_\tau\bar{\Rm})^{-1}\bar{\Rm} \big)$ such that $\bar{\Rm} \defeq \Vm^\H \Rm \Vm$ for a truncated unitary matrix $\Vm \in \CC^{M \times r}$ such that $\Span[\Vm] = \Span[\Um]$. 

Allowing non-orthogonal pilots can improve the rate to: 

  \begin{align}
  \label{eq:Rate_su_RatTX_optPilot}
R = \Big(1-\frac{r}{T}\Big) \EE \bigg[ \log \det \bigg(\Id_N + \frac{\rho_\delta }{r\rho_\delta\big(\rho_\tau+\frac{1}{r}\trace (\bar{\Rm}^{-1}) \big)^{-1} + r}  \eqhat\eqhat^\H \bigg) \bigg],
\end{align}
where rows of $\eqhat$ are i.i.d. according to ${\Cc \Nc} \big({\bf 0}^T, \bar{\Rm} - \bigg(\rho_\tau + \frac{1}{r}\trace[\bar{\Rm}^{-1}] \bigg)^{-1} \Id_{r} \big)$.
\end{enumerate}
\end{theorem}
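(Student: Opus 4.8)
The plan is to establish all three formulas via one pilot-then-data strategy, in the spirit of Hassibi and Hochwald~\cite{Hassibi_2003}, modified to account for the rank-$r$ transmit correlation. In a coherence block of $T$ channel uses I would spend the first $T_\tau$ symbols on a known pilot $\rvMat{X}_\tau$ of per-symbol power $\rho_\tau$, and the remaining $T_\delta = T - T_\tau$ symbols on i.i.d.\ zero-mean Gaussian data $\rvMat{X}_\delta$ of per-symbol power $\rho_\delta$, with $\rho_\tau T_\tau + \rho_\delta T_\delta \le \rho T$. During training the receiver forms the linear MMSE estimate $\hat{\rvMat{H}}$ from Lemma~\ref{lem:MMSE-estimator}; for Gaussian fading $\hat{\rvMat{H}}$ and the error $\tilde{\rvMat{H}} = \rvMat{H} - \hat{\rvMat{H}}$ are independent, zero-mean, Gaussian, with row covariances $\Rm_{\hat H}$ and $\Rm_{\tilde H}$ given by \eqref{eq:tmp181}--\eqref{eq:tmp182}; this is the origin of the $\Cc\Nc(\cdot)$ laws attributed to $\hat{\rvMat{H}}$ (resp.\ $\eqhat$) in the statement.

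For the data phase I would regard $\hat{\rvMat{H}}$ as the known channel and absorb the residual into the noise: $\rvMat{Y}_\delta = \hat{\rvMat{H}}\rvMat{X}_\delta + \rvMat{Z}_\delta$ with $\rvMat{Z}_\delta = \tilde{\rvMat{H}}\rvMat{X}_\delta + \rvMat{W}_\delta$ uncorrelated with $\rvMat{X}_\delta$ given $\hat{\rvMat{H}}$. Since the rows of $\tilde{\rvMat{H}}$ are i.i.d., for isotropic data the column covariance of $\rvMat{Z}_\delta$ is the scalar matrix $\big(1+\tfrac{\rho_\delta}{M}\trace[\Rm_{\tilde H}]\big)\Id_N$; invoking the worst-case uncorrelated-noise bound \eqref{eq:worst-case-noise-1} of Lemma~\ref{lem:worst-case-noise} realization-by-realization in $\hat{\rvMat{H}}$ (which is left rotationally invariant, so this white residual is not suboptimal) then gives
\begin{equation*}
I(\rvMat{Y}_\delta; \rvMat{X}_\delta \cond \hat{\rvMat{H}}) \ \ge\ T_\delta\, \EE\!\left[ \log\det\!\left( \Id_N + \frac{\rho_\delta/M}{\,1+\tfrac{\rho_\delta}{M}\trace[\Rm_{\tilde H}]\,}\, \hat{\rvMat{H}}\hat{\rvMat{H}}^\H \right) \right],
\end{equation*}
and the achievable rate is $1/T$ times the right-hand side, which produces the $(1-T_\tau/T)$ prelog factor.

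Specializing the pilot yields the three cases. When $\Rm$ is unknown the transmitter must probe all $M$ directions, so $T_\tau = M$ and $\rvMat{X}_\tau = \sqrt{\rho_\tau}\,\Id_M$; writing $\Rm = \Um\Sigmam\Um^\H$, a short eigenvalue computation gives $\Rm_{\tilde H} = \Um(\Sigmam^{-1}+\rho_\tau\Id_r)^{-1}\Um^\H$, hence $\trace[\Rm_{\tilde H}] = \trace[(\Sigmam^{-1}+\rho_\tau\Id_r)^{-1}]$, while rescaling the MMSE estimate to the normalization used in the statement supplies the extra factor $\rho_\tau$ in the numerator; this is \eqref{eq:Rate_su_noRatTX}. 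When $\Rm$ is known the transmitter sends within the eigenspace through $\Vm$, so the effective channel is the reduced $N\times r$ channel $\bar{\rvMat{H}} = \rvMat{H}\Vm$ with correlation $\bar\Rm = \Vm^\H\Rm\Vm$, and only $T_\tau = r$ pilot symbols are needed; with orthogonal pilots $\bar{\rvMat{X}}_\tau = \sqrt{\rho_\tau}\,\Id_r$ the display above, with $(\bar\Rm, r)$ replacing $(\Rm, M)$, gives \eqref{eq:Rate_su_RatTX_orthPilot}.

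The last and, I expect, most delicate step is the non-orthogonal pilot improvement. Here $\bar{\rvMat{X}}_\tau \in \CC^{r\times r}$ is optimized under the energy budget $\trace[\bar{\rvMat{X}}_\tau\bar{\rvMat{X}}_\tau^\H] = \rho_\tau r$ to minimize the effective noise power, i.e.\ to minimize $\trace[\Rm_{\tilde H}] = \trace[\bar\Rm - \bar\Rm\bar{\rvMat{X}}_\tau(\bar{\rvMat{X}}_\tau^\H\bar\Rm\bar{\rvMat{X}}_\tau + \Id_r)^{-1}\bar{\rvMat{X}}_\tau^\H\bar\Rm]$. Diagonalizing $\bar\Rm$ and using the singular values of $\bar{\rvMat{X}}_\tau$ turns this into a scalar water-filling problem, and a Lagrangian argument shows the minimizer allocates pilot power so that the estimation error is \emph{white}: $\Rm_{\tilde H} = \big(\rho_\tau + \tfrac{1}{r}\trace[\bar\Rm^{-1}]\big)^{-1}\Id_r$ and $\Rm_{\hat H} = \bar\Rm - \Rm_{\tilde H}$. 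Substituting this into the data-phase bound and collecting terms produces the effective SNR $\rho_\delta\big/\!\big(r\rho_\delta(\rho_\tau + \tfrac{1}{r}\trace[\bar\Rm^{-1}])^{-1} + r\big)$, i.e.\ \eqref{eq:Rate_su_RatTX_optPilot}. The two points needing care are (i) applying Lemma~\ref{lem:worst-case-noise} cleanly despite the signal dependence of $\tilde{\rvMat{H}}\rvMat{X}_\delta$ (resolved by the worst-case-Gaussian-noise step, which only requires the residual to be uncorrelated with the data), and (ii) the pilot water-filling that identifies the error-whitening solution; the model assumption $T \ge 2\max(r,N)$ ensures $T_\tau = r < T$ still leaves room for data transmission.
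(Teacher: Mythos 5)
Your proposal follows essentially the same route as the paper's proof: a pilot-then-data scheme with linear MMSE estimation (Lemma~\ref{lem:MMSE-estimator}), the worst-case uncorrelated-noise bound (Lemma~\ref{lem:worst-case-noise}) applied to the combined noise $\tilde{\rvMat{H}}\rvMat{X}_\delta+\rvMat{W}_\delta$ after noting it is uncorrelated with the data, and, for the non-orthogonal pilot case, a Lagrangian optimization of the pilot Gram matrix whose solution whitens the estimation error, $\Rm_{\tilde H}=\big(\rho_\tau+\tfrac{1}{r}\trace[\bar\Rm^{-1}]\big)^{-1}\Id_r$, exactly as in the appendix. The handling of the three cases (full-space pilots of length $M$, eigenspace precoding with $r$ orthogonal pilots, and optimized pilots) and the rescaling that produces the $\rho_\tau$ factor in the numerator all coincide with the paper's argument, so the proof is correct and not materially different.
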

\begin{proof}
See Appendix~\ref{proof:rate_single}.
\end{proof}

\begin{remark}
\label{remark:powerallocation}
The optimal power allocation for the rate in \eqref{eq:Rate_su_RatTX_optPilot} is given by $\rho_\tau= \frac{(1-\alpha) \rho T}{r}$ and $\rho_\delta = \frac{\alpha \rho T}{T-r}$ with
\begin{align} \label{eq:p2p_power_allocation}
\alpha = \begin{cases}
\frac{1}{2}, & \text{if~~} T = 2r, \\
b - \sqrt{b(b-a)}, & \text{if~~} T > 2r,
\end{cases}
\end{align}
where $a \defeq 1 + \frac{\trace (\bar{\Rm}^{-1})}{\rho T} - \frac{r^2}{\rho T \trace (\bar{\Rm})}$ and $b \defeq \frac{T-r}{T-2r}\Big(1 + \frac{\trace (\bar{\Rm}^{-1})}{\rho T}\Big).$
\end{remark}

\begin{corollary}
If the channel is uncorrelated, i.e., $\Rm = \Id_M$, the achievable rate is
\begin{align}
R 
&= \Big(1-\frac{M}{T}\Big) \EE \bigg[\log\det\bigg(\Id_{N} + \frac{\rho_\delta \rho_\tau}{M(1 +\rho_\delta +\rho_\tau)} {\rvMat{H}} {\rvMat{H}}^\H \bigg)\bigg],
\end{align}
where ${\rvMat{H}} \in \CC^{N\times M}$ is the uncorrelated channel matrix.
This coincides with~\cite[Eq.(21)]{Hassibi_2003}.
\end{corollary}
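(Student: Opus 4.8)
The corollary is a direct specialization of Theorem~\ref{thm:single_user} to $\Rm = \Id_M$, so the plan is simply to substitute and simplify, while keeping track of one bookkeeping subtlety.

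First I would note that $\Rm = \Id_M$ is full rank, so $r = M$, and one may take $\Um = \Vm = \Id_M$; hence $\Sigmam = \Id_M$ and $\bar{\Rm} = \Vm^\H \Rm \Vm = \Id_M$. In particular the prelog factor $(1 - r/T)$ becomes $(1 - M/T)$, and $\rvMat{H} = \rvMat{G} \in \CC^{N\times M}$, i.e.\ the equivalent channel is precisely the uncorrelated channel matrix appearing in the corollary. Next I would evaluate the two data-dependent pieces of \eqref{eq:Rate_su_noRatTX} (equivalently \eqref{eq:Rate_su_RatTX_orthPilot}, since orthogonal pilots are already optimal when $\Rm = \Id_M$). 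On the one hand, $\trace[(\Sigmam^{-1} + \rho_\tau \Id_r)^{-1}] = \trace[(1+\rho_\tau)^{-1}\Id_M] = M/(1+\rho_\tau)$, so the scalar coefficient multiplying $\hat{\rvMat{H}}\hat{\rvMat{H}}^\H$ equals
\[
\frac{\rho_\delta \rho_\tau}{\rho_\delta \cdot \frac{M}{1+\rho_\tau} + M} \;=\; \frac{\rho_\delta \rho_\tau (1+\rho_\tau)}{M\,(1 + \rho_\tau + \rho_\delta)}.
\]
On the other hand, the estimate $\hat{\rvMat{H}}$ has i.i.d.\ zero-mean Gaussian entries whose variance is the single diagonal scalar of $\Rm(\Id_M + \rho_\tau\Rm)^{-1}\Rm = (1+\rho_\tau)^{-1}\Id_M$, so $\hat{\rvMat{H}} \overset{d}{=} (1+\rho_\tau)^{-1/2}\,\rvMat{H}$ and hence $\hat{\rvMat{H}}\hat{\rvMat{H}}^\H \overset{d}{=} (1+\rho_\tau)^{-1}\rvMat{H}\rvMat{H}^\H$.

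Finally, multiplying these two pieces, the factor $(1+\rho_\tau)$ in the numerator of the scalar coefficient cancels the $(1+\rho_\tau)^{-1}$ coming from the covariance normalization of $\hat{\rvMat{H}}$, leaving exactly $\frac{\rho_\delta\rho_\tau}{M(1+\rho_\tau+\rho_\delta)}$ inside the $\log\det$; together with the prelog $(1-M/T)$ this is the stated expression, which coincides with \cite[Eq.~(21)]{Hassibi_2003}. The only point requiring care is precisely this cancellation — the normalization absorbed into the distribution of the channel estimate must not be double counted — so there is no real obstacle here. As a consistency check one can redo the computation starting from \eqref{eq:Rate_su_RatTX_optPilot}: with $\bar{\Rm} = \Id_M$ the row covariance of $\eqhat$ becomes $\frac{\rho_\tau}{1+\rho_\tau}\Id_M$ and the scalar coefficient becomes $\frac{\rho_\delta(1+\rho_\tau)}{M(1+\rho_\tau+\rho_\delta)}$, whose product is again $\frac{\rho_\delta\rho_\tau}{M(1+\rho_\tau+\rho_\delta)}$, reflecting the fact that non-orthogonal pilots give no gain in the i.i.d.\ case.
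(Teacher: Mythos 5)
Your proposal is correct and matches the paper's (implicit) argument: the corollary is just Theorem~\ref{thm:single_user} specialized to $\Rm=\Id_M$, and your substitution into \eqref{eq:Rate_su_noRatTX} — computing $\trace\big((\Sigmam^{-1}+\rho_\tau\Id_M)^{-1}\big)=M/(1+\rho_\tau)$ and absorbing the $(1+\rho_\tau)^{-1}$ row-covariance of $\hat{\rvMat{H}}$ so that the effective coefficient collapses to $\frac{\rho_\delta\rho_\tau}{M(1+\rho_\tau+\rho_\delta)}$ — is exactly the intended calculation, with the normalization bookkeeping handled correctly. Your cross-check via \eqref{eq:Rate_su_RatTX_optPilot} (valid here since $r=M$ makes the two power constraints coincide and optimized pilots reduce to orthogonal ones) is a nice confirmation but not required.
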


\subsection{The Baseline TDMA Schemes} \label{sec:2user_rate_TDMA}
We consider TDMA without free CSIR. If only User~$k$ is activated and the base station does not exploit $\Rm_k$, according to Theorem~\ref{thm:single_user}, the following corollary demonstrates the achievable rate:
\begin{corollary}
For 2-user broadcast channel, when the transmitter does not know channel correlations $\{\Rm_1,\Rm_2\}$, the following single-user rates are achievable for users $k=1,2$:
\begin{align}
\label{eq:achieavable-rate-TDMA_fullspace}
R_k = \Big(1-\frac{M}{T}\Big) \EE \bigg[ \log \det \bigg(\Id_{N_k} + \frac{\rho_\delta \rho_\tau}{\rho_\delta \trace \big((\Sigmam_k^{-1} + \rho_\tau\Id_{r_k} )^{-1}\big) + M}  \eqhat_k\eqhat_k^\H \bigg) \bigg], \quad\quad k=1,2,
\end{align}
where rows of $\eqhat_k$  are i.i.d. according to ${\Cc \Nc} \big({\bf 0}^T, \Rm_k(\Id_M + \rho_\tau\Rm_k)^{-1}\Rm_k \big)$, and $\rho_\tau M + \rho_\delta(T-M) \le \rho T$;

\end{corollary}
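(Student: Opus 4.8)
The plan is to obtain this statement as an immediate specialization of the first case of Theorem~\ref{thm:single_user} to each user's link, using the fact that TDMA collapses the two-user broadcast channel into a point-to-point channel for whichever user is currently scheduled. When only User~$k$ is activated, the base station devotes all $M$ antennas and the entire power budget to the codeword intended for User~$k$, and the other receiver plays no role; over a coherence block the observation at User~$k$ is therefore exactly
\[
\rvMat{Y}_k = \rvMat{H}_k \rvMat{X} + \rvMat{W}_k ,
\]
with $\rvMat{H}_k = \tG_k \Sigmam_k^{\frac12} \tU_k^\H$ block fading over $T$ channel uses, $h(\tG_k) > -\infty$, $\E[\tG_k^\H \tG_k] = N_k \tI_{r_k}$, correlation matrix $\Rm_k$, and per-block budget $\rho T$. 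Since the base station is assumed not to know $\Rm_k$, this is precisely the setting of the first case of Theorem~\ref{thm:single_user} (orthogonal pilots over all $M$ transmit antennas followed by Gaussian data), as opposed to the second and third cases, which exploit the eigenbasis $\tU_k$.

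First I would invoke that case of Theorem~\ref{thm:single_user} with the substitutions $\rvMat{H} \mapsto \rvMat{H}_k$, $\Rm \mapsto \Rm_k$, $\Sigmam \mapsto \Sigmam_k$, $\Um \mapsto \tU_k$, $N \mapsto N_k$, $r \mapsto r_k$, and $\eqhat \mapsto \eqhat_k$. Under these substitutions the rate in \eqref{eq:Rate_su_noRatTX} is literally the expression in \eqref{eq:achieavable-rate-TDMA_fullspace}: the prefactor $1 - M/T$, the scaling factor $\rho_\delta \rho_\tau / \big(\rho_\delta \trace((\Sigmam_k^{-1} + \rho_\tau \Id_{r_k})^{-1}) + M\big)$, the estimated-channel law $\Cc\Nc(\mathbf{0}^T, \Rm_k (\Id_M + \rho_\tau \Rm_k)^{-1} \Rm_k)$, and the pilot/data power split $\rho_\tau M + \rho_\delta (T - M) \le \rho T$ all carry over without change. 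Since the codebook attaining $R_k$ only uses the coherence blocks assigned to User~$k$, running the two users' codebooks over disjoint sets of blocks keeps them non-interfering, so each $R_k$ (and, by time-sharing over the block-allocation fractions, the convex hull of $(R_1,0)$ and $(0,R_2)$) is achievable in the broadcast channel.

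I do not expect a genuine obstacle here; the content is entirely in Theorem~\ref{thm:single_user}, whose proof in turn rests on the worst-case-noise bound of Lemma~\ref{lem:worst-case-noise}, the MMSE estimator of Lemma~\ref{lem:MMSE-estimator}, and the pilot-power optimization. The only thing that needs care is the bookkeeping: checking that the block-fading model of Section~\ref{sec:sys} specializes so that Theorem~\ref{thm:single_user} applies on a per-user basis, and making sure it is the first case of that theorem (no correlation knowledge at the transmitter) --- not the eigenspace-restricted cases --- that is being invoked. The corollary simply records that TDMA inherits the single-user rate for each activated user.
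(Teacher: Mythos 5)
Your proposal is correct and matches the paper's own treatment: the corollary is stated as an immediate consequence of the first case of Theorem~\ref{thm:single_user} (transmitter ignoring correlation), applied per user under TDMA, exactly as you do. The substitutions and the time-sharing bookkeeping you spell out are the whole argument; no further content is needed.
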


If the base station transmits in the eigenspace of $\Rm_k$ using precoder $\Vm_k = \Um_k$, i.e., $\Um_k^\H\Vm_k = \Id_{r_k}$, and optimizes the pilot, the following corollary demonstrates the achievable rate:
\begin{corollary}
For 2-user broadcast channel, when the transmitter emits in the eigenspace of $\Rm_1$, the following single-user rate is achievable:

 \begin{align}  \label{eq:achieavable-rate-TDMA}
R_1 = \Big(1-\frac{r_1}{T}\Big) \EE \bigg[ \log \det \bigg(\Id_{N_1} + \frac{\rho_\delta }{r\rho_\delta\big(\rho_\tau+\frac{1}{r}\trace (\Sigmam_1^{-1}) \big)^{-1} + r_1}  \eqhat_1\eqhat_1^\H \bigg) \bigg],
\end{align}
where rows of $\eqhat_1$ are i.i.d.\ according to  ${\Cc \Nc} \big({\bf 0}^T, \Sigmam_1 - \big(\rho_\tau + \frac{1}{r_1}\trace (\Sigmam_1^{-1}) \big)^{-1} \Id_{r_1} \big)$, and $\rho_\tau r + \rho_\delta(T-r) \le \rho T$. A corresponding (single-user) rate applies for $R_2$. 
\end{corollary}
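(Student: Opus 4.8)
The plan is to obtain this as the specialization of part~2 of Theorem~\ref{thm:single_user} --- specifically the improved, non-orthogonal-pilot bound~\eqref{eq:Rate_su_RatTX_optPilot} --- to the channel of User~$1$ under the precoder $\Vm=\Vm_1=\Um_1$. That part of the theorem already accounts for transmission confined to the correlation eigenspace through a truncated unitary $\Vm$ with $\Span[\Vm]=\Span[\Um]$ and the reduced correlation matrix $\bar{\Rm}=\Vm^\H\Rm\Vm$; so essentially all that remains is to evaluate $\bar{\Rm}$ for the natural choice $\Vm_1=\Um_1$ and to keep track of the power split. (The hypotheses $r_k\le N_k$ and $T\ge 2\max(r_k,N_k)$ of the section guarantee that Theorem~\ref{thm:single_user} applies here with $N\mapsto N_1$, $r\mapsto r_1$.)

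Concretely, I would proceed in three short steps. First, since $\Rm_1=\Um_1\Sigmam_1\Um_1^\H$ with $\Um_1^\H\Um_1=\Id_{r_1}$, the precoder $\Vm_1=\Um_1$ (which clearly satisfies $\Span[\Vm_1]=\Span[\Um_1]$) gives $\bar{\Rm}=\Vm_1^\H\Rm_1\Vm_1=\Um_1^\H\Um_1\Sigmam_1\Um_1^\H\Um_1=\Sigmam_1$, a full-rank $r_1\times r_1$ matrix; equivalently, after precoding User~$1$ sees the $N_1\times r_1$ channel $\rvMat{H}_1\Um_1=\rvMat{G}_1\Sigmam_1^{\frac12}$ with row covariance $\Sigmam_1$ and no dimensional redundancy. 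Second, I substitute $N\mapsto N_1$, $r\mapsto r_1$ and $\bar{\Rm}\mapsto\Sigmam_1$ into~\eqref{eq:Rate_su_RatTX_optPilot}; this reproduces~\eqref{eq:achieavable-rate-TDMA} together with the stated law of $\eqhat_1$, namely rows i.i.d.\ $\Cc\Nc\big(\mathbf{0}^T,\,\Sigmam_1-\big(\rho_\tau+\tfrac1{r_1}\trace(\Sigmam_1^{-1})\big)^{-1}\Id_{r_1}\big)$. Third, the per-block power budget $\rho T$ is split as power $\rho_\tau$ on each of the $r_1$ pilot channel uses needed to learn the $N_1\times r_1$ effective channel (the pre-log $1-r_1/T$ in~\eqref{eq:Rate_su_RatTX_optPilot} confirms a length-$r_1$ pilot phase) and power $\rho_\delta$ on each of the remaining $T-r_1$ data channel uses, giving the constraint $\rho_\tau r_1+\rho_\delta(T-r_1)\le\rho T$. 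The claim for $R_2$ follows verbatim with the indices interchanged and $\Vm_2=\Um_2$.

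There is no genuinely difficult step --- this is a corollary. The single point that warrants a moment's care is that precoding by $\Um_1$ yields $\bar{\Rm}$ equal to $\Sigmam_1$ \emph{exactly}, not merely a matrix unitarily similar to it; this is what makes the scalar $\trace(\Sigmam_1^{-1})$ in~\eqref{eq:achieavable-rate-TDMA} the intended one and ensures that every admissibility condition on $(\rho_\tau,\rho_\delta)$ carries over unchanged from Theorem~\ref{thm:single_user}. One may optionally note that, because $\rvMat{H}_1$ carries no energy in $\Span[\Um_1]^\perp$, confining transmission to $\Span[\Um_1]$ costs nothing for User~$1$ considered alone; but this observation is not needed, since the corollary asserts only achievability.
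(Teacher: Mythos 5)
Your proposal is correct and matches the paper's (implicit) argument: the corollary is obtained exactly by instantiating the optimized-pilot bound \eqref{eq:Rate_su_RatTX_optPilot} of Theorem~\ref{thm:single_user} with $\Vm=\Um_1$, so that $\bar{\Rm}=\Um_1^\H\Rm_1\Um_1=\Sigmam_1$, $N\mapsto N_1$, $r\mapsto r_1$, and the power constraint $\rho_\tau r_1+\rho_\delta(T-r_1)\le\rho T$ carries over. No further comment is needed.
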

The optimal power allocation for \eqref{eq:achieavable-rate-TDMA} closely follows Remark~\ref{remark:powerallocation} and is ommited for brevity. The convex hull of $(0,0)$, $(R_1,0)$, and $(0,R_2)$ is achievable by TDMA.

\subsection{Rate Splitting}
\label{sec:2user_rate_split}
In the following, we analyze the rate achievable with the schemes achieving the DoF region in Theorem~\ref{thm:2user_CDIR_partiallyOverlapping}. Recall that for a set of non-negative integers $s_0 \le r_0$, $s_1 \le r_1 - r_0$, and $s_2 \le r_2 - r_0$, the precoding matrices $\Vm_0,\Vm_1, \Vm_2$, are defined in Section~\ref{sec:intro_ratesplit},
define
\begin{itemize}
\item $\Phim_k \defeq \Um_k^\H [\Vm_0 \ \Vm_k]$, $\Phim_{k0} \defeq \Um_k^\H \Vm_0$, $\Phim_{kk} \defeq \Um_k^\H \Vm_k$ (so $\Phim_k = [\Phim_{k0} \ \Phim_{kk}]$);
\item $\bar{\Rm}_k \defeq \Phim_k^\H \Sigmam_k \Phim_k$, $\bar{\Rm}_{k0} \defeq \Phim_{k}^\H \Sigmam_k \Phim_{k0}$, $\bar{\Rm}_{kk} \defeq \Phim_{k}^\H \Sigmam_k \Phim_{kk}$ (so $\bar{\Rm}_k = [\bar{\Rm}_{k0} \ \bar{\Rm}_{kk}]$);
\item $\breve{\Rm}_{k0} \defeq \Phim_{k0}^\H \Sigmam_k \Phim_{k0}$, $\breve{\Rm}_{kk} \defeq \Phim_{kk}^\H \Sigmam_k \Phim_{kk}$.
\end{itemize}

Let the base station transmit
\begin{align} \label{eq:rate-splitting}
\rvMat{X} = \Vm_0 \rvMat{X}_0 + \Vm_1 \rvMat{X}_1 + \Vm_2 \rvMat{X}_2,
\end{align}
where $\rvMat{X}_0$, $\rvMat{X}_1$, and $\rvMat{X}_2$ are independent and satisfy the power constraint $\E[\|\rvMat{X}_0\|_F^2 + \|\rvMat{X}_1\|_F^2 + \|\rvMat{X}_2\|_F^2] \le \rho T$. Thanks to the precoders, the private signal $\rvMat{X}_k$ is seen by User~$k$ only, while the common signal $\rvMat{X}_0$ is seen by both users. The received signals become
\begin{align} 
\hspace{-.2cm} \rvMat{Y}_1 &= \rvMat{G}_1 \Sigmam^{\frac12}_1 \Phim_{10} \rvMat{X}_0 + \rvMat{G}_1 \Sigmam^{\frac12}_1 \Phim_{11} \rvMat{X}_1 + \rvMat{W}_1, \label{eq:2MACequivalent1}\\
\hspace{-.2cm} \rvMat{Y}_2 &= \rvMat{G}_2 \Sigmam^{\frac12}_2 \Phim_{20} \rvMat{X}_0 +\rvMat{G}_2 \Sigmam^{\frac12}_2 \Phim_{22} \rvMat{X}_2 +  \rvMat{W}_2, \label{eq:2MACequivalent2}
\end{align} 
where the equivalent channels $\rvMat{G}_k \Sigmam^{\frac12}_k \Phim_{k0} \in \CC^{N_k \times s_0}$ and $\rvMat{G}_k \Sigmam^{\frac12}_k \Phim_{kk} \in \CC^{N_k \times s_k}$, $k\in\{1,2\},$ are correlated and unknown. It can be observed that the received signal at each user is similar to a non-coherent two-user MAC: \eqref{eq:2MACequivalent1} as the MAC $1$ with ($s_0,s_1$) equivalent transmit antennas and $N_1$ receive antennas, \eqref{eq:2MACequivalent2} as the MAC $2$ with ($s_0,s_2$) equivalent transmit antennas and $N_2$ receive antennas. The two MACs share a common signal $\rvMat{X}_0$. 

From the capacity region of multiple access channels~\cite{ElGamal}, we know that the rate pairs $(R_0,R_1^p)$ and $(R_0,R_2^p)$ are simultaneously achievable for the MAC $1$ and MAC $2$, respectively, if the rates $R_0 \ge 0,R_1^p \ge 0,R_2^p\ge 0$ satisfy 
\begin{align} \label{eq:rateMAC1}
R_0 &\le \frac{1}{T}I(\rvMat{Y}_1; \rvMat{X}_0 | \rvMat{X}_1), \\
R_1^p &\le \frac{1}{T}I(\rvMat{Y}_1; \rvMat{X}_1 | \rvMat{X}_0), \\
R_0+R_1^p &\le \frac{1}{T}I(\rvMat{Y}_1; \rvMat{X}_0,\rvMat{X}_1), \\
R_0 &\le \frac{1}{T}I(\rvMat{Y}_2; \rvMat{X}_0 | \rvMat{X}_2), \\
R_2^p &\le \frac{1}{T}I(\rvMat{Y}_2; \rvMat{X}_2 | \rvMat{X}_0), \\
R_0+R_1^p &\le \frac{1}{T}I(\rvMat{Y}_2 ; \rvMat{X}_0,\rvMat{X}_2). \label{eq:rateMAC2}
\end{align}
Then, User~$1$ achieves rate $R_1^p$ with private signal $\rvMat{X}_1$, user 2 achieves rate $R_2^p$ with private signal $\rvMat{X}_2$, and both users can achieve rate $R_0$ with common signal $\rvMat{X}_0$. Let $R_{0k}$ be the User~$k$'s share in $R_0$, then the rate pair $(R_1,R_2) = (R_{01}+R_1^p, R_{02}+R_2^p)$ is achievable. Replacing $R_0=R_{01}+R_{02}$, $R_1^p = R_1 - R_{01}$, and $R_2^p = R_2 - R_{02}$ in \eqref{eq:rateMAC1}-\eqref{eq:rateMAC2} and applying Fourier-Motzkin elimination leads to the following result.
\begin{lemma} 
\label{lemma:BC_rate_region}
With rate splitting and without free CSIR, rate pairs $(R_1,R_2)$ are achievable with:
\begin{align}
R_1 &\le \frac{1}{T}\min\{I(\rvMat{Y}_1;\rvMat{X}_1,\rvMat{X}_0), I(\rvMat{Y}_1;\rvMat{X}_1|\rvMat{X}_0)+I(\rvMat{Y}_2;\rvMat{X}_0|\rvMat{X}_2)\}, \label{eq:add:R1_bound}\\
R_2 &\le \frac{1}{T}\min\{I(\rvMat{Y}_2;\rvMat{X}_2,\rvMat{X}_0),I(\rvMat{Y}_2;\rvMat{X}_2|\rvMat{X}_0)+I(\rvMat{Y}_1;\rvMat{X}_0|\rvMat{X}_1)\}, \label{eq:add:R2_bound}\\
R_1+R_2 &\le \frac{1}{T}\min\{I(\rvMat{Y}_1;\rvMat{X}_1|\rvMat{X}_0)+I(\rvMat{Y}_2;\rvMat{X}_2,\rvMat{X}_0), I(\rvMat{Y}_1;\rvMat{X}_1,\rvMat{X}_0)+I(\rvMat{Y}_2;\rvMat{X}_2|\rvMat{X}_0)\},\label{eq:add:sum_rate_bound}
\end{align}
for input distributions $p(\rvMat{X}_0)$, $p(\rvMat{X}_1)$, and $p(\rvMat{X}_2)$ satisfying $\E[\|\rvMat{X}_0\|_F^2 + \|\rvMat{X}_1\|_F^2 + \|\rvMat{X}_2\|_F^2] \le \rho T$.
\end{lemma}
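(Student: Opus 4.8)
The plan is to recognize that the setup in \eqref{eq:2MACequivalent1}--\eqref{eq:2MACequivalent2} is precisely a pair of multiple-access channels sharing one common input $\rvMat{X}_0$, and then to invoke the standard MAC achievability region together with Fourier--Motzkin elimination. First I would fix arbitrary input distributions $p(\rvMat{X}_0)$, $p(\rvMat{X}_1)$, $p(\rvMat{X}_2)$ that are mutually independent and jointly satisfy the power constraint $\E[\|\rvMat{X}_0\|_F^2 + \|\rvMat{X}_1\|_F^2 + \|\rvMat{X}_2\|_F^2] \le \rho T$. By the precoder design in Section~\ref{sec:intro_ratesplit} (the rank conditions $\rank(\Um_1 \Vm_2) = \rank(\Um_2 \Vm_1) = 0$), User~$1$'s received signal depends only on $(\rvMat{X}_0,\rvMat{X}_1)$ and User~$2$'s only on $(\rvMat{X}_0,\rvMat{X}_2)$, so each user's channel is genuinely a two-user MAC with $\rvMat{X}_0$ as one of the two ``virtual transmitters.'' Treating $\rvMat{X}_0$ as a common codebook decoded by both users, the classical MAC inner bound \cite{ElGamal} gives that $(R_0, R_1^p)$ is achievable on User~$1$'s link whenever \eqref{eq:rateMAC1}--\eqref{eq:rateMAC2} (the first three inequalities) hold, and simultaneously $(R_0, R_2^p)$ is achievable on User~$2$'s link whenever the last three hold; since the same $R_0$ appears in both, the common message can be reliably conveyed to both users at that rate.

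Next I would translate the achieved $(R_0, R_1^p, R_2^p)$ triple into per-user rates. Split the common rate as $R_0 = R_{01} + R_{02}$ with $R_{0k} \ge 0$, crediting $R_{0k}$ to User~$k$, so that the achievable per-user rates are $R_1 = R_{01} + R_1^p$ and $R_2 = R_{02} + R_2^p$. Substituting $R_0 = R_{01}+R_{02}$, $R_1^p = R_1 - R_{01}$, $R_2^p = R_2 - R_{02}$ into the six MAC inequalities, together with the nonnegativity constraints $R_{01}, R_{02}, R_1 - R_{01}, R_2 - R_{02} \ge 0$, yields a polytope in the variables $(R_1, R_2, R_{01}, R_{02})$. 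The final step is to eliminate the auxiliary variables $R_{01}$ and $R_{02}$ by Fourier--Motzkin elimination, which produces exactly the three bounds \eqref{eq:add:R1_bound}--\eqref{eq:add:sum_rate_bound} in terms of $(R_1, R_2)$ only. Taking the union over all admissible product input distributions then gives the claimed achievable region.

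The routine but slightly delicate part is the Fourier--Motzkin bookkeeping: one must check that, after eliminating $R_{01}$ and $R_{02}$, the combinations of the original inequalities collapse to precisely the stated min-expressions and that no additional (redundant or genuinely new) constraints survive. Concretely, eliminating $R_{01}$ pairs the constraint $R_{01} \le I(\rvMat{Y}_1;\rvMat{X}_0|\rvMat{X}_1)$-type bounds against $R_{01} \ge R_1 - I(\rvMat{Y}_1;\rvMat{X}_1,\rvMat{X}_0)$-type bounds, and similarly for $R_{02}$; the cross terms combining a User~$1$ upper bound with a User~$2$ lower bound (and vice versa) are what generate the mixed mutual-information sums such as $I(\rvMat{Y}_1;\rvMat{X}_1|\rvMat{X}_0) + I(\rvMat{Y}_2;\rvMat{X}_0|\rvMat{X}_2)$ appearing in \eqref{eq:add:R1_bound}. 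I do not expect any conceptual obstacle — the MAC achievability and Fourier--Motzkin elimination are both standard — so the main care is simply in verifying that the elimination yields the clean three-inequality description stated in Lemma~\ref{lemma:BC_rate_region} rather than a longer equivalent list.
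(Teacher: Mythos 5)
Your proposal is correct and follows essentially the same route as the paper: the paper likewise views \eqref{eq:2MACequivalent1}--\eqref{eq:2MACequivalent2} as two multiple-access channels sharing the common signal $\rvMat{X}_0$, invokes the standard MAC region, splits $R_0 = R_{01}+R_{02}$, substitutes $R_1^p = R_1 - R_{01}$, $R_2^p = R_2 - R_{02}$, and applies Fourier--Motzkin elimination to obtain \eqref{eq:add:R1_bound}--\eqref{eq:add:sum_rate_bound}. The only remaining task in both treatments is the routine elimination bookkeeping, which you have identified accurately.
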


By bounding the mutual information terms in Lemma~\ref{lemma:BC_rate_region}, we have the following theorem:

\begin{theorem}
\label{thm:BC_rate_split}
Under rate splitting, the following rate region can be achieved in the two-user correlated broadcast channel with partially overlapped eigenspaces:
\begin{align}
R_1 & \leq \min \{R_1^{\prime}, R_1^p+ R_0^{\prime \prime} \} , \label{eq:RateSplitting1}\\
R_2 & \leq \min \{R_2^{\prime}, R_2^p+ R_0^{\prime}\}, \label{eq:RateSplitting2}\\
R_1 + R_2 & \leq \min \{R_1^p + R_2^{\prime},  R_1^{\prime} + R_2^p\} \label{eq:RateSplitting3},
\end{align}
where

\begin{align}
  R_1^{\prime} = 
  \Big(1 - \frac{s_1 + s_0}{T}\Big) \EE\bigg[\log \det \bigg( \Id_{N_1} +\frac{1}{\trace \big(\big(\bar{\Rm}_1^{-1} + \Pm_{1\tau} \big)^{-1} \Pm_{1\delta}\big) + 1}  \bar{{\bf \Omega}}_1 \bar{\Rm}_1 \Pm_{1\delta} \bar{\Rm}_1^\H \bar{{\bf \Omega}}_1^{\H} \bigg)\bigg],
\end{align}

\begin{align}
  R_1^{p} = 
  \Big(1 - \frac{s_1 + s_0}{T}\Big) \EE \bigg[\log \det \bigg( \Id_{N_1} + \frac{\rho_{1\delta}}{s_1 \big[\trace \big((\bar{\Rm}_1^{-1} + \Pm_{1\tau} )^{-1} \Pm_{1\delta}\big) + 1\big]} \bar{{\bf \Omega}}_1 \bar{\Rm}_{11} \bar{\Rm}_{11}^\H\bar{{\bf \Omega}}_1^{\H}\bigg)\bigg],
\end{align}

\begin{align}
  R_1^{\prime \prime} = 
  \Big(1 - \frac{s_1 + s_0}{T}\Big) \EE \bigg[\log \det \bigg( \Id_{N_1} + \frac{\rho_{0\delta}}{s_0\big[\trace \big( (\bar{\Rm}_1^{-1} + \Pm_{1\tau} \big)^{-1} \Pm_{1\delta}\big) + 1\big]} \bar{{\bf \Omega}}_1 \bar{\Rm}_{10} \bar{\Rm}_{10}^\H \bar{{\bf \Omega}}_1^{\H}\bigg) \bigg],
\end{align}
where rows of $\bar{{\bf \Omega}}_1$  obey ${\Cc \Nc} \big({\bf 0}^T, \Pm_{1\tau}^{\frac12}(\Pm_{1\tau}^{\frac12}\bar{\Rm}_1\Pm_{1\tau}^{\frac12} + \Id_{s_1+s_0})^{-1}\Pm_{1\tau}^{\frac12}\big)$ and are independent of each other.

\begin{align}
   R_2^{\prime} = 
  & \frac{s_1 - s_2}{T} \EE \bigg[
\log \det \bigg(\Id_{N_2} + \frac{\rho_{2 \delta }}{\rho_{2 \delta}\trace \big(\bar{\Rm}_{22}^\H (\bar{\Rm}_2 +  \bar{\Rm}_2 \Pm_{2\tau} \bar{\Rm}_2)^{-1}\bar{\Rm}_{22} \big) +s_2 }\bar{{\bf \Omega}}_2 \bar{\Rm}_{22} \bar{\Rm}_{22}^\H \bar{{\bf \Omega}}_2^{\H} \bigg)\bigg]  \notag\\
& + \Big(1 - \frac{s_1 + s_0}{T}\Big) \EE \bigg[\log \det \bigg(\Id_{N_2} + \frac{1}{\trace \big((\bar{\Rm}_2^{-1} + \Pm_{2\tau})^{-1} \Pm_{2\delta}\big)+ 1} \bar{{\bf \Omega}}_2 \bar{\Rm}_{2} \Pm_{2\delta} \bar{\Rm}_{2}^\H \bar{{\bf \Omega}}_2^{\H}\bigg) \bigg],
\end{align}

\begin{align}
  R_2^{p} = 
   & \frac{s_1 - s_2}{T} \EE \bigg[ \log \det \bigg(\Id_{N_2} + \frac{\rho_{2 \delta }}{\rho_{2 \delta}\trace \big(\bar{\Rm}_{22}^\H (\bar{\Rm}_2 + \bar{\Rm}_2 \Pm_{2\tau} \bar{\Rm}_2)^{-1}\bar{\Rm}_{22}\big)+s_2} \bar{{\bf \Omega}}_2 \bar{\Rm}_{22} \bar{\Rm}_{2} \bar{\Rm}_{2}^\H\bar{\Rm}_{22}^\H \bar{{\bf \Omega}}_2^{\H}\bigg)\bigg] \notag\\
&+ \Big(1 - \frac{s_1 + s_0}{T}\Big) \EE \bigg[\log \det \bigg(\Id_{N_2} + \frac{\rho_{2 \delta }}{{s_2\big[ \trace \big( (\bar{\Rm}_2^{-1} + \Pm_{2\tau})^{-1}\Pm_{2\delta} \big) + 1\big]} } \bar{{\bf \Omega}}_2 \bar{\Rm}_{22} \bar{\Rm}_{2} \bar{\Rm}_{2}^\H \bar{\Rm}_{22}^\H \bar{{\bf \Omega}}_2^{\H} \bigg)\bigg],
\end{align}

\begin{align}
  R_0^{\prime \prime} = 
  \Big(1 - \frac{s_1 + s_0}{T}\Big) \EE \bigg[\log \det \bigg(\Id_{N_2} + \frac{\rho_{0 \delta }}{s_0\big[\trace \big( (\bar{\Rm}_2^{-1} +  \Pm_{2\tau} )^{-1} \Pm_{2\delta}\big) + 1\big]}  \bar{{\bf \Omega}}_2\bar{\Rm}_{20} \bar{\Rm}_{2} \bar{\Rm}_{2}^\H \bar{\Rm}_{20}^\H\bar{{\bf \Omega}}_2^{\H}  \bigg) \bigg],
\end{align}
where rows of $\bar{{\bf \Omega}}_2$  are i.i.d.\ according to ${\Cc \Nc} \big({\bf 0}^T,  \Pm_{2\tau}^{\frac12}(\Pm_{2\tau}^{\frac12}\bar{\Rm}_2\Pm_{2\tau}^{\frac12} + \Id_{s_2+s_0})^{-1}\Pm_{2\tau}^{\frac12}\big)$. Variables $s_0,s_1,s_2$ allocate degrees of freedom and satisfy $s_0 \leq r_0$, $s_1 \leq r_1 - r_0$ and $s_2 \leq r_2 - r_0$. The component powers $\rho_{\tau 0},\rho_{\tau 1},\rho_{\tau 2},\rho_{\delta}$ satisfy the power constraint 
\begin{align}
\rho_{0\tau}s_0 + \rho_{0 \delta } (T-s_1-s_0) + \sum_{i=1}^{2} \big[\rho_{i\tau} s_i + \rho_{i\delta} (T-s_i-s_0)\big] \le \rho T.
\label{eq:PowerAllocRateSplitting}
\end{align} 
The overall achievable rate region is the convex hull of \eqref{eq:RateSplitting1}, \eqref{eq:RateSplitting2} and \eqref{eq:RateSplitting3}  over all feasible values of $s_0, s_1,s_2$ and power allocations~\eqref{eq:PowerAllocRateSplitting}.
\end{theorem}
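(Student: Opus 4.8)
The plan is to instantiate the general rate region of Lemma~\ref{lemma:BC_rate_region} with an explicit Gaussian input built from pilots, i.i.d.\ Gaussian data symbols, and product superposition, and then to lower‑bound in closed form each of the six mutual information terms appearing in \eqref{eq:add:R1_bound}--\eqref{eq:add:sum_rate_bound}. Concretely, over the $T$ channel uses of a coherence block I would let $\rvMat{X}_1$ (resp.\ $\rvMat{X}_2$) consist of a training block occupying the first $s_1+s_0$ (resp.\ $s_2+s_0$) channel uses together with the pilot part of $\rvMat{X}_0$, followed by i.i.d.\ Gaussian data, with per‑stream pilot and data powers collected in the diagonal matrices $\Pm_{k\tau}$, $\Pm_{k\delta}$; the product‑superposition layering is the one used to reach $\mathcal{D}_5$ in the proof of Theorem~\ref{thm:2user_CDIR_partiallyOverlapping}, which lets User~$2$ start detecting data $s_1-s_2$ channel uses before User~$1$ does. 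The component powers $\rho_{0\tau},\rho_{1\tau},\rho_{2\tau},\rho_{0\delta},\rho_{1\delta},\rho_{2\delta}$ are then constrained by \eqref{eq:PowerAllocRateSplitting}.

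First I would apply the chain rule to split every mutual information term in \eqref{eq:add:R1_bound}--\eqref{eq:add:sum_rate_bound} into its training‑phase and data‑phase contributions, dropping the nonnegative training‑phase term from the achievable rate. In the training phase, each receiver forms the linear MMSE estimate of its equivalent channel by Lemma~\ref{lem:MMSE-estimator}, producing $\bar{{\bf \Omega}}_k$ with the stated row covariance and a zero‑mean estimation error $\tilde{{\bf \Omega}}_k$ uncorrelated with $\bar{{\bf \Omega}}_k$. In the data phase, conditioning on the estimate and folding the residual‑channel term together with the thermal noise into an effective additive noise that is uncorrelated with the Gaussian data, I would invoke the worst‑case uncorrelated‑noise bound of Lemma~\ref{lem:worst-case-noise} (in the form \eqref{eq:worst-case-noise-1}) to replace that effective noise by its covariance. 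This produces exactly the $\log\det$ expressions with normalizing constants such as $\trace\big((\bar{\Rm}_k^{-1}+\Pm_{k\tau})^{-1}\Pm_{k\delta}\big)+1$; the whole argument is the multi‑stream analogue of the chain of steps establishing Theorem~\ref{thm:single_user}, now carried out on each equivalent single‑user and MAC sub‑block of \eqref{eq:2MACequivalent1}--\eqref{eq:2MACequivalent2}.

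On MAC~$1$, the term $I(\rvMat{Y}_1;\rvMat{X}_1,\rvMat{X}_0)$ treats $(\rvMat{X}_0,\rvMat{X}_1)$ as a single Gaussian input through $\rvMat{G}_1\Sigmam_1^{1/2}\Phim_1$, giving $R_1'$ via $\bar{\Rm}_1$; $I(\rvMat{Y}_1;\rvMat{X}_1|\rvMat{X}_0)$ removes the known $\rvMat{X}_0$ and gives $R_1^p$ via $\bar{\Rm}_{11}$; and $I(\rvMat{Y}_1;\rvMat{X}_0|\rvMat{X}_1)$ yields the $\bar{\Rm}_{10}$ term ($R_1''$, i.e.\ $R_0'$). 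On MAC~$2$, the corresponding three terms each decompose into the $\frac{s_1-s_2}{T}$ product‑superposition sub‑block—where User~$2$ already detects data while its equivalent channel still carries an $\rvMat{X}_2$‑dependent factor, which is what produces the $\bar{\Rm}_{22}\bar{\Rm}_2\bar{\Rm}_2^\H\bar{\Rm}_{22}^\H$‑type structures—plus the main $(1-\frac{s_1+s_0}{T})$ data sub‑block, producing $R_2'$, $R_2^p$, and $R_0''$. Substituting the six closed‑form lower bounds into \eqref{eq:add:R1_bound}--\eqref{eq:add:sum_rate_bound} gives \eqref{eq:RateSplitting1}--\eqref{eq:RateSplitting3}; taking the union over all feasible $(s_0,s_1,s_2)$ and all power allocations obeying \eqref{eq:PowerAllocRateSplitting}, and then the convex hull by time sharing, finishes the proof.

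I expect the main obstacle to be the bookkeeping inside the product‑superposition sub‑block: there User~$2$'s effective channel over the $s_1-s_2$ "early data" slots is $\rvMat{G}_2\Sigmam_2^{1/2}\Phim_2$ right‑multiplied by a data‑dependent matrix, so the worst‑case‑noise step must be applied conditionally on that data and the resulting expectation re‑expressed through $\bar{\Rm}_{22}$ and $\bar{\Rm}_2$ and the data power $\Pm_{2\tau}$. A second point needing care is verifying that each equivalent channel $\rvMat{G}_k\Sigmam_k^{1/2}\Phim_{k\cdot}$, and hence its MMSE estimate $\bar{{\bf \Omega}}_k$, is left rotationally invariant (which holds here since $\rvMat{G}_k$ has i.i.d.\ Gaussian rows), so that the identity minimizing noise covariance in Lemma~\ref{lem:worst-case-noise} applies and the $\log\det$ bounds hold in the displayed form. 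The remaining ingredients—the Fourier--Motzkin elimination, already carried out in Lemma~\ref{lemma:BC_rate_region}, and the per‑component power split—are routine.
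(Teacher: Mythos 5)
Your overall skeleton is the paper's route: instantiate Lemma~\ref{lemma:BC_rate_region} with a pilot-plus-i.i.d.-Gaussian-data input, estimate the equivalent channels with the MMSE estimator of Lemma~\ref{lem:MMSE-estimator}, drop the training-phase mutual information by the chain rule, and lower-bound the six remaining terms with the worst-case-noise argument of Lemma~\ref{lem:worst-case-noise}, exactly as in Theorem~\ref{thm:single_user}; your MAC-1 treatment ($R_1'$ via $\bar{\Rm}_1$, $R_1^p$ via $\bar{\Rm}_{11}$ after cancelling the known $\rvMat{X}_0$, $R_1''$ via $\bar{\Rm}_{10}$) matches the paper term by term, and the final Fourier--Motzkin/convex-hull step is indeed already contained in Lemma~\ref{lemma:BC_rate_region}.

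There is, however, a genuine error in the scheme you propose to analyze. Theorem~\ref{thm:BC_rate_split} is achieved by \emph{pure} rate splitting, $\rvMat{X}=\Vm_0\rvMat{X}_0+\Vm_1\rvMat{X}_1+\Vm_2\rvMat{X}_2$ as in \eqref{eq:rate-splitting}, with time-staggered orthogonal pilots and additive Gaussian data; there is no product-superposition ($\mathcal{D}_5$-type) layer in this theorem. The $\frac{s_1-s_2}{T}$ contribution to $R_2'$ and $R_2^p$ arises only because, for $s_1\ge s_2$, User~2's private data begins $s_1-s_2$ channel uses before User~1's pilot phase ends, and over those channel uses User~2 receives $\rvMat{S}_{2a}$ through the already-estimated, \emph{data-independent} equivalent channel $\rvMat{G}_2\Sigmam_2^{1/2}\Phim_{22}$ (interference-free, since $\Vm_1$ is invisible to User~2). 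The $\bar{\Rm}_{22}\bar{\Rm}_{2}\bar{\Rm}_{2}^\H\bar{\Rm}_{22}^\H$-type structures are purely an artifact of jointly MMSE-estimating the correlated components $[{\bf \Omega}_{20}\ {\bf \Omega}_{22}]$ from the common training block, so that $\hat{{\bf \Omega}}_{22}$ equals $\bar{{\bf \Omega}}_2$ times a deterministic function of $\bar{\Rm}_2$, $\bar{\Rm}_{22}$, $\Pm_{2\tau}$ (see \eqref{eq:add:I(Y2;S2,S0)_a}); they do not come from a data-dependent channel. The step you anticipate---applying the worst-case-noise bound ``conditionally on that data'' because the equivalent channel ``still carries an $\rvMat{X}_2$-dependent factor''---belongs to the product-superposition and hybrid schemes of Theorems~\ref{thm:BC_rate_product} and~\ref{thm:BC_rate_hybrid}, where the equivalent channels $\hat{\rvMat{G}}_{2e}$, $\hat{\rvMat{G}}_{1e}$ are indeed non-Gaussian products; if you executed it here you would be analyzing a different input distribution and would not recover the stated $R_2'$, $R_2^p$, $R_0''$. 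Remove the multiplicative layer, keep the staggered rate-splitting input of Appendix~\ref{proof:rate_part}, and the remainder of your outline goes through as written.
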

\begin{proof}
See Appendix~\ref{proof:rate_part}. 
\end{proof}

\subsection{Product Superposition} \label{sec:2user_rate_prod}

\begin{theorem}
\label{thm:BC_rate_product}
With product superposition, the following rate pair $(R_1,R_2)$ can be achieved:
\begin{align}
R_1 & = \frac{s_2}{T} \EE \bigg[ \log \det  \bigg(\Id_{N_1} + \frac{\nu_{1\delta} \rho_{2\tau}}{s_0 + \nu_{1\delta} \rho_{2\tau}\trace \big( (\breve{\Rm}_{10}^{-1} + \nu_{1\tau}\rho_{2\tau} \Id_{s_0} )^{-1}\big)} \eqhatonezero \eqhatonezero^\H \bigg)\bigg],
\end{align}
where rows of $\eqhatonezero$  are i.i.d.\ according to ${\Cc \Nc} \big({\bf 0}^T,   \nu_{1\tau}\rho_{2\tau} \breve{\Rm}_{10} (\nu_{1\tau}\rho_{2\tau} \breve{\Rm}_{10} + \Id_{s_0})^{-1} \breve{\Rm}_{10} \big)$;

\begin{align}
R_2 & = \Big(1-\frac{s_2+s_0}{T}\Big) \EE \bigg[ \log\det \bigg( {\Id_{N_2} + \frac{\rho_{2\delta}}{s_2 + s_0 + \rho_{2\delta}\trace \big( (\Rm_{2e}^{-1} + \rho_{2\tau} \Id_{s_2 + s_0} )^{-1}\big)} \hat{\rvMat{G}}_{2e} \hat{\rvMat{G}}_{2e}^\H } \bigg) \bigg],
\end{align}
where rows of $\hat{\rvMat{G}}_{2e}$  are i.i.d.,\ 
{zero mean, with covariance $\rho_{2\tau} \Rm_{2e} \big(\rho_{2\tau}\Rm_{2e} + \Id_{s_2+s_0}\big)^{-1} \Rm_{2e}$, where}
\begin{align}
\Rm_{2e} \defeq \begin{bmatrix}
\nu_{1\tau} \breve{\Rm}_{20} & \sqrt{\nu_{1\tau} \nu_{1a}} \Phim_{20}^\H \Sigmam_2 \Phim_{22} \\
\sqrt{\nu_{1\tau} \nu_{1a}} \Phim_{22}^\H \Sigmam_2 \Phim_{20} & \frac{\nu_{1\delta}}{s_0} \trace[\breve{\Rm}_{20}] \Id_{s_2} + \nu_{1a} \breve{\Rm}_{22}
\end{bmatrix};
\end{align}
 $s_0,s_1,s_2$ allocate degrees of freedom to signal components, and satisfy $s_0 \leq r_0$ and $s_2 \leq r_2 - r_0$ with the power constraint
\begin{align}
\big(s_0 \nu_{1\tau} + s_2(\nu_{1\delta}+\nu_{1a})\big)\Big(\rho_{2\tau} + \frac{T-s_2-s_0}{s_2+s_0}\rho_{2\delta} \Big) \le \rho T.
\label{eq:PowerAllocProdSuperposition}
\end{align}
By swapping the users' role, another achievable rate pair is obtained. The overall achievable rate region is the convex hull of these pairs over all feasible values of $s_0, s_1,s_2$ and feasible power allocations~\eqref{eq:PowerAllocProdSuperposition}.
\end{theorem}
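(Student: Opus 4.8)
The plan is to promote the degrees-of-freedom-optimal product superposition scheme underlying the pair $\mathcal{D}_1$ of Theorem~\ref{thm:2user_CDIR_partiallyOverlapping} to a finite-SNR scheme by inserting pilots and a power split, and then to lower-bound each user's mutual information by combining the MMSE estimator of Lemma~\ref{lem:MMSE-estimator} with the worst-case-noise bound of Lemma~\ref{lem:worst-case-noise}, exactly as in the single-user proof of Theorem~\ref{thm:single_user}. Concretely, fix feasible integers $s_0 \le r_0$ and $s_2 \le r_2 - r_0$; take the precoders $\tV_0 \in \CC^{M\times s_0}$ and $\tV_2 \in \CC^{M\times s_2}$ of Section~\ref{sec:intro_ratesplit}, so that $\Span(\tV_0) \subset \Span(\tU_1)\cap\Span(\tU_2)$ and $\Span(\tV_2) \subset \Span(\tU_2)\cap\Span(\tU_1)^{\perp}$, whence $\tH_1\tV_2 = \mathbf{0}$; and let the transmitter send, over one coherence block,
\begin{equation}
\tX = [\tV_0\ \tV_2]\,\rvMat{X}^{\mathrm{out}}\,\rvMat{X}^{\mathrm{in}},
\end{equation}
where $\rvMat{X}^{\mathrm{in}} = [\sqrt{\rho_{2\tau}}\,\Id_{s_0+s_2}\ \ \sqrt{\rho_{2\delta}}\,\tS_2]$ carries an orthogonal pilot block over the first $s_0+s_2$ channel uses together with i.i.d.\ Gaussian data $\tS_2$ of User~$2$, and $\rvMat{X}^{\mathrm{out}}$ is the block-upper-triangular matrix with diagonal blocks $\sqrt{\nu_{1\tau}}\,\Id_{s_0}$ and $\sqrt{\nu_{1a}}\,\Id_{s_2}$ and upper-right block $\sqrt{\nu_{1\delta}}\,\tS_1$, with $\tS_1 \in \CC^{s_0\times s_2}$ the i.i.d.\ Gaussian data of User~$1$ (per-entry power $1/s_0$). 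Computing $\E[\|\tX\|_F^2]$ (using that $[\tV_0\ \tV_2]$ has orthonormal columns, so it reduces to $\E[\trace(\rvMat{X}^{\mathrm{in}\,\H}\rvMat{X}^{\mathrm{out}\,\H}\rvMat{X}^{\mathrm{out}}\rvMat{X}^{\mathrm{in}})]$) and equating it to $\rho T$ produces exactly the constraint \eqref{eq:PowerAllocProdSuperposition}.

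For User~$1$: since $\tH_1\tV_2 = \mathbf{0}$, only the $\tV_0$-branch survives, and multiplying out $\rvMat{X}^{\mathrm{out}}\rvMat{X}^{\mathrm{in}}$ shows that over the first $s_0$ channel uses User~$1$ sees the equivalent channel $\rvMat{G}_1 \Sigmam_1^{\frac12}\Phim_{10}$ (row covariance $\breve{\Rm}_{10}$) through an orthogonal pilot of effective power $\nu_{1\tau}\rho_{2\tau}$, and over the next $s_2$ channel uses it sees this same channel carrying the data $\tS_1$ at effective power $\nu_{1\delta}\rho_{2\tau}$. Lemma~\ref{lem:MMSE-estimator} with $\Rm=\breve{\Rm}_{10}$ then yields the estimate $\eqhatonezero$ with the stated covariance and a zero-mean error, uncorrelated with the estimate, whose row covariance has trace $\trace\big((\breve{\Rm}_{10}^{-1}+\nu_{1\tau}\rho_{2\tau}\Id_{s_0})^{-1}\big)$; folding the error-times-$\tS_1$ term into the additive noise---which it is uncorrelated with---and applying Lemma~\ref{lem:worst-case-noise} over the $s_2$ data channel uses gives the claimed expression for $R_1$ with prefactor $s_2/T$.

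For User~$2$: the received signal is $\tY_2 = \tH_2[\tV_0\ \tV_2]\rvMat{X}^{\mathrm{out}}\rvMat{X}^{\mathrm{in}} + \tW_2$, and since User~$2$ does not decode $\tS_1$ it must treat the product $\rvMat{G}_{2e} \defeq \rvMat{G}_2\Sigmam_2^{\frac12}[\Phim_{20}\ \Phim_{22}]\rvMat{X}^{\mathrm{out}}$ as an equivalent $N_2\times(s_0+s_2)$ channel, then estimate it from the identity (pilot) block of $\rvMat{X}^{\mathrm{in}}$ and decode $\tS_2$. Writing $[\Phim_{20}\ \Phim_{22}]\rvMat{X}^{\mathrm{out}} = [\sqrt{\nu_{1\tau}}\Phim_{20}\ \ \ \sqrt{\nu_{1\delta}}\Phim_{20}\tS_1 + \sqrt{\nu_{1a}}\Phim_{22}]$ and averaging the quadratic forms over the Gaussian $\tS_1$, the $(1,1)$ block of the row covariance is $\nu_{1\tau}\breve{\Rm}_{20}$, the $(2,2)$ block is $\frac{\nu_{1\delta}}{s_0}\trace[\breve{\Rm}_{20}]\Id_{s_2} + \nu_{1a}\breve{\Rm}_{22}$ (the first term being $\nu_{1\delta}\,\E[\tS_1^\H\breve{\Rm}_{20}\tS_1]$), and the cross block is $\sqrt{\nu_{1\tau}\nu_{1a}}\,\Phim_{20}^\H\Sigmam_2\Phim_{22}$ (the $\tS_1$ contribution to it vanishing in mean)---i.e., the row covariance of $\rvMat{G}_{2e}$ is $\Rm_{2e}$. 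Applying Lemma~\ref{lem:MMSE-estimator} with $\Rm=\Rm_{2e}$ and an orthogonal pilot of power $\rho_{2\tau}$ gives $\hat{\rvMat{G}}_{2e}$ with the stated covariance, and bounding the residual mutual information over the remaining $T-s_0-s_2$ channel uses with Lemma~\ref{lem:worst-case-noise} gives the claimed $R_2$ with prefactor $1-\frac{s_2+s_0}{T}$.

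The main obstacle is the covariance bookkeeping for User~$2$: one has to justify rigorously the replacement of the random Gram contributions of $\tS_1$ by their means (so that $\rvMat{G}_{2e}$ genuinely has separable covariance $\Rm_{2e}$), and then verify that the composite residual---channel-estimation error plus the fluctuation of the $\tS_1$-dependent terms about their means---is zero-mean, uncorrelated with $\tS_2$, and leaves enough rotational invariance in the equivalent channel for Lemma~\ref{lem:worst-case-noise} (including its conclusion that the worst-case additive noise is white) to be invoked; the parallel, easier step for User~$1$ and the energy accounting leading to \eqref{eq:PowerAllocProdSuperposition} are routine. With both rates established, swapping the roles of the two users gives the mirror rate pair, and the convex hull over all feasible $(s_0,s_2)$ and power allocations satisfying \eqref{eq:PowerAllocProdSuperposition} gives the stated region; the remaining determinant identities would be deferred to the appendix.
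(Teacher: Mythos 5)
Your proposal is correct and follows essentially the same route as the paper's own proof in Appendix~\ref{proof:rate_product}: the identical product-superposition input $[\Vm_0\ \Vm_2]\rvMat{X}_1\rvMat{X}_2$ with embedded pilots, the same equivalent channels ($\rvMat{G}_1\Sigmam_1^{1/2}\Phim_{10}$ for User~1 and $\rvMat{G}_{2e}=\rvMat{G}_2\Sigmam_2^{1/2}\Phim_2\rvMat{X}_1$ with row covariance $\Rm_{2e}$ for User~2), and the same combination of the linear MMSE estimator (Lemma~\ref{lem:MMSE-estimator}) with the worst-case-noise bound (Lemma~\ref{lem:worst-case-noise}) yielding the prefactors $s_2/T$ and $1-\frac{s_2+s_0}{T}$ and the power constraint \eqref{eq:PowerAllocProdSuperposition}. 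The covariance bookkeeping you flag for the non-Gaussian $\rvMat{G}_{2e}$ is handled in the paper at the same level of detail you propose, so no substantive gap remains.
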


\begin{remark}
The distribution of $\hat{\rvMat{G}}_{2e}$ is non-Gaussian. As clarified in~\eqref{eq:prod:MMSE_estimator}, it consists of a Gaussian matrix plus the product of two other Gaussian matrices.
\end{remark}
\color{black}

\begin{proof}
See Appendix~\ref{proof:rate_product}. 
\end{proof}

\subsection{Hybrid Superposition} \label{sec:2user_rate_hybrid}

Hybrid superposition in this paper refers to a composite scheme that involves both rate splitting and product superposition. 

\begin{theorem}
\label{thm:BC_rate_hybrid}
With hybrid superposition, the following rate pair $(R_1,R_2)$ can be achieved:
\begin{align}
R_1 = \Big(1-\frac{s_1+s_0}{T}\Big) \EE \bigg[\log\det \bigg(\Id_{N_1} + \frac{\rho_{1\delta}}{s_1 + s_0 + \rho_{1\delta}\trace \big((\Rm_{1e}^{-1} + \rho_{1\tau} \Id_{s_1+s_0} )^{-1}\big)} \hat{\rvMat{G}}_{1e} \hat{\rvMat{G}}_{1e}^\H \bigg)\bigg],
\end{align}
where rows of $\hat{\rvMat{G}}_{1e}$  are i.i.d., 
{zero mean, with covariance $ \rho_{1\tau} \Rm_{1e} \big(\rho_{1\tau}\Rm_{1e} + \Id_{s_1+s_0}\big)^{-1} \Rm_{1e}$}, where
\begin{align}
\Rm_{1e} &\defeq \begin{bmatrix}
\nu_{2\tau} \breve{\Rm}_{10} & \sqrt{\nu_{2\tau}\nu_{2a}} \Phim_{10}^\H \Sigmam_1 \Phim_{11}\\
\sqrt{\nu_{2\tau}\nu_{2a}} \Phim_{11}^\H \Sigmam_1 \Phim_{10} & \Bigg[\begin{matrix}
\mathbf{0} & \mathbf{0} \\
\mathbf{0} & \frac{\nu_{2\delta}}{s_0} \trace[\breve{\Rm}_{10}] \Id_{s_1-s_2}
\end{matrix}\Bigg]
+ \nu_{2a} \breve{\Rm}_{22}
\end{bmatrix},
\end{align}
and


\begin{align}
R_2 & = \frac{s_1 - s_2}{T} \EE \bigg[ \log \det \bigg(\Id_{N_2} + \frac{1}{\trace[\left(\bar{\Rm}_2^{-1} + \Pm_{2\tau} \right)^{-1} \Pm_{2\delta a}] + 1} \bar{\bf \Omega}_2 \Pm_{2\delta a} \bar{\bf \Omega}_2^\H\bigg) \bigg] \notag\\
& + \bigg(1 - \frac{s_1 + s_0}{T}\bigg) \EE \bigg[ \log \det \bigg(\Id_{N_2} + \frac{1}{\trace[\big(\bar{\Rm}_2^{-1} + \Pm_{2\tau} \big)^{-1} \Pm_{2\delta b}] + 1} \bar{\bf \Omega}_2 \Pm_{2\delta b} \bar{\bf \Omega}_2^\H \bigg) \bigg] \notag\\
& - \bigg(1 - \frac{s_1 + s_0}{T}\bigg) \EE \bigg[ \log \det \bigg(\Id_{N_2} + \rho_{1\delta}\Big(\nu_{2\tau} + \nu_{2\delta}\frac{s_1-s_2}{s_0}\Big) \bar{\bf \Omega}_{20} \bar{\bf \Omega}_{20}^\H\bigg) \bigg],
\end{align}
where rows of $\bar{{\bf \Omega}}_2$ are i.i.d.\ according to ${\Cc \Nc} \big({\bf 0}^T,   \bar{\Rm}_2^\H \big(\bar{\Rm}_2 + \Pm_{2\tau}^{-1}\big)^{-1} \bar{\Rm}_2 \big)$ and rows of $\bar{\bf \Omega}_{20}$ are i.i.d.\ according to ${\Cc \Nc} \big({\bf 0}^T, \breve{\Rm}_{20} \big)$, and they are independent of each other. Variables $s_0,s_1,s_2$ allocate degrees of freedom to signal components, and satisfy $s_0 \leq r_0$, $s_1 \leq r_1 - r_0$ and $s_2 \leq r_2 - r_0$ with the power constraint
\begin{align}
\big(s_0\nu_{2\tau} + s_1\nu_{2a} + (s_1-s_2) \nu_{2\delta} \big) \Big(\rho_{1\tau} + \frac{T-s_1-s_0}{s_1+s_0} \rho_{1\delta}\Big) + s_2\rho_{2\tau} + (T-s_2-s_0)\rho_{2\delta}\le \rho T.
\end{align}
The overall achievable rate region is the convex hull of these pairs over power allocations satisfying the power constraint and all feasible values of $s_0, s_1,s_2$.
\end{theorem}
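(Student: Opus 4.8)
The plan is to turn the degrees-of-freedom argument that established the point $\mathcal{D}_5$ of Theorem~\ref{thm:2user_CDIR_partiallyOverlapping} (for $s_1\ge s_2$; the case $s_1\le s_2$ follows by relabeling the users) into a finite-SNR rate computation, in the same spirit as the proofs of Theorems~\ref{thm:single_user}, \ref{thm:BC_rate_split} and \ref{thm:BC_rate_product}. Concretely, I would start from the transmitted block $\tX = [\tV_0\ \tV_1]\,\tX'_2\,\tX_1 + \tV_2\,\tX_2$, with $\tX_1=[\tI_{s_1+s_0}\ \tS_1]$, $\tX_2=[\ZeroMat\ \tI_{s_2}\ \tS_2]$, and $\tX'_2$ the block-triangular ``identity plus Gaussian block'' matrix carrying pilots together with the product-superposition symbols $\tS_2'$ for User~$2$; then split the power as $\nu_{2\tau}$ for the pilot part of $\tX'_2$, $\nu_{2a}$ for $\tS_2'$, $\nu_{2\delta}$ for the User-$1$ symbols in $\tX_1$, and $\rho_{2\tau},\rho_{2\delta}$ for the pilot and data parts of $\tX_2$. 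Adding up the per-symbol energies over the coherence block gives the stated power constraint, and since $\tH_1\tV_2=\mathbf 0$ and $\tH_2\tV_1=\mathbf 0$ (Section~\ref{sec:intro_ratesplit}), the two received signals decouple into the two pieces analyzed below.

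For User~$1$, the received block reduces to a point-to-point non-coherent MIMO channel with pilots, $\tY_1 = \tilde{\tH}_1[\tI_{s_1+s_0}\ \tS_1]+\tW_1$, with equivalent channel $\tilde{\tH}_1 = \tH_1[\tV_0\ \tV_1]\tX'_2$. I would treat the first $s_1+s_0$ columns as pilots, apply the MMSE estimator of Lemma~\ref{lem:MMSE-estimator} to obtain $\hat{\rvMat{G}}_{1e}$ and the error covariance, absorb the estimation error into the noise, and then invoke Lemma~\ref{lem:worst-case-noise} together with the single-user bound behind Theorem~\ref{thm:single_user} to get the $\log\det$ lower bound, which yields $R_1$ once the row covariance $\Rm_{1e}$ of the estimate is identified. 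The genuinely new feature is that $\tX'_2$ contains the random matrix $\tS_2'$, so $\tilde{\tH}_1$ is not Gaussian: it equals the correlated Gaussian matrix $\tH_1[\tV_0\ \tV_1]$ times $\tX'_2$, which is itself ``identity plus Gaussian block,'' so $\tilde{\tH}_1$ is a Gaussian matrix plus the product of two Gaussian matrices. Its row second-order statistics must therefore be computed by averaging over $\tS_2'$ and over the data columns of $\tX_1$, and this averaging is exactly what produces the block form of $\Rm_{1e}$, including the cross term $\sqrt{\nu_{2\tau}\nu_{2a}}\,\Phim_{10}^\H\Sigmam_1\Phim_{11}$ and the extra $\tfrac{\nu_{2\delta}}{s_0}\trace[\breve{\Rm}_{10}]\Id_{s_1-s_2}$ block.

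For User~$2$, using $\tH_2\tV_1=\mathbf 0$ the received block has a two-phase data structure: in the first $s_1-s_2$ columns it carries $\tS_2'$ along the common direction $\tV_0$, and in the remaining $T-s_1-s_0$ columns it carries the product term $\mathbf A=[\tI_{s_0}\ \ZeroMat\ \tS_2']\tS_1$ along $\tV_0$, while $\tS_2$ occupies the non-common direction $\tV_2$ throughout. I would have User~$2$ estimate $\tH_2[\tV_0\ \tV_2]$ from the $s_0+s_2$ pilot columns (Lemma~\ref{lem:MMSE-estimator}), decode $\tS_2'$ in phase~1 — contributing the $\frac{s_1-s_2}{T}$ term of $R_2$ by the same worst-case-noise argument — and, in phase~2, jointly treat $(\mathbf A,\tS_2)$ as the two ``users'' of an effective MAC in the style of Lemma~\ref{lemma:BC_rate_region}: the sum-rate-type constraint gives the second ``$+$'' term of $R_2$, and subtracting off the information carried by the companion signal $\mathbf A$ alone — which, once $\tS_2'$ has been decoded, behaves like a $\breve{\Rm}_{20}$-correlated Gaussian signal in the $\tV_0$-direction with power $\rho_{1\delta}\big(\nu_{2\tau}+\nu_{2\delta}\tfrac{s_1-s_2}{s_0}\big)$ — gives the negative term. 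Collecting the three contributions yields $R_2$, and the convex hull over all feasible $(s_0,s_1,s_2)$ and power allocations completes the proof.

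The \textbf{main obstacle} is the User-$1$ step. One must verify that the row distribution of the non-Gaussian equivalent channel $\tilde{\tH}_1$ is still left-rotationally invariant, so that Lemma~\ref{lem:worst-case-noise} still provides $\Rm_{\rvVec{w},\mathrm{opt}}=\Id$, and one must carefully track how the pilot columns of $\tX'_2$ mix with its data column $\tS_2'$ and with $\tS_1$ when forming $\Rm_{1e}$; the expectation over a product of two Gaussian matrices is what makes this bookkeeping delicate, much as in Theorem~\ref{thm:BC_rate_product} but now with an extra rate-splitting layer superimposed. A secondary difficulty is writing the phase-$2$ rate of User~$2$ as a difference of two $\log\det$ expectations instead of a single one; this again needs the ``decode the companion signal and subtract its rate'' device, here applied on top of the two-phase pilot/data split.
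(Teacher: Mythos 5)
Your proposal follows essentially the same route as the paper's proof: the same signal construction and power accounting, MMSE estimation of the non-Gaussian equivalent channel $\rvMat{G}_{1e}=\rvMat{G}_1\Sigmam_1^{1/2}\Phim_1\rvMat{X}_2'$ followed by the worst-case-noise bound for $R_1$ (with $\Rm_{1e}$ obtained by averaging over $\rvMat{S}_2'$), and for $R_2$ the add-and-subtract of the companion term $\rvMat{A}$ (the paper implements your "decode and subtract" device via the chain rule $I(\rvMat{Y}_{2\delta};\rvMat{S}_2',\rvMat{S}_2\cond\rvMat{Y}_{2\tau})=I(\rvMat{Y}_{2\delta};\rvMat{S}_2',\rvMat{S}_2,\rvMat{A}\cond\rvMat{Y}_{2\tau})-I(\rvMat{Y}_{2\delta};\rvMat{A}\cond\rvMat{Y}_{2\tau},\rvMat{S}_2',\rvMat{S}_2)$, bounding the negative term with a genie-aided perfect-channel argument). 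The only cosmetic difference is that the paper's first $\tfrac{s_1-s_2}{T}$ term jointly covers $\rvMat{S}_2'$ and the first $s_1-s_2$ columns of $\rvMat{S}_2$ (power matrix $\Pm_{2\delta a}$) rather than $\rvMat{S}_2'$ alone, which is a bookkeeping detail rather than a gap.
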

\begin{remark}
The distribution of $\hat{\rvMat{G}}_{1e}$ is non-Gaussian. As clarified in~\eqref{eq:hybrid:MMSE_estimate}, it consists of a Gaussian matrix plus the product of two other Gaussian matrices.
\end{remark}
\color{black}
\begin{proof}
See the Appendix~\ref{proof:rate_hybrid}. 
\end{proof}

\begin{remark}
Hybrid superposition utilizes both rate splitting and product superposition but is not a generalization, in the sense that the results of pure rate splitting and product superposition cannot be recovered from the hybrid scheme. At very  high SNR under partially overlapped eigenspaces, hybrid superposition can improve over rate splitting and product superposition, but in other channel conditions, the hybrid superposition may in fact perform worse than the individual schemes. 
\end{remark}

\subsection{Numerical Results}

Simulations in this  section assume Rayleigh fading, i.e., $\rvMat{G}_k$ has independent $\Cc\Nc(0,1)$ entries. The correlation matrix $\Rm_k = \Um_k \Sigmam_k \Um_k^\H$, $k\in \{1,2\}$, is generated by assuming the same magnitude along all eigendirections, i.e., $\Sigmam_k = \tI$. Furthermore, we assume the eigendirections of transmit correlation matrices of the two users are either the same or orthogonal to each other. The simplicity of this configuration makes it suitable for a representative example. Assuming a constant magnitude along different eigendirections allows us to concentrate on gains that are {\em purely} due to correlation diversity rather than, e.g., water-filling.

When the eigenspaces of the two users are partially overlapped, in Fig.~\ref{fig:rate_regions}, we plot the rate regions  achieved with these schemes in a setting of $T = 24$, $M = 16$, $N_1 = N_2 = 12$, $r_1 = 16$, $r_0 = 6$, $r_2 = 10$ and $T = 32$, $M = N_1 = N_2 = 16$, $r_1 = 15$, $r_0 = 7$, $r_2 = 8$, at power constraint $\rho = 30$~dB. We observe that the performance of rate splitting and product superposition depends strongly on the rank of the eigenspaces. When the rank of the two individual eigenspaces is close to each other, rate splitting will obtain a better rate region since the gains achieved by product superposition come from the difference between the rank of the two eigenspaces. 
In the channel configuration in Fig.~\ref{fig:rate_regions}, the hybrid superposition scheme produced rates that are inferior to both product superposition and to rate splitting, therefore they are not displayed. Hybrid superposition becomes competitive at very high SNR, while the results of this section focus on moderate SNR.


\begin{figure}
\centering
\subfigure[$r_1 = 16$, $r_2 = 10$, $r_0 = 6$]{\includegraphics[width=.48\textwidth]{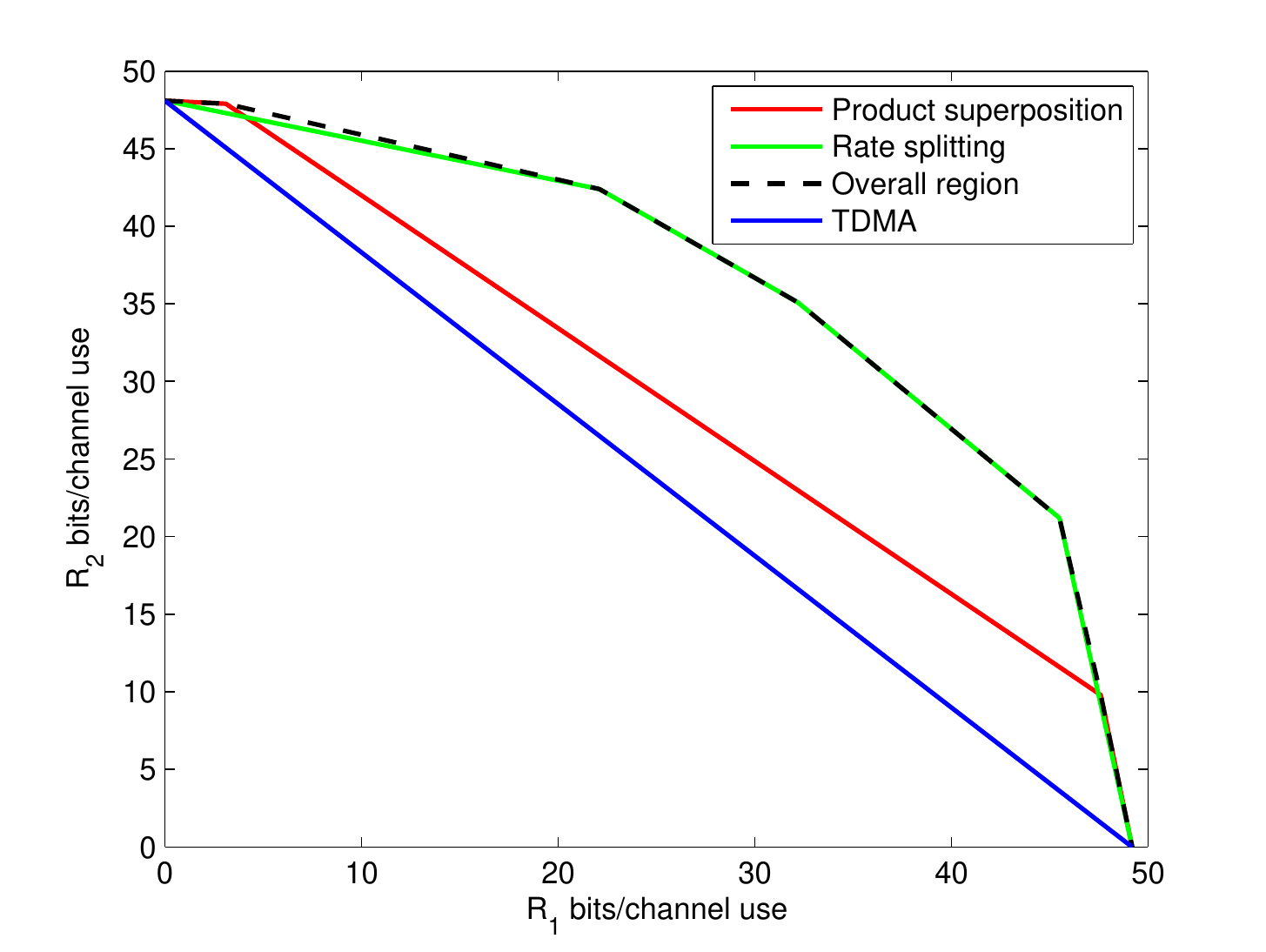}}
\subfigure[$r_1 = 15$, $r_2 = 8$, $r_0 = 7$]{\includegraphics[width=.48\textwidth]{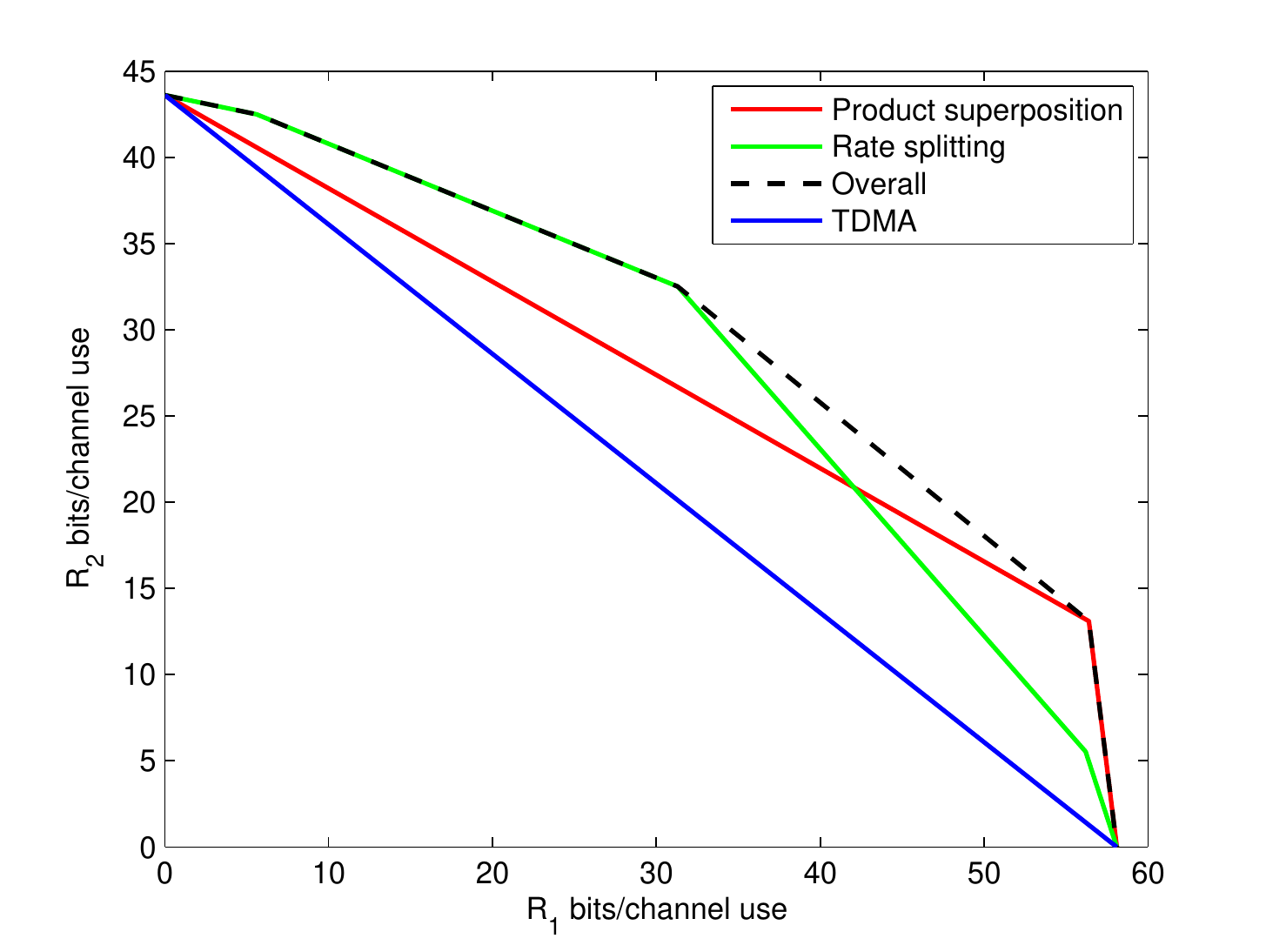}}
\caption{The rate regions of various schemes for the spatially correlated broadcast channel $\rho = 30$~dB.}
\label{fig:rate_regions}
\end{figure}

When one of the users' eigenspace is strictly a subspace of the other, rate splitting performs no better than TDMA. We plot the rate region for this scenario achieved via product superposition in a setting of $T = 20$, $M = N_1 = N_2 = 10$, $r_1 = 10,r_2 = 5,r_0 = 5$ and at power constraint $\rho = 30$~dB.
\begin{figure}
\centering
\subfigure[$r_1 = 5$, $r_2 = 0$, $r_0 = 5$]{\includegraphics[width=.48\textwidth]{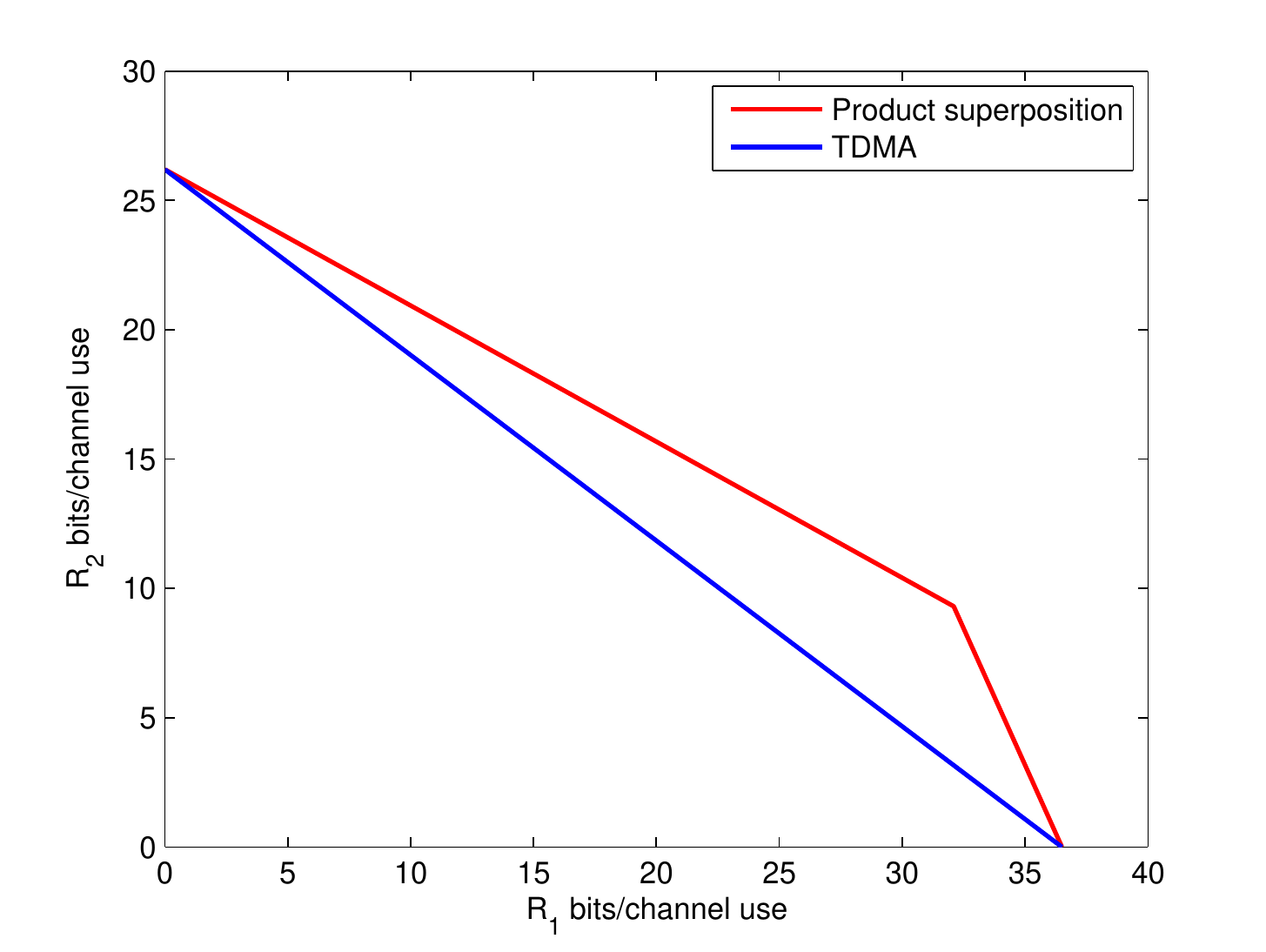}}
\caption{The rate region of $\rho = 30dB, M = N_1 = N_2 = 10, T = 20, r_1 = 10, r_0 = 5, r_2 = 5$}
\end{figure}

\section{$K$-user Broadcast Channel: DoF Analysis} \label{sec:Kuser_DoF}
To extend the study to the $K$-user scenario, some further assumptions on the correlation model are made as follows. Recall that the rows of $\tH_k$ belong to the eigenspace $\Span[\tU_k]$ of $\tR_k$. 

Denote the sum of all channel eigenspaces as follows\footnote{The sum of two subspaces is defined as $\Span(\tU)+\Span(\tV) \triangleq\Span(\tU\cup \tV)$. }
\begin{equation}
\Vc = \sum_{k \in [K]} \Span[\tU_k].
\end{equation}
Define $\mathcal{V}_{\mathcal{J}} = \bigcap_{k \in \mathcal{J}} \Span[\tU_k]$, for $|\mathcal{J}| > 1$, and $\mathcal{V}_{\{k\}} = \{\tv | \tv \in \Span[\tU_k], \tv \perp \mathcal{V}_{\mathcal{J}}, \forall \mathcal{J}, k \in \mathcal{J}, |\mathcal{J} | > 1\}$, $k \in [K]$.
Define $r_{\Jc} = \dim{(\Vc_{\Jc})}$. Obviously,
$
\sum_{\mathcal{J} \subset [K]} r_{\mathcal{J}} = \dim{(\mathcal{V})} \leq M$ and $\sum_{\mathcal{J} \subset [K]:\ k \in \mathcal{J}} r_{\mathcal{J}} = \dim{(\Span[\tU_k])} = r_k.
$ Therefore, we can generate $2^K - 1$ subspaces $\mathcal{V}_{\mathcal{J}}$ of $r_{\mathcal{J}}$ dimensional whose $r_{\mathcal{J}}$ basis vectors span the channel of every user in a non-empty group $\mathcal{J} \subset [K]$ and are linear independent to all vectors in $\Span[\tU_k]$ for $k \in \{[K] \setminus \mathcal{J}\}$.
\color{black}An example of the correlation structure for the case of three-user broadcast channel is shown in Fig.~\ref{fig:3-user-structure}.
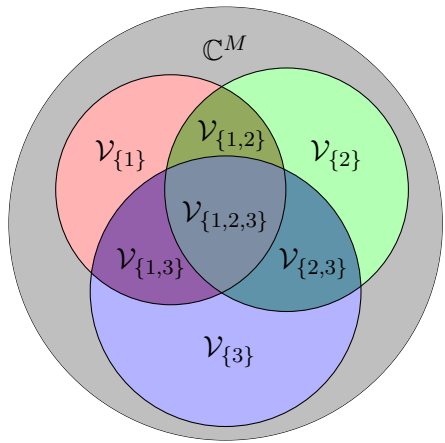
\begin{figure} 
\def \setA{ (-.21,0.2) circle (1.7cm) }
\def \setB{ (1.5,0.2) circle (1.8cm) }
\def \setC{ (.6,-1.3) circle (2cm) }
\def \myellipse { (.6, -.3) ellipse  (3.2cm and 3.2cm) }
\vspace{-.5cm}
\begin{center}
\begin{tikzpicture}[scale = .9]
\draw \myellipse;

\begin{scope}[even odd rule]
\clip \setA \setB \setC (-2.7,-3.5) rectangle (4,4);
\fill[lightgray] \myellipse;
\end{scope}
\draw (.6,2.3) node {$\CC^M$};

\begin{scope}[fill opacity=0.3]
\fill[red] \setA;
\fill[green] \setB;
\fill[blue] \setC;
\draw \setA;
\draw \setB;
\draw \setC;
\end{scope}
\draw (-.4,0.7) node[left] {$\Vc_{\{1\}}$};
\draw (1.7,.7) 	node[right] {$\Vc_{\{2\}}$};
\draw (1.2,-2.2) 	node[left] {$\Vc_{\{3\}}$};
\draw (.7,1) 	node {$\Vc_{\{1,2\}}$};
\draw (-.5,-0.9) node {$\Vc_{\{1,3\}}$};
\draw (1.9,-0.9) 	node {$\Vc_{\{2,3\}}$};
\draw (.6,-0.2) node {$\Vc_{\{1,2,3\}}$};
\end{tikzpicture}
\end{center}
\vspace{-.2cm}
\caption{The channel eigenspace overlapping structure of the three-user broadcast channel.} \label{fig:3-user-structure}
\end{figure}

In this way, the signal transmitted in the subspace $\Vc_\Jc$ can be seen by every user in $\Jc$ and is vague to all other users. On the other hand, the signals transmitted in $\mathcal{V_{\mathcal{J}}}$ and $\mathcal{V_{\mathcal{K}}}$ interfere each other at every user in $\mathcal{J} \cap \mathcal{K}$. To characterize the interfering relation between signals transmitted in different subspaces, we introduce the concept of {\em interference graph} as follows:
\begin{definition}
For $k \in [K]$, the {\em interference graph} of order $k$, denoted by $G(K,k)$, is an undirected graph for which: 
\begin{itemize}
\item the set of vertices is the set of unordered subsets of cardinality $k$ of $[K]$, i.e., {$\mathcal{J} \subset [K]:|\mathcal{J}|=k$}, hence a vertex is also denoted by a subset $\mathcal{J}$;
\item there exists an edge between two vertices $\mathcal{J}$ and $\mathcal{K}$  if and only if $\mathcal{J} \cap \mathcal{K} \neq \emptyset$.
\end{itemize}
\end{definition}
The interference graph $G(K,k)$ has $\binom{K}{k}$ vertices. It is a regular graph~\cite[Sec.~1.2]{Diestel2017graph_theory} of degree $\binom{K}{k} - \binom{K-k}{k} - 1$, with the convention $\binom{m}{n} = 0$ if $m < n$. 
Let $\mathcal{\chi}(G(K, k))$ denote the chromatic number of $G(K, k)$,
i.e., the minimum number of colors to color all the vertices such that adjacent vertices have different colors. 
We have the following property.
\begin{property}[The chromatic number of the interference graph] \label{prop:chromatic_number}
$\mathcal{\chi}\big(G(K, 1)\big) = 1$, $\mathcal{\chi}\big(G(K, k)\big) \leq \binom{K}{k} - \binom{K-k}{k} - 1$ when $1< k \le \floor{K/2}$, and $\mathcal{\chi}\big(G(K, k)\big) = \binom{K}{k}$ when $k > \floor{K/2}$. 
\end{property}
\begin{proof}
$\mathcal{\chi}\big(G(K, 1)\big) = 1$ since $G(K, 1)$ is edgeless. $\mathcal{\chi}\big(G(K, k)\big) = \binom{K}{k}$ when $k > \floor{K/2}$ because in this case, $G(K, k)$ is complete. The results for the case $1< k \le \floor{K/2}$ follows from Brook's theorem~\cite[Thm.~5.2.4]{Diestel2017graph_theory}.
\end{proof}

\begin{remark}
\label{remark:sketch_graph}
To avoid pilot interference, pilots in $\mathcal{V}_{\mathcal{J}}$ and $\mathcal{V}_{\mathcal{K}}$ need to be orthogonal in time if $\mathcal{J} \cap \mathcal{K} \neq \emptyset$, i.e, $\mathcal{J}$ and $\mathcal{K}$ are connected in the interference graph. Pilots in $\mathcal{V}_{\mathcal{J}}$ and $\mathcal{V}_{\mathcal{K}}$ can be transmitted simultaneously if $\mathcal{J} \cap \mathcal{K} = \emptyset$, i.e., $\mathcal{J}$ and $\mathcal{K}$ are not connected. Therefore, the problem of pilot alignment can be interpreted as interference graph coloring: pilots can be transmitted at the same time without interference in the subspaces corresponding to vertices with the same color. The minimum total amount of time for pilot transmission, normalized by the subspace dimension, is therefore the minimum number of colors, which is the chromatic number of the graph.

\end{remark}
\color{black}
\subsection{CSIR}
In this section, we assume the users have perfect CSIR. 
\begin{theorem}
\label{thm:K_wCSIR}
For the $K$-user broadcast channel with CSIR, for any integers $d_{\mathcal{J}}$ satisfy 
\begin{align}
d_{\mathcal{J}} &\leq r_{\mathcal{J}},\quad \forall \mathcal{J} \subset [K], \\
\sum_{\mathcal{J} \subset [K]: \ k \in \mathcal{J}} d_{\mathcal{J}} &\leq \min \big(r_k,N_k\big),\quad \forall k\in [K],
\end{align}
the DoF tuple $(d_1,\dots,d_K)$ given by
\begin{equation}
d_k = \sum_{\mathcal{J} \subset [K]:\ k \in \mathcal{J}} \tau_{k,\mathcal{J}}d_{\mathcal{J}},\quad k\in [K],
\end{equation}
for some time-sharing coefficients $\tau_{k,\mathcal{J}} \geq 0$ satisfying $\tau_{k,\mathcal{J}} = 0, \forall k \in \{[K] \setminus \mathcal{J}\}$ and $\sum_{i = 1}^{K} \tau_{k,\mathcal{J}} = 1, \forall \mathcal{J} \subset [K]$, is achievable.
\end{theorem}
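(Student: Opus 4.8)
The plan is to transmit simultaneously in all of the subspaces $\Vc_{\Jc}$, $\emptyset\neq\Jc\subset[K]$, and to exploit the structural fact of this section that each $\Vc_{\Jc}$ is orthogonal to $\Span[\tU_k]$ for every $k\notin\Jc$: a signal launched inside $\Vc_{\Jc}$ is then invisible to all users outside $\Jc$, so cross-subspace interference vanishes with no precoder tuning at all. Each user $k$ therefore sees only the streams carried in the subspaces $\Vc_{\Jc}$ with $k\in\Jc$, decodes all of them because their total number is held at $\min(r_k,N_k)$, and the ``shared'' streams in each $\Vc_{\Jc}$ are apportioned among the users of $\Jc$ by time sharing with weights $\tau_{k,\Jc}$, which produces exactly the stated DoF tuple.

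Concretely, I would proceed as follows. First, since $d_{\Jc}\le r_{\Jc}=\dim(\Vc_{\Jc})$, pick for each non-empty $\Jc$ a truncated unitary $\tV_{\Jc}\in\CC^{M\times d_{\Jc}}$ with $\Span[\tV_{\Jc}]\subseteq\Vc_{\Jc}$; because the $\Vc_{\Jc}$ are mutually orthogonal and $\sum_{\Jc}r_{\Jc}=\dim(\Vc)\le M$, the stacked matrix $\tV\defeq[\tV_{\Jc}]_{\Jc}$ has orthonormal columns, at most $M$ of them. Transmitting $\tx=\tV\ts=\sum_{\Jc}\tV_{\Jc}\ts_{\Jc}$ with independent Gaussian substreams $\ts_{\Jc}\in\CC^{d_{\Jc}}$, each of power $\Theta(\rho)$ (an equal split of the budget over the at most $M$ substreams meets the constraint, since $\|\tx\|^2=\|\ts\|^2$), User~$k$ receives, using $\tH_k\tV_{\Jc}=\mathbf{0}$ for $k\notin\Jc$,
\begin{equation}
\ty_k \;=\; \tH_k\tV\ts + \tw_k \;=\; \sum_{\Jc\ni k}\tH_k\tV_{\Jc}\ts_{\Jc} + \tw_k \;=\; \tH_k\bar{\tV}_k\,\bar{\ts}_k + \tw_k ,
\end{equation}
where $\bar{\tV}_k$, $\bar{\ts}_k$ collect the $D_k\defeq\sum_{\Jc\ni k}d_{\Jc}$ columns and substreams visible to User~$k$, and the effective channel $\tH_k\bar{\tV}_k\in\CC^{N_k\times D_k}$ is known at the receiver. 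The next step is the rank argument: writing $\tH_k=\tG_k\Sigmam_k^{\frac12}\tU_k^\H$ as in \eqref{eq:Kronecker_2}, the factor $\Sigmam_k^{\frac12}\tU_k^\H\bar{\tV}_k$ has full column rank $D_k$ (the columns of $\bar{\tV}_k$ are orthonormal and lie in $\Span[\tU_k]$, and $\Sigmam_k^{\frac12}$ is invertible); combined with $h(\tG_k)>-\infty$, so that $\tG_k$ has the generic rank $\min(N_k,r_k)$ and a generically positioned kernel, and with the hypothesis $D_k\le\min(r_k,N_k)$, this gives $\rank(\tH_k\bar{\tV}_k)=D_k$ almost surely. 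Hence User~$k$ separates and decodes all $D_k$ substreams, each at rate $\log\rho+O(1)$; in particular it decodes $\ts_{\Jc}$ for every $\Jc\ni k$, so the $d_{\Jc}$ substreams carried in $\Vc_{\Jc}$ are common to all users of $\Jc$. Finally, over the coding horizon I would have the substreams of $\Vc_{\Jc}$ carry User~$k$'s message on a fraction $\tau_{k,\Jc}$ of the channel uses for each $k\in\Jc$ (valid since $\sum_{k\in\Jc}\tau_{k,\Jc}=1$); User~$k$ then collects a rate $\tau_{k,\Jc}d_{\Jc}\log\rho+O(1)$ from $\Vc_{\Jc}$, and summing over $\Jc\ni k$ yields $R_k(\rho)=\big(\sum_{\Jc\ni k}\tau_{k,\Jc}d_{\Jc}\big)\log\rho+O(1)$, i.e.\ the DoF tuple $d_k=\sum_{\Jc\ni k}\tau_{k,\Jc}d_{\Jc}$.

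I expect the decodability step to be the main obstacle to state cleanly: one must show that $\tH_k\bar{\tV}_k$ loses no rank, and this is exactly where the constraint $\sum_{\Jc\ni k}d_{\Jc}\le\min(r_k,N_k)$ is used --- the bound by $r_k$ keeps all of User~$k$'s streams inside its eigenspace (it is in fact automatic from $d_{\Jc}\le r_{\Jc}$ and $\sum_{\Jc\ni k}r_{\Jc}=r_k$), and the bound by $N_k$ keeps them resolvable at the $N_k$ receive antennas, while the finite-entropy assumption on $\tG_k$ rules out accidental rank deficiency. A secondary point to make explicit is that the whole argument rests on the orthogonal decomposition $\Span[\tU_k]=\bigoplus_{\Jc\ni k}\Vc_{\Jc}$ with $\Vc_{\Jc}\perp\Span[\tU_k]$ for $k\notin\Jc$ assumed in the model of this section: it is this orthogonality --- not zero-forcing precoding --- that makes the simultaneous transmission interference-free. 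No block-fading or pilot bookkeeping enters, since CSIR is available for free.
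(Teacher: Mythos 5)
Your proposal matches the paper's proof essentially step for step: precoders $\tV_\Jc$ with orthonormal columns and $\Span[\tV_\Jc]\subset\Vc_\Jc$, simultaneous transmission $\tx=\sum_\Jc \tV_\Jc\ts_\Jc$, invisibility of $\ts_\Jc$ to every user outside $\Jc$, decoding by User~$k$ of all $\sum_{\Jc\ni k}d_\Jc\le\min(r_k,N_k)$ visible streams through the known effective channel, and time sharing of each common stream among the users of $\Jc$ with weights $\tau_{k,\Jc}$. The only differences are cosmetic: you make explicit the almost-sure full-column-rank argument (via $h(\tG_k)>-\infty$) that the paper leaves implicit in "User~$k$ can decode," and you invoke mutual orthogonality of the subspaces $\Vc_\Jc$ only for power bookkeeping, where the linear independence and the per-user orthogonality properties the paper asserts would already suffice for the DoF claim.
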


\begin{proof}
For $\Jc \subset [K]$, let $\tV_{\mathcal{J}} \in \mathbb{C}^{M \times d_{\mathcal{J}}}$ be a matrix with orthonormal columns such that $\Span[\tV_\Jc] \subset \mathcal{V}_{\mathcal{J}}$. Then 
$
\mathbf{U}_k^\H \tV_{\mathcal{J}} = \mathbf{0},~ \forall k \notin \mathcal{J}$, and
$\rank[\mathbf{U}_k^\H \tV_{\mathcal{J}}] = d_{\mathcal{J}},~ \forall k \in \mathcal{J}.
$
Let the transmitter send the signal
\begin{equation}
\tX = \sum_{\mathcal{J} \subset [K]} \mathbf{V}_{\mathcal{J}}\ts_{\mathcal{J}},
\end{equation}
where 
$\ts_{\mathcal{J}} \in \mathbb{C}^{d_{\mathcal{J}}}$ contains data symbols. Let us consider User~$k$ and label the subsets in $\{ \mathcal{J} \subset [K]: k \in \mathcal{J} \}$ as $\{ \mathcal{J}_1,\dots,\mathcal{J}_l\}$. The received signal at User~$k$ is
\begin{equation}
\begin{split}
\tY_k 
& = {\tG}_k \Sigmam_k^{\frac12} \tU_k^\H [\tV_{\Jc_1} \ \dots \ \tV_{\Jc_l}]
\begin{bmatrix}
\ts_{\mathcal{J}_1} \\
\vdots \\
\ts_{\mathcal{J}_l}
\end{bmatrix}
+\tW_k.
\end{split}
\end{equation}
Because $\sum_{i=1}^l d_{\mathcal{J}_i} \leq \min (r_k,N_k)$, User~$k$ can decode $\ts_{\mathcal{J}_1},\dots,\ts_{\mathcal{J}_l}$, that is, $\{\ts_{\mathcal{J}} \subset [K]:\ k \in \mathcal{J} \}$, where the signal $\ts_\Jc$ provides $d_\Jc$ DoF. 
Signal $\ts_{\mathcal{J}}$ can be decoded by all the users in $\mathcal{J}$. By dedicating $\ts_{\mathcal{J}}$ to user $k \in \Jc$ in a fraction $\tau_{k,\mathcal{J}}$ of time, User~$k$ can achieve $\sum_{\mathcal{J} \in [K]: k \in \mathcal{J}} \tau_{k,\mathcal{J}}d_{\mathcal{J}}$ DoF. 
This completes the proof. 
\end{proof}

\subsection{No Free CSIR}
When the receivers have no free CSIR, we employ pilot-based schemes. As for the two-user case, we first consider the special case of fully overlapping eigenspaces and propose a product superposition scheme.
\subsubsection{Fully Overlapping Eigenspaces}
\begin{theorem}
\label{thm:kuser_fully}
For the $K$-user broadcast channel without free CSIR and the correlation eigenvectors are nested such that $\tU_{k-1} = [\bar{\tU}_{k} \ \tU_{k}]$ with $\bar{\tU}_{k}$ being a basis of the complement of $\Span[\tU_k]$ in $\Span[\tU_{k-1}]$, $k \in \{2,3,\dots,K\}$, 
the DoF tuple $(d_1,\dots,d_K)$ given by 
\begin{align}
d_1 = N_1^*\Big(1-\frac{r_1}{T}\Big) \quad \text{and} \quad d_k = N_k^*\frac{r_{k-1} - r_k}{T}, ~~k\in \{2,3, \ldots, K\} \label{eq:kuser}
\end{align}
is achievable.
\end{theorem}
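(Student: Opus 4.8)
The plan is to extend the two-user product-superposition scheme of Proposition~\ref{thm:2fwo} to a $K$-fold nested product. By the nesting hypothesis we have $\Span[\tU_K]\subset\Span[\tU_{K-1}]\subset\cdots\subset\Span[\tU_1]$ and $r_K\le r_{K-1}\le\cdots\le r_1\le M$. First I would fix a truncated unitary $\tV\in\CC^{M\times r_1}$ whose first $r_k$ columns form an orthonormal basis of $\Span[\tU_k]$ for every $k\in[K]$; this is possible precisely because the eigenspaces are nested. Then $\tU_k^\H\tV=[\Lambdam_k\ \ \mathbf{0}_{r_k\times(r_1-r_k)}]$ with $\Lambdam_k\in\CC^{r_k\times r_k}$ unitary, so if the base station transmits $\tX=\tV\tilde{\tX}$ for a baseband block $\tilde{\tX}\in\CC^{r_1\times T}$, User~$k$'s observation reduces to $\tY_k=\tG_k\Sigmam_k^{\frac12}\Lambdam_k\,(\tilde{\tX})_{[1:r_k,\,:]}+\tW_k$; that is, User~$k$ sees only the first $r_k$ rows of $\tilde{\tX}$, through an effective channel $\tG_k\Sigmam_k^{\frac12}\Lambdam_k\in\CC^{N_k\times r_k}$ of generic rank $N_k^*$ that is constant over each coherence block.

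Second, I would take $\tilde{\tX}$ to be the nested product $\tilde{\tX}=\tX_K\tX_{K-1}\cdots\tX_2\tX_1$, where $\tX_1=[\tI_{r_1}\ \tS_1]\in\CC^{r_1\times T}$ carries User~$1$'s data $\tS_1\in\CC^{r_1\times(T-r_1)}$ and, for $k\in\{2,\dots,K\}$,
\[
\tX_k=\begin{bmatrix}\tI_{r_k}&\tS_k&\mathbf{0}\\ \mathbf{0}&\tI_{r_{k-1}-r_k}&\mathbf{0}\\ \mathbf{0}&\mathbf{0}&\tI_{r_1-r_{k-1}}\end{bmatrix}\in\CC^{r_1\times r_1}
\]
is an invertible matrix carrying User~$k$'s data $\tS_k\in\CC^{r_k\times(r_{k-1}-r_k)}$; for $K=2$ this is exactly the $\tX_2$ of Proposition~\ref{thm:2fwo}. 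Power allocation between pilot and data parts and the choice of data constellations are handled verbatim as in the two-user product-superposition argument, so only the DoF bookkeeping remains.

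Third — and this is the technical heart — I would establish the structural identity, with the convention $r_0:=T$,
\[
(\tilde{\tX})_{[1:r_k,\,1:r_{k-1}]}=\Phim_k\,[\tI_{r_k}\ \tS_k],\qquad k\in[K],
\]
where $\Phim_k\in\CC^{r_k\times r_k}$ is invertible. The proof is a block computation: in the product $\tX_K\cdots\tX_2$, the factors $\tX_j$ with $j>k$ are block-diagonal with respect to the split $(r_k,\,r_1-r_k)$ and hence act only inside the top-left $r_k\times r_k$ corner; the factors $\tX_j$ with $j<k$ act as the identity on columns $1,\dots,r_{k-1}$ (their data block sits in columns beyond $r_{k-1}$); and $\tX_k$ injects $\tS_k$ into columns $r_k+1,\dots,r_{k-1}$. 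Multiplying these out shows that $\Phim_k$ is exactly the (invertible) top-left $r_k\times r_k$ block of $\tX_K\cdots\tX_{k+1}$, and that the cross-products among the various $\tS_j$ are absorbed into this same left factor $\Phim_k$. I expect this to be the main obstacle, not for depth but for the care needed in tracking which factor acts on which block; the conceptual point is that the interference seen by User~$k$ from the other users' data is precisely the invertible matrix $\Phim_k$, which User~$k$ will learn through its own pilots.

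Finally, decoding is a direct generalization of Proposition~\ref{thm:2fwo}: restricting to the first $r_{k-1}$ channel uses of a block, User~$k$ receives $\tG_k\Sigmam_k^{\frac12}\Lambdam_k\Phim_k\,[\tI_{r_k}\ \tS_k]+\tW_k$, reads off the effective channel $\tG_k\Sigmam_k^{\frac12}\Lambdam_k\Phim_k$ (of generic rank $N_k^*$) from its first $r_k$ columns, and decodes $\tS_k$ coherently from the remaining $r_{k-1}-r_k$ columns (the later channel uses carry only interference and are discarded). This delivers $d_k=N_k^*(r_{k-1}-r_k)/T$ for $k\ge2$, and for $k=1$ (where $\Phim_1=\tX_K\cdots\tX_2$ and $r_0=T$) it delivers $d_1=N_1^*(T-r_1)/T=N_1^*(1-r_1/T)$, i.e., the claimed tuple. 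The standing assumption $T\ge 2\max_k(r_k,N_k)\ge 2r_1$ ensures that all the index ranges used above are nonnegative.
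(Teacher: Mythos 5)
Your proposal is correct and follows essentially the same route as the paper: product superposition along the nested eigenspaces, identity-block pilots, each User~$k$ estimating a composite equivalent channel that absorbs the higher-indexed users' data matrices, and decoding its own symbols in the $r_{k-1}-r_k$ surplus pilot dimensions, yielding exactly the DoF bookkeeping of \eqref{eq:kuser}. The only difference is packaging: you write the superposed signal as a flat product $\tX_K\cdots\tX_2\tX_1$ of unit-upper-triangular elementary matrices and verify the key factorization $(\tilde{\tX})_{[1:r_k,\,1:r_{k-1}]}=\Phim_k[\tI_{r_k}\ \tS_k]$ by a block computation, whereas the paper builds the same structure recursively via $\tX_k=\bigl[\bar{\tX}_k^\T \ \ ([\tI_{r_k}\ \tS_k]^\T\tX_{k+1}^\T)\bigr]^\T$ with filler blocks (and the opposite column-ordering convention for the nested bases) — an equivalent parameterization.
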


\begin{proof} 
We develop the idea in the special case of 3 users, and then proceed to describe the $K$-user result. When $K = 3$, 
the transmitter sends 
\begin{equation}
\tX = \tU_1 \tX_2 \tX_1,
\end{equation}
with $\tX_1 = [\tI_{r_1} \ \tS_1] \in \CC^{r_1 \times T}$, $\tX_2 = 
\begin{bmatrix}
\bar{\tX}_{2} \\
\tX_3[\tI_{\rrtwo} ~ \tS_{2}]
\end{bmatrix} \in \mathbb{C}^{r_1 \times r_1}$, and 
$
\tX_3 = 
\begin{bmatrix}
\bar{\tX}_3 \\
[\tI_{r_3} ~ \tS_3]
\end{bmatrix} \in \mathbb{C}^{r_2 \times r_2},
$
where $\bar{\tX}_{k} \in \mathbb{C}^{(r_{k-1} - r_{k}) \times r_{k-1}}$ is designed to guarantee that $\tX_k$ is non-singular, $k\in \{2,3\}$; $\tS_1 \in \CC^{r_1\times (T-r_1)}$ contains symbols for User~$1$, and $\tS_k \in \mathbb{C}^{r_{k} \times (r_{k-1} - r_k)}$ contains symbols for User~$k$, $k \in \{2,3\}$. Because $\tU_1$ has orthogonal columns, the received signal at User~$1$ is
\begin{align}
\tY_1 & 
 \color{black} = {\tG}_1 \Sigmam_1^{\frac12} \tX_2 [\tI_{r_1} \ \tS_1] + \tW_1,
\end{align}
User~$1$ first estimates the equivalent channel ${\tG}_1 \Sigmam_1^{\frac12} \tX_2$ and then decodes $\tS_{1}$, achieving $N_1^*(T - r_1)$ DoF.

The received signal at User~$2$ during the first $r_1$ channel uses is
\begin{equation}
\begin{split}
\tY_{2[1:r_1]} &={\tG}_2 \Sigmam_2^{\frac12} \tX_3 [\tI_{r_2} ~ \tS_{2}] + \tW_{2[1:r_1]},
\end{split}
\end{equation}
\color{black}
User~$2$ estimates the equivalent channel ${\tG}_2 \Sigmam_2^{\frac12} \tX_3$ in the first $r_2$ channel uses, then decodes $\tS_{2}$ in the next $r_1 - r_2$ channel uses, achieving $N_2^* (r_1 - r_2)$ DoF.

The received signal at User~$3$ during the first $r_2$ channel uses is
\begin{equation}
\begin{split}
\tY_{3[1:r_2]} ={\tG}_3 \Sigmam_3^{\frac12} [\tI_{r_3} ~ \tS_3]+ \tW_{3[1:r_2]}.
\end{split}
\end{equation} 
\color{black}
During the first $r_3$ channel uses, User~$3$ estimates ${\tG}_3 \Sigmam_3^{\frac12}$, and then during the next $r_2 - r_3$ channel uses, User~$3$ decodes its symbols, achieving $N_3^* (r_2 - r_3)$ DoF. Therefore, for $K = 3$, the normalized DoF tuple \eqref{eq:kuser} is achieved. 

Now, we apply the same idea to the case of $K$ users. The transmitted signal is 
\begin{equation}
\tX = \tU_1 \tX_2 \tX_1,
\end{equation}
with $\tX_1 \!=\! [\tI_{r_1} \ \tS_1] \!\in\! \CC^{r_1 \times T}$, $\tX_k \!=\! 
\begin{bmatrix}
\bar{\tX}_{k} \\
\tX_{k+1}[\tI_{r_k} ~ \tS_k]
\end{bmatrix} \!\in\! \CC^{r_{k-1} \times r_{k-1}}$ for $k \!\in\! \{2, \ldots, K-1\}$, and
$\tX_K \!=\! 
\begin{bmatrix}
\bar{\tX}_{K} \\
[\tI_{r_K} ~ \tS_K]
\end{bmatrix} \!\in\! \CC^{r_{K-1} \times r_{K-1}}$.
User~$1$ uses the same decoding method as the case of $K=3$, achieving $N_1^*(T - r_1)$ DoF. For users $i = 2, \ldots, K-1$, consider the first $r_{k-1}$ channel uses, the received signal is 
\begin{equation}
\begin{split}
\tY_{k[1:r_{k-1}]} &
\color{black}={\tG}_k \Sigmam_k^\frac12 \tX_{k+1} [\tI_{r_k} ~ \tS_k]+ \tW_{k[1:r_{k-1}]},	  
\end{split}
\end{equation}
Therefore User~$k$ can achieve $N_k^*\frac{r_{k-1} - r_k}{T}$ DoF. With the same decoding method as User~$3$ in the $K=3$ case, User~$k$ can achieve $N_K^*\frac{r_{K-1} - r_K}{T}$ DoF. This completes the proof of Theorem~\ref{thm:kuser_fully}.
\end{proof}

\subsubsection{Partially Overlapping Eigenspaces} \label{sec:Kuser_noCSIR_partial}
We now consider the more general case of partially overlapping eigenspaces. We begin by analyzing symmetric $K$-user channels with overlapped eigenspaces, offering an achievable DoF region with rate splitting. Subsequently, the asymmetric case will also be analyzed.

For symmetric channels:
\begin{equation} \label{eq:symmetric_asumption}
r_{\mathcal{J}_i} = r_{\mathcal{J}_j}, \quad \forall \mathcal{J}_i,\mathcal{J}_j \subset [K]:|\mathcal{J}_i| = |\mathcal{J}_j|.
\end{equation}
That is, the rank of the common channel eigenspace $\mathcal{V}_{\mathcal{J}}$ is the same for all groups $\mathcal{J}$ containing the same number of
users. (In the two-user case, this corresponds to $r_1 = r_2$.) Define
\begin{equation}
p_k = r_{\mathcal{J}}, \quad \forall \mathcal{J} \subset [K]:|\mathcal{J}| = k,
\end{equation}
for $k\in [K]$. Then the set of parameters $(p_1,\dots,p_K)$ characterizes the correlation structure of the $K$-user symmetric broadcast channel. Furthermore, we assume that $r_k \le N_k$, $\forall k$.

\begin{theorem}
\label{thm:K_sym_nCSIR}
The $K$-user symmetric broadcast channel without free CSIR characterized by $(p_1,\dots,p_K)$ can achieve any permutation of the DoF tuple $D_{K,L}(p_1,\dots,p_K) = (d_1,d_2,\dots,d_K)$, for any $L \in \{0,1,\dots,K-1\}$, defined by
\begin{align}
d_k = \frac{1}{T} \sum_{i=1}^{K - \max (k-1,L)} \min\bigg(\binom{K - i}{i -1} p_i,N_k\bigg) \bigg(T - T_\tau (K,L) + \sum_{j = \floor{K/2} + 1}^{K-i} \binom{K - i}{k} p_j\bigg), \label{eq:Kuser_CDIR_DoF}
\end{align}
for $k \in [K]$, where
$
T_{\tau}(K,L) \defeq \sum_{k =1}^{K-L} \chi\big(G(K,k)\big)p_k.
$
\end{theorem}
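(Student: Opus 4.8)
The plan is to assemble three ingredients that are already available: rate splitting across the mutually orthogonal subspaces $\{\mathcal{V}_{\mathcal{J}}\}_{\mathcal{J}\subset[K]}$ constructed above, the interference‑graph colouring of Property~\ref{prop:chromatic_number} to schedule pilots without mutual interference (see Remark~\ref{remark:sketch_graph}), and a product‑superposition/interleaving idea in the spirit of Proposition~\ref{thm:2fwo} and Theorem~\ref{thm:kuser_fully} that fills the channel uses in which a subspace is idle --- because a \emph{disjoint} subspace is transmitting its pilots --- with extra data. Fix $L\in\{0,1,\dots,K-1\}$. The scheme employs only the subspaces $\mathcal{V}_{\mathcal{J}}$ with $1\le|\mathcal{J}|\le K-L$, deliberately dropping the $L$ most heavily shared levels in order to cut pilot overhead; inside each retained subspace the base station sends a pilot block followed by Gaussian data, exactly as in the two‑user constructions.

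First I would fix the pilot phase. For each level $k\in\{1,\dots,K-L\}$ I colour $G(K,k)$ with $\chi\big(G(K,k)\big)$ colours; since two level‑$k$ vertices receive the same colour only when their index sets are disjoint, the associated equivalent channels do not interfere at any receiver, so all subspaces sharing a colour can transmit their $p_k$‑dimensional orthogonal pilots in one common block of $p_k$ channel uses. The pilot cost is therefore $T_\tau(K,L)=\sum_{k=1}^{K-L}\chi\big(G(K,k)\big)p_k$, and the remaining $T-T_\tau(K,L)$ channel uses (taken to be positive) form the main data phase. Every User~$k$ observes, inside $\Span[\tU_k]$, the superposition of exactly the subspaces $\mathcal{V}_{\mathcal{J}}$ with $k\in\mathcal{J}$; because $\sum_{\mathcal{J}\ni k}r_{\mathcal{J}}=r_k\le N_k$, User~$k$ can form an MMSE estimate (Lemma~\ref{lem:MMSE-estimator}) of each relevant equivalent channel $\tG_k\Sigmam_k^{\frac12}\tU_k^\H\tV_{\mathcal{J}}$ and then decode the corresponding data streams.

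Next I would specify the data phase and the assignment of streams to users. The data carried in a level‑$i$ subspace $\mathcal{V}_{\mathcal{J}}$ is dedicated to a single, symmetrically chosen member of $\mathcal{J}$ (for instance $\min\mathcal{J}$; relabelling users then realises every permutation of the tuple), so that User~$k$ is the dedicated recipient of every level‑$i$ subspace anchored at $k$; this count, multiplied by $p_i$ and capped by $N_k$, is the first factor of each summand in \eqref{eq:Kuser_CDIR_DoF} (the cap $N_k$ recording the receive‑antenna budget), and it is nonzero only for $i\le K-\max(k-1,L)$, because such a subspace needs $|\mathcal{J}|\le K-L$ and enough indices at or above $k$. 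Besides the main data block, a level‑$i$ subspace $\mathcal{V}_{\mathcal{J}}$ (once its own pilots are sent) may also carry data during the pilot block of every level‑$j$ subspace $\mathcal{V}_{\mathcal{K}}$ with $\mathcal{K}\cap\mathcal{J}=\emptyset$; there are $\binom{K-i}{j}$ such $\mathcal{K}$, each contributing $p_j$ channel uses, and one keeps only the levels $j\in\{\floor{K/2}+1,\dots,K-i\}$ for which the colouring has not already packed all disjoint subspaces together, reproducing the bracketed data‑time factor of \eqref{eq:Kuser_CDIR_DoF}. Summing, over the admissible $i$, the product of the (capped) number of level‑$i$ streams User~$k$ decodes and the corresponding normalised data time gives precisely \eqref{eq:Kuser_CDIR_DoF}.

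The main obstacle is the combinatorics of step three: exhibiting a single global schedule that simultaneously keeps all pilots interference‑free, reuses every idle pilot block for data in some disjoint subspace without any double‑booking, and spreads the dedicated streams so that the per‑user, per‑level count collapses to the closed form appearing in \eqref{eq:Kuser_CDIR_DoF} --- and then verifying the attendant binomial identities and checking that at every decoding stage the antenna cap $\min(\cdot,N_k)$ is the only active constraint. Once that bookkeeping is in place, the per‑subspace estimate‑then‑decode steps are routine repetitions of the arguments in Proposition~\ref{thm:2fwo} and Theorem~\ref{thm:kuser_fully}.
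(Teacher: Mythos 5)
Your proposal follows the paper's proof essentially step for step: pilots are scheduled by colouring the interference graphs (total cost $T_\tau(K,L)$), extra data is superimposed on the pilot blocks of disjoint subspaces at the levels above $\lfloor K/2\rfloor$ (where the graph is complete), each subspace's data is dedicated to its minimum-index user so the per-user count collapses to \eqref{eq:Kuser_CDIR_DoF}, $D_{K,L}$ for $L>0$ is obtained by discarding all subspaces with $|\mathcal{J}|>K-L$, and permutations follow by relabelling users. The global-schedule and binomial bookkeeping you flag as the remaining obstacle is treated in the paper at exactly the same counting level of detail, so there is no substantive gap between your sketch and the paper's own argument.
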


Let us first describe the achievable scheme in the 3-user case for clarity, then go for the $K$-user case.

\begin{example} [Achievable scheme for Theorem~\ref{thm:K_sym_nCSIR} for $K = 3$] \normalfont
When $K = 3$, the correlation structure is illustrated
in Fig.~\ref{fig:3-user-structure}. Under the symmetry assumption, we have $r_{\{1\}} \!=\! r_{\{2\}} \!=\! r_{\{3\}} \triangleq p_1$, $\r_{\{1,2\}} \!=\! r_{\{1,3\}} \!=\! r_{\{2,3\}} \triangleq p_2$, $r_{\{1,2,3\}} \triangleq p_3$. The achievable scheme for 
\begin{align}
D_{3,0}(p_1,p_2,p_3) = \bigg((p_1 + 2p_2 + p_3)\Big(1 \!-\! \frac{T_\tau}{T}\Big)+ \frac{p_1 p_2}{T}, (p_1 + p_2)\Big(1 \!-\! \frac{T_\tau}{T}\Big) + \frac{p_1 p_2}{T}, p_1\Big(1 \!-\! \frac{T_\tau}{T}\Big) + \frac{p_1 p_2}{T}\bigg),
\end{align}
is based on rate splitting and channel training as illustrated in Table~\ref{tab:3user}.
\begin{table}[ht]
\caption{Illustration of pilot and data alignment for the scheme achieving $D_{3,0}(p_1,p_2,p_3)$ 
}
\begin{center}
\begin{tabular}{ r|c|c|c|c|c|c| } 
\hline
$\Vc_{\{1\}}$ & \cellcolor{LightCyan} Pilot & \cellcolor{LightGray} & \cellcolor{LightYellow} Data & \cellcolor{LightGray} & \cellcolor{LightGray} & \cellcolor{LightYellow} Data \\ \hline
$\Vc_{\{2\}}$ & \cellcolor{LightCyan} Pilot & \cellcolor{LightGray} & \cellcolor{LightGray} & \cellcolor{LightYellow} Data & \cellcolor{LightGray} & \cellcolor{LightYellow} Data\\ \hline
$\Vc_{\{3\}}$ & \cellcolor{LightCyan} Pilot & \cellcolor{LightYellow} Data & \cellcolor{LightGray} & \cellcolor{LightGray} & \cellcolor{LightGray} & \cellcolor{LightYellow} Data\\ \hline
$\Vc_{\{1,2\}}$ & \cellcolor{LightGray} & \cellcolor{LightCyan} Pilot & \cellcolor{LightGray} & \cellcolor{LightGray} & \cellcolor{LightGray} & \cellcolor{LightYellow} Data\\ \hline
$\Vc_{\{2,3\}}$ & \cellcolor{LightGray} & \cellcolor{LightGray} & \cellcolor{LightCyan} Pilot & \cellcolor{LightGray} & \cellcolor{LightGray} & \cellcolor{LightYellow} Data\\ \hline
$\Vc_{\{1,3\}}$ & \cellcolor{LightGray} & \cellcolor{LightGray} & \cellcolor{LightGray} & \cellcolor{LightCyan} Pilot & \cellcolor{LightGray} & \cellcolor{LightYellow} Data\\ \hline
$\Vc_{\{1,2,3\}}$ & \cellcolor{LightGray} & \cellcolor{LightGray} & \cellcolor{LightGray} & \cellcolor{LightGray} & \cellcolor{LightCyan} Pilot & \cellcolor{LightYellow} Data\\ \hline
& $\leftarrow \ p_1 \  \rightarrow$ & $\leftarrow p_2 \rightarrow$ & $\leftarrow p_2 \rightarrow$ & $\leftarrow p_2 \rightarrow$ & $\leftarrow \!p_3\! \rightarrow$ & $\xleftarrow{\hspace*{.3cm}} T - (p_1 \!+\! 3p_2 \!+\! p_3) \xrightarrow{\hspace*{.3cm}}$\\ 
\end{tabular}
\end{center}
\label{tab:3user}
\end{table}

Owing to linear precoding, choose a basis $\tV_\Jc$ of the subspace spanned by $\{\tv|\tv \in \Vc,\tv \perp \tV_\Kc, \forall \Kc \neq \Jc\}$. It can be proved that $\dim{(\Span[\tV_{\Jc}])} = r_{\Jc}$. We choose the precoder in this way but not directly choose a basis from $\tV_\Jc$, because for different $\Jc$, $\tV_\Jc$ is not guaranteed to be {\it orthogonal} with each other and we aim to remove the interference from the other channel component in $\tV_\Kc (\Kc \neq \Jc)$, so that all users in $\mathcal{J}$ can learn the channel directions in $\mathcal{V}_{\mathcal{J}}$. From Remark~\ref{remark:sketch_graph}, the required amount of pilot transmissions is identical with the chromatic number of the interference graph. 
\color{black}The interference graph $G(3, 1)$ has chromatic number $\chi\big(G(3, 1)\big) = 1$, which is also the amount of time, normalized by $p_1$, needed for pilot transmission without interference in $\mathcal{V}_{\{1\}}$, $\mathcal{V}_{\{2\}}$, and $\mathcal{V}_{\{3\}}$. Similarly, it takes $\chi\big(G(3, 2)\big)p_2 = 3p_2$ channel uses to transmit pilot interference-free in $\mathcal{V}_{\{1,2\}}$, $\mathcal{V}_{\{2,3\}}$, and $\mathcal{V}_{\{1,3\}}$, and takes $\chi\big(G(3, 3)\big)p_3 = p_3$ channel uses for pilot transmission in $\mathcal{V}_{\{1,2,3\}}$.

In this way, the total time for channel training is $T_\tau = \sum_{k=1}^{3} \chi\big(G(3,k)\big)p_k = p_1 + 3p_2 + p_3$ channel uses and there remains $T - T_{\tau}$ channel uses for simultaneous data transmission in all subspaces. By dedicating the data transmitted in $\mathcal{V}_{\{1,2\}}$, $\mathcal{V}_{\{1,3\}}$, and $\mathcal{V}_{\{1,2,3\}}$ to User~$1$, User~$1$ achieves $(p_1 + 2p_2 + p_3)(1 - \frac{T_\tau}{T})$ DoF. By dedicating the data transmitted in $\mathcal{V}_{\{2,3\}}$ to User~$2$, User~$2$ achieves $(p_1 + p_2)(1 - \frac{T\tau}{T})$ DoF. User~$3$ achieves $p_1(1 - \frac{T-\tau}{T})$ DoF from the data transmitted in $\mathcal{V}_{\{3\}}$. On top of that, the base station can transmit additional data to User~$3$ in $\mathcal{V}_{\{3\}}$ by superimposing it with the pilot for User~$1$ and User~$2$ in $\mathcal{V}_{\{1,2\}}$ without interference. Similarly, User~$1$ and User~$2$ can also receive additional data. With these additional data, each user achieves $\frac{p_1 p_2}{T}$ DoF. Therefore, $D_{3,0}(p_1,p_2,p_3)$ is achieved.

To achieve $D_{3,1}(p_1,p_2,p_3)$, which is
\begin{align}
\bigg(\big(p_1+2p_2\big)\Big(1-\frac{p_1+3p_2}{T}\Big) + \frac{p_1p_2}{T}, \ (p_1+p_2)\Big(1-\frac{p_1+3p_2}{T}\Big) + \frac{p_1p_2}{T}, \ p_1\Big(1-\frac{p_1+3p_2}{T}\Big)+\frac{p_1p_2}{T}\bigg),
\end{align}
we simply ignore the subspace $\mathcal{V}_{\{1,2,3\}}$. Then, we do not send pilot in this subspace and have more time to send data in all other subspaces. As a price for that, we lose the data we could send in $\mathcal{V}_{\{1,2,3\}}$ during the last $T - T\tau$ channel uses. When $\rank[\mathcal{V}_{\{1,2,3\}}] = p_3$ is small enough, this loss is not significant and we can gain DoF. The achievable scheme is illustrated in Table~\ref{tab:3-user-scheme-2}.
\begin{table} [ht]
\caption{Illustration of pilot and data alignment for the scheme achieving $D_{3,1}(p_1,p_2,p_3)$ 
}
\begin{center}
\begin{tabular}{ r|c|c|c|c|c| } 
\hline
$\Vc_{\{1\}}$ & \cellcolor{LightCyan} Pilot & \cellcolor{LightGray} & \cellcolor{LightYellow} Data & \cellcolor{LightGray} & \cellcolor{LightYellow} Data \\ \hline
$\Vc_{\{2\}}$ & \cellcolor{LightCyan} Pilot & \cellcolor{LightGray} & \cellcolor{LightGray} & \cellcolor{LightYellow} Data & \cellcolor{LightYellow} Data\\ \hline
$\Vc_{\{3\}}$ & \cellcolor{LightCyan} Pilot & \cellcolor{LightYellow} Data & \cellcolor{LightGray} & \cellcolor{LightGray} & \cellcolor{LightYellow} Data\\ \hline
$\Vc_{\{1,2\}}$ & \cellcolor{LightGray} & \cellcolor{LightCyan} Pilot & \cellcolor{LightGray} & \cellcolor{LightGray}  & \cellcolor{LightYellow} Data\\ \hline
$\Vc_{\{2,3\}}$ & \cellcolor{LightGray} & \cellcolor{LightGray} & \cellcolor{LightCyan} Pilot & \cellcolor{LightGray} & \cellcolor{LightYellow} Data\\ \hline
$\Vc_{\{1,3\}}$ & \cellcolor{LightGray} & \cellcolor{LightGray} & \cellcolor{LightGray} & \cellcolor{LightCyan} Pilot  & \cellcolor{LightYellow} Data\\ \hline
$\Vc_{\{1,2,3\}}$ & \cellcolor{LightGray} & \cellcolor{LightGray} & \cellcolor{LightGray} & \cellcolor{LightGray} & \cellcolor{LightGray}\\ \hline
& $\leftarrow \ p_1 \ \rightarrow$ & $\leftarrow p_2 \rightarrow$ & $\leftarrow p_2 \rightarrow$ & $\leftarrow p_2 \rightarrow$ & $\xleftarrow{\hspace*{1.3cm}} T \!-\! (p_1 \!+\! 3p_2) \xrightarrow{\hspace*{1.3cm}}$\\ 
\end{tabular}
\end{center}
\label{tab:3-user-scheme-2}
\end{table}

Similarly, $D_{3,2}(p_1, p_2, p_3) = \bigg(p_1\Big(1 - \frac{p_1}{T}\Big),p_2\Big(1 - \frac{p_2}{T}\Big) ,p_3\Big(1 - \frac{p_3}{T}\Big)\bigg)$
can be achieved by ignoring $\mathcal{V}_{\{1,2\}}$, $\mathcal{V}_{\{2,3\}}$, $\mathcal{V}_{\{1,3\}}$, and $\mathcal{V}_{\{1,2,3\}}$, as illustrated in Table~\ref{tab:3-user-scheme-3}.
\begin{table} [ht]
\caption{Illustration of pilot and data alignment for the scheme achieving $D_{3,2}(p_1,p_2,p_3)$ 
}
\begin{center}
\begin{tabular}{ r|c|c| } 
\hline
$\Vc_{\{1\}}$ & \cellcolor{LightCyan} Pilot & \cellcolor{LightYellow} Data \\ \hline
$\Vc_{\{2\}}$ & \cellcolor{LightCyan} Pilot & \cellcolor{LightYellow} Data\\ \hline
$\Vc_{\{3\}}$ & \cellcolor{LightCyan} Pilot & \cellcolor{LightYellow} Data\\ \hline
$\Vc_{\{1,2\}}$ & \cellcolor{LightGray} & \cellcolor{LightGray} \\ \hline
$\Vc_{\{2,3\}}$ & \cellcolor{LightGray} & \cellcolor{LightGray} \\ \hline
$\Vc_{\{1,3\}}$ & \cellcolor{LightGray} & \cellcolor{LightGray} \\ \hline
$\Vc_{\{1,2,3\}}$ & \cellcolor{LightGray} & \cellcolor{LightGray} \\ \hline
& $\leftarrow \ p_1\  \rightarrow$ & $\xleftarrow{\hspace*{4cm}} T \!-\! p_1 \xrightarrow{\hspace*{4cm}}$\\ 
\end{tabular}
\end{center}
\label{tab:3-user-scheme-3}
\end{table}

Due to symmetry, any permutation of $D_{3,L}$, $L\in \{0,1,2\}$ is achieved by permuting the users' indices.
\end{example}

\begin{proof}[Proof of Theorem~\ref{thm:K_sym_nCSIR}]
We first show the achievable scheme for $D_{K,0}(p_1,\dots, p_K)$ given by
\begin{align}
d_k = \frac{1}{T} \sum_{i = 1}^{K- k +1} \binom{K-i}{i-1} \twocolAlignMarker p_i \bigg(T - T_\tau(K,0) + \twocolbreak\sum_{j = \floor{K/2}+1}^K \binom{K - i}{j} p_j\bigg). \label{eq:tmp2662}
\end{align}
The scheme is based on rate splitting and channel training with two key elements: alignment
of pilots in different subspaces, and superposition of additional data on top of pilots without causing interference.

User~$k$ needs to learn the channel directions in all subspaces $\mathcal{V}_{\mathcal{J}}$ such that $k \in \mathcal{J}$ and is oblivious to signals (pilot or data) transmitted in other subspaces. 
From Remark~\ref{remark:sketch_graph}, the minimum total amount of time for pilot transmission in the common subspace by k users, normalized by the subspace dimension is given by the chromatic number of the interference graph $G(K,k)$. \color{black}Thus the total training time is $T_\tau (K,0) = \sum_{k=1}^{K} \chi\big(G(K,k)\big)p_k$ channel uses. In the remaining $T - T_\tau(K, 0)$ channel uses, data is transmitted in all subspaces. The DoF that User~$k$, $k \in \mathcal{J}$, can achieve with the message transmitted in $\mathcal{V}_{\mathcal{J}}$ is $\frac{1}{T} p_{|\mathcal{V}_{\mathcal{J}}|}\big(T - T_\tau(K,0)\big)$. 

Notice that for any $l > \floor{K/2}$, the interference graph $G(K, l)$ is fully connected, the pilots in subspaces $\mathcal{V}_{\mathcal{K}}$ for $|\mathcal{K}| = l$ cannot be transmitted at the same time. However, additional data can be transmitted in any subspace $\mathcal{V}_{\mathcal{J}}$ such that $\mathcal{V}_{\mathcal{K}} \cap \mathcal{V}_{\mathcal{J}} = \emptyset$. In this way, during the training of all subspaces $\mathcal{V}_{\mathcal{K}}$ with $|\mathcal{K}| = l$, for each subset $\mathcal{J}$ which does not intersect with $|\mathcal{K}|$, additional data can be transmitted in $\binom{K - |\mathcal{J}|}{l}p_l$ channel uses, enabling each user in $\mathcal{J}$ to achieve $\frac{1}{T}\binom{K - |\mathcal{J}|}{l}p_{\mathcal{J}} p_l$ more DoF.

Summing up the DoF, the number of DoF that each user in $\mathcal{J}$ can obtain from the message transmitted in $\mathcal{V}_{\mathcal{J}}$ is
\begin{align}
\twocolAlignMarker\frac{1}{T}p_{|\mathcal{J}|}\Big(T \!-\! T_\tau(K,0)\Big) + \frac{1}{T} \sum_{l = \floor{K/2}+1}^{K} \!\binom{K \!-\! |\mathcal{J}|}{l}p_{|\mathcal{J}|} p_l = \frac{1}{T} p_{|\mathcal{J}|}\bigg(T \!-\! T_\tau (K,0) + \sum_{l = \floor{K/2}+1}^{K}\! \binom{K \!-\! |\mathcal{J}|}{l} p_l\bigg).
\end{align}
By dedicating all the messages transmitted in $\mathcal{V}_{\mathcal{K}}$ such that $k \in \mathcal{J}$ and $\mathcal{J} \cap [k-1] = \emptyset$ to User~$k$, User~$k$ achieves $d_k$ DoF where $d_k$ is given in \eqref{eq:tmp2662}. Then $D_{K,0}(p_1,\dots, p_K)$ is achievable.

Similar to the $3$-user case, $D_{K,L}(p_1,\dots,p_K)$ with $L \in  [K - 1]$ is achieved by ignoring all the subspaces $\mathcal{V}_{\mathcal{J}}$ with $|\mathcal{J}| > K - L$.
Finally, due to symmetry, any permutation of $D_{K,L}(p_1,\dots,p_K)$ with $L = 1,\dots,K - 1$ can be achieved by
permutting the users' indices.
\end{proof}

\begin{remark}
We can improve the achievable scheme by sending additional data during the training of $\mathcal{V}_{\mathcal{K}}$ with $|\mathcal{K}| \leq \floor{K/2}$ also. However, the possibility for this additional data depends on the actual coloring of the interference graph and would not admit nice expressions of achievable DoF tuples. We therefore do not follow this direction in the interest of developing closed-form expressions.
\end{remark}

Computing the chromatic number $\chi\big(G(K, k)\big)$ is NP-complete in general~\cite{Garey1990NPcompleteness}. Therefore, one might confine to the achievable DoF tuples in the following corollary.
\begin{corollary}
The $K$-user symmetric broadcast channel without free CSIR can achieve the DoF tuple $D_{K,l}(p_1,\dots,p_K)$ given in Theorem~\ref{thm:K_sym_nCSIR}, with $T_\tau(K,L)$ replaced by $\sum_{k = 1}^{K -L}\Big(\binom{K}{k} - \binom{K - k}{k} - \mathbbm{1}\{1< k \le \floor{K/2}\}\Big)p_k$.
\end{corollary}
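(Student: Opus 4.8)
The plan is to obtain this directly from Theorem~\ref{thm:K_sym_nCSIR} combined with the chromatic-number estimates in Property~\ref{prop:chromatic_number}, by replacing the (NP-hard) optimal pilot schedule with an explicit, possibly suboptimal one. First I would set $\widetilde T_\tau(K,L) \defeq \sum_{k=1}^{K-L}\big(\binom{K}{k} - \binom{K-k}{k} - \mathbbm{1}\{1< k \le \floor{K/2}\}\big)p_k$ and verify term by term that $\chi\big(G(K,k)\big) \le \binom{K}{k} - \binom{K-k}{k} - \mathbbm{1}\{1< k \le \floor{K/2}\}$ for every $k \in [K]$: for $k=1$ both sides equal $1$; for $1< k \le \floor{K/2}$ this is exactly the Brooks-type bound $\chi(G(K,k)) \le \Delta(G(K,k)) = \binom{K}{k} - \binom{K-k}{k} - 1$ already established in Property~\ref{prop:chromatic_number}; and for $k > \floor{K/2}$ both sides equal $\binom{K}{k}$, since then $\binom{K-k}{k}=0$ and the indicator vanishes. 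Summing over $k\in\{1,\dots,K-L\}$ gives $\widetilde T_\tau(K,L) \ge T_\tau(K,L)$.

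Next I would revisit the achievability construction underlying Theorem~\ref{thm:K_sym_nCSIR}. That scheme only ever needs \emph{some} proper coloring of each interference graph $G(K,k)$ in order to decide which common subspaces may transmit pilots in the same channel uses; the quantity $\chi(G(K,k))$ enters the analysis solely as the number of colors used, hence as the number of pilot slots (normalized by $p_k$). Feeding in, for each $k$, a proper coloring with $\binom{K}{k} - \binom{K-k}{k} - \mathbbm{1}\{1< k \le \floor{K/2}\}$ colors — which exists precisely by the inequality of the previous paragraph — the training phase now occupies $\widetilde T_\tau(K,L)$ channel uses and the data phase occupies $T-\widetilde T_\tau(K,L)$ channel uses, while every other ingredient of the construction (the choice of precoders $\tV_\Jc$, the dedication of the messages in $\mathcal{V}_\Jc$ to the lowest-index user of $\Jc$, and the superposition of additional data on top of pilots in disjoint subspaces) is untouched. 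Carrying out the same DoF bookkeeping as in the proof of Theorem~\ref{thm:K_sym_nCSIR} — the $T-\widetilde T_\tau$ data term together with the unchanged superposition term $\sum_{j=\floor{K/2}+1}^{K-i}\binom{K-i}{k}p_j$ — then yields exactly $D_{K,L}(p_1,\dots,p_K)$ with $T_\tau$ replaced by $\widetilde T_\tau$. As an equivalent shortcut one may simply note that, because $\widetilde T_\tau \ge T_\tau$ and all the $\min(\cdot)$ factors in \eqref{eq:Kuser_CDIR_DoF} are nonnegative, the resulting tuple is componentwise dominated by the original $D_{K,L}$, and any DoF tuple dominated by an achievable one is achievable.

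I do not anticipate a genuine obstacle: the argument is entirely substitution and re-summation on top of Theorem~\ref{thm:K_sym_nCSIR}. The only point requiring a moment of care is the Brooks-type inequality for $1< k \le \floor{K/2}$ — namely that $G(K,k)$ is connected and is neither a complete graph nor an odd cycle, so that $\chi \le \Delta$ genuinely applies — but this is exactly what is used in the proof of Property~\ref{prop:chromatic_number}, so it can be invoked verbatim rather than reproved.
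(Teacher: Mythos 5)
Your proposal is correct and follows essentially the same route as the paper, which simply invokes Theorem~\ref{thm:K_sym_nCSIR} together with the chromatic-number bounds of Property~\ref{prop:chromatic_number}; your term-by-term check that $\binom{K}{k}-\binom{K-k}{k}-\mathbbm{1}\{1<k\le\floor{K/2}\}$ upper-bounds $\chi\big(G(K,k)\big)$ in each regime, and the observation that running the same scheme with a (possibly suboptimal) proper coloring only lengthens the training phase, is exactly the intended argument made explicit.
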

This corollary follows from Theorem~\ref{thm:K_sym_nCSIR} and Property~\ref{prop:chromatic_number}.

Based on Theorem~\ref{thm:K_sym_nCSIR}, we have the following achievable DoF region for the symmetric $K$-user channel.

\begin{theorem}
\label{thm:K_nCSIR}
The $K$-user symmetric MIMO broadcast channel without free CSIR characterized by $(p_1,\dots,p_K)$ can achieve the convex hull of all permutations of any DoF tuple of the form
\begin{align}
\Big(D_{k,L}(p_1^{\ast},\dots,p_k^{\ast}),0,\dots,0\Big), \quad \text{for~} k \in [K], L \in \{0,\dots,k-1\},
\label{eq:K_nCSIR}
\end{align}
with $D_{k,L}(\cdot)$ defined according to  \eqref{eq:Kuser_CDIR_DoF} and $p_l^{\ast} = \sum_{i = 0}^{K- k} \binom{K - k}{i} p_{l+i}$ for $l \in [k]$.
\end{theorem}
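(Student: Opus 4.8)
The plan is to reduce Theorem~\ref{thm:K_nCSIR} to Theorem~\ref{thm:K_sym_nCSIR} by \emph{switching off} $K-k$ of the users. Fix $k\in[K]$ and a subset $S\subset[K]$ with $|S|=k$, assign zero rate to the users in $[K]\setminus S$, and regard the remaining channels $\{\tH_i\}_{i\in S}$ as a stand-alone $k$-user broadcast channel. The first and main step is to argue that this sub-channel is again symmetric, with correlation parameters $(p_1^{\ast},\dots,p_k^{\ast})$.

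For this, observe that from the viewpoint of the users in $S$, two of the original atomic subspaces $\Vc_{\Jc}$ and $\Vc_{\Jc'}$ are indistinguishable whenever $\Jc\cap S=\Jc'\cap S$: a signal sent in either is seen by exactly the users in $\Jc\cap S$. Hence, for nonempty $\Ic\subseteq S$, define the effective common subspace
\[
\Wc_{\Ic}\defeq\bigoplus_{\Jc\subset[K]:\,\Jc\cap S=\Ic}\Vc_{\Jc}.
\]
Since the nonempty subsets of $[K]$ are partitioned according to the value of $\Jc\cap S$, and since $\Vc=\bigoplus_{\Jc}\Vc_{\Jc}$ is a direct sum (the dimensions add up to $\dim\Vc$), one checks directly that: $\Wc_{\Ic}\subseteq\Span[\tU_i]$ for every $i\in\Ic$; $\Wc_{\Ic}$ is linearly independent of $\Span[\tU_j]$ for every $j\in S\setminus\Ic$ (because $\bigoplus_{\Jc\not\ni j}\Vc_{\Jc}$ and $\Span[\tU_j]=\bigoplus_{\Jc\ni j}\Vc_{\Jc}$ are complementary in $\Vc$); and $\bigoplus_{\Ic\subseteq S}\Wc_{\Ic}=\sum_{i\in S}\Span[\tU_i]$. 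Thus $\{\Wc_{\Ic}\}_{\Ic\subseteq S}$ is a legitimate atomic decomposition of the eigenspaces of the $k$-user sub-channel, and the precoding/pilot-alignment/superposition scheme used in the proof of Theorem~\ref{thm:K_sym_nCSIR} applies verbatim with $\Vc_{\Jc}$ replaced by $\Wc_{\Ic}$, since that scheme depends only on the combinatorial structure of which atoms intersect which (the interference graphs $G(k,\cdot)$) and on the atom dimensions.

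It remains to read off the dimensions. In the symmetric case $\dim\Vc_{\Jc}=p_{|\Jc|}$, and a set $\Jc$ with $\Jc\cap S=\Ic$ is obtained by adjoining to $\Ic$ an arbitrary $i$-element subset of the $K-k$ users outside $S$, so
\[
\dim\Wc_{\Ic}=\sum_{\Jc:\,\Jc\cap S=\Ic}p_{|\Jc|}=\sum_{i=0}^{K-k}\binom{K-k}{i}\,p_{|\Ic|+i}=p^{\ast}_{|\Ic|},
\]
which depends only on $|\Ic|$. Hence the sub-channel is symmetric with parameters $(p^{\ast}_1,\dots,p^{\ast}_k)$. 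Moreover the rank of user $i$ in the sub-channel is $\sum_{\Ic\subseteq S:\,i\in\Ic}\dim\Wc_{\Ic}=\sum_{\Jc\ni i}\dim\Vc_{\Jc}=r_i\le N_i$, so the standing assumption of Theorem~\ref{thm:K_sym_nCSIR} is met. Applying that theorem to the sub-channel, every DoF tuple $D_{k,L}(p^{\ast}_1,\dots,p^{\ast}_k)$, $L\in\{0,\dots,k-1\}$, together with all of its permutations over the users of $S$, is achievable; padding with zeros for the users outside $S$ gives a permutation of $\bigl(D_{k,L}(p^{\ast}_1,\dots,p^{\ast}_k),0,\dots,0\bigr)$, and letting $S$ range over all $k$-subsets yields every permutation of that tuple. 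Finally, time-sharing among all these operating points (over $k\in[K]$, $L\in\{0,\dots,k-1\}$, and permutations) achieves their convex hull, which is exactly the region in the statement.

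The main obstacle is precisely the first step: making rigorous that deleting the inactive users turns the channel into a bona fide symmetric $k$-user channel with the claimed parameters. The content is purely linear-algebraic and combinatorial bookkeeping — the partition of the subsets of $[K]$ by the value of their intersection with $S$, together with the direct-sum identity $\Vc=\bigoplus_{\Jc}\Vc_{\Jc}$ — but some care is needed because the atoms $\Vc_{\Jc}$ are only defined up to a choice of basis, so one must check that a single global choice of the $\Vc_{\Jc}$ simultaneously induces a valid atomic decomposition for every sub-channel obtained in this way.
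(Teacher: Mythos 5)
Your proposal is correct and follows essentially the same route as the paper's own proof: deactivate $K-k$ users, merge the atomic subspaces $\Vc_{\Jc}$ according to their intersection with the active set to obtain a symmetric $k$-user channel with parameters $p_l^{\ast}=\sum_{i=0}^{K-k}\binom{K-k}{i}p_{l+i}$, apply Theorem~\ref{thm:K_sym_nCSIR}, and use permutations and time-sharing for the convex hull. The only difference is that you spell out the linear-algebraic bookkeeping (the partition of subsets by $\Jc\cap S$ and the direct-sum/independence checks for the merged subspaces $\Wc_{\Ic}$) that the paper treats implicitly via its two-user example, which is a welcome but not substantively different refinement.
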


\begin{proof}
When $k = K$, \eqref{eq:K_nCSIR} becomes $D_{K,L}(p_1,\dots,p_K)$, which can be achieved as stated in Theorem~\ref{thm:K_sym_nCSIR}.

\color{black}
When $k < K$, by ignoring the last $K- k$ users, we construct a new symmetric channel with $k$ users. For example, by ignoring User~$3$ in the symmetric 3-user channel, we obtain a two-user channel in which the private subspace of User~$1$ and User~$2$ are $\mathcal{V}_{\{1\}} + \mathcal{V}_{\{1,3\}}$ and $\mathcal{V}_{\{2\}} + \mathcal{V}_{\{2,3\}}$, respectively, both of dimension $p_1^* = p_1 + p_2$; whereas the common subspace of two users is $\mathcal{V}_{\{1,2\}} + \mathcal{V}_{\{1,2,3\}}$ of dimension $p_2^* = p_2 + p_3$. In general, the new $K$-user channel is characterized by the new set of parameters $(p_1^*,\dots,p_k^*)$, where 
$
p_l^* = \sum_{i = 0}^{K - k} \binom{K - k}{i}p_{l + i}$, $l \in [k]$. Then, applying Theorem~\ref{thm:K_sym_nCSIR} to this $k$-user symmetric channel, the rate region $D_{k,L}(p_1^*,\dots,p_k^*)$ is achievable. Therefore, $\big(D_{k,L}(p_1^{\prime},\dots,p_k^{\prime}), 0 ,\dots,0\big)$ is achievable for the original $K$-user symmetric channel. Any permutation of \eqref{eq:K_nCSIR} can be achieved by permuting the users' indices. 
\end{proof}
Fig.~\ref{fig:3-user-p-region} demonstrates the achievable DoF region for the symmetric $3$-user broadcast channel given in Theorem~\ref{thm:K_nCSIR} with $T = 24, r_{\{1\}} = r_{\{2\}} = r_{\{3\}} = 4, r_{\{1,2\}} = r_{\{1,3\}} = r_{\{2,3\}} = 2$, and $r_{\{1,2,3\}} = 1$.
\begin{figure}
\centering
\includegraphics[width=\Figwidth]{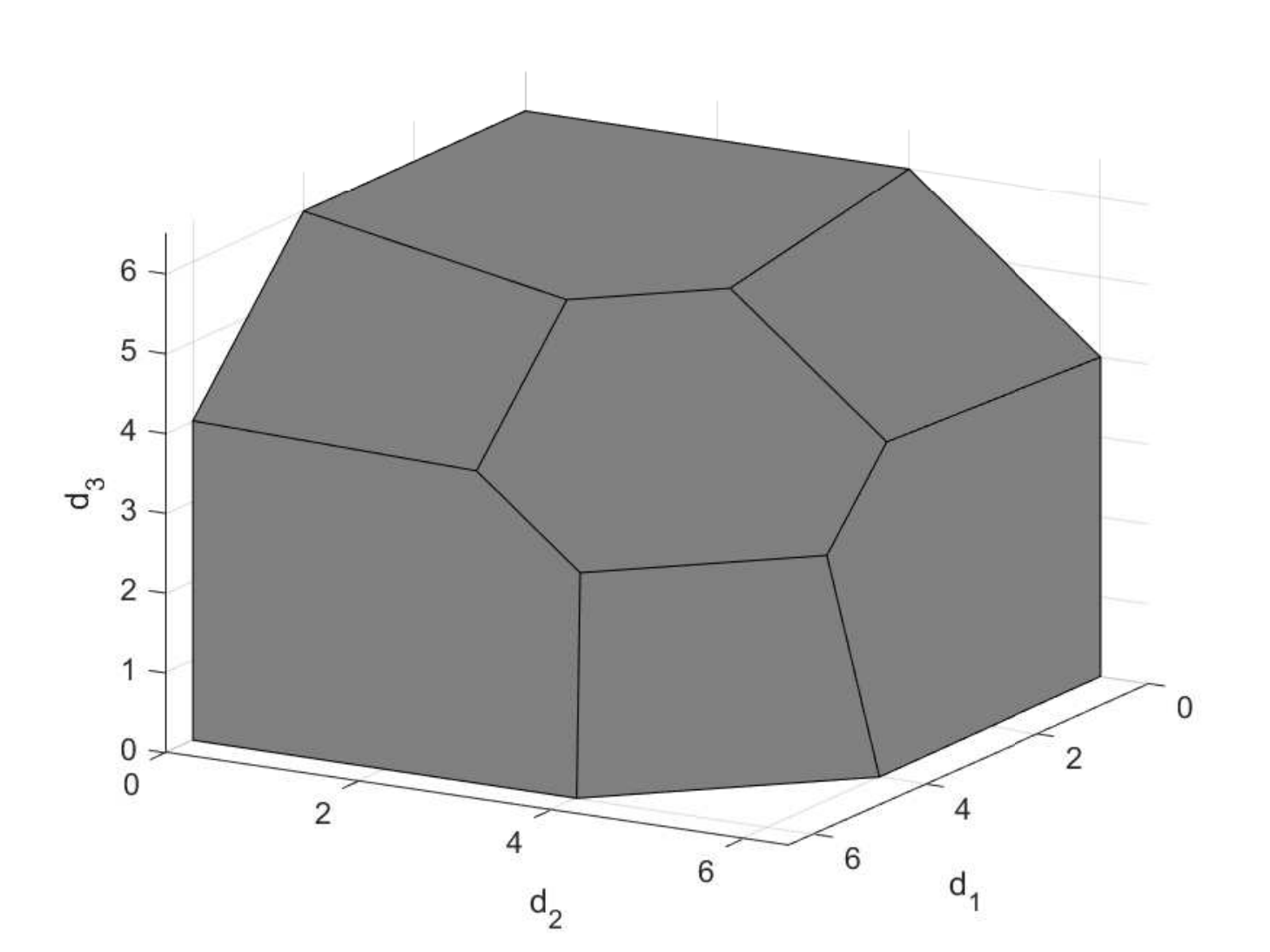}
\caption{An achievable DoF region of the symmetric $3$-user non-coherent broadcast channel with spatial correlation with $T = 24, r_{\{1\}} = r_{\{2\}} = r_{\{3\}} \eqdef p_1 = 4, r_{\{1,2\}} = r_{\{1,3\}} = r_{\{2,3\}} \eqdef p_2 = 2$, and $r_{\{1,2,3\}} \eqdef p_3 = 1$.}
\label{fig:3-user-p-region}
\end{figure}

We now broaden our analysis to $K$-user channels that may be {\em asymmetric}. 
The achievable scheme combines product superposition and rate splitting.
\begin{theorem} \label{thm:Kuser_CDIT_DoF_hybrid}
The $K$-user broadcast channel without free CSIR can achieve the DoF tuple $(d_1,\dots,d_K)$ given by
\begin{align} \label{eq:Kuser_CDIT_DoF_hybrid}
d_k = \sum_{\mathcal{J} \subset [k]:\ k \in \mathcal{J}} r_{\mathcal{J}}\Big(1 - \frac{r_k}{T}\Big) + \sum_{l = k + 1}^K \ \sum_{\mathcal{J} \subset [K]:\ k \in \mathcal{J},|\{k+1,\dots,K\} \cap \mathcal{J}| < 2} r_{\mathcal{J}}\frac{r_l - r_k}{T},
\end{align}
where it is assumed without loss of generality that $r_K \ge r_{K-1} \ge \dots \ge r_1$. 
\end{theorem}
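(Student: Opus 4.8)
The plan is to build an explicit pilot-and-data scheme that layers a per-user rate-splitting component on top of a family of pairwise product-superposition components (one per ordered pair of users), and then read off the DoF by the usual ``estimate the equivalent channel, then decode'' argument. First I would relabel the users so that $r_1 \le r_2 \le \dots \le r_K$ and, exactly as in the construction preceding Theorem~\ref{thm:K_sym_nCSIR} and used in the proof of Theorem~\ref{thm:K_wCSIR}, fix for every non-empty $\mathcal{J} \subset [K]$ a truncated unitary precoder $\tV_{\mathcal{J}} \in \CC^{M \times r_{\mathcal{J}}}$ whose columns form a basis of $\{\tv \in \Vc : \tv \in \mathcal{V}_{\mathcal{J}},\ \tv \perp \tV_{\mathcal{K}}\ \forall \mathcal{K} \neq \mathcal{J}\}$. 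These precoders satisfy $\tU_k^\H \tV_{\mathcal{J}} = \mathbf{0}$ for $k \notin \mathcal{J}$ and $\rank[\tU_k^\H \tV_{\mathcal{J}}] = r_{\mathcal{J}}$ for $k \in \mathcal{J}$, so a signal carried on $\tV_{\mathcal{J}}$ is seen precisely by the users in $\mathcal{J}$ with no cross-subspace interference; moreover, by Remark~\ref{remark:sketch_graph}, pilots placed in $\mathcal{V}_{\mathcal{J}}$ and $\mathcal{V}_{\mathcal{K}}$ with $\mathcal{J} \cap \mathcal{K} = \emptyset$ may be sent simultaneously, which is what lets user $k$ finish learning its channel over all of $\Span[\tU_k]$ (dimension $r_k$) in $r_k$ channel uses irrespective of $K$.

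Next I would specify the transmitted codeword. For each user $k$ its ``own'' layer is a rate-split transmission in the subspaces $\mathcal{V}_{\mathcal{J}}$ with $\mathcal{J} \subset [k]$ and $k \in \mathcal{J}$ (the parts of $\Span[\tU_k]$ invisible to all higher-indexed users): $r_k$ channel uses are spent so that user $k$ learns its equivalent channel over $\Span[\tU_k]$, and in the remaining $T-r_k$ channel uses user $k$ decodes the $\sum_{\mathcal{J}\subset[k]:\,k\in\mathcal{J}} r_{\mathcal{J}}$ private streams placed there; this contributes the first term of \eqref{eq:Kuser_CDIT_DoF_hybrid}. On top of this, for each pair $k<l$ the base station superimposes, via product superposition as in Proposition~\ref{thm:2fwo} and the proof of Theorem~\ref{thm:kuser_fully}, additional data for user $k$ during the ``window'' of length $r_l - r_k$ that exists because user $l$ needs $r_l$ pilot channel uses while user $k$ needs only $r_k$; over that window user $k$ already knows its equivalent channel in $\Span[\tU_k]$, so coherent data can be delivered to it in the subspaces $\mathcal{V}_{\mathcal{J}}$ with $k \in \mathcal{J}$ that are ``safe'', i.e. contain at most one higher-indexed user, $|\{k+1,\dots,K\}\cap\mathcal{J}|<2$ (the two surviving cases $\mathcal{J}\cap\{k+1,\dots,K\}=\emptyset$ and $=\{l'\}$ being exactly those in which the superposed data does not collide with the pilots or data of two competing higher users). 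Summing the $\sum_{\mathcal{J}:\,k\in\mathcal{J},\,|\{k+1,\dots,K\}\cap\mathcal{J}|<2} r_{\mathcal{J}}$ streams over the $K-k$ windows $r_{k+1}-r_k,\dots,r_K-r_k$ gives the second term of \eqref{eq:Kuser_CDIT_DoF_hybrid}. Decoding at each user is the standard chain: during its pilot columns estimate the relevant equivalent channel (an affine function of the underlying fading $\tG_k$, with the explicit linear-MMSE form of Lemma~\ref{lem:MMSE-estimator}), then decode its data columns; each decoded stream gives $\min(\cdot,N_k)$ DoF per channel use, and under the standing assumption $N_k \ge r_k$ (consistent with the rest of this section) these reduce to the stream counts appearing in \eqref{eq:Kuser_CDIT_DoF_hybrid}.

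The main obstacle is combinatorial: one must schedule the pilot blocks of the $2^K-1$ subspaces (an interference-graph colouring, per Remark~\ref{remark:sketch_graph}) together with the pairwise product-superposition windows so that (i) no two simultaneously transmitted pilots lie in intersecting subspaces, (ii) the codeword seen by each user $k$ genuinely factors into a pilot sub-block that determines its equivalent channel and a data sub-block, over exactly the claimed numbers of channel uses, and (iii) the data superimposed for user $k$ during user $l$'s window does not destroy any other user's ability to learn its own channel --- this is where the condition $|\{k+1,\dots,K\}\cap\mathcal{J}|<2$ is forced, and verifying it is the delicate step. As a check I would confirm that for $K=2$ the scheme produces DoF pairs of the same shape as $\mathcal{D}_3$--$\mathcal{D}_5$ in Theorem~\ref{thm:2user_CDIR_partiallyOverlapping}, and note --- consistent with the remark that hybrid superposition is ``not a generalization'' --- that in the fully nested case the tuple \eqref{eq:Kuser_CDIT_DoF_hybrid} need not match Theorem~\ref{thm:kuser_fully} (it can give $d_k=0$ for users $k<K-1$), so no claim of subsumption should be attempted.
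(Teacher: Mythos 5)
Your overall architecture --- per-subspace precoders $\tV_{\Jc}$, a rate-splitting layer for User~$k$ in the subspaces $\Vc_{\Jc}$ with $\Jc\subset[k]$ that higher-indexed users cannot see, and product superposition that feeds User~$k$ extra data during the longer training of higher-rank users --- is exactly the paper's approach (the paper carries it out explicitly for $K=3$ and then asserts the general case). The gap is in your accounting of the second term. You claim that for \emph{every} $l>k$, User~$k$ receives, over a window of length $r_l-r_k$, fresh data in \emph{every} ``safe'' subspace $\Jc\ni k$ with $|\{k+1,\dots,K\}\cap\Jc|<2$, and you sum these over $l$. This double-counts time--subspace resources in two ways: (i) the subspaces with no higher-indexed user ($\Jc\subset[k]$) already carry User~$k$'s rate-splitting data throughout the last $T-r_k$ channel uses, which contain all of these windows, so they cannot contribute a second time; (ii) the windows for different $l$ are nested intervals (essentially $[r_k+1,r_l]$), so a subspace shared with exactly one higher user is being credited with $K-k$ separate payloads over overlapping channel uses. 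What the paper's scheme actually delivers --- and what its explicit $3$-user computation shows, namely $d_1=r_{\{1\}}\big(1-\tfrac{r_1}{T}\big)+r_{\{1,2\}}\tfrac{r_2-r_1}{T}+r_{\{1,3\}}\tfrac{r_3-r_1}{T}$ and $d_2=(r_{\{1,2\}}+r_{\{2\}})\big(1-\tfrac{r_2}{T}\big)+(r_{\{2,3\}}+r_{\{1,2,3\}})\tfrac{r_3-r_2}{T}$ --- is extra data for User~$k$ during the $l$-th window only in the subspaces with $\{k+1,\dots,K\}\cap\Jc=\{l\}$, because that data must be embedded multiplicatively into User~$l$'s equivalent channel; this is strictly less than your count whenever a safe subspace contains no higher-indexed user or there are several higher users.

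Moreover, your looser accounting cannot be repaired by a smarter schedule. Take $K=3$, $T=24$, $r_{\{1\}}=r_1=1$, $r_{\{2\}}=r_{\{3\}}=10$, all other $r_{\Jc}=0$, and $N_k=r_k$. Reading \eqref{eq:Kuser_CDIT_DoF_hybrid} as you do gives $d_1=\tfrac{23}{24}+\tfrac{9}{24}+\tfrac{9}{24}=\tfrac{41}{24}>1$, which exceeds User~$1$'s optimal single-user DoF $N_1^*=1$ (Lemma~\ref{lem:single-user}), so no scheme achieves it; the condition in the statement is evidently meant to be tied to $l$ (i.e., $\{k+1,\dots,K\}\cap\Jc=\{l\}$), which is what the paper's construction supports. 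Your proposal takes the condition at face value, correctly flags the pilot/data non-collision bookkeeping as ``the delicate step,'' but does not carry it out --- and for the stated count it cannot be carried out --- so the proof is not complete as written; restricting the second sum to $\{k+1,\dots,K\}\cap\Jc=\{l\}$ and then following your construction would reduce it to the paper's argument.
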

\begin{proof}
For simplicity, let us focus on the $3$-user case. We assume without loss of generality that $r_3 \geq r_2 \geq r_1$. For each partition $\Vc_\Jc$, $\Jc \subset [3]$, we build a precoder $\tV_\Jc \in \CC^{M\times r_\Jc}$ as an orthonormal basis of $\Vc_\Jc$, thus $\tU_k^\H \tV_\Jc = \mathbf{0}$, $\forall k \notin \Jc$, and
$\rank[\tU_k^\H \tV_\Jc] = r_\Jc$, $\forall k \in \Jc$.
To combine rate splitting and product superposition, the transmitted signal is
\begin{equation}
\tX = [\tV_{\{1,2,3\}} \ \tV_{\{2,3\}} \ \tV_{\{1,3\}} \ \tV_{\{3\}}]\tilde{\tS}_2 \tS_3 + [\tV_{\{1,2,3\}} \ \tV_{\{2\}}]\tilde{\tS}_1\tS_2 + \tV_{\{1\}}\tS_1,
\end{equation}
with
\begin{align}
\tS_3 & = [\tI_{r_3} ~ \tS_{d,3}] \in \mathbb{C}^{r_3 \times T},\\
\tS_2 & = [\ZeroMat_{(r_{\{1,2\}} + r_{\{2\}}) \times (r_{\{1,2,3\}} + r_{\{2,3\}})} ~ \tI_{r_{\{1,2\}} + r_{\{2\}}} ~ \tS_{d2}] \in \mathbb{C}^{(r_{\{1,2\}} + r_{\{2\}}) \times T},\\
\tS_1 & = [\ZeroMat_{r_{\{1\}} \times (r_{\{1,2,3\}} + r_{\{1,3\}} + r_{\{1,2\}})} ~ \tI_r ~ \tS_{d1}] \in \mathbb{C}^{r_{\{1\}} \times T},\\
\tilde{\tS}_2 & = 
\begin{bmatrix}
\tI_{r_{\{1,2,3\}} + r_{\{2,3\}}} ~ \ZeroMat_{(r_{\{1,2,3\}} + r_{\{2,3\}}) \times (r_{\{1,2\}} + r_{\{2\}})} ~ \tilde{\tS}_{d2} \\
\ZeroMat_{r_{\{1,3\}} \times r_{\{1,2,3\}}} ~ \tI_{r_{\{1,3\}}} ~ \ZeroMat_{r_{\{1,3\}} \times (r_{\{1,2\}} + r_{\{1\}})} ~ \tilde{\tS}_{d21} \\
\bar{\tS}_2
\end{bmatrix} \in \mathbb{C}^{r_3 \times r_3},\\
\tilde{\tS}_1 & = 
\begin{bmatrix}
\ZeroMat_{r_{\{1,2\}} \times (r_{\{1,3\}} - r_{\{2,3\}})} ~ \tI_{r_{\{1,2\}}} ~ \ZeroMat_{r_{\{1,2\}} \times r_{\{1\}}} ~ \tilde{\tS}_{d1} \\
\bar{\tS}_1
\end{bmatrix} \in \mathbb{C}^{(r_{\{1,2\}} + r_{\{2\}}) \times (r_{\{1,2\}} + r_{\{2\}})},
\end{align} 
where $\bar{\tS}_2$ and $\bar{\tS}_1$ are designed to guarantee that $\tilde{\tS}_2$ and $\tilde{\tS}_1$ are respectively non-singular.

The received signal at User~$3$ is
\begin{equation}
\tY_3 = \tH_3 [\tV_{\{1,2,3\}} ~ \tV_{\{2,3\}} ~ \tV_{\{1,3\}} ~ \tV_{\{3\}}] \tilde{\tS}_2 [\tI_{r_3} ~ \tS_{d3}] + \tW_3.
\end{equation}
User~$3$ estimates the equivalent channel $\tH_3 [\tV_{\{1,2,3\}} ~ \tV_{\{2,3\}} ~ \tV_{\{1,3\}} ~ \tV_{\{3\}}]\tilde{\tS}_2$ in the first $r_3$ channel uses and then decode $\tS_{d3}$ to achieve full individual DoF $r_3 (1 - \frac{r_3}{T})$.

The received signal at User~$2$ is
\begin{align}
\tY_2 & = \tH_2 [\tV_{\{1,2,3\}} ~ \tV_{\{2,3\}}][\tI_{r_{\{1,2,3\}} + r_{\{2,3\}}} ~ \ZeroMat ~ \tilde{\tS}_{d2}]\tS_3 + \tH_2[\tV_{\{1,2\}} ~ \tV_{\{2\}}] \tilde{\tS}_1 \tS_2 + \tW_2 \\
& = \tH_2 [\tV_{\{1,2,3\}} ~ \tV_{\{2,3\}} ~ [\tV_{\{1,2\}} ~ \tV_{\{2\}}]\tilde{\tS}_1] 
\begin{bmatrix}
\tI_{r_{\{1,2,3\}} \times r_{\{2,3\}}} & \ZeroMat & [\tilde{\tS}_{d2} ~ \tB]  \\
\ZeroMat  & \tI_{r_{\{1,2\}} + r_{\{2\}}} & \tS_{d2}
\end{bmatrix} + \tW_2,
\end{align}
where $\tB \defeq [\tI_{r_{\{1,2,3\}} + r_{\{2,3\}}} ~ \ZeroMat ~ \tilde{\tS}_{d2}]\tS_{d3}$. User~$2$ can learn the equivalent channel $\tH_2 [\tV_{\{1,2,3\}} ~ \tV_{\{2,3\}} ~ [\tV_{\{1,2\}} ~ \tV_{\{2\}}]\tilde{\tS}_1] $ in the first $r_2$ channel uses and then decode both $\tilde{\tS}_{d2}$ and $\tS_{d2}$ to achieve $(r_{\{1,2,3\}} + r_{\{2,3\}})\frac{r_3 - r_2}{T} + (r_{\{1,2\}} + r_{\{2\}})(1 - \frac{r_2}{T})$ DoF in total.

The received signal at User~$1$ is
\begin{align}
\lefteqn{
\tY_1
}\notag \\
&= \tH_1[\tV_{\{1,2,3\}} ~ \tV_{\{1,3\}}] 
\begin{bmatrix}
\tI_{r_{\{1,2,3\}}} ~\ZeroMat_{r_{\{1,2,3\}} \times (r_{\{1,3\}} + r_{\{1,2\}} + r_{\{2\}})} \tilde{\tS}_{d2[1:r_{\{1,2,3\}}]} \\
\ZeroMat_{r_{\{1,3\}} \times r_{\{1,2,3\}}} ~ \tI_{r_{\{1,3\}}} ~ \ZeroMat_{r_{\{1,3\}} \times (r_{\{1,2\}} + r_{\{1\}})} ~ \tilde{\tS}_{d21}
\end{bmatrix} \tS_3 \notag\\
&\quad+ \tH_1 \tV_{\{1,2\}} [\ZeroMat_{r_{\{1,2\}} \times (r_{\{1,3\}} - r_{\{2,3\}})} ~ \tI_{r_{\{1,2\}}} \ZeroMat_{r_{\{1,2\}} \times r_{\{1\}}} ~ \tilde{\tS}_{d1}] \tS_2 + \tH_1 \tV_{\{1\}} \tS_1 + \tW_1
\\
&= \tH_1 [\tV_{\{1,2,3\}} ~ \tV_{\{1,3\}} ~ \tV_{\{1,2\}} ~ \tV_{\{1\}}]
\begin{bmatrix}
\tI_{\{1,2,3\}} & \ZeroMat & \ZeroMat & \ZeroMat & [\ZeroMat_{r_{\{1,2,3\}} \times (r_2 - r_1)} ~ \tilde{\tS}_{d2[1:r_{\{1,2,3\}}]} \ \tC] \\
\ZeroMat & \tI_{r_{\{1,3\}}} & \ZeroMat & \ZeroMat & [\tilde{\tS}_{d21} ~ \tD] \\
\ZeroMat & \ZeroMat & \tI_{r_{\{1,2\}}} & \ZeroMat & [\tilde{\tS}_{d1} ~ \tE] \\
\ZeroMat & \ZeroMat & \ZeroMat & \tI_{r_{\{1\}}} &  \tS_{d1} \\
\end{bmatrix} \!+\! \tW_1,
\end{align}
where 
$\tC \defeq [\tI_{\r_{\{1,2,3\}}} ~ \ZeroMat ~ \tilde{\tS}_{d2[1:r_{\{1,2,3\}}]}]\tS_{d3}$, $\tD \defeq [\ZeroMat_{r_{\{1,3\}} \times r_{\{1,2,3\}}} ~ \tI_{r_{\{1,3\}}} ~ \ZeroMat_{r_{\{1,3\}} \times (r_{\{1,2\}} + r_{\{1\}})} ~ \tilde{\tS}_{d21}]\tS_{d3}$, and $\tE \defeq [\ZeroMat_{r_{\{1,2\}} \times (r_{\{1,3\}} - r_{\{2,3\}})} \ \tI_{r_{\{1,2\}}} \ \ZeroMat_{r_{\{1,2\}} \times r_{\{1\}}} \ \tilde{\tS}_{d2}]$. User~$1$ learns the equivalent channel $\tH_1 [\tV_{\{1,2,3\}} ~ \tV_{\{1,3\}} ~ \tV_{\{1,2\}} ~ \tV_{\{1\}}]$ in the first $r_1$ channel uses then decode $\tilde{\tS}_{d21}$, $\tilde{\tS}_{d1}$ and $\tS_{d1}$ to achieve $r_{\{1,3\}}\frac{r_3 - r_1}{T} + r_{\{1,2\}}\frac{r_2 - r_1}{T} + r_{\{1\}}(1 - \frac{r_1}{T})$ DoF in total.
Therefore, the 3-user broadcast channel can achieve the DoF triple
\begin{equation}
\Big(r_3\big(1 - \frac{r_3}{T}\big),(r_{\{1,2,3\}} + r_{\{2,3\}})\frac{r_3 - r_2}{T} + (r_{\{1,2\}} + r_{\{2\}})\big(1 - \frac{r_2}{T}\big),r_{\{1,3\}}\frac{r_3 - r_1}{T} + r_{\{1,3\}}\frac{r_3 - r_1}{T} + r_{\{1\}}\big(1 - \frac{r_1}{T}\big)    \Big).
\end{equation}

Using similar reasoning, for the general $K$-user case such that $r_K \geq r_{K-1} \geq \dots \geq r_1$, the DoF in \eqref{eq:Kuser_CDIT_DoF_hybrid} is achievable.
\end{proof}

\color{black}

\section{Application in Massive MIMO}
\label{sec:mmimo}
In a massive MIMO system~\cite{massivemimobook}, the base station needs the CSI to beamform. However, due to the large number of antennas, the overhead for channel estimation is large. On the other hand, due to the limited space between the transmit antennas, the channel responses are normally spatially correlated. In this section, we exploit the spatial correlation to reduce the training overhead and compare the scheme with conventional training method.

We consider a multi-user massive MIMO system with a base station equipped with $M$ antennas communicating with $K$ single-antenna users with different spatial correlations. The channel vector corresponding to user $k \in [K]$ is $\rvVec{h}_k \in \CC^{M}$. 
The received signal of User~$k$ at time $t$ is $y(t) = {\bf h}_k^\T {\bf x}(t) + w(t)$, and during a coherence block is
\begin{equation}
\ty^\T_k = [y(1) \ y(2) \ \dots \ y(T)] = {\bf h}_k^\T {\bf X} + \tw^\T_k,
\end{equation}
where $\tX = [\tx(1)\ \tx(2) \ \dots \ \tx(T)]$ and $\tw_k = [w(1) \ w(2) \ \dots \ w(T)]^\T \sim \Cc\Nc(\mathbf{0},\Id_T)$. 
We assume that the system operates in FDD mode and focus on the downlink transmission. The transmission has two phases: the pilot phase and the data phase. During the pilot phase, pilot signal is sent so that the users can estimate the channel and then feedback the channel estimates to the base station. For simplicity and to focus on the gain of exploiting spatial correlation, we asume that feedback is perfect and instantaneous. After that, the base station sends data via beamforming.

\subsection{The Two-User Case} 
\label{sec:mmimo_2user}
We first consider the two-user scenario and assume that User~$1$ has uncorrelated channel and User~$2$ has spatially correlated channel of rank $r_2$. To extract an uncorrelated equivalent representation of ${\bf h}_2$, we define ${\bf g}_2 \in \CC^{r_2}$ via
\begin{equation}
{\bf h}_2 = \textbf{U} {\bf g}_2,
\end{equation}
where $\tU \defeq [{\bf u}_1 \ \dots \ {\bf u}_{M}]^\T \in \CC^{M \times r_2}$ is a truncated unitary matrix. 

Consider one coherence block. During the pilot phase, the transmitted signal is
\begin{equation}
\mathbf{X}_{[1:M]} = \sqrt{\rho} \ \diag[x_1,x_2,\dots,x_M],
\end{equation}
where $x_t=1$ for $t \in  \{1,2,...,r_2\}$, and $x_t$ is a Gaussian random variable following $\mathcal{CN}(0,1)$ for $t \in \{r_2+1,r_2+2,...,M\}$. In time slots $t = 1,2,\dots,r_2$, the received signal at User~$2$ is
$
y_2(t) = \sqrt{\rho} {\bf g}_2^\T {\bf u}_t  + w_2(t).
$
User~$2$ estimates ${\bf g}_2$ with a MMSE estimator
\begin{equation}
\hat{\bf g}_2 = \sqrt{\rho}[{\bf u}_1 \ \dots \ {\bf u}_{r_2}]^\H(\textbf{I}_{r_2}+\rho[{\bf u}_1 \ \dots\ {\bf u}_{r_2}][{\bf u}_1 \ \dots \ {\bf u}_{r_2}]^\H)^{-1}[y_2(1),\dots,y_2(r_1)]^\T.
\end{equation}
The estimation error is
$
\tilde{\bf g}_2 = {\bf g}_2 - \hat{\bf g}_2.
$
In time slots $t = r_2 + 1,\dots,M$, User~$2$ receives the signal
$
y_2(t)  = \sqrt{\rho}{\bf g}_2^\T{\bf u}_t x_t + w_2(t). 
$
User~$2$ uses the estimated channel to decode $[x_{r_1 + 1},\dots,x_M]$, achieving the rate
\begin{equation}
\Delta R_2 = \frac{M - r_2}{T} \E[\log\bigg(1 +  \frac{\rho}{\rho\E[\|\tilde{\bf g}_2^\T {\bf u}_t\|^2]+1} \|\hat{\bf g}_2^\T {\bf u}_t\|^2\bigg)].
\end{equation}

The received signal at User~$1$ in the pilot phase is
\begin{equation}
({\bf y}^\T_{1})_{[1:M]} = [y_1(1) \ \dots \ y_1(M)] = {\bf h}_1^\T\mathbf{X} + ({\bf w}^\T_{1})_{[1:M]}.
\end{equation}
User~$2$ estimates ${\bf h}_1^\T\mathbf{X}$ by $\frac{\rho}{\rho+1}({\bf y}^\T_{1})_{[1:M]}$ and feeds back to the base station. Because the base station knows $\tX$, it can obtain the estimation of ${\bf h}_1$ as
$
\hat{{\bf h}}_1 = \frac{\rho}{\rho + 1}\tX^{-\T} ({\bf y}_{1})_{[1:M]}. 
$
The estimation error is $\tilde{\bf h}_1 = {\bf h}_1 - \hat{\bf h}_1.$

Let $\hat{\bf h}_2 = \mathbf{U} \hat{\bf g}_2$ and $\tilde{\bf h}_2 = \mathbf{U} \tilde{\bf g}_1$. During the data phase, i.e. time slots $t = M+1,\dots,T$, the transmitted signal via conjugate beamforming is 
$
{\bf x}(t) = \sqrt{\frac{\rho}{2}}\frac{\hat{\bf h}^*_1}{\|\hat{\bf h}_1\|}s_1(t) + \sqrt{\frac{\rho}{2}}\frac{\hat{{\bf h}}^*_2}{\|\hat{{\bf h}}_2\|}s_2 (t),
$
where $s_k(t)$ is the data symbol for user $k \in \{1,2\}$ following the $\Cc\Nc(0,1)$ distribution. The received signals at the two users are
\begin{align}
y_{1}(t) &= \sqrt{\frac{\rho}{2}}\frac{{\bf h}_1^\T\hat{\bf h}^*_1}{\|\hat{\bf h}_1\|}s_1 (t) + \sqrt{\frac{\rho}{2}}\frac{{\bf h}_1^\T\hat{{\bf h}}^*_2}{\|\hat{{\bf h}}_2\|}s_2(t) + w_1(t), \\
y_{2}(t) &= \sqrt{\frac{\rho}{2}}\frac{{\bf h}_2^\T\hat{{\bf h}}^*_2}{\|\hat{{\bf h}}_2\|}s_2(t) + \sqrt{\frac{\rho}{2}}\frac{{\bf h}_2^\T \hat{\bf h}^*_1}{\|\hat{\bf h}_1\|}s_1(t) + w_{2}(t).
\end{align}
The achievable rate for User~$k$ is:
\begin{equation}
R_{k} = \bigg(1 - \frac{M}{T}\bigg)\E[\log\big(1 + \rho_{i}\|\hat{{\bf h}}_i\|^2\big)], \quad k=1,2,
\end{equation}
where the equivalent SNRs are defined as
$
\rho_1 \defeq \Big(\E[\frac{|\tilde{\bf h}_1^\T\hat{\bf h}^*_1|^2}{\|\hat{\bf h}_1\|^2} \!+\! \frac{|{\bf h}_1^\T\hat{{\bf h}}^*_2|^2}{\|\hat{{\bf h}}_2\|^2}] \!+\! \frac{2}{\rho}\Big)^{-1}
$ and 
$
\rho_2 \defeq \Big(\E[\frac{|\tilde{\bf h}_2^\T\hat{{\bf h}}^*_2|^2}{\|\hat{{\bf h}}_2\|^2} \!+\! \frac{|{\bf h}_2^\T \hat{\bf h}^*_1|^2}{\|\hat{\bf h}_1\|^2}] \!+\! \frac{2}{\rho}\Big)^{-1}.
$
The achievable sum rate is
\begin{equation}
R = R_1 + R_2 + \Delta R_2.
\end{equation}

For conventional transmission, the transmitter ignores the condition that two users need different number of pilots and sends $M$ pilots over $M$ time slots, the users estimate the channel and feedback to the transmitter. Then the transmitter communicates with the users via conjugate beamforming~\cite{massivemimobook}. Figure~\ref{fig_mmimo_2} shows the performance of the proposed scheme in comparison with the conventional one under Rayleigh fading, $M = 32$, $T = 64$, User~$1$ has fully correlated channel and User~$2$ has uncorrelated channel.
\begin{figure}[!ht]	
\centering
\includegraphics[width = \Figwidth]{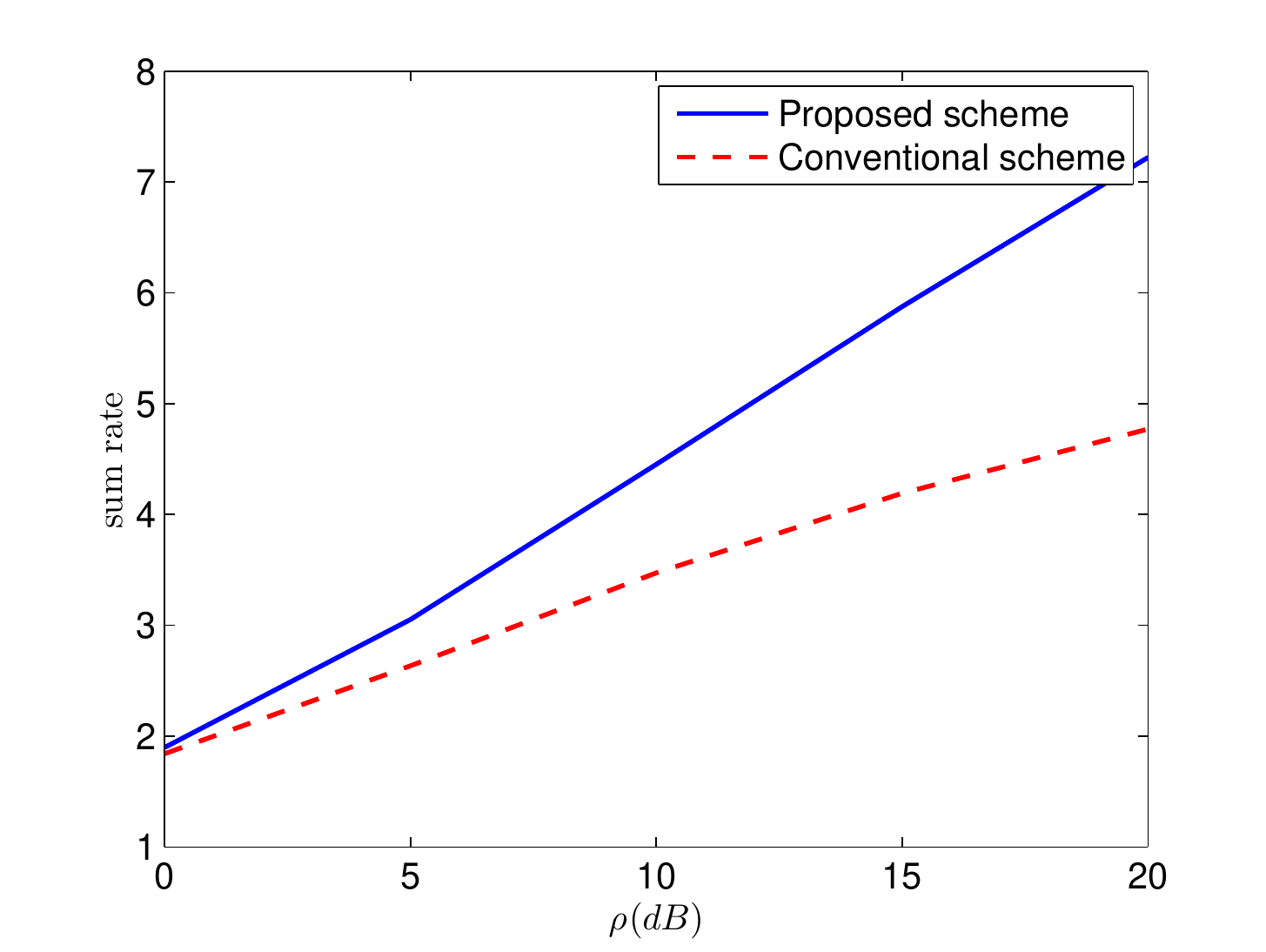}
\caption{The sum rate of the considered FDD massive MIMO system with the proposed scheme in comparison with the conventional scheme for $K = 2, M = 32$, $T = 64$, User~$1$ has fully correlated channel, and User~$2$ has uncorrelated channel.}
\label{fig_mmimo_2}
\end{figure}

We now generalize to the case where both users experience spatially correlated links and have partially overlapping eigenspaces. 
Recall that the eigendirections for the two users are $\tU_k$, where $\tU_k \in \mathbb{C}^{M \times r_k}$, for $k=1,2$. We assume without loss of generality that $r_1 \geq r_2$.
We find transmit eigendirections with orthonormal columns $\tV_{0}$ that are aligned with the common part of the two channel eigenspaces and $\tV_1, \tV_2$ that are aligned with the non-common parts, i.e., $\tV_{0} \in \mathbb{C}^{M \times r_{0}}$, $\tV_1 \in \mathbb{C}^{M \times (r_1 - r_{0})}$, $\tV_2 \in \mathbb{C}^{M \times (r_2 - r_{0})}$ such that
\begin{align}
\spanvzero & = \spanone \cap \spantwo,\\
\spanvone & = \spanone \cap \spantwo^\perp,\\
\spanvtwo & = \spantwo \cap \spanone^\perp.
\end{align}
Therefore, we can write ${\bf h}_k = [\tV_0 \ \tV_k] {\bf g}_k$ where ${\bf g}_k \in \CC^{r_k}$, $k = 1,2$.

The proposed scheme has two phases. The pilot phase has $r_1$ time slots, and the data phase has $T - r_1$ time slots. In the pilot phase, the base station sends pilots in the subspace of $\tV_{0}$ in time slots 1 to $r_0$, 
$
\tX_{[1:r_0]} = \sqrt{\rho}\tV_0^*.
$
The received signal at User~$k$ is
\begin{equation}
(\ty_k^\T)_{[1:r_0]} = \sqrt{\rho} {\bf h}_k^\T \tV_0^* +  (\tw_k^\T)_{[1:r_0]} = \sqrt{\rho}{\bf g}_k^\T \Bigg[\begin{matrix}
\tI_{r_0} \\
\ZeroMat_{(r_k - r_0)\times r_0}
\end{matrix}\Bigg]
+ (\tw_k^\T)_{[1:r_0]}.
\end{equation}
In the next $r_2 - r_{0}$ time slots, the base station sends pilots to two users simultaneously in subspaces $\tV_1$ and $\tV_2$, the transmitted signal is
\begin{equation}
\tX_{[r_0+1:r_2]} = \sqrt{\frac{\rho}{2}}\Bigg(\tV_1^*
\Bigg[\begin{matrix}
\tI_{r_2-r_0} \\
\ZeroMat_{(r_1 - r_2)\times (r_2-r_0)}
\end{matrix}\Bigg]
+ \tV_2^*\Bigg).
\end{equation}
The received signals at two users are:
\begin{align}
(\ty_1^\T)_{[r_0+1:r_2]} &= {\bf h}_1^\T \tX_{[r_0+1:r_2]}
+ (\tw_1^\T)_{[r_0+1:r_2]} 
= \sqrt{\frac{\rho}{2}}{\bf g}_1^\T 
\begin{bmatrix}
\ZeroMat_{r_0 \times (r_2-r_0)} \\
\tI_{r_2-r_0} \\
\ZeroMat_{(r_1 - r_2) \times (r_2-r_0)}
\end{bmatrix} 
+ (\tw_1^\T)_{[r_0+1:r_2]}, \\ 
(\ty_2^\T)_{[r_0+1:r_2]} &= {\bf h}_2^\T \tX_{[r_0+1:r_2]} + (\tw_2^\T)_{[r_0+1:r_2]}
= \sqrt{\frac{\rho}{2}}{\bf g}_2^\T
\begin{bmatrix}
\ZeroMat_{r_0 \times (r_2-r_0)} \\
\tI_{r_2-r_0}
\end{bmatrix} 
+ (\tw_2^\T)_{[r_0+1:r_2]}.
\end{align}
Based on $(\ty_2^\T)_{[1:r_2]}$, User~$2$ obtains a MMSE estimates $\hat{\bf g}_2 = \frac{\sqrt{\rho/2}}{\rho/2+1} (\ty_2)_{[1:r_2]}$ of ${\bf g}_2$ and feeds back to the base station. The estimation error is $\tilde{\tg}_2 = \tg_2 - \hat{\tg}_2$. 
In time slots $r_2 + 1$ to $r_1$, the base station sends pilots for User~$1$ in the remaining eigenspaces and sends data to User~$2$ via beamforming as
\begin{equation}
\tX_{[r_2+1:r_1]} = \sqrt{\frac{\rho}{2}} \Bigg(\tV_1^* \Bigg[ 
\begin{matrix}
\ZeroMat_{r_2 \times (r_1 - r_2)} \\
\tI_{r_1 - r_2}
\end{matrix} \Bigg] 
+  [\ZeroMat_{r_2 \times r_0} ~ \tV^*_2] \frac{\hat{{\bf g}}^*_2}{\|(\hat{{\bf g}}^\T_2)_{[r_0+1:r_2]}\|} \ts_{22}^\T \Bigg),
\end{equation} 
where $\ts_{22} \in \CC^{r_1-r_2}$ contains i.i.d. $\Cc\Nc(0,1)$ data symbols.
The received signal at User~$1$ is:
\begin{equation}
(\ty_1^\T)_{[r_2+1:r_1]}  = {\bf h}_1^\T \tX_{[r_2+1:r_1]}
+ (\tw_1^\T)_{[r_2+1:r_1]} 
= \sqrt{\frac{\rho}{2}} {\bf g}_1^\T 
\begin{bmatrix}
\ZeroMat_{r_2 \times (r_1 - r_2)} \\
\tI_{r_1 - r_2}
\end{bmatrix} 
 + (\tw_1^\T)_{[r_2+1:r_1]}.
\end{equation}
Based on $(\ty_1^\T)_{[1:r_1]}$, User~$1$ obtains a MMSE estimates $\hat{\bf g}_1 = \frac{\sqrt{\rho/2}}{\rho/2+1} (\ty_1)_{[1:r_1]}$ of ${\bf g}_1$ and feeds back to the base station. The estimation error is $\tilde{\tg}_1 = \tg_1 - \hat{\tg}_1$. The received signal at User~$2$ is
\begin{align}
(\ty_2^\T)_{[r_2+1:r_1]}  & = {\bf h}_2^\T \tX_{[r_2+1:r_1]} + (\tw_2^\T)_{[r_2+1:r_1]}   \\
& = \sqrt{\frac{\rho}{2}}{\bf g}_2^\T 
\begin{bmatrix}
\ZeroMat_{r_0 \times r_0} & \ZeroMat_{r_0 \times (r_2 - r_0)} \\
\ZeroMat_{(r_2 - r_0) \times r_0} & \tI_{r_2 - r_0}
\end{bmatrix}
\frac{\hat{{\bf g}}^*_2}{\|(\hat{{\bf g}}^\T_2)_{[r_0+1:r_2]}\|} \ts_{22}^\T + (\tw_2^\T)_{[r_2+1:r_1]} \\ 
&= \sqrt{\frac{\rho}{2}}\big\|(\hat{\bf g}_2^\T)_{[r_0+1:r_2]} \big\| \ts_{22}^\T
+ 
\sqrt{\frac{\rho}{2}}
\frac{(\tilde{\bf g}_2^\T)_{[r_0+1:r_2]} (\hat{{\bf g}}^*_2)_{[r_0+1:r_2]}}{\|(\hat{{\bf g}}^\T_2)_{[r_0+1:r_2]}\|} \ts_{22}^\T + (\tw_2^\T)_{[r_2+1:r_1]}.
\end{align}
User~$2$ decodes $\ts_{22}$  and achieves the rate
\begin{equation}
\Delta R_2 = \frac{r_1 - r_2}{T}\EE \bigg[\log\bigg(1 + \frac{\frac{\rho}{2}\|(\hat{\bf g}_2^\T)_{[r_0+1:r_2]} \|^2}{\frac{\rho}{2}\EE \big[\frac{|(\tilde{\bf g}_2^\T)_{[r_0+1:r_2]} (\hat{{\bf g}}^*_2)_{[r_0+1:r_2]}|^2}{\|(\hat{{\bf g}}^\T_2)_{[r_0+1:r_2]}\|^2}\big]+ 1}\bigg)\bigg]. 
\end{equation}

With the help of the feedback, the base station generates estimation for the two channels via
$
\hat{\bf h}_1 = [\tV_{0} \ \tV_1]\hat{{\bf g }}_1$, and
$\hat{{\bf h}}_2 = [\tV_{0} \ \tV_2]\hat{{\bf g }}_2.
$
The estimation errors are $\tilde{\bf h}_1 = \th_1 - \hat{\th}_1 $ and $\tilde{\bf h}_2 = \th_2 - \hat{\th}_2$. 
During the data phase, the transmitted signal via conjugate beamforming is 
\begin{equation}
{\bf X}_{[r_1+1:T]} = \sqrt{\frac{\rho}{2}}\frac{\hat{\bf h}^*_1}{\|\hat{\bf h}_1\|}\ts_1^\T + \sqrt{\frac{\rho}{2}}\frac{\hat{{\bf h}}^*_2}{\|\hat{{\bf h}}_2\|}\ts_2^\T.
\end{equation}
where $\ts_k \in \CC^{T-r_1}$, $k = 1,2,$ contains i.i.d. $\Cc\Nc(0,1)$ data symbols for User~$k$.
The received signals at the two users are
\begin{align}
(\ty_1^\T)_{[r_1+1:T]} &= \sqrt{\frac{\rho}{2}}\frac{{\bf h}_1^\T\hat{\bf h}^*_1}{\|\hat{\bf h}_1\|}\ts_1^\T + \sqrt{\frac{\rho}{2}}\frac{{\bf h}_1^\T\hat{{\bf h}}^*_2}{\|\hat{{\bf h}}_2\|}\ts_2^\T + (\tw_1^\T)_{[r_1+1:T]}, \\
(\ty_2^\T)_{[r_1+1:T]} &= \sqrt{\frac{\rho}{2}}\frac{{\bf h}_2^\T\hat{{\bf h}}^*_2}{\|\hat{{\bf h}}_2\|}\ts_2^\T + \sqrt{\frac{\rho}{2}}\frac{{\bf h}_2^\T\hat{\bf h}^*_1}{\|\hat{\bf h}_1\|}\ts_1^\T + (\tw_2^\T)_{[r_1+1:T]}.
\end{align}
User~$k$ decodes $\ts_k$ and achieves the rate
\begin{equation}
R_{k} = \bigg(1 - \frac{r_1}{T}\bigg)\E[\log\big(1 + \rho_{k}\|\hat{\bf h}_k\|^2\big)], \quad k = 1,2,
\end{equation}
with the equivalent SNRs $\rho_1 \defeq \Big(\E[\frac{|\tilde{\bf h}_1^\T\hat{\bf h}^*_1|^2}{\|\hat{\bf h}_1\|^2} + \frac{|{\bf h}_1^\T\hat{{\bf h}}^*_2|^2}{\|\hat{{\bf h}}_2\|^2}]+ \frac{2}{\rho}\Big)^{-1}$ and $\rho_2 \defeq \Big(\E[\frac{|\tilde{\bf h}_2^\T\hat{\bf h}^*_2|^2}{\|\hat{\bf h}_2\|^2} + \frac{|{\bf h}_2^\T\hat{{\bf h}}^*_1|^2}{\|\hat{{\bf h}}_1\|^2}]+ \frac{2}{\rho}\Big)^{-1}$.

The achievable sum rate is:
\begin{equation}
R = R_1 + R_2 + \Delta R_2.
\end{equation}

In the next subsections, we consider the $K$-user case. In this case, for a general (irregular) correlation structure, the signal design matching the correlations is complicated. Therefore, in order to emphasize the gain of correlation-based rate splitting and product superposition, we focus on some special configurations of the eigenspaces. 

\subsection{The $K$-User Case with Symmetric Eigenspace} \label{sec:mmimo_Kuser_symmetric}
The first considered special eigenspace configuration for the $K$-user case is the symmetric correlation structure as in ~\ref{sec:Kuser_noCSIR_partial}. We first present the case when $K = 3$. Under the symmetry assumption, we have $r_{\{1\}} = r_{\{2\}} = r_{\{3\}} \triangleq p_1$, $r_{\{1,2\}} = r_{\{1,3\}} = r_{\{2,3\}} \triangleq p_2$, and $r_{\{1,2,3\}} \triangleq p_3$.

Define the matrix $\tV$ as the collection of all the eigendirection vectors, which means
\begin{equation}
\tV = \big[\tV_{\{1\}} \ \tV_{\{2\}} \ \tV_{\{3\}}\ \tV_{\{1,2\}}\ \tV_{\{1,3\}}\ \tV_{\{2,3\}} \ \tV_{\{1,2,3\}}\big]
\end{equation}
where $\tV_\Jc \in \CC^{M \times r_\Jc}$ contains the eigenvectors spanning the subspaces of all users in $\Jc$.
Now we decompose the channel as ${\bf h}_k = [\tV_{\Jc}]_{k\in \Jc} {\bf g}_k$ where ${\bf g}_k \in \mathbb{C}^{r_k}$. For example, $\th_1 = \big[\tV_{\{1\}} \ \tV_{\{1,2\}}\ \tV_{\{1,3\}}\ \tV_{\{1,2,3\}}\big] \tg_1.$

In the first $p_1$ time slots, the base station sends pilots to three users simultaneously in subspaces $\tV_{\{1\}}$ ,$\tV_{\{2\}}$ and $\tV_{\{3\}}$. The transmitted signal is
\begin{equation}
\tX_{[1:p_1]} = \sqrt{\frac{\rho}{3}}(\tV_{\{1\}}^* + \tV_{\{2\}}^* + \tV_{\{3\}}^*).
\end{equation}

The received signal at User~$k$ is
\begin{equation}
(\ty_k^\T)_{[1:p_1]} = \sqrt{\frac{\rho}{3}}{\bf h}_k^\T \tV_{\{k\}}^* + (\tw_k^\T)_{[1:p_1]}.
\end{equation}
User~$k$ estimates ${\bf h}_k^\T \tV_{\{k\}}^*$ to obtain $\hat{\bf h}_k^\T \tV_{\{k\}}^*$ and feeds back to the base station. The estimation error is $\tilde{\bf h}_k^\T \tV_{\{k\}}^* = {\bf h}_k^\T \tV_{\{k\}}^* - \hat{\bf h}_k^\T \tV_{\{k\}}^*$. 
In the next $3p_2$ time slots, the base station sends pilots to users $i$ and $j$ in the subspace of $\tV_{\{i,j\}} (i \neq j)$ and data to the remaining user via conjugate beamforming. For example, in the first $p_2$ time slots, it sends
\begin{equation}
\tX_{[p_1+1:p_1+p_2]} = \sqrt{\frac{\rho}{2}}\tV_{\{2,3\}}^*  + \sqrt{\frac{\rho}{2}} \tV_{\{1\}}^* \frac{\tV_{\{1\}}^\T \hat{\bf h}^*_1}{\|\tV_{\{1\}}^\T \hat{\bf h}^*_1\|}  \ts_{11}^\T,
\end{equation} 
where $\ts_{11} \in \CC^{p_2}$ contains i.i.d. $\Cc\Nc(0,1)$ data symbols.
The received signal at User~$2$ or User~$3$ is
\begin{equation}
(\ty^\T_k)_{[p_1+1:p_1+p_2]} = \sqrt{\frac{\rho}{2}}{\bf h}_k^\T \tV_{\{2,3\}}^* + (\tw^\T_k)_{[p_1+1:p_1+p_2]}, \quad k = 2,3.
\end{equation}
User~$k~(k = 2,3)$ estimates ${\bf h}_k^\T \tV_{\{2,3\}}^*$ to obtain $\hat{\bf h}_k^\T \tV_{\{2,3\}}^*$ and feeds back to the base station. The received signal at User~$1$ is
\begin{align}
(\ty^\T_1)_{[p_1+1:p_1+p_2]} &= \sqrt{\frac{\rho}{2}} {\bf h}_1^\T \tV_{\{1\}}^* \frac{\tV_{\{1\}}^\T \hat{\bf h}^*_1}{\|\tV_{\{1\}}^\T \hat{\bf h}^*_1\|}  \ts_{11} + (\tw^\T_1)_{[p_1+1:p_1+p_2]} \\
&= \sqrt{\frac{\rho}{2}}  \|\tV_{\{1\}}^\T \hat{\bf h}^*_1\| \ts_{11} + \sqrt{\frac{\rho}{2}}  \tilde{\bf h}_1^\T \tV_{\{1\}}^* \frac{\tV_{\{1\}}^\T \hat{\bf h}^*_1}{\|\tV_{\{1\}}^\T \hat{\bf h}^*_1\|}  \ts_{11} + (\tw^\T_1)_{[p_1+1:p_1+p_2]}.
\end{align}
User~$1$ decodes $\ts_{11}$ and achieves the rate
\begin{equation}
\Delta R_1 = \frac{p_2}{T}\EE \bigg[\log\bigg(1 + \frac{\frac{\rho}{2}\|\tV_{\{1\}}^\T \hat{\bf h}^*_1\|^2}{\frac{\rho}{2}\EE\big[\frac{\tilde{|\bf h}_1^\T \tV_{\{1\}}^*\tV_{\{1\}}^\T \hat{\bf h}^*_1|^2}{\|\tV_{\{1\}}^\T \hat{\bf h}^*_1\|^2}\big] + 1}\bigg)\bigg].
\end{equation}
In the subsequent $p_2$ time slots, the channel coefficients in $\Vc_{\{1,3\}}$, $\Vc_{\{1,2\}}$ are estimated and and fed back, and the achievable rate for User~$2$ and User~$3$ can be calculated similarly.

In the following $p_3$ time slots, the base station transmits pilots in $\Vc_{\{1,2,3\}}$ as $
\tX_{[3p_2+1:3p_2+p_3]} = \sqrt{\rho}\tV^*_{\{1,2,3\}}.
$
User~$k$ receives $(\ty^T_k)_{[3p_2+1:3p_2+p_3]} = \sqrt{\rho}{\bf h}_k^\T \tV^*_{\{1,2,3\}} + (\ty^T_k)_{[3p_2+1:3p_2+p_3]}$, estimates ${\bf h}_k^\T \tV^*_{\{1,2,3\}}$ to obtain $\hat{\bf h}_k^\T \tV^*_{\{1,2,3\}}$ and feeds back to the base station.
From the feedbacks in the first $T_\tau = p_1 + 3p_2 + p3$ time slots, the base station obtains estimates $\hat{\th}_k$ of $\th_k$, $k = 1,2,3$. The estimation error is $\tilde{\th}_k = \th_k - \hat{\th}_k$.

During the data phase, the transmitted signal via conjugate beamforming is 
\begin{equation}
{\bf X}_{[T_\tau+1:T]} = \sqrt{\frac{\rho}{3}}\frac{\hat{\bf h}^*_1}{\|\hat{\bf h}_1\|}\ts_1^\T + \sqrt{\frac{\rho}{3}}\frac{\hat{{\bf h}}^*_2}{\|\hat{{\bf h}}_2\|}\ts_2^\T + \sqrt{\frac{\rho}{3}}\frac{\hat{{\bf h}}^*_3}{\|\hat{{\bf h}}_3\|}\ts_3^\T,
\end{equation}
where $\ts_k\in \CC^{T-T_\tau}$ contains i.i.d. $\Cc\Nc(0,1)$ data symbols for User~$k$. 
The received signals at User~$1$ is
\begin{align}
(\ty_1^\T)_{[T_\tau+1:T]} = \sqrt{\frac{\rho}{3}}\frac{{\bf h}_1^\T\hat{\bf h}^*_1}{\|\hat{\bf h}_1\|}\ts_1^\T + \sqrt{\frac{\rho}{3}}\frac{{\bf h}_1^\T\hat{{\bf h}}^*_2}{\|\hat{{\bf h}}_2\|}\ts_2^\T + \sqrt{\frac{\rho}{3}}\frac{{\bf h}_1^\T\hat{{\bf h}}^*_3}{\|\hat{{\bf h}}_3\|}\ts_3^\T + (\tw_1^\T)_{[T_\tau+1:T]}.
\end{align}
User~$1$ decodes $\ts_1$ and achieves the rate
\begin{equation}
R_{1} = \bigg(1 - \frac{p_1 + 3p_2 + p_3}{T}\bigg)\EE \bigg[\log\bigg(1 + \frac{\frac{\rho}{3}\|\hat{\bf h}_1\|^2}{\frac{\rho}{3}\E[\frac{|\tilde{\bf h}_1^\T\hat{\bf h}^*_1|^2}{\|\hat{\bf h}_1\|^2} + \frac{|{\bf h}_1^\T\hat{{\bf h}}^*_2|^2}{\|\hat{{\bf h}}_2\|^2} + \frac{|{\bf h}_1^\T\hat{{\bf h}}^*_3|^2}{\|\hat{{\bf h}}_3\|^2}]+ 1}\bigg)\bigg].
\end{equation}
The achievable rate of User~$2$ and User~$3$ can be calculated in the same way.

The achievable sum rate is
\begin{equation}
R = \sum_{k = 1}^3 (R_k + \Delta R_k).
\end{equation}

Now we extend this scheme to the $K$-user scenario. Following the signaling structure developed in the $3$-user case, the transmit scheme has three phases.  In the first phase, some pilot signals are transmitted. In the second phase, the remaining pilots are transmitted while at the same time, some users also receive data. In the third phase, channel state is known (due to pilots transmitted in the earlier two phases) and the base station beamforms to all users. The pilots and data arrangement is similar to the achievable scheme for Theorem~\ref{thm:K_sym_nCSIR}. 

The first phase has $\sum_{l = 1}^{\floor{K/2}} \chi(G(K,l))p_l$ time slots, in the first $\chi (G(K,1)) p_1 = p_1 $ time slots, the base station sends $\frac{\rho}{K}\sum_{i = 1}^K \tV_{\{i\}}$. In the same way, during the following time slots, the base station sends pilots which will not interfere with each other. The users estimate the channel coefficients in these subspaces and feed back to the base station.

The second phase has $\sum_{l = \floor{K/2} + 1}^{K} \chi(G(K,l))p_l$ time slots, where $\chi(G(K,l)) = \binom{K}{l}$. In this phase, the base station sends pilot in some eigendirections and simultaneously beamforms to the users which are not interfered by the pilots. For example, when sending the pilots in $\Vc_{\{1,2,\dots,K-2\}}$, the transmitted signal is
\begin{equation}
\tX = \sqrt{\frac{\rho}{3}}\tV_{\{1,2,\dots,K-2\}}^* + \sqrt{\frac{\rho}{3}}\tV_{\{K-1\}}^* \frac{\tV_{\{K-1\}}^\T \hat{\bf h}^*_{K-1}}{\|\tV_{\{K-1\}}^\T \hat{\bf h}^*_{K-1}\|}\ts_{K-1}^\T+
\sqrt{\frac{\rho}{3}}\tV_{\{K\}}^* \frac{\tV_{\{K\}}^\T \hat{\bf h}^*_{K}}{\|\tV_{\{K\}}^\T \hat{\bf h}^*_{K}\|} \ts_{K}^\T,
\end{equation}
where the equivalent channels $\tV_{\{K-1\}}^\T \hat{\bf h}^*_{K-1}$ and $\tV_{\{K\}}^\T \hat{\bf h}^*_{K}$ have been estimated and fed back in the first phase.
During these time slots, User~$1$ to User~$K-2$ can estimate their channel coefficients in the direction of $\tV_{\{1,2,\dots,K-2\}}$, while user~$K-1$ can decode $\ts_{K-1}$ and User~$k$ can decode $\ts_{K}$.

In the third phase, which has $T - T_\tau(K,0)$ time slots, the base station beamforms to all users with the estimated channel by sending
\begin{equation}
{\bf x}(t) = \sqrt{\frac{\rho}{K}}\sum_{k = 1}^K \frac{\hat{{\bf h}}_k}{\|\hat{{\bf h}}_k\|}s_k(t)
\end{equation}
at time slot $t = T - T_\tau(K,0) + 1, \dots, T$.

Finally, the total rate that can be achieved is the sum of the rates achieved during phases two and three.

\subsection{The $K$-User Case with On-Off Correlation} \label{sec:mmimo_Kuser}
The second special correlation configuration is motivated as follows. Experience shows that small values of correlation are often inconsequential to the rate and thus can be treated as uncorrelation in signal design. Furthermore, interference-free pilot reuse is only made possible under rank deficient correlation matrices, i.e., some transmit antenna gains are fully deterministic conditioned on the others. Therefore, we consider a $K$-user channel where the pairs of transmit antennas are either uncorrelated or fully correlated for each user, and refer to it as {\it on-off correlation}. Specifically, consider the channel vector ${\bf h}_k = [h_{k,1} \ h_{k,2}\ \dots\ h_{k,M}]^\T$ of any User~$k$, for any $i,j \in [M]$, we assume that either $h_{k,i} = h_{k,j}$ (fully correlated) or $\E[h_{k,i}^*h_{k,j}] = 0$ (uncorrelated). 

Consider the case where the channel coefficients of User~$1$ are fully correlated, the channel coefficients of User~$k$ are uncorrelated, while the remaining $K - 2$ users have fully correlated channel coefficients with respect to some antennas. Let us group the antennas into $L + 1$ groups: the first group has the first antenna, the $l$-th group has $\frac{M-1}{L}$ antennas from $\frac{(M-1)(l-1)}{L}+2$ to $\frac{(M-1)l}{L}+1$. We assign the users to each group as follows: User~$k$ is assigned to group $l$ if the channel coefficients of User~$k$ corresponding to the antennas in group $l$ are fully correlated, i.e $h_{k,\frac{(M-1)(l-1)}{L}+1+i}=h_{k,\frac{(M-1)(l-1)}{L}+1+j}$, for $1 \leq i,j \leq \frac{M-1}{L}$. Because User~$1$ has fully correlated channel coefficients, it is assigned to every group. 

The base station transmits the following signal in the pilot phase:
\begin{equation}
\mathbf{X}_{[1:M]} = \sqrt\rho \ \diag[v_0,v_1,v_1{\bf u}_1^\T,v_2,v_2{\bf u}_2^\T,\dots,v_L,v_L{\bf u}^\T_L],
\end{equation}
where $v_0 = 1$ and $v_l \in \mathbb{C}$, ${\bf u}_l \in \mathbb{C}^{\frac{M-1}{L}-1}$, $l = 1,2,...,L$ are mutually independent random variables following the distribution $\mathcal{CN}(0,1)$. Here $\{v_l\}_{l=1}^L$ are the symbols for User~$1$ and ${\bf u}_l$ is for one of the users in group $l$. The received signal at User~$k$ is:
\begin{equation}
(\ty^\T_k)_{[1:M]} = \sqrt{\rho} {\bf h}_k^\T \textbf{X} + (\tw^\T_k)_{[1:M]}.
\end{equation}
User~$k$ estimates $\textbf{X}^\T{\bf h}_k$ via MMSE and feeds back the estimated version $\frac{\sqrt{\rho}}{\rho+1} ({\ty_k})_{[1:M]}$ to the base station. Because the base station knows $\textbf{X}$, it obtains an estimated version of the channel of User~$k$ as $\hat{\th}_k = \frac{\sqrt{\rho}}{\rho+1} \textbf{X}^{-\T} ({\ty_k})_{[1:M]}$. The estimation error is $\tilde{\th}_k = \th_k - \hat{\th}_k$.

Denote the fully correlated channel coefficient of User~$1$ as $\bar{h}_1 \defeq h_{1,1} = h_{1,2} = \dots = h_{1,M}$. In the first time slot, User~$1$ receives $y_1(1) = \sqrt{\rho} \bar{h}_1 + w_1(1)$. It estimates $\bar{h}_1$ by $\hat{\bar{h}}_1 = \frac{\sqrt{\rho}}{\rho+1} y_1(1)$ and the estimation error is $\tilde{\bar{h}}_1 = \bar{h}_1 - \hat{\bar{h}}_1$. We have that $\hat{\bar{h}}_1 \sim \Cc\Nc\big(0,\frac{\rho}{\rho+1}\big)$ and $\tilde{\bar{h}}_1 \sim  \Cc\Nc\big(0,\frac{1}{\rho+1}\big)$. In the time slots $\frac{(M-1)(l-1)}{L}+2$, $l = 1,\dots,L$, User~$1$ receive $y_1(\tfrac{(M-1)(l-1)}{L}+2) = \sqrt{\rho} \bar{h}_1 v_l + w_1(\tfrac{(M-1)(l-1)}{L}+2)$. User~$1$ can decode $\{v_l\}$ and achieves the rate
\begin{align}
\Delta R_1 &= \frac{L}{T}\E[{\log\Bigg(1 + \frac{\rho \big|\hat{\bar{h}}_1 \big|^2}{\rho\E[\big|\tilde{\bar{h}}_1\big|^2]+1}\Bigg)}] \\
&= \frac{L}{T}\E[{\log\bigg(1 + \frac{\rho(\rho+1)}{2\rho+1}\big|\hat{\bar{h}}_1\big|^2\bigg)}] \\
&= \frac{L}{T} \log(e) \exp\Big(\frac{2\rho+1}{\rho^2}\Big)E_1\Big(\frac{2\rho+1}{\rho^2}\Big),
\end{align}
where $E_1(x) \defeq \int_{x}^{\infty}\frac{e^{-t}}{t} {\rm d} t$ is the exponential integral function.

In addition, if User~$k$ is assigned to group $l+1$, $l = 1,\dots,L$, denote $h_{k,\frac{(M-1)l}{L}+2} v_{l+1} = h_{k,\frac{(M-1)l}{L}+3} v_{l+1} = \dots = h_{k,\frac{(M-1)(l+1)}{L}+1} v_{l+1} \defeq \bar{h}_{k,l+1}$. In time slot $\frac{(M-1)l}{L}+2$, the received signal of User~$k$ is
\begin{equation}
y_k\big(\tfrac{(M-1)l}{L}+2\big) = \sqrt{\rho} \bar{h}_{k,l+1} + w_k\big(\tfrac{(M-1)l}{L}+2\big).
\end{equation}
User~$k$ can estimate the equivalent channel $\bar{h}_{k,l+1}$ by $\hat{\bar{h}}_{k,l+1} = \frac{\sqrt{\rho}}{\rho+1}y_k\big(\tfrac{(M-1)l}{L}+2\big)$ and the estimation error is $\tilde{\bar{h}}_{k,l+1} = \bar{h}_{k,l+1} - \hat{\bar{h}}_{k,l+1}$. We have that $\E[|\hat{\bar{h}}_{k,l+1}|^2] = \frac{\rho}{\rho+1}$ and $\E[|\hat{\bar{h}}_{k,l+1}|^2] = \frac{1}{\rho+1}$.
In the next $\frac{M-1}{L}-1$ time slots, User~$k$ receives
\begin{align}
y_k\big(\tfrac{(M-1)l}{L}+2+t\big) &= \sqrt{\rho} \bar{h}_{k,l+1} u_{l+1,t} + w_k\big(\tfrac{(M-1)l}{L}+2+t\big)
\end{align}
for $t = 1,2,\dots,\frac{M-1}{L}-1$.
Therefore, User~$k$ can decode ${\bf u}_{l+1}$ and achieve the rate
\begin{align}
\Delta R_{l+1} &= \frac{(M-1)/L-1}{T}\E[{\log\Bigg(1 + \frac{\rho \big|\hat{\bar{h}}_{k,l+1}\big|^2}{\rho\E[\big|\tilde{\bar{h}}_{k,l+1}\big|^2]+1}\Bigg)}] \\
&= \frac{(M-1)/L-1}{T}\E[{\log\bigg(1 + \frac{\rho(\rho+1)}{2\rho+1}\big|\hat{\bar{h}}_{k,l+1}\big|^2\bigg)}].
\end{align}

In the beamforming phase, the base station beamforms to the users according to the estimated channel with equal power. The transmitted signal is
\begin{equation}
{\bf X}_{[M+1:T]} = \sqrt{\frac{\rho}{K}}\sum_{k=1}^{K} \frac{\hat{\bf h}^*_k}{\|\hat{\bf h}_k\|} \ts_k^\T,
\end{equation}
where $\ts_k\in \CC^{T-M}$ contains i.i.d. $\Cc\Nc(0,1)$ data symbols for User~$k$. 
The received signal at User~$k$ is:
\begin{align}
(\ty^\T_k)_{[M+1:T]} &= {\bf h}_k^\T {\bf X}_{[M+1:T]} + (\tw^\T_k)_{[M+1:T]} \\
&= \sqrt{\frac{\rho}{K}} \sum_{l=1}^{K} \frac{{\bf h}_k^\T \hat{\bf h}_l}{\|\hat{\bf h}_l\|} \ts_l^\T  + (\tw^\T_k)_{[M+1:T]}. 
\end{align}

User~$k$ decodes $\ts_k$ and achieves the rate
\begin{equation}
R_{i} = \bigg(1 - \frac{M}{T}\bigg)\EE \bigg[\log\bigg(1 + \frac{\frac{\rho}{K}\|\hat{{\bf h}}_k\|^2}{\frac{\rho}{K}\E[\frac{|\tilde{\th}_k^\T \hat{\th}^*_k|^2}{\|\hat{\th}_k\|^2} + \sum_{l\ne k}\frac{|{\th}_k^\T \hat{\th}^*_l|^2}{\|\hat{\th}_l\|^2}] + 1}\bigg)\bigg].
\end{equation}

Finally, the achievable sum rate is:
\begin{equation}
R = \sum_{k = 1}^{K} R_i + \sum_{l = 1}^{L+1} \Delta R_l.
\end{equation}

Figure~\ref{fig_mmimo_10} shows the performance gain of the proposed scheme with respect to the conventional one under the following configuration: $K = 10$, $L = 9$, $M = 64$, $T = 128$.
\begin{figure}[!ht]
\centering
\includegraphics[width = \Figwidth]{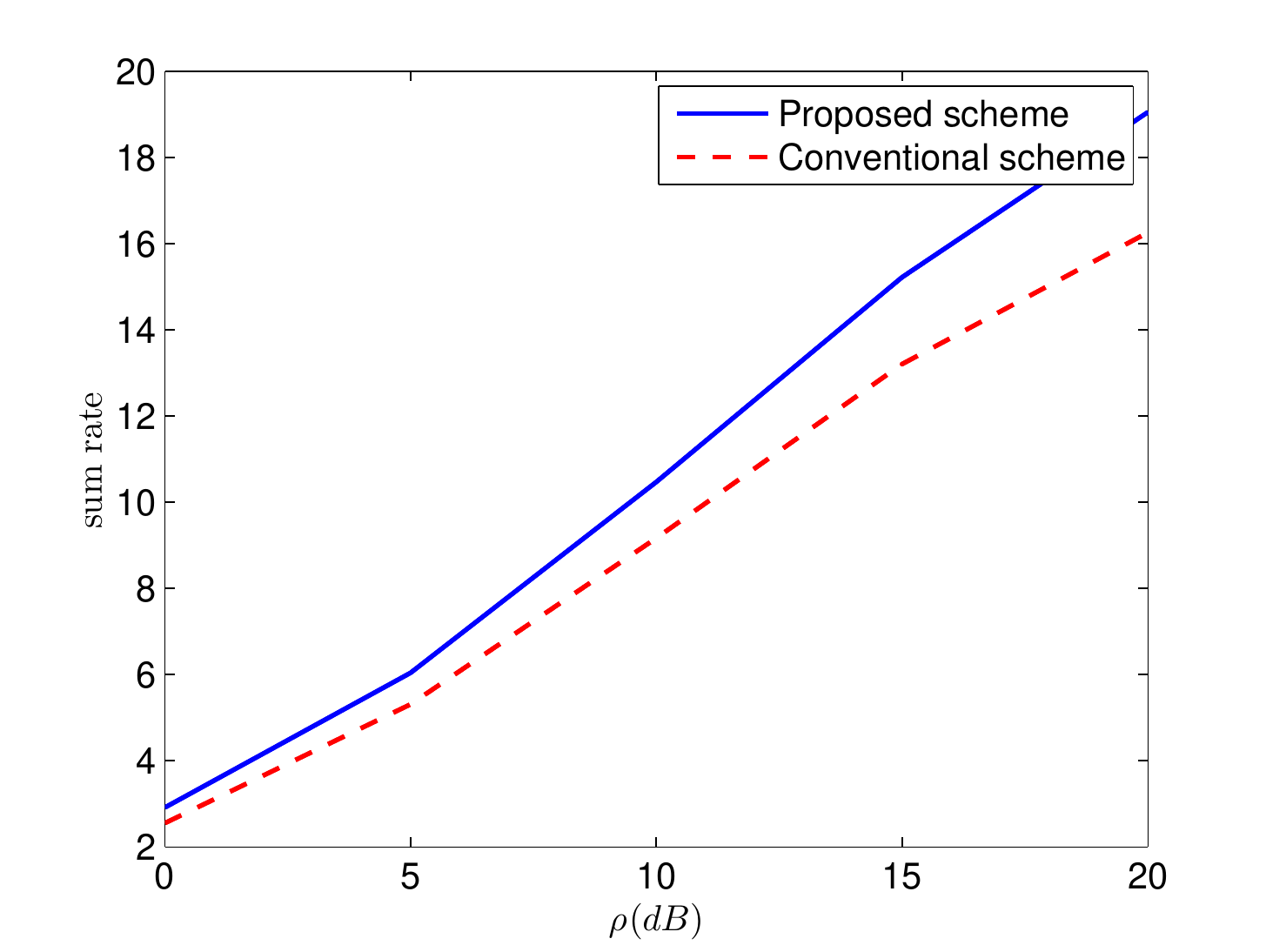}
\caption{The sum rate of the considered FDD massive MIMO system in on-off correlated fading with the proposed scheme in comparison with the conventional scheme for $K = 10, M = 64$, $T = 128$.}
\label{fig_mmimo_10}
\end{figure}

\subsection{Discussion: Correlation Diversity in Massive MIMO}
Our work focuses on gains that can be gleaned from the allocation of pilots. 
Broadly, our work has gains when the spatial correlation matrices between the users are dis-similar. The more the dissimilarity of the correlation matrices, the higher the gains provided by our technique. 
The metric for similarity in our work is the alignment of  the null spaces of the transmit correlation matrices corresponding to different users.  
A more detailed analysis of the gains depends naturally on antenna numbers as well as other factors; we omit a detailed listing of these cases in the interest of brevity.

In massive MIMO, when receivers have non-identical transmit correlations, 
designing the training sequences to match these non-identical channel correlation matrices can be challenging. 
Under this condition, Jiang et al. \cite{Jiang_2015} propose a scheme for massive MIMO 
in which the pilots are optimized according to a mutual information metric, and optimal length of the pilots is found by exhaustive search. When users have correlation matrices with different ranks, our method will have significant gains (in multiplexing gain) over \cite{Jiang_2015}.

\color{black}
\section{\textcolor{black}{Discussion and} Conclusion} \label{sec:conclusion}
This paper extends the scope of transmit correlation diversity to a broader set of conditions involving transmit correlation matrices with fully and partially overlapping eigenspaces. 
Furthermore, we present transmission schemes that harvest these generalized correlation diversity gains. We demonstrate the utility of both pre-beamforming and product superposition for correlation diversity. This arises from a careful decomposition of transmission spaces into several components. Along non-overlapping eigenspaces, simultaneous and non-interfering transmission is possible, as noted by earlier work. In the overlapping part, one may utilize the techniques employed in this paper. Careful design of this decomposition is necessary to allow the effective carving of the transmission signal space, allowing efficient operation of the proposed techniques. These ideas were developed in the context of a two-user system and were extended to multi-user systems. The application of these ideas in a massive MIMO system was explored.

\textcolor{black}{In the interest of completeness,  we mention imperfect or partial CSIT\cite{Caire:TIT2010,5571881} as another situation in which the transmitter knows something about the channel, but not everything. In transmit correlation diversity, training is concentrated on the part of the channel that remains unknown, while the imperfect/partial CSIT literature investigates how much of the channel knowledge can be abandoned in the interest of feedback efficiency, and what is the cost of this abandonment. In that sense, the two areas of investigation might be considered the dual of each other. Partial CSIT varies from one channel realization to the next, and is subject to fading speed and efficiency of feedback, while transmit correlation diversity reflects longer-term statistics that can be collected in the receiver over many realizations, and due to its slower variation, can be communicated with transmitter at higher precision. The methods and techniques used in addressing correlation diversity in this paper are largely distinct from the literature of imperfect/partial CSIT.}

\section{Acknowledgement}

The authors gratefully acknowledge Dr.\ Maxime Guillaud for his contribution to this investigation.

\appendices

\section{Proof of Theorem~\ref{thm:single_user}}
\label{proof:rate_single}

We prove by constructing pilot-based schemes that can achieve \eqref{eq:Rate_su_noRatTX}, \eqref{eq:Rate_su_RatTX_orthPilot}, and \eqref{eq:Rate_su_RatTX_optPilot}. 

\subsection{Case 1: Transmitter Ignores Correlation}	
The transmitter can ignore $\Rm$ and form the transmitted signal as if the channel is uncorrelated, but the performance still depends on correlation. Within each coherence block, the transmitter first sends an orthogonal pilot matrix $\Xm_\tau \in \CC^{M\times M}$ such that $\Xm_\tau\Xm^\H_\tau = M \Id_{M}$ during the first $M$ channel uses (this is optimal for uncorrelated fading~\cite[Sec.~III-A]{Hassibi_2003}), and then sends i.i.d.~$\Cc\Nc(0,1)$ data matrix $\rvMat{X}_\delta \in \CC^{M \times(T - M)}$ during the remaining $T-M$ channel uses. That is, 
\begin{align}
\rvMat{X} = \left[\sqrt{\frac{\rho_\tau}{M}}\Xm_\tau \ \sqrt{\frac{\rho_\delta}{M}} \rvMat{X}_\delta \right],
\end{align}
where $\rho_\tau$ and $\rho_\delta$ are the average power used for training and data phases, respectively, and satisfy the power constraint $\rho_\tau M + \rho_\delta(T-M) \le \rho T$.	

In the training phase, the receiver observes $\rvMat{Y}_\tau \defeq \rvMat{Y}_{[1:M]} = \sqrt{\frac{\rho_\tau}{M}}\rvMat{H}\Xm_\tau + \rvMat{W}_{[1:M]}$. 
Following Lemma~\ref{lem:MMSE-estimator}, it performs a linear MMSE channel estimator as
\begin{align}
\hat{\rvMat{H}} &= \sqrt{\frac{\rho_\tau}{M}} \rvMat{Y}_\tau \Big(\frac{\rho_\tau}{M}\Xm_\tau^\H \Rm \Xm_\tau + \Id_M\Big)^{-1} \Xm_\tau^\H \Rm. 
\end{align}
The estimate $ \hat{\rvMat{H}}$ and the estimation error $\tilde{\rvMat{H}} = \rvMat{H} - \hat{\rvMat{H}}$ have zero mean and row covariance 
\begin{align}
\frac{1}{N}\EE [\hat{\rvMat{H}}^\H \hat{\rvMat{H}}] &= \frac{\rho_\tau}{M} \Rm \Xm_\tau \Big(\frac{\rho_\tau}{M}\Xm_\tau^\H \Rm \Xm_\tau + \Id_M\Big)^{-1} \Xm_\tau^\H \Rm = \rho_\tau \Rm(\Id_M + \rho_\tau\Rm)^{-1}\Rm, \label{eq:tmp287} \\
\frac{1}{N}\EE [\tilde{\rvMat{H}}^\H \tilde{\rvMat{H}}] &= \Rm - \rho_\tau \Rm(\Id_M + \rho_\tau\Rm)^{-1}\Rm. 
\end{align}

In the data transmission phase, the received signal is 
\begin{align}  
\rvMat{Y}_\delta \defeq \rvMat{Y}_{[M+1:T]}= \sqrt{\frac{\rho_\delta}{M}}\rvMat{H} \rvMat{X}_\delta + \rvMat{W}_{[M+1:T]} = \sqrt{\frac{\rho_\delta}{M}} \hat{\rvMat{H}} \rvMat{X}_\delta + {\rvMat{W}}_\delta,
\end{align} 
where ${\rvMat{W}}_\delta \defeq \sqrt{\frac{\rho_\delta}{M}}\tilde{\rvMat{H}} \rvMat{X}_\delta + \rvMat{W}_{[M+1:T]}$ is the combined noise consisting of additive noise and channel estimation error. With MMSE estimator, $\rvMat{W}_\delta$ and $\rvMat{X}_\delta$ are uncorrelated because
\begin{align}
\EE [\rvMat{X}_\delta {\rvMat{W}}^\H_\delta |\Xm_\tau, \rvMat{Y}_\tau] &= \EE \Big[{\rvMat{X}_\delta  \Big(\sqrt{\frac{\rho_\delta}{M}}\rvMat{X}^\H_\delta \tilde{\rvMat{H}}^\H + \rvMat{W}^\H_\delta\Big) \big| \Xm_\tau, \rvMat{Y}_\tau}\Big] \\
&= \sqrt{\frac{\rho_\delta}{M}}\EE \Big[{\rvMat{X}_\delta  \rvMat{X}^\H_\delta (\rvMat{H} - \hat{\rvMat{H}}) \big| \Xm_\tau, \rvMat{Y}_\tau}\Big] \\
&= \mathbf{0},
\end{align}
since $\EE \big[{\rvMat{H} - \hat{\rvMat{H}} \big| \Xm_\tau, \rvMat{Y}_\tau}\big] = 0$.
From Lemma~\ref{lem:worst-case-noise}, a lower bound on the achievable rate is obtained by replacing ${\rvMat{W}}_\delta$ by i.i.d. Gaussian noise with the same variance
\begin{align}
\sigma^2_{\rvMat{W}_\delta} = \frac{1}{N(T-M)} \trace \big(\EE \big[{\rvMat{W}}^\H_\delta {\rvMat{W}}_\delta\big]\big) 
&= \frac{\rho_\delta}{M} \trace[\Rm - \rho_\tau \Rm(\Id_M + \rho_\tau\Rm)^{-1}\Rm] + 1 \\
&= \frac{\rho_\delta}{M} \trace[(\Sigmam^{-1} + \rho_\tau\Id_r)^{-1}] + 1.
\end{align}

Thus, the achievable rate is lower bounded by
\begin{align}
R &= \frac{T-M}{T}\EE \bigg[ \log\det\bigg(\Id_{N} + \frac{\rho_\delta}{M \sigma^2_{{\rvMat{W}}_\delta}} \hat{\rvMat{H}} \hat{\rvMat{H}}^\H \bigg) \bigg].
\end{align}
From \eqref{eq:tmp287}, $\hat{\rvMat{H}}$ has correlation matrix $\rho_\tau \Rm(\Id_M + \rho_\tau\Rm)^{-1}\Rm$. This shows \eqref{eq:Rate_su_noRatTX}.

\subsection{Case 2: Transmitter Exploits Correlation}	

By exploiting $\Rm$, the transmitter can project the signal onto the eigenspace of $\Rm$ and can also adapt the pilot symbols. The transmitter builds a precoder $\Vm \in \CC^{M \times r}$ with $r$ orthonormal columns such that $\Span[\Vm] = \Span[\Um]$. Let $\Phim = \Um^\H \Vm$. The transmitted signal is
\begin{align}
\rvMat{X} = \Vm \bigg[\sqrt{\frac{\rho_\tau}{r}}\Xm_\tau \ \sqrt{\frac{\rho_\delta}{r}} \rvMat{X}_\delta \bigg],
\end{align}
where $\Xm_\tau \in \CC^{r \times r}$ such that $\rank[\Xm_\tau] = r$ and $\trace[\Xm_\tau^\H \Xm_\tau] = r^2$ is the pilot matrix, and $\rvMat{X}_\delta \in \CC^{r \times(T - r)}$ is the data matrix containing $\Cc\Nc(0,1)$ entries. 
The average pilot and data powers satisfy $\rho_\tau r + \rho_\delta(T-r) \le \rho T$. 

The received signal during the training phase is then 
$\rvMat{Y}_\tau \defeq \rvMat{Y}_{[1:r]} = \sqrt{\frac{\rho_\tau}{r}} \rvMat{G} \Sigmam^{\frac12} \Phim \Xm_\tau  + \rvMat{W}_{[1:r]}.$
The equivalent channel ${\bf \Omega} \triangleq \rvMat{G} \Sigmam^{\frac12} \Phim$ has correlation matrix $\bar{\Rm} = \Phim^\H \Sigmam \Phim = \Vm^\H \Rm \Vm$. According to Lemma~\ref{lem:MMSE-estimator}, the MMSE channel estimate for the equivalent channel ${\bf \Omega}$ is given by
\begin{align}
\eqhat 
&= \sqrt{\frac{\rho_\tau}{r}} \rvMat{Y}_\tau \Big(\frac{\rho_\tau}{r}\Xm_\tau^\H \bar{\Rm} \Xm_\tau + \Id_r\Big)^{-1} \Xm_\tau^\H \bar{\Rm}.
\end{align}
The estimate $\eqhat$ and the estimation error $\eqtilde = \rvMat{G}\Sigmam^{\frac12}\Phim - \eqhat$ have  zero mean and row covariance 
\begin{align}
\frac{1}{N} \EE [\eqhat^\H \eqhat] 
&= \frac{\rho_\tau}{r} \bar{\Rm} \Xm_\tau \Big(\frac{\rho_\tau}{r}\Xm_\tau^\H \bar{\Rm} \Xm_\tau + \Id_r\Big)^{-1} \Xm_\tau^\H \bar{\Rm}, \label{eq:tmp360}\\
\frac{1}{N} \EE [\eqtilde^\H \eqtilde] &= \bar{\Rm} - \frac{\rho_\tau}{r} \bar{\Rm} \Xm_\tau \Big(\frac{\rho_\tau}{r}\Xm_\tau^\H \bar{\Rm} \Xm_\tau + \Id_r\Big)^{-1} \Xm_\tau^\H \bar{\Rm} = \big(\bar{\Rm}^{-1} + \frac{\rho_\tau}{r}\Xm_\tau \Xm_\tau^\H \big)^{-1}. 
\end{align}

In the data transmission phase, the received signal is 
\begin{align}  
\rvMat{Y}_\delta \defeq \rvMat{Y}_{[r+1:T]} = \sqrt{\frac{\rho_\delta}{r}}\rvMat{G} \Sigmam^{\frac12} \Phim \rvMat{X}_\delta + \rvMat{W}_{[r+1:T]} = \sqrt{\frac{\rho_\delta}{r}} \eqhat \rvMat{X}_\delta + {\rvMat{W}}_\delta,
\end{align} 
where ${\rvMat{W}}_\delta \defeq \sqrt{\frac{\rho_\delta}{r}}\eqtilde \rvMat{X}_\delta + \rvMat{W}_{[r+1:T]}$. From Lemma~\ref{lem:worst-case-noise}, a lower bound on the achievable rate is obtained by replacing ${\rvMat{W}}_\delta$ with i.i.d. Gaussian noise with the same variance
\begin{align}
\sigma^2_{{\rvMat{W}}_\delta} = \frac{1}{N(T-r)} \trace \big(\EE [{\rvMat{W}}^\H_\delta {\rvMat{W}}_\delta] \big)
&= \frac{\rho_\delta}{r}\trace \Big(\big(\bar{\Rm}^{-1} + \frac{\rho_\tau}{r}\Xm_\tau \Xm_\tau^\H \big)^{-1}\Big) + 1. 
\label{eq:tmp373}
\end{align}
The corresponding achievable rate lower bound is
\begin{align}
R &= \frac{T-r}{T}\EE \bigg[\log\det\bigg(\Id_{N} + \frac{\rho_\delta}{r \sigma^2_{{\rvMat{W}}_\delta}} \eqhat \eqhat^\H \bigg)\bigg],
  \end{align}
where the rows of $\eqhat$ obey $\Cc \Nc \big({0}^T,\bar{\Rm} - \Bm\big)$ with $\Bm \defeq \big(\bar{\Rm}^{-1} + \frac{\rho_\tau}{r}\Xm_\tau \Xm_\tau^\H \big)^{-1}$ and are independent with each other.

Taking $\Xm_\tau$ such that $\Xm_\tau \Xm_\tau^\H= r\Id_r$ (i.e., orthogonal pilots), we have $\Bm = \big(\bar{\Rm}^{-1} + \rho_\tau \Id_r\big)^{-1}$, and the achievable rate $R$ is given in \eqref{eq:Rate_su_RatTX_orthPilot}.

We can also optimize the pilot $\Xm_\tau$ so as to maximize $R$. The pilot matrix $\Xm_\tau$ affects the achievable rate bound primarily through the effective SNR
\begin{align}
\rho_{\rm eff} = \frac{\rho_\delta}{r \sigma^2_{{\rvMat{W}}_\delta}} \frac{1}{N} \EE \big(\trace \big[\eqhat^\H \eqhat\big]\big) =
\frac{\rho_\delta \trace (\bar{\Rm} - \Bm )}{\rho_\delta \trace (\Bm) + r},
\end{align}
which decreases with $\trace[\Bm]$. Therefore, to maximize $R$, we would like to minimize $\trace[\Bm]$. That is
\begin{align}
\min_{\trace (\Xm_\tau^\H\Xm_\tau) = r^2} \trace \Big(\big(\bar{\Rm}^{-1} + \frac{\rho_\tau}{r}\Xm_\tau \Xm_\tau^\H \big)^{-1} \Big).
\end{align}
Using Lagrange multiplier $\lambda$, we minimize 
\begin{align}
L(\Xm_\tau,\lambda) = \trace \Big(\big(\bar{\Rm}^{-1} + \frac{\rho_\tau}{r}\Xm_\tau \Xm_\tau^\H \big)^{-1}\Big) + \lambda\big(\trace (\Xm_\tau\Xm_\tau^\H) - r^2\big).
\end{align}
Solving $\frac{\partial L(\Xm_\tau,\lambda)}{\partial \Xm_\tau \Xm_\tau^\H} = 0$, we obtain the minimizer $
\Xm_\tau \Xm_\tau^\H = \sqrt{\frac{r}{\rho_\tau\lambda}}\Id_r   - \frac{r}{\rho_\tau} \bar{\Rm}^{-1}.
$
Using the constrain $\trace (\Xm_\tau^\H\Xm_\tau) = r^2$, we find that 
$
\frac{\rho_\tau}{r}\Xm_\tau \Xm_\tau^\H = \big(\rho_\tau + \frac{1}{r}\trace (\bar{\Rm}^{-1}) \big) \Id_{r} - \bar{\Rm}^{-1}.
$
With this, $\Bm = \big(\rho_\tau + \frac{1}{r} \trace (\bar{\Rm}^{-1}) \big)^{-1} \Id_{r}$, and the rate $R$ is given in \eqref{eq:Rate_su_RatTX_optPilot}. 
The effective SNR is now written as
\begin{align}
\rho_{\rm eff} = \frac{\rho_\delta}{\rho_\delta r \big(\rho_\tau + \frac{1}{r} \trace[\bar{\Rm}^{-1}] \big)^{-1} + r} \Big[\trace[\bar{\Rm}] - r \Big(\rho_\tau + \frac{1}{r} \trace[\bar{\Rm}^{-1}] \Big)^{-1}\Big].
\end{align}
Remark~\ref{remark:powerallocation} follows from an optimization of ($\rho_\tau, \rho_{d}$) as follows. Let $\rho_\tau r = (1-\alpha) \rho T$ and $\rho_\delta (T-r) = \alpha \rho T$ for $\alpha \in (0,1)$, we can derive that
\begin{align}
\rho_{\rm eff} = \frac{\rho T \trace[\bar{\Rm}]}{r(T-2r)} \frac{-\alpha^2 + a \alpha }{-\alpha + b},
\end{align}
where $a \defeq 1 + \frac{\trace (\bar{\Rm}^{-1})}{\rho T} - \frac{r^2}{\rho T \trace (\bar{\Rm})}$ and $b \defeq \frac{T-r}{T-2r}\Big(1 + \frac{\trace (\bar{\Rm}^{-1})}{\rho T}\Big)$. Noting that $T - 2r \ge 0$, we obtain the optimal value of $\alpha$ that maximizes $\rho_{\rm eff}$ as given in \eqref{eq:p2p_power_allocation}. This completes the proof.

\section{Proof of Theorem~\ref{thm:BC_rate_split}}
\label{proof:rate_part}
This achievable rate region is fully characterized by the mutual information $I(\rvMat{Y}_k;\rvMat{X}_k,\rvMat{X}_0)$, $I(\rvMat{Y}_k;\rvMat{X}_k \cond \rvMat{X}_0)$, and $I(\rvMat{Y}_k;\rvMat{X}_0 \cond \rvMat{X}_k)$, $k\in \{1,2\}$.
We cacluate the achievable rates for the following input distribution:
\begin{align}
\rvMat{X}_0 &= \left[\sqrt{\rho_{0\tau}}\Id_{s_0} \ \mathbf{0}_{s_0\times s_1} \ \sqrt{\frac{\rho_{0 \delta }}{s_0}}\rvMat{S}_{0}\right], \\
\rvMat{X}_1 &= \left[\mathbf{0}_{s_1 \times s_0} \ \sqrt{\rho_{1\tau}}\Id_{s_1} \ \sqrt{\frac{\rho_{1 \delta }}{s_1}}\rvMat{S}_{1}\right], \\
\rvMat{X}_2 &= \left[\mathbf{0}_{s_2 \times s_0} \ \sqrt{\rho_{2\tau}}\Id_{s_2} \ \sqrt{\frac{\rho_{2 \delta }}{s_2}} \rvMat{S}_{2 }\right],
\end{align}
where  $\rvMat{S}_{0} \in \CC^{s_0 \times (T-s_1-s_0)}$, $\rvMat{S}_{1} \in \CC^{s_1 \times (T-s_1-s_0)}$, and $\rvMat{S}_{2 } \in \CC^{s_2\times (T-s_2-s_0)}$ are data matrices containing independent $\Cc\Nc(0,1)$ symbols, for powers $\rho_{i\tau}, \rho_{\delta}$, $i\in \{0,1,2\}$, such that 
\begin{align}
\rho_{0\tau}s_0 + \rho_{0 \delta } (T-s_1-s_0) + \sum_{i=1}^{2} \big[\rho_{i\tau} s_i + \rho_{i\delta} (T-s_i-s_0)\big] = \rho T. \label{eq:add:power_constraint}
\end{align}

\label{sec:2user_rateSplitting_user1}
The received signal at User~$1$ is
\begin{align}
\rvMat{Y}_1 &= \rvMat{G}_1 \Sigmam^{\frac12}_1 \Phim_1 \begin{bmatrix}
\sqrt{\rho_{0\tau}}\Id_{s_0} & \mathbf{0} & \sqrt{\frac{\rho_{0 \delta }}{s_0}}\rvMat{S}_{0} \\
\mathbf{0}  & \sqrt{\rho_{1\tau}}\Id_{s_1} & \sqrt{\frac{\rho_{1 \delta }}{s_1}}\rvMat{S}_{1}
\end{bmatrix}
+ \rvMat{W}_1 \\
&= \Bigg[\underbrace{\rvMat{G}_1 \Sigmam^{\frac12}_1 \Phim_1 \Pm_{1\tau}^{\frac12}
+ \rvMat{W}_{1[1:s_1 + s_0]}}_{\rvMat{Y}_{1\tau}} 
\quad
\underbrace{\rvMat{G}_1 \Sigmam^{\frac12}_1 \Phim_1 \Pm_{1\delta}^{\frac12} \Bigg[\begin{matrix}
\rvMat{S}_{0} \\
\rvMat{S}_{1}
\end{matrix} \Bigg]
+ \rvMat{W}_{1[s_1+s_0+1;T]}}_{\rvMat{Y}_{1\delta}}\Bigg],
\end{align}
where $\Pm_{1\tau} \defeq \Bigg[\begin{matrix}
{\rho_{0\tau}}\Id_{s_0} & \mathbf{0} \\
\mathbf{0}  & {\rho_{1\tau}}\Id_{s_1}
\end{matrix}\Bigg]$ and $\Pm_{1\delta} \defeq \Bigg[\begin{matrix}
{\frac{\rho_{0 \delta }}{s_0}}\Id_{s_0} & \mathbf{0} \\
\mathbf{0} & {\frac{\rho_{1 \delta }}{s_1}}\Id_{s_1}
\end{matrix} \Bigg]$ are the power matrices for the pilot and data, respectively. 

The equivalent channel ${\bf \Omega}_1 \triangleq \rvMat{G}_1 \Sigmam^{\frac12}_1 \Phim_1$ has correlation matrix $\bar{\Rm}_1$.
Following Lemma~\ref{lem:MMSE-estimator}, User~$1$ performs a MMSE channel estimation based on $\rvMat{Y}_{1\tau}$ as
\begin{align}
\eqhatone = 
\rvMat{Y}_{1 \tau} \big(\Pm_{1\tau}^{\frac12}\bar{\Rm}_1\Pm_{1\tau}^{\frac12} + \Id_{s_1+s_0}\big)^{-1} \Pm_{1\tau}^{\frac12}\bar{\Rm}_1.
\end{align}
The estimate $\eqhatone$ and the estimation error $\eqtildeone = \rvMat{G}_1\Sigmam^{\frac12}_1 \Phim_1 - \eqhatone$ have  zero mean and row covariance 
\begin{align}
\frac{1}{N_1} \EE [\eqhatone^\H \eqhatone] &= \bar{\Rm}_1 \Pm_{1\tau}^{\frac12}\Big(\Pm_{1\tau}^{\frac12}\bar{\Rm}_1\Pm_{1\tau}^{\frac12} + \Id_{s_1+s_0}\Big)^{-1}\Pm_{1\tau}^{\frac12} \bar{\Rm}_1, \\
\frac{1}{N_1} \EE [\eqtildeone^\H \eqtildeone] &= \bar{\Rm}_1 - \bar{\Rm}_1 \Pm_{1\tau}^{\frac12}\Big(\Pm_{1\tau}^{\frac12}\bar{\Rm}_1\Pm_{1\tau}^{\frac12} + \Id_{s_1+s_0}\Big)^{-1}\Pm_{1\tau}^{\frac12} \bar{\Rm}_1 = \big(\bar{\Rm}_1^{-1} + \Pm_{1\tau} \big)^{-1}.
\end{align}

\underline{\it Lower bounding $I(\rvMat{Y}_1; \rvMat{X}_1, \rvMat{X}_0)$:}
The received signal during the data transmission phase can be written as
\begin{align}
\rvMat{Y}_{1\delta} = \hat{\rvMat{G}}_1 \Sigmam^{\frac12}_1 \Phim_1 \Pm_{1\delta}^{\frac12} \Bigg[\begin{matrix}
\rvMat{S}_{0} \\
\rvMat{S}_{1}
\end{matrix} \Bigg]
+ \rvMat{W}_{1\delta}, 
\end{align}
where ${\rvMat{W}}_{1\delta} \defeq \eqtildeone \Pm_{1\delta}^{\frac12}\Bigg[\begin{matrix}
\rvMat{S}_{0} \\
\rvMat{S}_{1}
\end{matrix} \Bigg]
+ \rvMat{W}_{1[s_1+s_0+1:T]}$ is the combined noise and residual interference due to channel estimation error.
Define $\bar{\bf \Omega}_1 \in \mathbb{C}^{N_1 \times (s_1 + s_0)}$ with independent rows obeying ${\Cc \Nc} \big({\bf 0}^T, \Pm_{1\tau}^{\frac12}(\Pm_{1\tau}^{\frac12}\bar{\Rm}_1\Pm_{1\tau}^{\frac12} + \Id_{s_1+s_0})^{-1}\Pm_{1\tau}^{\frac12}\big)$. By a similar analysis using Lemma~\ref{lem:worst-case-noise} as for \eqref{eq:Rate_su_RatTX_orthPilot} in Theorem~\ref{thm:single_user}, we have 
\begin{align}
\lefteqn{
I(\rvMat{Y}_1; \rvMat{X}_1, \rvMat{X}_0)
}\notag\\
&= I(\rvMat{Y}_{1\delta}; \rvMat{S}_{1}, \rvMat{S}_{0} \cond \rvMat{Y}_{1\tau}) +  \underbrace{I(\rvMat{Y}_{1\tau}; \rvMat{S}_{1}, \rvMat{S}_{0})}_{=0}\\
&= I(\rvMat{Y}_{1\delta}; \rvMat{S}_{1}, \rvMat{S}_{0} \cond \eqhatone) \\
&\ge \Big(T - s_1 -s_0 \Big) \EE\bigg[\log \det \bigg( \Id_{N_1} +\frac{1}{\trace \big(\big(\bar{\Rm}_1^{-1} + \Pm_{1\tau} \big)^{-1} \Pm_{1\delta}\big) + 1}  \eqhat_1 \Pm_{1\delta}  \eqhat_1^{\H} \bigg)\bigg] \\
&\ge \Big(T - s_1 -s_0 \Big) \EE\bigg[\log \det \bigg( \Id_{N_1} +\frac{1}{\trace \big(\big(\bar{\Rm}_1^{-1} + \Pm_{1\tau} \big)^{-1} \Pm_{1\delta}\big) + 1}  \bar{{\bf \Omega}}_1 \bar{\Rm}_1 \Pm_{1\delta} \bar{\Rm}_1^\H \bar{{\bf \Omega}}_1^{\H} \bigg)\bigg].
\label{eq:add:I(Y1;S1,S0)}
\end{align}

\underline{\it Lower bounding $I(\rvMat{Y}_1; \rvMat{X}_1 \cond \rvMat{X}_0)$:}
We rewrite $\rvMat{Y}_{1\delta}$ as
\begin{align} \label{eq:received_signal_user1}
\rvMat{Y}_{1\delta} &= \sqrt{\frac{\rho_{1 \delta }}{s_1}} \rvMat{G}_1 \Sigmam^{\frac12}_1 \Phim_{11} \rvMat{S}_{1}
+ \sqrt{\frac{\rho_{0 \delta }}{s_0}} \rvMat{G}_1 \Sigmam^{\frac12}_1 \Phim_{10} \rvMat{S}_{0}
+ \rvMat{W}_{1\delta}.
\end{align}
While decoding $\rvMat{S}_{1}$, the term $ \sqrt{\frac{\rho_{0 \delta }}{s_0}} \rvMat{G}_1 \Sigmam^{\frac12}_1 \Phim_{10}\rvMat{S}_{0}$ is an interference. Given the knowledge of $\rvMat{S}_{0}$ and the channel estimate $\eqhatone = \big[\eqhatonezero \ \eqhatoneone \big]$, where $\eqhatonezero$ and $\eqhatoneone$ are respectively the estimates of $\rvMat{G}_1 \Sigmam^{\frac12}_1 \Phim_{10}$ and $\rvMat{G}_1 \Sigmam^{\frac12}_1 \Phim_{11}$, the receiver can remove partly the interference to obtain
\begin{align}
\rvMat{Y}_{1\delta} - \sqrt{\frac{\rho_{0 \delta }}{s_0}} \eqhatonezero\rvMat{S}_{0}
&= \sqrt{\frac{\rho_{1 \delta }}{s_1}} \rvMat{G}_1 \Sigmam^{\frac12}_1 \Phim_{11} \rvMat{S}_{1}
+ \sqrt{\frac{\rho_{0 \delta }}{s_0}} \Big[\rvMat{G}_1 \Sigmam^{\frac12}_1 \Phim_{10} - \eqhatonezero \Big]\rvMat{S}_{0}
+ \rvMat{W}_{1[s_1+s_0+1:T]} \\
&= \sqrt{\frac{\rho_{1 \delta }}{s_1}} \eqhatoneone \rvMat{S}_{1} + {\rvMat{W}}_{1\delta}.
\end{align}
With a similar analysis using Lemma~\ref{lem:worst-case-noise} as for \eqref{eq:Rate_su_RatTX_orthPilot} in Theorem~\ref{thm:single_user}, 
\begin{align}
\lefteqn{
I(\rvMat{Y}_1; \rvMat{X}_1 \cond \rvMat{X}_0)
}\notag\\
&= I\big(\rvMat{Y}_{1\delta}; \rvMat{S}_{1} \big| \rvMat{S}_{0},\rvMat{Y}_{1\tau}\big) \\
&= I\big(\rvMat{Y}_{1\delta}; \rvMat{S}_{1} \big| \rvMat{S}_{0},\eqhatone\big) \\
&= I\big(\rvMat{Y}_{1\delta} - \sqrt{\frac{\rho_{0 \delta }}{s_0}} \eqhatonezero\rvMat{S}_{0}; \rvMat{S}_{1} \ \big| \ \rvMat{S}_{0},\eqhatone\big) \\
&= I\big(\sqrt{\frac{\rho_{1 \delta }}{s_1}} \eqhatoneone \rvMat{S}_{1} + {\rvMat{W}}_{1\delta}; \rvMat{S}_{1}  \ \big|\ \eqhatoneone\big) \\
&\ge \Big(T - s_1 -s_0\Big) \EE \bigg[\log \det \bigg( \Id_{N_1} + \frac{\rho_{1\delta}}{s_1 \big[\trace \big((\bar{\Rm}_1^{-1} + \Pm_{1\tau} )^{-1} \Pm_{1\delta}\big) + 1\big]} \bar{{\bf \Omega}}_1 \bar{\Rm}_{11} \bar{\Rm}_{11}^\H\bar{{\bf \Omega}}_1^{\H}\bigg)\bigg].
\label{eq:add:I(Y1;S1|S0)}
\end{align}

\underline{\it Lower bounding $I(\rvMat{Y}_1; \rvMat{X}_0 \cond \rvMat{X}_1)$:}
Given $\rvMat{S}_{1}$ and the channel estimate $\eqhatone = \big[\eqhatonezero \quad \eqhatoneone\big]$, the receiver can remove partly the interference in \eqref{eq:received_signal_user1} to obtain
\begin{align}
\rvMat{Y}_{1\delta} - \sqrt{\frac{\rho_{1 \delta }}{s_1}} \eqhatoneone\rvMat{S}_{1}
&= \sqrt{\frac{\rho_{0 \delta }}{s_0}} {\rvMat{G}}_1 \Sigmam^{\frac12}_1 \Phim_{10}\rvMat{S}_{0}
+ \sqrt{\frac{\rho_{1 \delta }}{s_1}} \Big[\rvMat{G}_1 \Sigmam^{\frac12}_1 \Phim_{11} - \eqhatoneone \rvMat{S}_{1}\Big]
+ \rvMat{W}_{1[s_1+s_0+1:T]} \\
&= \sqrt{\frac{\rho_{0 \delta }}{s_0}} \eqhatonezero\rvMat{S}_{0} + {\rvMat{W}}_{1\delta}.
\end{align}
Using reasoning similar to \eqref{eq:Rate_su_RatTX_orthPilot} in Theorem~\ref{thm:single_user}, 
\begin{align}
\lefteqn{
I(\rvMat{Y}_1; \rvMat{X}_0 \cond \rvMat{X}_1) 
}\notag \\
&= I\big(\rvMat{Y}_{1\delta}; \rvMat{S}_{0} \big| \rvMat{S}_{1},\rvMat{Y}_{1\tau}\big) \\
&= I\big(\rvMat{Y}_{1\delta}; \rvMat{S}_{0} \big| \rvMat{S}_{1},\eqhatone\big) \\
&= I\big(\rvMat{Y}_{1\delta} - \sqrt{\frac{\rho_{1 \delta }}{s_1}} \eqhatoneone\rvMat{S}_{1}; \rvMat{S}_{0} \ \big| \ \rvMat{S}_{1},\eqhatone\big) \\
&= I\big(\sqrt{\frac{\rho_{0 \delta }}{s_0}} \eqhatonezero\rvMat{S}_{0} + {\rvMat{W}}_{1\delta}; \rvMat{S}_{0} \ \big| \ \eqhatonezero \big) \\
&\ge \Big(T - s_1 - s_0\Big)\EE \bigg[\log \det \bigg( \Id_{N_1} + \frac{\rho_{0\delta}}{s_0\big[\trace \big( (\bar{\Rm}_1^{-1} + \Pm_{1\tau} \big)^{-1} \Pm_{1\delta}\big) + 1\big]} \bar{{\bf \Omega}}_1 \bar{\Rm}_{10} \bar{\Rm}_{10}^\H \bar{{\bf \Omega}}_1^{\H}\bigg) \bigg].
\label{eq:add:I(Y1;S0|S1)}
\end{align}

The received signal at User~$2$ is
\begin{align}
&\rvMat{Y}_2
= \rvMat{G}_2 \Sigmam^{\frac12}_2 \Phim_2 \begin{bmatrix}
\sqrt{\rho_{0\tau}}\Id_{s_0} & \mathbf{0} & \mathbf{0}_{s_0 \times (s_1-s_2)} & \sqrt{\frac{\rho_{0 \delta }}{s_0}}\rvMat{S}_{0} \\
\mathbf{0}  & \sqrt{\rho_{2\tau}}\Id_{s_2} & \sqrt{\frac{\rho_{2 \delta }}{s_2}}\rvMat{S}_{2  a} &\sqrt{\frac{\rho_{2 \delta }}{s_2}}\rvMat{S}_{2 b}
\end{bmatrix}
+ \rvMat{W}_2 \\
&= \Bigg[\!\underbrace{\rvMat{G}_2 \Sigmam^{\frac12}_2 \Phim_2 \Pm_{2\tau}^{\frac12}
\!+\! \rvMat{W}_{2[1:s_2+s_0]}}_{\rvMat{Y}_{2\tau }} 
\
\underbrace{\sqrt{\frac{\rho_{2 \delta }}{s_2}} \rvMat{G}_2 \Sigmam^{\frac12}_2 \Phim_{22} \rvMat{S}_{2 a}
\!+\! \rvMat{W}_{2[s_2+s_0+1:s_1+s_0]}}_{\rvMat{Y}_{2\delta a}} 
\
\underbrace{\rvMat{G}_2 \Sigmam^{\frac12}_2 \Phim_2 \Pm_{2\delta}^{\frac12} \Bigg[\begin{matrix}
\rvMat{S}_{0} \\
\rvMat{S}_{2 b}
\end{matrix} \Bigg]
\!+\! \rvMat{W}_{2[s_1+s_0+1:T]}}_{\rvMat{Y}_{2\delta b}} \!\Bigg],
\end{align}
where $\rvMat{S}_{2 a}$ and $\rvMat{S}_{2 b}$ are respectively the first $s_1-s_2$ columns and the remaining $T-s_1-s_0$ columns of $\rvMat{S}_{2}$; $\Pm_{2\tau} \defeq \Bigg[\begin{matrix}
{\rho_{0\tau}}\Id_{s_0} & \mathbf{0} \\
\mathbf{0}  & {\rho_{2\tau}}\Id_{s_2}
\end{matrix}\Bigg]$ and $\Pm_{2\delta} \defeq \Bigg[\begin{matrix}
{\frac{\rho_{0 \delta }}{s_0}}\Id_{s_0} & \mathbf{0} \\
\mathbf{0} & {\frac{\rho_{2 \delta }}{s_2}}\Id_{s_2}
\end{matrix} \Bigg]$ are the power matrices for the pilot and data, respectively. 
Following Lemma~\ref{lem:MMSE-estimator}, user 2 performs a MMSE channel estimation of ${\bf \Omega}_2 \triangleq \rvMat{G}_2\Sigmam^{\frac12}_2 \Phim_2 = [{\bf \Omega}_{20} \ {\bf \Omega}_{22}] = \big[\rvMat{G}_2\Sigmam^{\frac12}_2 \Phim_{20} \ \rvMat{G}_2\Sigmam^{\frac12}_2 \Phim_{22}\big]$ based on $\rvMat{Y}_{2\tau}$ as
\begin{align}
\eqhattwo = 
\rvMat{Y}_{2 \tau} \big(\Pm_{2\tau}^{\frac12}\bar{\Rm}_2\Pm_{2\tau}^{\frac12} + \Id_{s_2+s_0}\big)^{-1} \Pm_{2\tau}^{\frac12}\bar{\Rm}_2.
\end{align}
The estimate $\eqhattwo = \big[\eqhattwozero ~ \eqhattwotwo\big]$ and the estimation error $\eqtildetwo = \rvMat{G}_2\Sigmam^{\frac12}_2 \Phim_2 - \eqhattwo$ have  zero mean and row covariance 
\begin{align}
\frac{1}{N_2} \EE [\eqhattwo^\H \eqhattwo] &= \bar{\Rm}_2 \Pm_{2\tau}^{\frac12}\Big(\Pm_{2\tau}^{\frac12}\bar{\Rm}_2\Pm_{2\tau}^{\frac12} + \Id_{s_2+s_0}\Big)^{-1} \Pm_{2\tau}^{\frac12} \bar{\Rm}_2, \\
\frac{1}{N_2} \EE [\eqtildetwo^\H \eqtildetwo] &= \bar{\Rm}_2 - \bar{\Rm}_2 \Pm_{2\tau}^{\frac12}\Big(\Pm_{2\tau}^{\frac12}\bar{\Rm}_2\Pm_{2\tau}^{\frac12} + \Id_{s_2+s_0}\Big)^{-1} \Pm_{2\tau}^{\frac12} \bar{\Rm}_2 = \big(\bar{\Rm}_2^{-1} + \Pm_{2\tau} \big)^{-1}.
\end{align}

\underline{\it Lower bounding $I(\rvMat{Y}_2; \rvMat{X}_2, \rvMat{X}_0)$:}
Using the chain rule, 
\begin{align}
I(\rvMat{Y}_2; \rvMat{X}_2, \rvMat{X}_0) &= I(\rvMat{Y}_{2\tau},\rvMat{Y}_{2\delta a}, \rvMat{Y}_{2\delta b};\rvMat{S}_{0},\rvMat{S}_{2 a},\rvMat{S}_{2 b}) \\
&= I(\rvMat{Y}_{2\delta a}, \rvMat{Y}_{2\delta b};\rvMat{S}_{0},\rvMat{S}_{2 a},\rvMat{S}_{2 b} \cond \rvMat{Y}_{2\tau}) + \underbrace{I(\rvMat{Y}_{2\tau};\rvMat{S}_{0},\rvMat{S}_{2 a},\rvMat{S}_{2 b})}_{=0} \\
&= I(\rvMat{Y}_{2\delta a}, \rvMat{Y}_{2\delta b};\rvMat{S}_{0},\rvMat{S}_{2 a},\rvMat{S}_{2 b} \cond \eqhattwo) \\
&= I(\rvMat{Y}_{2\delta a};\rvMat{S}_{2 a} \cond \eqhattwo) + \underbrace{I(\rvMat{Y}_{2\delta a};\rvMat{S}_{0},\rvMat{S}_{2 b} \cond \rvMat{S}_{2 a}, \eqhattwo}_{=0}) \notag\\
&\quad+ 
\underbrace{I(\rvMat{Y}_{2\delta b};\rvMat{S}_{0},\rvMat{S}_{2 b} \cond \rvMat{Y}_{2\delta a}, \eqhattwo)}_{\ge I(\rvMat{Y}_{2\delta b};\rvMat{S}_{0},\rvMat{S}_{2 b} \cond \eqhattwo)} + \underbrace{I(\rvMat{Y}_{2\delta b};\rvMat{S}_{2 a} \cond \rvMat{S}_{0},\rvMat{S}_{2 b}, \rvMat{Y}_{2\delta a}, \eqhattwo)}_{= 0}\\ 
&\ge I(\rvMat{Y}_{2\delta a};\rvMat{S}_{2 a} \cond \eqhattwotwo) + I(\rvMat{Y}_{2\delta b};\rvMat{S}_{0},\rvMat{S}_{2 b} \cond \eqhattwo).
\label{eq:add:I(Y2;S2,S0)}
\end{align}
Define $\bar{\bf \Omega}_2 \in \mathbb{C}^{N_2 \times (s_2 + s_0)}$ with independent rows obeying ${\Cc \Nc} \big({\bf 0}^T,  \Pm_{2\tau}^{\frac12}(\Pm_{2\tau}^{\frac12}\bar{\Rm}_2\Pm_{2\tau}^{\frac12} + \Id_{s_2+s_0})^{-1}\Pm_{2\tau}^{\frac12}\big)$. Following analysis similar to \eqref{eq:Rate_su_RatTX_orthPilot} in Theorem~\ref{thm:single_user}, 
\begin{align}
&I(\rvMat{Y}_{2\delta a};\rvMat{S}_{2 a} \cond \eqhattwotwo) \notag \\
&\ge \Big(s_1 - s_2\Big)\EE \bigg[
\log \det \bigg(\Id_{N_2} + \frac{\rho_{2 \delta }}{\rho_{2 \delta}\trace \big(\bar{\Rm}_{22}^\H (\bar{\Rm}_2 +  \bar{\Rm}_2 \Pm_{2\tau} \bar{\Rm}_2)^{-1}\bar{\Rm}_{22} \big) +s_2 }\bar{{\bf \Omega}}_2 \bar{\Rm}_{22} \bar{\Rm}_{22}^\H \bar{{\bf \Omega}}_2^{\H} \bigg)\bigg]
\label{eq:add:I(Y2;S2,S0)_a}
\end{align}
and 
\begin{align}
&I(\rvMat{Y}_{2\delta b};\rvMat{S}_{0},\rvMat{S}_{2 b} \cond \eqhattwo) \notag \\
&\ge \Big(T - s_1 - s_0\Big) \EE \bigg[\log \det \bigg(\Id_{N_2} + \frac{1}{\trace \big((\bar{\Rm}_2^{-1} + \Pm_{2\tau})^{-1} \Pm_{2\delta}\big)+ 1} \bar{{\bf \Omega}}_2 \bar{\Rm}_{2}\Pm_{2\delta} \bar{\Rm}_{2}^\H \bar{{\bf \Omega}}_2^{\H}\bigg) \bigg].
\label{eq:add:I(Y2;S2,S0)_b}
\end{align}

\underline{\it Lower bounding $I(\rvMat{Y}_2; \rvMat{X}_2 \cond \rvMat{X}_0)$:}
We write $\rvMat{Y}_{2\delta} \defeq [\rvMat{Y}_{2\delta a} \ \rvMat{Y}_{2\delta b}]$ as
\begin{align}
\rvMat{Y}_{2\delta} &= \sqrt{\frac{\rho_{2 \delta }}{s_2}} \rvMat{G}_2 \Sigmam^{\frac12}_2 \Phim_{22} \rvMat{S}_{2}
+ \sqrt{\frac{\rho_{0 \delta }}{s_0}} \rvMat{G}_2 \Sigmam^{\frac12}_2 \Phim_{20} [\mathbf{0} \ \rvMat{S}_{0}]
+ \rvMat{W}_{2[s_2+s_0+1:T]}.
\end{align}
Similar to $I(\rvMat{Y}_1; \rvMat{X}_1 \cond \rvMat{X}_0)$, using interference cancellation and wort-case additive noise, 
\begin{align}
\lefteqn{
I(\rvMat{Y}_2; \rvMat{X}_2 \cond \rvMat{X}_0)
}\notag \\ 
&= I\big(\rvMat{Y}_{2\delta}; \rvMat{S}_{2} \big| \rvMat{S}_{0},\eqhattwo\big) \\
&= I\big(\rvMat{Y}_{2\delta} - \sqrt{\frac{\rho_{0 \delta }}{s_0}} \eqhattwozero[\mathbf{0} \ \rvMat{S}_{0}]; \rvMat{S}_{2} \big| \rvMat{S}_{0},\eqhattwo\big) \\
&\ge \Big(s_1 - s_2\Big) \EE \bigg[ \log \det \bigg(\Id_{N_2} + \frac{\rho_{2 \delta }}{\rho_{2 \delta}\trace \big(\bar{\Rm}_{22}^\H (\bar{\Rm}_2 + \bar{\Rm}_2 \Pm_{2\tau} \bar{\Rm}_2)^{-1}\bar{\Rm}_{22}\big)+s_2} \bar{{\bf \Omega}}_2 \bar{\Rm}_{22} \bar{\Rm}_{2} \bar{\Rm}_{2}^\H\bar{\Rm}_{22}^\H \bar{{\bf \Omega}}_2^{\H}\bigg)\bigg] \notag\\
&+ \Big(T - s_1 - s_0\Big) \EE \bigg[\log \det \bigg(\Id_{N_2} + \frac{\rho_{2 \delta}}{s_2\big[ \trace \big( (\bar{\Rm}_2^{-1} + \Pm_{2\tau})^{-1}\Pm_{2\delta} \big) + 1\big]} 
\bar{{\bf \Omega}}_2 \bar{\Rm}_{22} \bar{\Rm}_{2} \bar{\Rm}_{2}^\H \bar{\Rm}_{22}^\H \bar{{\bf \Omega}}_2^{\H} \bigg)\bigg].
\label{eq:add:I(Y2;S2|S0)}
\end{align}

\underline{\it Lower bounding $I(\rvMat{Y}_2; \rvMat{X}_0 \cond \rvMat{X}_2)$:}
Again, using interference cancellation and a similar analysis as for \eqref{eq:Rate_su_RatTX_orthPilot} in Theorem~\ref{thm:single_user}, 
\begin{align}
\lefteqn{
I(\rvMat{Y}_2; \rvMat{X}_0 \cond \rvMat{X}_2)
}\notag \\ 
&\ge I\big(\rvMat{Y}_{2\delta b}; \rvMat{S}_{0} \big| \rvMat{S}_{2b},\eqhattwo\big) \\
&= I\big(\rvMat{Y}_{2\delta b} - \sqrt{\frac{\rho_{2 \delta }}{s_2}} \eqhattwotwo \rvMat{S}_{2b}; \rvMat{S}_{0} \ \big|\ \rvMat{S}_{2b},\eqhattwo\big) \\
&\ge \Big(T - s_1 - s_0\Big) \EE \bigg[\log \det \bigg(\Id_{N_2} + \frac{\rho_{0 \delta }}{s_0\big[\trace \big( (\bar{\Rm}_2^{-1} +  \Pm_{2\tau} )^{-1} \Pm_{2\delta}\big) + 1\big]}  \bar{{\bf \Omega}}_2\bar{\Rm}_{20} \bar{\Rm}_{2} \bar{\Rm}_{2}^\H \bar{\Rm}_{20}^\H\bar{{\bf \Omega}}_2^{\H}  \bigg) \bigg].
\label{eq:add:I(Y2;S0|S2)}
\end{align}

Substituting \eqref{eq:add:I(Y2;S2,S0)_a} and \eqref{eq:add:I(Y2;S2,S0)_b} into \eqref{eq:add:I(Y2;S2,S0)}, then substituting \eqref{eq:add:I(Y1;S1,S0)}, \eqref{eq:add:I(Y1;S1|S0)}, \eqref{eq:add:I(Y1;S0|S1)}, \eqref{eq:add:I(Y2;S2,S0)}, \eqref{eq:add:I(Y2;S2|S0)}, and \eqref{eq:add:I(Y2;S0|S2)} into \eqref{eq:add:R1_bound}-\eqref{eq:add:sum_rate_bound}, and taking the convex hull over all possible power allocation satisfying \eqref{eq:add:power_constraint} and all feasible values of $s_0, s_1,s_2$, an achievable rate region is found with rate splitting for the broadcast channel. This concludes the proof of Theorem~\ref{thm:BC_rate_split}.

\section{Proof of Theorem~\ref{thm:BC_rate_product}}
\label{proof:rate_product}
Under product superposition, the input to the channel is constructed as follows:
\begin{align}
\rvMat{X} = [\Vm_0 \ \Vm_2]\rvMat{X}_1 \rvMat{X}_2,
\end{align}
with 
\begin{align}
\rvMat{X}_1 &= \begin{bmatrix}
\sqrt{\nu_{1\tau}}\Id_{s_0} & \sqrt{\frac{\nu_{1\delta}}{s_0}} \rvMat{S}_1 \\
\mathbf{0} & \sqrt{\nu_{1a}}\Id_{s_2}
\end{bmatrix}, \\
\rvMat{X}_2 &= 
\bigg[\sqrt{\rho_{2\tau}}\Id_{s_2+s_0} \quad \sqrt{\frac{\rho_{2\delta}}{s_2+s_0}}\rvMat{S}_2\bigg],
\end{align}
where $\rvMat{S}_1 \in \CC^{s_0 \times s_2}$ and $\rvMat{S}_2 \in \CC^{(s_2 + s_0) \times (T-s_2-s_0)}$ are the data matrices of User~$1$ and User~$2$ respectively, both contain i.i.d. $\Cc\Nc(0,1)$ symbols. As in earlier developments, integers $s_0,s_1,s_2$ are designed to allocate transmit dimensions to the components of product superposition, and take values in the range $s_0 \leq r_0$ and $s_2 \leq r_2 - r_0$.

The power constraint $\E[{\trace[\rvMat{X}^\H \rvMat{X}]}] \le \rho T$ translates to
\begin{align}
(s_0 \nu_{1\tau} + s_2(\nu_{1\delta}+\nu_{1a}))\Big(\rho_{2\tau} + \frac{T-s_2-s_0}{s_2+s_0}\rho_{2\delta} \Big) \le \rho T. \label{eq:prod:power_constraint}
\end{align}

In the first $s_2+s_0$ channel uses, User~$1$ receives
\begin{align}
\rvMat{Y}_{1[1:s_2+s_0]} &= \sqrt{\rho_{2\tau}}\rvMat{G}_1 \Sigmam_1^{\frac12} \Phim_{10} \bigg[\sqrt{\nu_{1\tau}}\Id_{s_0} \ \sqrt{\frac{\nu_{1\delta}}{s_0}} \rvMat{S}_1\bigg] + \rvMat{W}_{1[1:s_2+s_0]} \\
&=\Bigg[\underbrace{\sqrt{\nu_{1\tau}\rho_{2\tau}}\rvMat{G}_1 \Sigmam_1^{\frac12} \Phim_{10} + \rvMat{W}_{1[1:s_0]}}_{\rvMat{Y}_{1\tau}}
\quad 
\underbrace{\sqrt{\frac{\nu_{1\delta}\rho_{2\tau}}{s_0}}\rvMat{G}_1 \Sigmam_1^{\frac12} \Phim_{10} \rvMat{S}_1 + \rvMat{W}_{1[s_0+1:s_2+s_0]}}_{\rvMat{Y}_{1\delta}}
\Bigg].
\end{align}
Following Lemma~\ref{lem:MMSE-estimator}, User~$1$ estimates the equivalent channel $\rvMat{G}_1 \Sigmam_1^{\frac12} \Phim_{10}$ using a MMSE estimator based on $\rvMat{Y}_{1\tau}$ as
\begin{align}
\eqhatonezero = \sqrt{\nu_{1\tau}\rho_{2\tau}} \rvMat{Y}_{1\tau} \Big(\nu_{1\tau}\rho_{2\tau} \breve{\Rm}_{10} + \Id_{s_0}\Big)^{-1} \breve{\Rm}_{10}.
\end{align}
The estimate $ \eqhatonezero$ and the estimation error $\eqtildeonezero = {\rvMat{G}_1 \Sigmam_1^{\frac12} \Phim_{10}} - \eqhatonezero$ have  zero mean and row covariance 
\begin{align}
\frac{1}{N_1} \EE [\eqhatonezero^\H \eqhatonezero] &= \nu_{1\tau}\rho_{2\tau} \breve{\Rm}_{10} \Big(\nu_{1\tau}\rho_{2\tau} \breve{\Rm}_{10} + \Id_{s_0}\Big)^{-1} \breve{\Rm}_{10}, \\
\frac{1}{N_1} \EE [\eqtildeonezero^\H \eqtildeonezero] &= \breve{\Rm}_{10} - \nu_{1\tau}\rho_{2\tau} \breve{\Rm}_{10} \Big(\nu_{1\tau}\rho_{2\tau} \breve{\Rm}_{10} + \Id_{s_0}\Big)^{-1} \breve{\Rm}_{10}  \Big(\breve{\Rm}_{10}^{-1} + \nu_{1\tau}\rho_{2\tau} \Id_{s_0} \Big)^{-1}.
\end{align}
Using data processing inequality, 
\begin{align}
I(\rvMat{Y}_1;\rvMat{X}_1) &\ge I(\rvMat{Y}_{1[1:s_2+s_0]};\rvMat{X}_1) = I(\rvMat{Y}_{1\delta};\rvMat{S}_1 \cond \rvMat{Y}_{1\tau}) = I(\rvMat{Y}_{1\delta};\rvMat{S}_1 \cond \eqhatonezero).
\end{align}
Then, using the worst-case noise argument and Lemma~\ref{lem:worst-case-noise}, the following lower bound on $I(\rvMat{Y}_{1\delta};\rvMat{S}_1 \cond \eqhatonezero)$, is established, giving an achievable rate for User~$1$:
\begin{align}
R_1 
& = \frac{s_2}{T} \EE \bigg[ \log \det  \bigg(\Id_{N_1} + \frac{\nu_{1\delta} \rho_{2\tau}}{s_0 + \nu_{1\delta} \rho_{2\tau}\trace \big( (\breve{\Rm}_{10}^{-1} + \nu_{1\tau}\rho_{2\tau} \Id_{s_0} )^{-1}\big)} \eqhatonezero \eqhatonezero^\H \bigg)\bigg]. 
\label{eq:prod:R1}
\end{align}

The received signal at User~$2$ is 
\begin{align}
\rvMat{Y}_2 &= \rvMat{G}_2 \Sigmam_2^{\frac12} \Phim_2 \rvMat{X}_1  \bigg[\sqrt{\rho_{2\tau}}\Id_{s_2+s_0} \quad \sqrt{\frac{\rho_{2\delta}}{s_2+s_0}}\rvMat{S}_2\bigg] + \rvMat{W}_2 \\
&= \Bigg[\underbrace{\sqrt{\rho_{2\tau}}\rvMat{G}_{2e}  + \rvMat{W}_{2[1:s_2+s_0]}}_{\rvMat{Y}_{2\tau}}
\quad
\underbrace{\sqrt{\frac{\rho_{2\delta}}{s_2+s_0}}\rvMat{G}_{2e}\rvMat{S}_2  + \rvMat{W}_{2[s_2+s_0+1:T]}}_{\rvMat{Y}_{2\delta}}
\Bigg],
\end{align}
where $\rvMat{G}_{2e} \defeq \rvMat{G}_2 \Sigmam_2^{\frac12} \Phim_2 \rvMat{X}_1$ is the equivalent channel with the correlation matrix 
\begin{align}
\Rm_{2e} \defeq \frac{1}{N_2} \E[\rvMat{G}_{2e}^\H \rvMat{G}_{2e}] = \begin{bmatrix}
\nu_{1\tau} \breve{\Rm}_{20} & \sqrt{\nu_{1\tau} \nu_{1a}} \Phim_{20}^\H \Sigmam_2 \Phim_{22} \\
\sqrt{\nu_{1\tau} \nu_{1a}} \Phim_{22}^\H \Sigmam_2 \Phim_{20} & \frac{\nu_{1\delta}}{s_0} \trace[\breve{\Rm}_{20}] \Id_{s_2} + \nu_{1a} \breve{\Rm}_{22}
\end{bmatrix}.
\end{align}
Following Lemma~\ref{lem:MMSE-estimator}, User~$2$ estimates the equivalent channel $\rvMat{G}_{2e}$ using a MMSE estimator based on $\rvMat{Y}_{2\tau}$ as
\begin{align}
\hat{\rvMat{G}}_{2e} = \sqrt{\rho_{2\tau}} \rvMat{Y}_{2\tau} \big(\rho_{2\tau} \Rm_{2e} + \Id_{s_2+s_0}\big)^{-1} \Rm_{2e}. \label{eq:prod:MMSE_estimator}
\end{align}
The estimate $\hat{\rvMat{G}}_{2e}$ and the estimation error $\tilde{\rvMat{G}}_{2e} = {\rvMat{G}}_{2e} - \hat{\rvMat{G}}_{2e}$ have zero mean and row covariance 
\begin{align}
\frac{1}{N_2}\EE [\hat{\rvMat{G}}_{2e}^\H \hat{\rvMat{G}}_{2e}] &= \rho_{2\tau} \Rm_{2e} \big(\rho_{2\tau}\Rm_{2e} + \Id_{s_2+s_0}\big)^{-1} \Rm_{2e}, \\
\frac{1}{N_2}\EE [\tilde{\rvMat{G}}_{2e}^\H \tilde{\rvMat{G}}_{2e}] &= \Rm_{2e} - \rho_{2\tau} \Rm_{2e} \big(\rho_{2\tau}\Rm_{2e} + \Id_{s_2+s_0}\big)^{-1} \Rm_{2e} = \big(\Rm_{2e}^{-1} + \rho_{2\tau} \Id_{s_2+s_0} \big)^{-1}.
\end{align}
Using the worst-case noise argument and Lemma~\ref{lem:worst-case-noise},  the following achievable rate for User~$2$ is established:
\begin{align}
R_2 & = \Big(1-\frac{s_2+s_0}{T}\Big) \EE \bigg[ \log\det \bigg( {\Id_{N_2} + \frac{\rho_{2\delta}}{s_2 + s_0 + \rho_{2\delta}\trace \big( (\Rm_{2e}^{-1} + \rho_{2\tau} \Id_{s_2 + s_0} )^{-1}\big)} \hat{\rvMat{G}}_{2e} \hat{\rvMat{G}}_{2e}^\H } \bigg) \bigg]
\label{eq:prod:R2}
\end{align}
where the distribution of $\hat{\rvMat{G}}_{2e}$ is imposed by \eqref{eq:prod:MMSE_estimator}. 

From \eqref{eq:prod:R1} and \eqref{eq:prod:R2}, the rate pair $(R_1,R_2)$ is achievable. By swapping the users' role, another achievable rate pair is obtained. The overall achievable rate region is the convex hull of these pairs over all possible power allocations satisfying \eqref{eq:prod:power_constraint} and all feasible values of $s_0, s_1,s_2$. This concludes the proof of Theorem~\ref{thm:BC_rate_product}.

\section{Proof of Theorem~\ref{thm:BC_rate_hybrid}}
\label{proof:rate_hybrid}
The transmitted signal is 
\begin{align}
\rvMat{X} = [\Vm_0\ \Vm_1]\
\rvMat{X}'_{2} \rvMat{X}_1
+ \Vm_2 \rvMat{X}_2 
\end{align}
with
\begin{align}
\rvMat{X}_1 &= \bigg[\sqrt{\rho_{1\tau}}\Id_{s_1+s_0} \  \sqrt{\frac{\rho_{1\delta}}{s_1+s_0}}\rvMat{S}_{1}\bigg] \in \CC^{(s_1 + s_0) \times T}, \\
\rvMat{X}_2 &= \bigg[\mathbf{0}_{s_2 \times s_0} \  \sqrt{\rho_{2\tau}}\Id_{s_2} \  \sqrt{\frac{\rho_{2\delta}}{s_2}}\rvMat{S}_{2}\bigg] \in \CC^{s_2 \times T}, \\
\rvMat{X}'_{2} &= \begin{bmatrix}
\sqrt{\nu_{2\tau}}\Id_{s_0} & \Big[\mathbf{0}_{s_0 \times s_2} \ \sqrt{\frac{\nu_{2\delta}}{s_0}}\rvMat{S}'_{2}\Big] \\ \mathbf{0} & \sqrt{\nu_{2a}}\Id_{s_1}
\end{bmatrix} \in \CC^{(s_1+s_0) \times (s_1+s_0)},
\end{align}
where $\rvMat{S}_1 \in \CC^{(s_1+s_0) \times (T-s_1-s_0)}$, $\rvMat{S}_2 \in \CC^{s_2\times (T-s_2-s_0)}$, and $\rvMat{S}'_{2}\in \CC^{s_0\times(s_1-s_2)}$ are data matrices containing $\Cc\Nc(0,1)$ entries. 
The power constraint $\E[{\trace[\rvMat{X}^\H \rvMat{X}]}] \le \rho T$ translates to
\begin{align}
\big(s_0\nu_{2\tau} + s_1\nu_{2a} + (s_1-s_2) \nu_{2\delta} \big) \Big(\rho_{1\tau} + \frac{T-s_1-s_0}{s_1+s_0} \rho_{1\delta}\Big) + s_2\rho_{2\tau} + (T-s_2-s_0)\rho_{2\delta}\le \rho T. \label{eq:hybrid:power_constraint}
\end{align}

We begin by analyzing the rate of User~$1$. The received signal at User~$1$ is 
\begin{align}
\rvMat{Y}_1 
&= \rvMat{G}_1 \Sigmam^{\frac12}_1 \Phim_{1} \rvMat{X}'_{2} \bigg[\sqrt{\rho_{1\tau}}\Id_{s_1+s_0} \  \sqrt{\frac{\rho_{1\delta}}{s_1+s_0}}\rvMat{S}_{1}\bigg] + \rvMat{W}_1 \\
&= \Bigg[\underbrace{\sqrt{\rho_{1\tau}}\rvMat{G}_{1e} + \rvMat{W}_{1[1:s_1+s_0]}}_{\rvMat{Y}_{1\tau}}
\quad
\underbrace{\sqrt{\frac{\rho_{1\delta}}{s_1+s_0}} \rvMat{G}_{1e} \rvMat{S}_{1} + \rvMat{W}_{1[s_1+s_0+1:T]}}_{\rvMat{Y}_{1\delta}}\Bigg],
\end{align}
where ${\rvMat{G}_{1e}} \defeq \rvMat{G}_1 \Sigmam^{\frac12}_1 \Phim_{1} \rvMat{X}'_{2}$ is the equivalent channel with correlation matrix 
\begin{align}
\Rm_{1e} &\defeq \frac{1}{N_1} \E[\rvMat{G}_{1e}^\H \rvMat{G}_{1e}] 
= \begin{bmatrix}
\nu_{2\tau} \breve{\Rm}_{10} & \sqrt{\nu_{2\tau}\nu_{2a}} \Phim_{10}^\H \Sigmam_1 \Phim_{11}\\
\sqrt{\nu_{2\tau}\nu_{2a}} \Phim_{11}^\H \Sigmam_1 \Phim_{10} & \Bigg[\begin{matrix}
\mathbf{0} & \mathbf{0} \\
\mathbf{0} & \frac{\nu_{2\delta}}{s_0} \trace[\breve{\Rm}_{10}] \Id_{s_1-s_2}
\end{matrix}\Bigg]
+ \nu_{2a} \breve{\Rm}_{22}
\end{bmatrix}.
\end{align}
Following Lemma~\ref{lem:MMSE-estimator}, User~$1$ estimates the equivalent channel $\rvMat{G}_{1e}$ using a MMSE estimator based on $\rvMat{Y}_{1\tau}$ as
\begin{align}
\hat{\rvMat{G}}_{1e} = \sqrt{\rho_{1\tau}} \rvMat{Y}_{1\tau} \big(\rho_{1\tau} \Rm_{1e} + \Id_{s_1+s_0}\big)^{-1} \Rm_{1e}. \label{eq:hybrid:MMSE_estimate}
\end{align}
The estimate $\hat{\rvMat{G}}_{1e}$ and the estimation error $\tilde{\rvMat{G}}_{1e} = {\rvMat{G}}_{1e} - \hat{\rvMat{G}}_{1e}$ have zero mean and row covariance 
\begin{align}
\frac{1}{N_1}\EE [\hat{\rvMat{G}}_{1e}^\H \hat{\rvMat{G}}_{1e}] &= \rho_{1\tau} \Rm_{1e} \big(\rho_{1\tau}\Rm_{1e} + \Id_{s_1+s_0}\big)^{-1} \Rm_{1e}, \\
\frac{1}{N_1}\EE [\tilde{\rvMat{G}}_{1e}^\H \tilde{\rvMat{G}}_{1e}] &= \Rm_{1e} - \rho_{1\tau} \Rm_{1e} \big(\rho_{1\tau}\Rm_{1e} + \Id_{s_1+s_0}\big)^{-1} \Rm_{1e} = \big(\Rm_{1e}^{-1} + \rho_{1\tau} \Id_{s_1+s_0} \big)^{-1}.
\end{align}
Using the worst-case noise argument and Lemma~\ref{lem:worst-case-noise} as before, the following achievable rate for User~$1$ is obtained:
\begin{align}
R_1 = \bigg(1-\frac{s_1+s_0}{T}\bigg) \EE \bigg[\log\det\bigg(\Id_{N_1} + \frac{\rho_{1\delta}}{s_1 + s_0 + \rho_{1\delta}\trace \big( (\Rm_{1e}^{-1} + \rho_{1\tau} \Id_{s_1+s_0} \big)^{-1}\big)} \hat{\rvMat{G}}_{1e} \hat{\rvMat{G}}_{1e}^\H \bigg)\bigg],
\label{eq:tmp919}
\end{align}
where the distribution of $\hat{\rvMat{G}}_{1e}$ is imposed by \eqref{eq:hybrid:MMSE_estimate}. 

Now, we turn to analyzing the achievable rate for User~$2$.  The received signal at User~$2$ can be written as
\begin{align}
\rvMat{Y}_2 &= \rvMat{G}_2 \Sigmam^{\frac12}_2 \Phim_2 \begin{bmatrix}
\sqrt{\nu_{2\tau}\rho_{1\tau}}\Id_{s_0} & \mathbf{0} & \Big[\sqrt{\frac{\nu_{2\delta}\rho_{1\tau}}{s_0}}\rvMat{S}'_{2} \ \rvMat{A}\Big]\\
\mathbf{0} & \sqrt{\rho_{2\tau}}\Id_{s_2} & \sqrt{\frac{\rho_{2\delta}}{s_2}}\rvMat{S}_{2}\\
\end{bmatrix}
+ \rvMat{W}_2 \\
&= \Big[\rvMat{Y}_{2\tau} \ \underbrace{\rvMat{Y}_{2\delta a} \ \rvMat{Y}_{2\delta b}}_{\rvMat{Y}_{2\delta}}\Big],
\end{align}
where $\rvMat{A} \defeq \Big[\sqrt{\nu_{2\tau}}\Id_{s_0} \ \mathbf{0} \ \sqrt{\frac{\nu_{2\delta}}{s_0}}\rvMat{S}'_{2}\Big]\sqrt{\frac{\rho_{1\delta}}{s_1+s_0}}\rvMat{S}_{1}$ and
\begin{align}
\rvMat{Y}_{2\tau} &\defeq \rvMat{G}_2 \Sigmam^{\frac12}_2 \Phim_2 \Pm_{2\tau}^{\frac12} + \rvMat{W}_{2[1:s_2+s_0]}, \\ 
\rvMat{Y}_{2\delta a} &\defeq \rvMat{G}_2 \Sigmam^{\frac12}_2 \Phim_2 \begin{bmatrix}
\sqrt{\frac{\nu_{2\delta}\rho_{1\tau}}{s_0}}\rvMat{S}'_{2} \\
\sqrt{\frac{\rho_{2\delta}}{s_2}}\rvMat{S}_{2[1:s_1-s_2]} 
\end{bmatrix}
+ \rvMat{W}_{2[s_2+s_0+1:s_1+s_0]}, \\
\rvMat{Y}_{2\delta b} &\defeq \rvMat{G}_2 \Sigmam^{\frac12}_2 \Phim_2 \begin{bmatrix}
\rvMat{A} \\
\sqrt{\frac{\rho_{2\delta}}{s_2}}\rvMat{S}_{2[s_1-s_2+1:T-s_2]}
\end{bmatrix}
+\rvMat{W}_{2[s_1+s_0+1:T]},
\end{align}
where $\Pm_{2\tau} \defeq \Bigg[\begin{matrix}
\nu_{2\tau}\rho_{1\tau}\Id_{s_0} & \mathbf{0} \\
\mathbf{0} & \rho_{2\tau}\Id_{s_2}
\end{matrix}\Bigg]$.
The rate that User~$2$ can achieve is $\frac{1}{T}I(\rvMat{Y}_2; \rvMat{S}'_2, \rvMat{S}_2)$ bits/channel use with
\begin{align}
I(\rvMat{Y}_2; \rvMat{S}'_2, \rvMat{S}_2) &= I(\rvMat{Y}_{2\tau}, \rvMat{Y}_{2\delta}; \rvMat{S}'_2, \rvMat{S}_2) \\
&= \underbrace{I(\rvMat{Y}_{2\tau}; \rvMat{S}'_2, \rvMat{S}_2)}_{=0} + I(\rvMat{Y}_{2\delta}; \rvMat{S}'_2, \rvMat{S}_2 \cond \rvMat{Y}_{2\tau}) \\
&= I(\rvMat{Y}_{2\delta}; \rvMat{S}'_2, \rvMat{S}_2, \rvMat{A} \cond \rvMat{Y}_{2\tau}) - I(\rvMat{Y}_{2\delta}; \rvMat{A} \cond \rvMat{Y}_{2\tau}, \rvMat{S}'_2, \rvMat{S}_2), \label{eq:tmp951}
\end{align}
where the second and third equalities follow from the chain rule. 

Define $\bar{\bf \Omega}_2 \in \mathbb{C}^{N_2 \times (s_2 + s_0)}$ with independent rows obeying ${\Cc \Nc} \big({\bf 0}^T,  \bar{\Rm}_2^\H \big(\bar{\Rm}_2 + \Pm_{2\tau}^{-1}\big)^{-1} \bar{\Rm}_2\big)$ and $\bar{\bf \Omega}_{20} \in \mathbb{C}^{N_2 \times s_0}$ with independent rows obeying ${\Cc \Nc} \big({\bf 0}^T, \breve{\Rm}_{20}\big)$. For $I(\rvMat{Y}_{2\delta}; \rvMat{S}'_2, \rvMat{S}_2, \rvMat{A} \cond \rvMat{Y}_{2\tau})$, using the worst-case noise argument and Lemma~\ref{lem:worst-case-noise} as before, we have the bound
\begin{align}
\lefteqn{I(\rvMat{Y}_{2\delta}; \rvMat{S}'_2, \rvMat{S}_{2}, \rvMat{A} \big| \rvMat{Y}_{2\tau})
}\notag \\
&\ge \big(s_1 - s_2\big) \EE \bigg[ \log \det \bigg(\Id_{N_2} + \frac{1}{\trace \big( (\bar{\Rm}_2^{-1} + \Pm_{2\tau})^{-1} \Pm_{2\delta a}\big) + 1} \bar{\bf \Omega}_2 \Pm_{2\delta a} \bar{\bf \Omega}_2^\H\bigg) \bigg] \\
&+ \big(T - s_1 - s_0\big) \EE \bigg[ \log \det \bigg(\Id_{N_2} + \frac{1}{\trace \big( (\bar{\Rm}_2^{-1} + \Pm_{2\tau})^{-1} \Pm_{2\delta b}\big) + 1} \bar{\bf \Omega}_2 \Pm_{2\delta b} \bar{\bf \Omega}_2^\H \bigg) \bigg], \notag\\
\label{eq:tmp968}
\end{align}
where $\Pm_{2\delta a} \defeq \Bigg[\begin{matrix}
\frac{\nu_{2\delta}\rho_{1\tau}}{s_0} \Id_{s_0} & \mathbf{0} \\
\mathbf{0} & \frac{\rho_{2\delta}}{s_2}\Id_{s_2}
\end{matrix}\Bigg]$ and $\Pm_{2\delta b} \defeq \Bigg[\begin{matrix}
\frac{\rho_{1\delta}}{T-s_1-s_0} \big(\nu_{1\tau} + \nu_{2\delta}\frac{s_1-s_2}{s_0}\big) \Id_{s_0} & \mathbf{0} \\
\mathbf{0} & \frac{\rho_{2\delta}}{s_2}\Id_{s_2}
\end{matrix}\Bigg]$.

The term $I(\rvMat{Y}_{2\delta}; \rvMat{A} \big| \rvMat{Y}_{2\tau}, \rvMat{S}'_2, \rvMat{S}_2)$
can be upper bounded as follows:
\begin{align}
\lefteqn{\
I(\rvMat{Y}_{2\delta}; \rvMat{A} \cond \rvMat{Y}_{2\tau}, \rvMat{S}'_2, \rvMat{S}_2)
}\notag \\ 
&= I(\rvMat{Y}_{2\delta b}; \rvMat{A} \cond \rvMat{S}'_2, \rvMat{S}_2, \rvMat{Y}_{2\tau}) \label{eq:tmp1015}\\
&= I(\rvMat{Y}_{2\delta b}; \rvMat{A} \cond \rvMat{S}_2, \rvMat{Y}_{2\tau}) - I(\rvMat{Y}_{2\delta b}; \rvMat{S}'_2 \cond \rvMat{S}_2, \rvMat{Y}_{2\tau}) \label{eq:tmp1016}\\
&\le I(\rvMat{Y}_{2\delta b}; \rvMat{A} \cond \rvMat{S}_{2[s_1-s_2+1:T-s_2-s_0]}, \rvMat{Y}_{2\tau}) \label{eq:tmp1017}\\
&= h(\rvMat{A}|\rvMat{S}_{2[s_1-s_2+1:T-s_2-s_0]}, \rvMat{Y}_{2\tau}) - h(\rvMat{A}|\rvMat{S}_{2[s_1-s_2+1:T-s_2-s_0]}, \rvMat{Y}_{2\tau}, \rvMat{Y}_{2\delta b}) \\
&\le h(\rvMat{A}|\rvMat{S}_{2[s_1-s_2+1:T-s_2-s_0]}, \rvMat{Y}_{2\tau}) - h(\rvMat{A}|\rvMat{S}_{2[s_1-s_2+1:T-s_2-s_0]}, \rvMat{Y}_{2\tau}, \rvMat{Y}_{2\delta b},  \rvMat{G}_2 \Sigmam^{\frac12}_2 \Phim_2) \label{eq:tmp993}\\
&= h(\rvMat{A}|\rvMat{S}_{2[s_1-s_2+1:T-s_2-s_0]}, \rvMat{G}_2 \Sigmam^{\frac12}_2 \Phim_2) - h(\rvMat{A}|\rvMat{S}_{2[s_1-s_2+1:T-s_2-s_0]}, \rvMat{Y}_{2\delta b},  \rvMat{G}_2 \Sigmam^{\frac12}_2 \Phim_2) \label{eq:tmp994}\\
&= I(\rvMat{Y}_{2\delta b}; \rvMat{A} \cond \rvMat{S}_{2[s_1-s_2+1:T-s_2-s_0]}, \rvMat{G}_2 \Sigmam^{\frac12}_2 \Phim_2) \\
&= I\Big(\rvMat{Y}_{2\delta b} - \sqrt{\frac{\rho_{2\delta}}{s_2}}\rvMat{G}_2 \Sigmam^{\frac12}_2 \Phim_{22}\rvMat{S}_{2[s_1-s_2+1:T-s_2-s_0]}; \rvMat{A} ~\Big|~ \rvMat{S}_{2[s_1-s_2+1:T-s_2-s_0]}, \rvMat{G}_2 \Sigmam^{\frac12}_2 \Phim_{20}, \rvMat{G}_2 \Sigmam^{\frac12}_2 \Phim_{22}\Big) \\
&= I\Big(\rvMat{G}_2 \Sigmam^{\frac12}_2 \Phim_{20} \rvMat{A} + \rvMat{W}_{2[s_1+s_0+1:T]}; \rvMat{A} ~\Big|~ \rvMat{G}_2 \Sigmam^{\frac12}_2 \Phim_{20}\Big) \\
&=\big(T - s_1 - s_0\big) \EE \bigg[ \log \det \bigg(\Id_{N_2} + \rho_{1\delta}\Big(\nu_{2\tau} + \nu_{2\delta}\frac{s_1-s_2}{s_0}\Big) \bar{\bf \Omega}_{20} \bar{\bf \Omega}_{20}^\H\bigg) \bigg],\label{eq:tmp998}
\end{align}
where \eqref{eq:tmp1015} and \eqref{eq:tmp1016} follow from the Markov chains  $\rvMat{Y}_{2\delta a} \leftrightarrow \rvMat{S}'_2 \leftrightarrow \rvMat{A}$  and $\rvMat{Y}_{2\delta b} \leftrightarrow \rvMat{A} \leftrightarrow \rvMat{S}_2'$, respectively; \eqref{eq:tmp1017} holds because mutual information is non-negative and both $\rvMat{Y}_{2\delta b}$ and $\rvMat{A}$ are independent of $\rvMat{S}_{2[1:s_1-s_2]}$; \eqref{eq:tmp993} holds because conditioning reduces entropy; \eqref{eq:tmp994} holds because $\rvMat{A}$ is independent of both $\rvMat{Y}_{2\tau}$ and $\rvMat{G}_2 \Sigmam^{\frac12}_2 \Phim_2$, while given  $\rvMat{Y}_{2\delta b}$, $\rvMat{A}$ depends on $\rvMat{Y}_{2\tau}$ only through $\rvMat{G}_2 \Sigmam^{\frac12}_2 \Phim_2$;
and in the last equality, we used that $\EE [\rvMat{A}\rvMat{A}^\H] = \rho_{1\delta}\Big(\nu_{2\tau} + \nu_{2\delta}\frac{s_1-s_2}{s_0}\Big) \Id_{s_0}$. 

Substituting \eqref{eq:tmp968} and \eqref{eq:tmp998} into \eqref{eq:tmp951}, an achievable rate for User~$2$ is obtained. This rate and \eqref{eq:tmp919} give an achievable rate pair. Taking the convex hull of this pair over all possible power allocations satisfying \eqref{eq:hybrid:power_constraint} and all feasible values of $s_0, s_1,s_2$ provides an overall achievable rate region. This concludes the proof of Theorem~\ref{thm:BC_rate_hybrid}. 

\bibliographystyle{IEEEtran}
\bibliography{IEEEabrv,mybib}

\end{document}